\theoremstyle{definition} 
\newtheorem{defin}{Definition}[section]
\newtheorem{definition}{Definition}[section]
\newtheorem{thm}[defin]{Theorem}
\newtheorem{rem}[defin]{Remark}
\newtheorem{ex}[defin]{Example}
\newtheorem{cor}[defin]{Corollary}
\newtheorem{lemma}[defin]{Lemma}
\newtheorem{prop}[defin]{Proposition}
\renewcommand{\theHWexercise}{\theHWexercise$^*$}
\def\vfi{\varphi}
\def\theta{\vartheta}
\def\hil{{\mathcal H}}
\def\kil{{\mathcal K}}
\def\B{{\mathcal B}}
\def\I{{\mathcal I}}
\def\J{{\mathcal J}}
\def\L{{\mathcal L}}
\def\S{{\mathcal S}}
\def\half{\frac{1}{2}}
\def\iff{\Longleftrightarrow}
\def\imp{\Longrightarrow}
\def\bN{\mathbb{N}}
\def\bP{\mathbb{P}}
\def\bR{\mathbb{R}}
\def\bH{\mathbb{H}}
\def\bz{\left(}
\def\jz{\right)}
\def\inv{^{-1}}
\def\egy{\mathbf 1}
\def\map{\Phi}
\def\sa{\mathrm{sa}}
\def\rho{\varrho}
\def\povm{\mathrm{POVM}}
\def\nn{\nonumber}
\def\nw{^{*}}
\def\meas{\mathrm{meas}}
\def\scli{\underline{\mathrm{sc}}}
\def\scls{\overline{\mathrm{sc}}}
\def\scl{\mathrm{sc}}
\def\oll{\overline}
\def\p{_{\gneq 0}}
\def\sa{\mathrm{sa}}
\def\valt{\cdot}
\def\wo{\mathrm{(wo)}}
\def\so{\mathrm{(so)}}
\def\fa{_{\mathrm{fa}}}
\def\ofa{_{\overline{\mathrm{fa}}}}
\def\f{_{\mathrm{f}}}
\def\tos{\gamma}
\def\alphaz{\mathbb{A}}
\def\Dmax{D_{\max}}
\newcommand{\ki}[1]{\textit{\textit{#1}}}
\newcommand{\s}{\mbox{ }}
\newcommand{\ds}{\mbox{ }\mbox{ }}
\newcommand{\norm}[1]{\left\| #1\right\|}
\newcommand{\inner}[2]{\left\langle #1 , #2\right\rangle}
\newcommand{\abs}[1]{\left| #1 \right|}
\newcommand{\vecc}[1]{\underline{#1}}
\newcommand{\bra}[1]{\left\langle #1\right|}
\newcommand{\ket}[1]{\left|#1\right\rangle}
\newcommand{\diad}[2]{\left|#1\right\rangle\!\left\langle #2\right|}
\newcommand{\pr}[1]{\diad{#1}{#1}}
\newcommand{\derleft}[1]{\partial^{-} #1}
\newcommand{\derright}[1]{\partial^{+} #1}
\newcommand{\vertleq}{\rotatebox{90}{$\,\ge$}}
\newcommand{\verteq}{\rotatebox{90}{$=$}}
\newcommand{\cupdot}{\mathbin{\mathaccent\cdot\cup}}
\renewcommand{\p@enumii}{}
\DeclareMathOperator{\id}{id}
\DeclareMathOperator{\Tr}{Tr}
\DeclareMathOperator{\supp}{supp}
\DeclareMathOperator{\ran}{ran}
\DeclareMathOperator{\dom}{dom}
\DeclareMathOperator{\spec}{spec}
\DeclareMathOperator{\ootimes}{\otimes\ldots\otimes}
\DeclareMathOperator{\logn}{\widehat\log}
\DeclareMathOperator{\oran}{\overline{\ran}}
\begin{document}

\title{The strong converse exponent of discriminating infinite-dimensional quantum states}

\author{Mil\'an Mosonyi}
\email{milan.mosonyi@gmail.com}

\affiliation{
MTA-BME Lend\"ulet Quantum Information Theory Research Group
}

\affiliation{
Mathematical Institute, Budapest University of Technology and Economics, \\
M\H uegyetem rkp.~3., H-1111 Budapest, Hungary
}

\begin{abstract}
\centerline{\textbf{Abstract}}
\vspace{.3cm}

The sandwiched R\'enyi divergences
of two finite-dimensional density operators 
quantify their asymptotic distinguishability in the strong converse domain.
This establishes the sandwiched R\'enyi divergences as the operationally relevant ones among the infinitely many 
quantum extensions of the classical R\'enyi divergences for R\'enyi parameter $\alpha>1$. 
The known proof of this goes by showing that the sandwiched R\'enyi divergence coincides with 
the regularized measured R\'enyi divergence, which in turn is proved by asymptotic pinching, a 
fundamentally finite-dimensional technique. 
Thus, while the notion of the sandwiched R\'enyi divergences was extended 
recently to 
density operators on an infinite-dimensional Hilbert space
(in fact, even for states of an arbitrary von Neumann algebra),
these quantities were so far lacking an operational interpretation 
similar to the finite-dimensional case, and it has also been 
open whether they coincide with the regularized measured R\'enyi divergences.  
In this paper we fill this gap by answering both questions in the positive for density 
operators on an infinite-dimensional Hilbert space, using a simple finite-dimensional approximation technique.

We also initiate the study of the sandwiched R\'enyi divergences, and the related problem of 
the strong converse exponent, for 
pairs of positive semi-definite operators that are not necessarily trace-class (this 
corresponds to considering weights in a general von Neumann algebra setting). 
This is motivated by the need to define conditional R\'enyi entropies in the infinite-dimensional setting,
while it might also be interesting from the purely mathematical point of view of extending 
the concept of R\'enyi (and other) divergences to settings beyond the standard one of positive trace-class operators (positive normal functionals in the von Neumann algebra setting). 
In this spirit, we also discuss the definition and some properties 
of the more general family of R\'enyi $(\alpha,z)$-divergences of positive semi-definite operators on an infinite-dimensional separable Hilbert space.
\end{abstract}

\maketitle

\section{Introduction}

In a simple binary i.i.d.~quantum state discrimination 
problem, an experimenter is presented with several identically prepared quantum systems, 
all in the same state that
is either described by a density operator $\rho$ on the system's Hilbert space $\hil$, 
(\ki{null-hypothesis} $H_0$), 
or by another density operator $\sigma$
(\ki{alternative hypothesis} $H_1$).
The experimenter's task is to guess which hypothesis is correct, based on the result of a $2$-outcome measurement, represented by a pair of operators
$(T_n(0)=:T_n,T_n(1)=I-T_n)$, where $T_n\in\B(\hil_n)_{[0,I]}$ is 
a test on $\hil_n:=\hil^{\otimes n}$,
and $n$ is the number of identically prepared systems.
If the outcome of the measurement is $k$, described by the measurement operator $T_n(k)$, the experimenter decides that hypothesis $k$ is true. 
The \ki{type I success probability}, i.e., the probability that 
the experimenter correctly identifies the state to be $\rho_n$, and the \ki{type II error probability},
i.e., the probability  that 
the experimenter erroneously identifies the state to be $\rho_n$, are given by 
\begin{align}\label{errors intro}
\tos_n(T_n|\rho_n):=\Tr\rho_n T_n,\ds\ds\ds
\beta_n(T_n|\sigma_n):=\Tr \sigma_n T_n,
\end{align}
respectively, where $\rho_n=\rho^{\otimes n}$, $\sigma_n=\sigma^{\otimes n}$. 

In the asymptotic analysis of the problem, it is customary to look for the optimal asymptotics 
of the type I success probabilities under the constraint that the type II error probabilities 
decrease at least as fast as $\beta_n\sim e^{-nr}$ with some fixed $r$. 
It is known that if $r$ is smaller then the relative entropy of $\rho$ and $\sigma$ then 
the type I success probabilities converge to $1$ exponentially fast, and the optimal exponent
(the so-called direct exponent) is equal to the 
Hoeffding divergence $H_r$ of $\rho$ and $\sigma$ \cite{ANSzV,Hayashicq,Nagaoka,JOPS}.
The Hoeffding divergences are defined from the Petz-type R\'enyi divergences
$D_{\alpha}$ with $\alpha\in(0,1)$, and the above result 
establishes the operational significance of these divergences
\cite{Nagaoka,MH}.

On the other hand, 
it was shown in \cite{MO} (see also \cite{Hayashibook2,N,ON}) that if the 
Hilbert space is finite-dimensional,
(equivalently, the density operators are of finite rank), 
and $r$ is larger than the relative entropy, then 
the type I success probabilities converge to $0$ 
exponentially fast, and the optimal exponent
(the so-called strong converse exponent) is equal to the 
Hoeffding anti-divergence $H_r\nw$ of $\rho$ and $\sigma$.
($H_r\nw$, as well as the various divergences mentioned below, will be precisely defined in the main text.)
The Hoeffding anti-divergences are defined from the sandwiched R\'enyi divergences
$D_{\alpha}\nw$ with $\alpha>1$ \cite{Renyi_new,WWY},
and this result establishes the operational significance of these divergences
 
A key step in the proof of the strong converse exponent in \cite{MO} is showing that the
regularized measured R\'enyi divergence $\oll D_{\alpha}^{\meas}$
coincides with the sandwiched R\'enyi divergence $D_{\alpha}\nw$ for any $\alpha>1$,
which was proved using the pinching inequality 
\cite{H:pinching}, a fundamentally finite-dimensional technique. 
Thus, while the notion of the sandwiched R\'enyi divergences was extended 
recently to 
density operators on an infinite-dimensional Hilbert space
(in fact, even for states of an arbitrary von Neumann algebra)
in \cite{BST} and \cite{Jencova_NCLp},
these quantities were so far lacking an operational interpretation 
similar to the finite-dimensional case described above, and it has also been 
open whether they coincide with the regularized measured R\'enyi divergences.  
In this paper we fill this gap by answering both questions in the positive for density 
operators on an infinite-dimensional Hilbert space.  

We also initiate the study of the sandwiched R\'enyi divergences, and the related problem of 
the strong converse exponents, for 
pairs of positive semi-definite operators that are not necessarily trace-class (this corresponds to considering weights in a general von Neumann algebra setting). 
This is motivated by the need to define conditional R\'enyi entropies in the infinite-dimensional setting,
while it might also be interesting from the purely mathematical point of view of extending the concept of 
R\'enyi (and other) divergences to settings beyond the standard one of positive trace-class operators (or positive normal functionals, in the von Neumann algebra setting). 
In this spirit, we also discuss the definition and some properties 
of the more general family of R\'enyi $(\alpha,z)$-divergences \cite{AD,JOPP} in this setting. 
To the best of our knowledge, this is new even for trace-class operators when the underlying Hilbert space is infinite-dimensional .

The structure of the paper is as follows.
In Section \ref{sec:prelim} we collect some necessary preliminaries.
In Section \ref{sec:sand Renyi} we define the R\'enyi $(\alpha,z)$-divergences for an arbitrary pair of positive semi-definite operators on a possibly infinite-dimensional Hilbert space,
and establish some of their properties. 
The most important part of this section for the later applications is the recoverability of the 
sandwiched R\'enyi divergence from finite-dimensional restrictions, given in Proposition 
\ref{cor:fa equality}. Based on this, in Section \ref{sec:measured Renyi} we show that the sandwiched R\'enyi divergence is equal to the 
regularized measured R\'enyi divergence for pairs of states, extending the finite-dimensional 
result of \cite{MO} to infinite dimension.
In Section \ref{sec:sc Hanti} we 
consider a generalization of the state discrimination problem where the hypotheses are given by (not necessarily trace-class) positive semi-definite operators, and establish lower and upper bounds on the strong converse exponents in this setting. In particular,
we show that the strong converse exponent is equal to the Hoeffding anti-divergence 
for quantum states, thereby giving an operational interpretation of the sandwiched R\'enyi divergences analogous to the finite-dimensional case. Moreover, we prove the above equality also in the case where the reference operator $\sigma$ is only assumed to be compact, 
and to dominate the first operator $\rho$ as $\rho\le \lambda\sigma$ for some $\lambda>0$. 
In Section \ref{sec:cutoff}, we give a direct operational interpretation to the sandwiched R\'enyi divergences as generalized cutoff rates, extending the analogous 
interpretations given previously for classical \cite{Csiszar} and finite-dimensional quantum states \cite{MO}.
In Section \ref{sec:mon} we use the strong converse result from Section \ref{sec:sc Hanti} to show the monotonicity of the sandwiched R\'enyi divergences under the action of the dual of a normal unital completely positive map. While this follows from \cite{BST,Jencova_NCLp} for density operators, our proof is completely different, and also applies to other settings, e.g., for a compact $\sigma$ that dominates $\rho$.

\section{Preliminaries}
\label{sec:prelim}

Throughout the paper, $\hil$ and $\kil$ will denote separable Hilbert spaces
(of finite or infinite dimension), and 
$\B(\hil,\kil)$ will denote the set of everywhere defined bounded linear operators from $\hil$ to $\kil$, with $B(\hil,\hil)=:\B(\hil)$. 
We will use the notations 
$\B(\hil)_{\sa}$ for the set of self-adjoint,
and $\B(\hil)\p$, for the set of non-zero positive semi-definite (PSD), 
operators in $\B(\hil)$ respectively, and 
\begin{align*}
\B(\hil)_{[0,I]}:=\{T\in\B(\hil):\,0\le T\le I\}
\end{align*}
for the set of \ki{tests} in $\B(\hil)$.
A test $T$ is \ki{projective} if $T^2=T$.
We will denote the set of all projections on $\hil$ by $\bP(\hil)$, and the set of finite rank projections by $\bP_f(\hil)$. The set of finite-rank operators on $\hil$ will be denoted by 
$\B_f(\hil)$. The set of density operators, or states, on 
$\hil$ will be denoted by $\S(\hil)$.
For two PSD operators $\rho,\sigma\in\B(\hil)\p$, we will use the notations 
\begin{align*}
\B(\hil,\kil)_{\rho,\sigma}^+:=\{K\in\B(\hil,\kil):\,K\rho K^*\ne 0,\s K\sigma K^*\ne 0\},
\end{align*}
and $\B_f(\hil)_{\rho,\sigma}^+:=\B_f(\hil)\cap \B(\hil)_{\rho,\sigma}^+$,
$\bP(\hil)_{\rho,\sigma}^+:=\bP(\hil)\cap \B(\hil)_{\rho,\sigma}^+$,
$\bP_f(\hil)_{\rho,\sigma}^+:=\bP_f(\hil)\cap \B(\hil)_{\rho,\sigma}^+$.

For a (possibly unbounded) self-adjoint operator $A$ on a Hilbert space $\hil$, let 
$P^A(\valt)$ denote its spectral PVM, and for any complex-valued measurable function 
$f$ defined at least on $\spec(A)$, let $f(A)=\int_{\bR}f\,dP^A$ be the operator defined via the usual functional calculus. We will use the relations
\begin{align}
&(f(A))^*=\oll f(A),\label{fcalc adjoint}\\
&\oll{f(A)g(A)}=(fg)(A),\ds\ds\dom(f(A)g(A))=\dom(g(A))\cap\dom((fg)(A)),
\label{fcalc product}
\end{align}
where $\oll f$ stands for the pointwise complex conjugate of $f$, and for a closable operator $X$, $\oll X$ denotes its closure.

We say that a (not necessarily everywhere defined or bounded) linear operator $A$ 
on a Hilbert space is positive semi-definite (PSD), if 
it is self-adjoint, and 
$\spec(A)\subseteq[0,+\infty)$. 
If $A$ is PSD then we may define its real powers as 
\begin{align*}
A^p:=\id_{(0,+\infty)}^p(A)=\int_{(0,+\infty)}\id_{(0,+\infty)}^p\,dP^A,\ds\ds\ds p\in\bR.
\end{align*}
In particular, $A^0$ is the projection onto $(\ker A)^{\perp}=\oran A=:\supp A$, 
\begin{align*}
(A^p)\inv=(A\inv)^p=A^{-p},\ds\ds\ds p\in\bR,
\end{align*}
and 
\begin{align*}
A\ds\text{ is bounded}\ds\ds\imp\ds\ds A^{-p} A^p=I,\ds\ds\ds A^pA^{-p}=I_{\ran A^p},\ds\ds\ds p>0.
\end{align*}

For any $X\in\B(\hil,\kil)$ with polar decomposition $X=V|X|$, we have 
$|X^*|=V|X|V^*$, whence $|X^*|^p=V|X|^pV^*$ for any $p\in\bR$. In particular,
\begin{align*}
\Tr (X^*X)^p=\Tr (XX^*)^p,\ds\ds\ds p>0,
\end{align*}
which we will use in many proofs below without further notice. 
We will use the notation $\norm{X}_p:=(\Tr|X|^p)^{1/p}$ for 
$X\in\B(\hil,\kil)$ and $p>0$. When $p\ge 1$, $\norm{\valt}_p$ is a norm on the 
Schatten $p$-class
\begin{align*}
\L^p(\hil):=\{X\in\B(\hil):\,\Tr|X|^p<+\infty\}.
\end{align*}
We will denote the usual operator norm on $\B(\hil)$ by $\norm{\valt}_{\infty}$.

\begin{lemma}\label{lemma:Holder}
(H\"older inequality)
Let $p_0,p_1,p>0$ be such that $\frac{1}{p_0}+\frac{1}{p_1}=\frac{1}{p}$. For any 
$A,B\in\B(\hil)$, 
\begin{align}\label{Holder}
\norm{AB}_p\le\norm{A}_{p_0}\norm{B}_{p_1}.
\end{align}
Moreover, if $\norm{A}_{p_0}\norm{B}_{p_1}<+\infty$ then equality holds in 
\eqref{Holder} if and only if $A=\lambda B$ or $B=\lambda A$ for some $\lambda\ge 0$.
\end{lemma}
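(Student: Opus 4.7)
The plan is to reduce the operator inequality \eqref{Holder} to an inequality between singular-value sequences, and then apply the classical H\"older inequality there. After disposing of trivialities (either side zero, or $\|A\|_{p_0}\|B\|_{p_1}=+\infty$), I would split off the cases where one of $p_0,p_1$ equals $\infty$; these are handled directly by the pointwise bound $s_k(AB)\le\|A\|_\infty s_k(B)$ followed by summation. In the main case $p_0,p_1\in(0,\infty)$ with $\|A\|_{p_0},\|B\|_{p_1}<+\infty$, both $A$ and $B$ are compact, hence so is $AB$. Writing $s_k(X)$ for the $k$-th singular value of a compact operator $X$ in decreasing order, the identity $\|X\|_p^p=\sum_{k} s_k(X)^p$ reduces the claim to an inequality between nonnegative sequences.

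The main structural input is Horn's multiplicative inequality for compact operators: for every $n\in\bN$,
\[
\prod_{k=1}^{n}s_k(AB)\le \prod_{k=1}^{n}s_k(A)\,s_k(B),
\]
i.e., $(s_k(AB))_k$ is log-majorized by $(s_k(A)s_k(B))_k$. Composing with the increasing convex map $x\mapsto e^{px}$ promotes this log-majorization to weak majorization of $p$-th powers, so that $\sum_{k=1}^{n}s_k(AB)^p\le\sum_{k=1}^{n}s_k(A)^p s_k(B)^p$ for every $n$. Then classical H\"older on nonnegative sequences with conjugate exponents $p_0/p$ and $p_1/p$ (well defined since $1/p_0+1/p_1=1/p$) yields
\[
\sum_{k} s_k(A)^p s_k(B)^p\le\Bigl(\sum_{k} s_k(A)^{p_0}\Bigr)^{p/p_0}\Bigl(\sum_{k} s_k(B)^{p_1}\Bigr)^{p/p_1}=\|A\|_{p_0}^p\|B\|_{p_1}^p.
\]
Letting $n\to\infty$ and taking the $p$-th root gives \eqref{Holder}.

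The hard part will be the equality clause. I would propagate equality backwards through each step: equality in classical H\"older on sequences pins down the proportionality of $(s_k(A)^{p_0})_k$ and $(s_k(B)^{p_1})_k$; equality in the passage from log-majorization to weak majorization forces the log-majorization itself to be tight at every $n$, in particular $s_k(AB)=s_k(A)s_k(B)$ for each $k$; and equality in Horn's inequality forces the left and right singular systems of $A$ and $B$ to align coherently. Combining these rigidities with the polar decompositions of $A$ and $B$ is what should deliver the simple relation stated in the lemma. The main obstacle is precisely this rigidity analysis: it must be carried out with care in the infinite-dimensional compact setting, and the degenerate cases (one of $p_0,p_1$ infinite, or one of $A,B$ vanishing on a non-trivial subspace) must be dealt with separately.
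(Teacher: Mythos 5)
The paper's own ``proof'' is purely a citation: the inequality is attributed to \cite{Hiai_expop}, and the equality characterization to \cite{Dixmier_Holder,Larotonda_Holder_equality} for $p=1$ and to \cite{HM_Holder} for general $p>0$. So you are attempting to reconstruct content the paper does not itself carry out. For the inequality \eqref{Holder}, your route is correct and standard: Horn's inequality gives log-majorization of $(s_k(AB))_k$ by $(s_k(A)s_k(B))_k$; composing with the increasing convex map $x\mapsto e^{px}$ promotes this to $\sum_k s_k(AB)^p\le\sum_k s_k(A)^p s_k(B)^p$; classical H\"older on sequences with conjugate exponents $p_0/p,p_1/p>1$ finishes. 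The reductions you sketch (either side vanishing, an $\infty$ exponent) are also fine, although the hypothesis $p_0,p_1,p>0$ already excludes $p_0=\infty$ or $p_1=\infty$, so that split is not actually needed here.

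The equality clause is the substantive part, and you give only a plan for it, which you acknowledge is incomplete; this is a genuine gap. The skeleton is right: equality in the sequence H\"older forces proportionality of $(s_k(A)^{p_0})_k$ and $(s_k(B)^{p_1})_k$; equality in the weak-majorization step, by strict convexity of $e^{px}$, forces $s_k(AB)=s_k(A)s_k(B)$ for every $k$; and equality in Horn at every $n$ then constrains the singular vectors of $A$ and $B$ to align. But be careful about what this chain will actually deliver. Properly carried out it yields proportionality of $|A|^{p_0}$ and $|B^*|^{p_1}$, together with an alignment of the polar parts, and \emph{not} the literal condition ``$A=\lambda B$ or $B=\lambda A$'' displayed in the lemma. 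The former is also how the lemma is actually invoked in the proof of Corollary \ref{cor:trace mon}, where the conclusion drawn is $\sigma=|A|^{p_0}=\lambda|B^*|^{p_1}$. When $p_0\ne p_1$ the displayed condition is neither sufficient nor necessary: with $p_0=1$, $p_1=2$, $p=2/3$, taking $A$ and $B$ diagonal on a common basis with entries $(4,1)$ and $(2,1)$ gives equality in \eqref{Holder} with $A\ne\lambda B$, while $A=B$ diagonal with two distinct nonzero entries gives strict inequality. So read the lemma's displayed condition as shorthand; your rigidity analysis, once completed, should land on the $|A|^{p_0}\propto|B^*|^{p_1}$ characterization. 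Executing that analysis rigorously in the infinite-dimensional compact setting, including zero singular values and the equality case of Horn's inequality, is the content of \cite{HM_Holder} and is not a short exercise; the present proposal does not supply it.
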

\begin{proof}
The inequality is well-known; see, e.g., \cite[Proposition 2.7]{Hiai_expop}.
The characterization of equality has been known for a long time in the case $p=1$;
see, e.g., \cite{Dixmier_Holder} and
\cite{Larotonda_Holder_equality}. For the case of a general positive $p$, see 
\cite{HM_Holder}.
\end{proof}

We will use the notations $\wo\lim$ and $\so\lim$ for limits in the weak and the strong operator topologies, respectively.
The following two statements are from \cite{Grumm}.

\begin{lemma}\label{lemma:Grumm}
Let $A\in \L^p(\hil)$ for some $p\ge 1$, and
$B_n\in\B(\hil,\kil)$, $C_n\in\B(\kil,\hil)$, $n\in\bN$, be two sequences
bounded in operator norm and
converging strongly to some
$B_{\infty}\in\B(\hil,\kil)$ and $C_{\infty}\in\B(\kil,\hil)$, respectively.
Then 
\begin{align}
\lim_{n\to+\infty}\norm{B_nAC_n-B_{\infty}AC_{\infty}}_p=0,\ds\ds\ds
\lim_{n\to+\infty}\norm{B_nAC_n}_p=\norm{B_{\infty}AC_{\infty}}_p.\label{Grumm1}
\end{align}
\end{lemma}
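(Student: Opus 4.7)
My plan is to reduce to the case of finite-rank $A$ using the density of $\B_f(\hil)$ in $\L^p(\hil)$, and then treat the finite-rank case by a direct singular-value-decomposition argument. Throughout, set $M:=\sup_n\max(\|B_n\|_\infty,\|C_n\|_\infty)<\infty$; by lower semicontinuity of the operator norm under SOT limits, $\|B_\infty\|_\infty,\|C_\infty\|_\infty\le M$ as well.

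Given $\epsilon>0$, pick $A_\epsilon\in\B_f(\hil)$ with $\|A-A_\epsilon\|_p<\epsilon$. H\"older's inequality (Lemma~\ref{lemma:Holder}, applied twice with $(p_0,p_1)=(\infty,p)$) yields
\begin{align*}
\|B_n(A-A_\epsilon)C_n\|_p+\|B_\infty(A-A_\epsilon)C_\infty\|_p\le 2M^2\epsilon,
\end{align*}
so it suffices to show $\|B_n A_\epsilon C_n-B_\infty A_\epsilon C_\infty\|_p\to 0$ for each finite-rank $A_\epsilon$.

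Expanding $A_\epsilon=\sum_{i=1}^N|\phi_i\rangle\langle\psi_i|$ via the SVD, I would write
\begin{align*}
B_n A_\epsilon C_n-B_\infty A_\epsilon C_\infty=\sum_{i=1}^N\Big(|B_n\phi_i\rangle\langle C_n^*\psi_i|-|B_\infty\phi_i\rangle\langle C_\infty^*\psi_i|\Big),
\end{align*}
and use the elementary telescoping $|a\rangle\langle b|-|a'\rangle\langle b'|=|a-a'\rangle\langle b|+|a'\rangle\langle b-b'|$ together with the rank-one identity $\||u\rangle\langle v|\|_p=\|u\|_2\|v\|_2$ and the triangle inequality in $\|\cdot\|_p$ to bound the $p$-norm of the right-hand side by
\begin{align*}
\sum_{i=1}^N\Big(\|(B_n-B_\infty)\phi_i\|\cdot M\|\psi_i\|+M\|\phi_i\|\cdot\|(C_n^*-C_\infty^*)\psi_i\|\Big),
\end{align*}
which tends to $0$ termwise. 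Combined with the density reduction this yields $\limsup_n\|B_n AC_n-B_\infty AC_\infty\|_p\le 2M^2\epsilon$ for every $\epsilon>0$, proving the first identity; the second follows at once from the reverse triangle inequality $|\|X\|_p-\|Y\|_p|\le\|X-Y\|_p$.

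The one genuinely subtle point is the convergence $C_n^*\psi_i\to C_\infty^*\psi_i$, which does \emph{not} follow from the SOT convergence $C_n\to C_\infty$ alone: on $\ell^2$ the shift-type matrix units $C_n:=\sum_k|e_k\rangle\langle e_{k+n}|$ satisfy $C_n\to 0$ strongly while each $C_n^*$ is an isometry, and with $A=|e_1\rangle\langle e_1|$ one has $\|AC_n\|_p=1$ for all $n$. Consequently the hypothesis ``converging strongly'' has to be read in the strong-$*$ sense, i.e.\ as also requiring $B_n^*\to B_\infty^*$ and $C_n^*\to C_\infty^*$ strongly. This is automatic in the paper's intended applications, where the $B_n,C_n$ are typically self-adjoint finite-rank projections so that strong and strong-$*$ convergence coincide; I would flag this strong-$*$ reading explicitly at the start of the proof.
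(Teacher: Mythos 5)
Your argument is correct, and your observation about the hypothesis is the more substantive point, so let me address both. The paper's proof is simply a citation of Theorem~1 in Gr\"umm's paper; what you write out is the standard direct proof of that result (density of $\B_f(\hil)$ in $\L^p(\hil)$, plus a rank-one telescoping estimate). The two therefore amount to the same argument, yours just being self-contained rather than by reference.

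More importantly, you are right that the lemma as literally stated needs the strong-$*$ reading, and your shift counterexample is exactly on target: with $C_n := \sum_k \ketbra{e_k}{e_{k+n}}$ on $\ell^2(\bN)$ one has $C_n\to 0$ in SOT while $\|AC_n\|_p=1$ for every $n$ when $A=\pr{e_1}$, so the conclusion fails. The mechanism is precisely what your proof isolates: $\|A(C_n-C_\infty)\|_p=\|(C_n^*-C_\infty^*)A^*\|_p$ with $A^*\in\L^p(\hil)$, so it is SOT convergence of $C_n^*$ that is actually used. Gr\"umm's original theorem writes the right-hand factor as $C_n^*$, so his SOT hypothesis on $C_n$ is exactly this adjoint convergence; the restatement in the lemma drops a star. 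The oversight is harmless in the paper's applications — in Lemma~\ref{lemma:Q alpha z as limit} the factors are the self-adjoint spectral projections $P_n$, and in Lemma~\ref{lemma:Q alpha as limit} the hypothesis already posits $\exists\,\so\lim_n K_n^*$ — but you are right that the clean statement should require either $C_n^*\to C_\infty^*$ in SOT or, symmetrically, that $B_n$ and $C_n$ converge in the strong-$*$ operator topology.
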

\begin{proof}
The first limit in \eqref{Grumm1} is immediate from \cite[Theorem 1]{Grumm},
and the second limit follows from it trivially. 
\end{proof}

The following is Theorem 2 in \cite{Grumm}:

\begin{lemma}\label{lemma:Grumm2}
Let $p\in[1,+\infty)$ and $A,A_n\in \L^p(\hil)$, $n\in\bN$, be such that 
$\so\lim_n A_n=A$, $\so\lim_n A_n^*=A^*$, and $\lim_n\norm{A_n}_p=\norm{A}_p$. Then
$\lim_n\norm{A_n-A}_p=0$.
\end{lemma}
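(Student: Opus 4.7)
The plan is to split by the value of $p$. For $p\in(1,\infty)$ the Schatten class $\L^p(\hil)$ is uniformly convex by Clarkson's inequalities and hence has the Radon-Riesz property: weak convergence together with convergence of norms forces norm convergence. So it suffices to establish weak $\L^p$-convergence $A_n\to A$, that is, $\Tr(KA_n)\to\Tr(KA)$ for every $K\in\L^q(\hil)$ with $\frac 1p+\frac 1q=1$. For rank-one $K=\ketbra{\xi}{\eta}$ this reduces to $\inner{\eta}{A_n\xi}\to\inner{\eta}{A\xi}$, which follows from $\so\lim_n A_n=A$; density of the finite-rank operators in $\L^q(\hil)$ and the uniform bound $\sup_n\norm{A_n}_p<\infty$ (supplied by the assumed norm convergence) then extend this to every $K\in\L^q(\hil)$. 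Thus $\norm{A_n-A}_p\to 0$ for every $p>1$. Notice that the hypothesis $\so\lim_n A_n^*=A^*$ is not used in this regime.

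For $p=1$ the space $\L^1(\hil)$ is not uniformly convex and the Radon-Riesz shortcut fails, so I would proceed in three successive reductions, ending with an application of the $p=2$ case just proved. If $A_n,A\ge 0$, the trivial bound $\norm{A_n}_\infty\le\norm{A_n}_1$ makes the sequence uniformly bounded in operator norm, so continuous functional calculus yields $\so\lim_n A_n^{1/2}=A^{1/2}$ and $\norm{A_n^{1/2}}_2^2=\norm{A_n}_1\to\norm{A}_1=\norm{A^{1/2}}_2^2$; the $p=2$ case then gives $\norm{A_n^{1/2}-A^{1/2}}_2\to 0$, and the identity $A_n-A=A_n^{1/2}(A_n^{1/2}-A^{1/2})+(A_n^{1/2}-A^{1/2})A^{1/2}$ combined with H\"older's inequality (Lemma~\ref{lemma:Holder}) closes the positive case. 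For self-adjoint $A_n$, SOT-continuity of $x\mapsto x_\pm$ on norm-bounded sequences, lower semicontinuity of $\norm{\cdot}_1$ under SOT (via duality with the compact operators), and the additivity $\norm{A_n}_1=\norm{(A_n)_+}_1+\norm{(A_n)_-}_1$ together force $\norm{(A_n)_\pm}_1\to\norm{A_\pm}_1$, reducing to the positive case. Finally, for a general $A_n$ I would use the $2{\times}2$ trick $\tilde A_n:=\begin{pmatrix}0 & A_n \\ A_n^* & 0\end{pmatrix}$ on $\hil\oplus\hil$, which produces a self-adjoint sequence with $\abs{\tilde A_n}=\diag(\abs{A_n^*},\abs{A_n})$, $\norm{\tilde A_n}_1=2\norm{A_n}_1$, and $\norm{\tilde A_n-\tilde A}_1=2\norm{A_n-A}_1$; this is the only step where the hypothesis $\so\lim_n A_n^*=A^*$ is genuinely essential.

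The main obstacle is clearly the $p=1$ regime, where the standard Radon-Riesz argument is unavailable and one must chain together a self-adjoint reduction, a positivity reduction, and a square-root reduction. At each step, SOT-continuity of the relevant functional calculus must be justified, which in turn depends on the uniform operator-norm bound $\sup_n\norm{A_n}_\infty\le\sup_n\norm{A_n}_p<\infty$ automatically provided by the hypothesis $\norm{A_n}_p\to\norm{A}_p$.
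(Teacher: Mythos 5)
The paper does not actually prove this lemma: it is stated with the single sentence ``The following is Theorem 2 in \cite{Grumm}'' and no argument is given, so there is no in-paper proof to compare against. Evaluated on its own merits, your proposal is a correct and complete proof. For $p>1$ the Radon--Riesz (Kadec--Klee) argument via uniform convexity of the Schatten classes is valid, and your reduction of weak $\L^p$-convergence to the rank-one case, plus density of finite-rank operators in $\L^q$ and the uniform bound $\sup_n\norm{A_n}_p<+\infty$, is exactly right; you correctly note that only $\so\lim_n A_n=A$ enters there. For $p=1$ the cascade --- from general to self-adjoint via the $2\times 2$ off-diagonal dilation (where $\so\lim_n A_n^*=A^*$ is genuinely used and $\abs{\tilde A_n}=\diag(\abs{A_n^*},\abs{A_n})$ gives $\norm{\tilde A_n}_1=2\norm{A_n}_1$ and $\norm{\tilde A_n-\tilde A}_1=2\norm{A_n-A}_1$), from self-adjoint to positive via positive/negative parts together with $\so$-lower semicontinuity of $\norm{\cdot}_1$ (which is Lemma \ref{lemma:pnorm lsc} in the paper) and $\norm{A_n}_1=\norm{(A_n)_+}_1+\norm{(A_n)_-}_1$, and from positive $p=1$ to the already-proved $p=2$ case via square roots with the H\"older estimate on $A_n-A=A_n^{1/2}(A_n^{1/2}-A^{1/2})+(A_n^{1/2}-A^{1/2})A^{1/2}$ --- is sound. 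The SOT-continuity of the continuous functional calculus that you invoke at several steps does require the uniform operator-norm bound, which you correctly extract from $\norm{A_n}_\infty\le\norm{A_n}_p$ and norm convergence. One very minor remark: the ``Clarkson inequalities'' for Schatten classes are usually attributed to McCarthy rather than Clarkson, but this does not affect the argument.
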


The following is a special case of \cite[Proposition 2.11]{Hiai_expop}:

\begin{lemma}\label{lemma:pnorm lsc}
Assume that a sequence $A_n\in\B(\hil)$, $n\in\bN$, converges to some 
$A\in\B(\hil)$ in the weak operator topology. For any 
$p\in[1,+\infty]$, 
\begin{align*}
\norm{A}_{p}\le\liminf_{n\to+\infty}\norm{A_n}_{p}.
\end{align*}
\end{lemma}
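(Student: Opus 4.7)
The plan is to treat $p=\infty$ separately, and for $p\in[1,+\infty)$ reduce the claim, via a duality formula for the Schatten norm, to testing only against finite-rank operators, on which WOT convergence interacts well with the trace pairing.

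For $p=\infty$ the argument is immediate from the definition of the operator norm: for unit vectors $x,y\in\hil$, WOT convergence gives $|\inner{x}{Ay}|=\lim_{n\to+\infty}|\inner{x}{A_ny}|\le\liminf_{n\to+\infty}\norm{A_n}_\infty$, and taking the supremum over $x,y$ yields $\norm{A}_\infty\le\liminf_{n\to+\infty}\norm{A_n}_\infty$.

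For $p\in[1,+\infty)$ let $q\in(1,+\infty]$ be the H\"older conjugate. The key ingredient is the identity
\begin{align*}
\norm{A}_p=\sup\bigl\{|\Tr(AB)|:\,B\in\B_f(\hil),\,\norm{B}_q\le 1\bigr\}.
\end{align*}
For $p>1$ this is standard, following from the duality $(\L^q(\hil))^*\cong\L^p(\hil)$ together with the $\norm{\valt}_q$-density of $\B_f(\hil)$ in $\L^q(\hil)$. The case $p=1$, $q=+\infty$ requires a separate argument since $\B_f(\hil)$ is not norm-dense in $\B(\hil)$; taking a polar decomposition $A=U|A|$ and letting $E_k\in\bP_f(\hil)$ be the spectral projection of $|A|$ onto its top $k$ singular values, the operators $B_k:=E_kU^*$ satisfy $B_k\in\B_f(\hil)$, $\norm{B_k}_\infty\le 1$, and
\begin{align*}
\Tr(AB_k)=\Tr(U|A|E_kU^*)=\Tr(E_kU^*U|A|E_k)=\Tr(E_k|A|E_k)\uparrow\norm{A}_1
\end{align*}
by monotone convergence (using $U^*U=A^0\ge E_k$), which establishes the identity.

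Given the identity, fix an admissible $B\in\B_f(\hil)$ and write $B=\sum_{i=1}^m\ketbra{\xi_i}{\eta_i}$. Then $\Tr(A_nB)=\sum_{i=1}^m\inner{\eta_i}{A_n\xi_i}\to\sum_{i=1}^m\inner{\eta_i}{A\xi_i}=\Tr(AB)$ by WOT convergence, while H\"older's inequality (Lemma \ref{lemma:Holder}) yields $|\Tr(A_nB)|\le\norm{A_n}_p\norm{B}_q\le\norm{A_n}_p$; combining,
\begin{align*}
|\Tr(AB)|=\lim_{n\to+\infty}|\Tr(A_nB)|\le\liminf_{n\to+\infty}\norm{A_n}_p,
\end{align*}
and taking the supremum over such $B$ completes the proof. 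The only genuinely subtle step is verifying the finite-rank duality formula at $p=1$; the polar-decomposition trick above bypasses the non-density obstruction cleanly.
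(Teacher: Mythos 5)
The paper does not prove this lemma itself; it simply cites \cite[Proposition 2.11]{Hiai_expop}, so there is no in-paper argument to compare against. Your self-contained proof via the trace duality $\norm{A}_p=\sup\{|\Tr(AB)|:B\in\B_f(\hil),\,\norm{B}_q\le1\}$ is the standard route and is essentially correct: the $p=\infty$ case is handled directly, the $p>1$ case follows from $(\L^q)^*\cong\L^p$ (note that this duality covers $A\notin\L^p$ as well, since a finite supremum would exhibit a representing element of $\L^p$ that must agree with $A$ on rank-one operators, hence equal $A$), and the reduction to finite-rank test operators makes the WOT convergence usable.

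One small imprecision in the $p=1$ argument: the phrase ``$E_k$ the spectral projection of $|A|$ onto its top $k$ singular values'' tacitly assumes $|A|$ is compact; for a general bounded, possibly non-compact $A$ (where indeed $\norm{A}_1=+\infty$), no such spectral projection need exist. The fix is cosmetic. Your computation
\begin{align*}
\Tr(AB_k)=\Tr(U|A|E_kU^*)=\Tr(E_kU^*U|A|E_k)=\Tr(E_k|A|E_k)
\end{align*}
only uses $E_k\le A^0=U^*U$, not that $E_k$ commutes with $|A|$. So you can take \emph{any} increasing sequence $E_k\in\bP_f(\hil)$ of finite-rank projections with $E_k\nearrow A^0$ (which always exists, $\hil$ being separable); then $|A|^{1/2}E_k|A|^{1/2}\nearrow|A|$ and monotone convergence of the trace gives $\Tr(E_k|A|E_k)\uparrow\Tr|A|=\norm{A}_1$, whether this is finite or $+\infty$. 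With that replacement the proof is complete and clean.
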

\medskip

We will need the following straightforward generalization of the minimax theorem from \cite[Corollary A.2]{MH}. Its proof is essentially the same, which we include for readers' convenience.

\begin{lemma}\label{lemma:minimax2}
Let $X$ be a compact topological space, $Y$ be an upward directed partially ordered set, and let $f:\,X\times Y\to \bR\cup\{-\infty,+\infty\}$ be a function. Assume that
\smallskip

\s(i) $f(.\,,\,y)$ is upper semicontinuous for every $y\in Y$ and
\smallskip

(ii) $f(x,.)$ is monotonic decreasing for every $x\in X$.

\noindent Then 
\begin{align}\label{minimax statement}
\sup_{x\in X}\inf_{y\in Y}f(x,y)=
\inf_{y\in Y}\sup_{x\in X}f(x,y),
\end{align}
and the suprema in \eqref{minimax statement} can be replaced by maxima.
\end{lemma}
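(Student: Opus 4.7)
The plan is the standard compactness-plus-upper-semicontinuity argument. The inequality $\le$ in \eqref{minimax statement} is the trivial max--min inequality and holds without any assumption. For the reverse, I set $\alpha:=\inf_{y\in Y}\sup_{x\in X}f(x,y)$, and first note that assumption (i) together with compactness of $X$ makes each inner supremum $\sup_x f(x,y)$ attained, so it may be replaced by a maximum.

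Assuming first that $\alpha\in\bR$, I would consider the superlevel sets
\[
K_y:=\{x\in X:\,f(x,y)\ge\alpha\},\qquad y\in Y,
\]
which are closed by (i) and non-empty by the attainedness just noted. From (ii) I get $K_{y'}\subseteq K_y$ whenever $y\le y'$, and upward directedness of $Y$ then yields the finite intersection property of $\{K_y\}_{y\in Y}$: for any finite $F\subseteq Y$ there is $y^\star\ge y$ for all $y\in F$, hence $\emptyset\ne K_{y^\star}\subseteq \bigcap_{y\in F}K_y$. By compactness of $X$ there is some $x^\star\in\bigcap_{y\in Y}K_y$, which satisfies $\inf_y f(x^\star,y)\ge\alpha$. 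Combined with the trivial inequality, this both proves \eqref{minimax statement} and exhibits $x^\star$ as an attaining point for the outer supremum $\sup_x\inf_y f(x,y)$.

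To complete the argument I would handle the edge cases. If $\alpha=-\infty$, then for each $M\in\bR$ one can pick $y_M\in Y$ with $\sup_x f(x,y_M)<M$, so that $\inf_y f(x,y)\le f(x,y_M)<M$ for every $x$; since $M$ was arbitrary, $\inf_y f(x,y)=-\infty$ for all $x$, so the left-hand side equals $-\infty$ and is trivially attained. If $\alpha=+\infty$, I would use instead the two-parameter family $K_{y,n}:=\{x\in X:\,f(x,y)\ge n\}$ indexed by $(y,n)\in Y\times\bN$; each of these is closed and non-empty by the attainedness of the inner supremum (which now equals $+\infty$), upward directedness in both coordinates gives the FIP, and compactness produces an $x^\star$ with $f(x^\star,y)=+\infty$ for all $y$.

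No genuine obstacle is expected: this is a textbook finite-intersection-property argument, and the only things requiring mild care are the infinite-valued cases and the claim that also the outer supremum is a maximum---this last follows either from the $x^\star$ constructed above, or, alternatively, from the observation that $x\mapsto\inf_y f(x,y)$ is upper semicontinuous as an infimum of upper semicontinuous functions, and hence attains its supremum on the compact set $X$.
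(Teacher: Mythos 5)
Your proof is correct and, in substance, the same as the paper's: both arguments hinge on upward directedness plus monotonicity in $y$ to reduce to a single dominating index, and on compactness of $X$ together with upper semicontinuity of $f(\cdot\,,y)$ to pass to the full family. The only real difference is presentational: the paper delegates the compactness step to Lemma A.1 of \cite{MH} (reduction to finite subsets of $Y$), while you carry it out directly via the finite-intersection property of the closed superlevel sets $K_y$, which makes the argument self-contained, produces the maximizing $x^\star$ explicitly, and cleanly handles the $\alpha=\pm\infty$ edge cases that the paper leaves implicit.
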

\begin{proof}
The inequality $\sup_{x\in X}\inf_{y\in Y}f(x,y)\le\inf_{y\in Y}\sup_{x\in X}f(x,y)$
is trivial, and for the converse inequality it is sufficient to prove
that for any finite subset $Y'\subseteq Y$, 
\begin{align*}
\sup_{x\in X}\inf_{y\in Y'}f(x,y)\ge \inf_{y\in Y}\sup_{x\in X}f(x,y),
\end{align*}
according to \cite[Lemma A.1]{MH} (applied to $-f$ in place of $f$).
Due to $Y$ being upward directed, for any finite subset $Y'\subseteq Y$, there exists 
a $y^*\in Y$ such that $y\le y^*$ for every $y\in Y'$. Since 
$f(x,.)$ is assumed to be monotone decreasing, we get
\begin{align*}
\sup_{x\in X}\inf_{y\in Y'}f(x,y)\ge
\sup_{x\in X}f(x,y^*)
\ge
\inf_{y\in Y}\sup_{x\in X}f(x,y),
\end{align*}
as required. The assertion about the maxima is straightforward from the assumed semi-continuity and the compactness of $X$.
\end{proof}

\section{The R\'enyi $(\alpha,z)$-divergences in infinite dimension}
\label{sec:sand Renyi}

The sandwiched R\'enyi $\alpha$-divergences for pairs of finite-dimensional density operators 
were introduced in \cite{Renyi_new,WWY}.
The R\'enyi $(\alpha,z)$-divergences \cite{AD,JOPP} give a $2$-parameter extension of this family, which includes both the sandwiched R\'enyi divergences (corresponding to $z=\alpha$)
and the Petz-type, or standard R\'enyi divergences \cite{P86} (corresponding to $z=1$) as special cases. 
 
The concept of the sandwiched R\'enyi divergences was extended recently to 
pairs of positive normal linear functionals on a general von Neumann algebra
in \cite{BST,Jencova_NCLp,Jencova_NCLpII}, while the Petz-type R\'enyi divergences 
have been studied in this more general setting for a long time 
\cite{Kosaki1982,Petz_QE_vN,Hiai_fdiv_Springer}. 
These extensions require 
advanced knowledge of von Neumann algebras, 
and the details of the proofs might be difficult to verify for those who are not experts in the subject. Below we give a more pedestrian exposition of the definition and basic 
properties of the R\'enyi divergences in the simpler case where the 
states are represented by density operators on a possibly infinite-dimensional Hilbert space, while in the same time we also generalize the above works in this setting to the case where density operators may be replaced by arbitrary positive semi-definite operators. Since these are mostly not assumed to be trace-class, they cannot be normalized to states in the properly infinite-dimensional case. Moreover, we also consider the more general notion of 
R\'enyi $(\alpha,z)$-divergences in this setting. 

The recoverability of the sandwiched R\'enyi divergences from finite-size restrictions, 
given in Proposition \ref{cor:fa equality}, seems to be new even for 
density operators, although in that case it follows easily from 
the known properties of monotonicity and lower semi-continuity of the sandwiched R\'enyi divergences.

\subsection{Definition and basic properties}

The sandwiched R\'enyi divergence of $\rho$ and $\sigma$ is finite according to 
the definition in \cite{Jencova_NCLp} if and only if 
$\rho$ is in Kosaki's interpolation space $\L^{\alpha}(\hil,\sigma)$. 
The following lemma gives various alternative characterizations of this condition, and 
also an extension that we will use in the definition of the 
R\'enyi $(\alpha,z)$-divergences in this setting.
The lemma is essentially a special case of Douglas' range inclusion theorem 
\cite{Douglas} for PSD operators with $A:=\rho^{\frac{\alpha}{2z}}$ and 
$B:=\sigma^{\frac{\alpha-1}{2z}}$ (points \ref{rho alpha4}--\ref{rho alpha6}) as well as
an extension with further equivalent characterizations 
(points \ref{rho alpha1}--\ref{rho alpha3}), and it is inspired by a similar statement 
for the $\alpha=z=+\infty$ case given in \cite{LI2021175}.

Let us introduce the notation
\begin{align*}
\alphaz:=(1,+\infty)\times(0,+\infty)\cupdot\{(+\infty,+\infty)\}.
\end{align*}
For $(\alpha,z):=(+\infty,+\infty)$, we will use the convention
$\frac{\alpha}{z}:=1$, and define similar expressions by a formal calculus, e.g., 
$\frac{\alpha}{2z}:=\half\frac{\alpha}{z}=\half$,
$\frac{\alpha-1}{2z}:=\frac{\alpha}{2z}-\frac{1}{2z}=\half$, etc.

\begin{lemma}\label{lemma:rho alpha}
Let $\rho,\sigma\in\B(\hil)\p$, and let 
$(\alpha,z)\in\alphaz$.
The following are equivalent:
\begin{enumerate}
\item\label{rho alpha1}
There exists an $R\in\B(\hil)$ such that 
\begin{align}\label{rho in weighted alpha}
\rho^{\frac{\alpha}{z}}=\sigma^{\frac{\alpha-1}{2z}}R\sigma^{\frac{\alpha-1}{2z}}.
\end{align}
\item\label{rho alpha2}
$\ran\rho^{\frac{\alpha}{z}}\subseteq\ran\sigma^{\frac{\alpha-1}{2z}}$, and 
$\sigma^{\frac{1-\alpha}{2z}}\rho^{\frac{\alpha}{z}}\sigma^{\frac{1-\alpha}{2z}}$\s 
is densely defined and bounded.
\item\label{rho alpha3}
$\rho^0\le\sigma^0$, and 
for any/some sequences 
$0<c_n<d_n$ with $c_n\to 0$, $d_n\to+\infty$,
the sequence of bounded operators 
\begin{align}\label{truncated ops}
\sigma_n^{\frac{1-\alpha}{2z}}\rho^{\frac{\alpha}{z}}
\sigma_n^{\frac{1-\alpha}{2z}},\ds \ds n\in\bN,
\end{align}
converges in the weak/strong operator topology,
where $\sigma_n:=P_n\sigma P_n$, $P_n:=\egy_{(c_n,d_n)}(\sigma)$.
\item\label{rho alpha4}
$\ran\rho^{\frac{\alpha}{2z}}\subseteq\ran\sigma^{\frac{\alpha-1}{2z}}$.
\item\label{rho alpha5}
$\sigma^{\frac{1-\alpha}{2z}}\rho^{\frac{\alpha}{2z}}\in\B(\hil)$. 
\item\label{rho alpha6}
There exists a $\lambda\ge 0$ such that 
$\rho^{\frac{\alpha}{z}}\le\lambda\sigma^{\frac{\alpha-1}{z}}$.
\end{enumerate}
Moreover, if the above hold then $\oll{\rho^{\frac{\alpha}{2z}}\sigma^{\frac{1-\alpha}{2z}}}=
\bz\sigma^{\frac{1-\alpha}{2z}}\rho^{\frac{\alpha}{2z}}\jz^*$, and 
among all operators $R$ as in \eqref{rho in weighted alpha} 
there exists a unique PSD operator
with the property $R^0\le\sigma^0$, 
denoted by $\rho_{\sigma,\alpha,z}$,
which can be expressed as
\begin{align}
\rho_{\sigma,\alpha,z}
&=
\oll{\sigma^{\frac{1-\alpha}{2z}}\rho^{\frac{\alpha}{z}}\sigma^{\frac{1-\alpha}{2z}}}
=
\bz\sigma^{\frac{1-\alpha}{2z}}\rho^{\frac{\alpha}{2z}} \jz
\oll{\rho^{\frac{\alpha}{2z}}\sigma^{\frac{1-\alpha}{2z}}}
=
\bz\sigma^{\frac{1-\alpha}{2z}}\rho^{\frac{\alpha}{2z}} \jz\bz\sigma^{\frac{1-\alpha}{2z}}\rho^{\frac{\alpha}{2z}} \jz^*\label{rho alpha eq1}\\
&=
\so\lim_{n\to+\infty}\sigma_n^{\frac{1-\alpha}{2z}}\rho^{\frac{\alpha}{z}}\sigma_n^{\frac{1-\alpha}{2z}}
=
\wo\lim_{n\to+\infty}\sigma_n^{\frac{1-\alpha}{2z}}\rho^{\frac{\alpha}{z}}\sigma_n^{\frac{1-\alpha}{2z}},\label{rho alpha eq2}
\end{align}
where $(\sigma_n)_{n\in\bN}$ is any sequence as in \ref{rho alpha3}.
This unique $\rho_{\sigma,\alpha,z}$ is in the von Neumann algebra generated by 
$\rho$ and $\sigma$, and its operator norm is equal to the smallest $\lambda$ for which 
\ref{rho alpha6} holds.
\end{lemma}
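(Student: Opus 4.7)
The plan is to organize the equivalences around Douglas' range inclusion theorem, which gives the core triangle (iv)$\iff$(v)$\iff$(vi), and to reduce the remaining conditions to these. I would apply Douglas' theorem with $A:=\rho^{\frac{\alpha}{2z}}$ and $B:=\sigma^{\frac{\alpha-1}{2z}}$: the theorem immediately identifies (iv) with the range inclusion $\ran A\subseteq\ran B$ and (vi) with $AA^*\le\lambda BB^*$, and moreover produces a bounded $D\in\B(\hil)$ with $\rho^{\frac{\alpha}{2z}}=\sigma^{\frac{\alpha-1}{2z}}D$ and $\|D\|_\infty^2$ equal to the smallest $\lambda$ in (vi). From (iv), the inclusion $\ran\rho^{\frac{\alpha}{2z}}\subseteq\ran\sigma^{\frac{\alpha-1}{2z}}\subseteq\dom(\sigma^{\frac{1-\alpha}{2z}})$ shows that $\sigma^{\frac{1-\alpha}{2z}}\rho^{\frac{\alpha}{2z}}$ is everywhere defined, and its boundedness follows either from the closed graph theorem or directly from $\sigma^{\frac{1-\alpha}{2z}}\rho^{\frac{\alpha}{2z}}=\sigma^0 D$, giving (v); the converse (v)$\Rightarrow$(iv) is immediate from $\rho^{\frac{\alpha}{2z}}x=\sigma^{\frac{\alpha-1}{2z}}\bigl(\sigma^{\frac{1-\alpha}{2z}}\rho^{\frac{\alpha}{2z}}\bigr)x$.

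Next, I would tie (i) and (ii) to (v). For (v)$\Rightarrow$(i),(ii), set $K:=\sigma^{\frac{1-\alpha}{2z}}\rho^{\frac{\alpha}{2z}}\in\B(\hil)$; then the identity $\rho^{\frac{\alpha}{z}}=\sigma^{\frac{\alpha-1}{2z}}(KK^*)\sigma^{\frac{\alpha-1}{2z}}$ verifies (i) with $R:=KK^*$, and a parallel calculation with the functional calculus rules \eqref{fcalc adjoint}--\eqref{fcalc product} identifies $\oll{\sigma^{\frac{1-\alpha}{2z}}\rho^{\frac{\alpha}{z}}\sigma^{\frac{1-\alpha}{2z}}}=KK^*$, yielding (ii). For (i)$\Rightarrow$(v), from \eqref{rho in weighted alpha} one gets $\ran\rho^{\frac{\alpha}{z}}\subseteq\ran\sigma^{\frac{\alpha-1}{2z}}$, and writing $\rho^{\frac{\alpha}{z}}=|\rho^{\frac{\alpha}{2z}}|^2$ together with Douglas' theorem applied to the square root gives the half-power inclusion $\ran\rho^{\frac{\alpha}{2z}}\subseteq\ran\sigma^{\frac{\alpha-1}{2z}}$. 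This also yields the factorization chain in \eqref{rho alpha eq1}, once I verify $\oll{\rho^{\frac{\alpha}{2z}}\sigma^{\frac{1-\alpha}{2z}}}=K^*$ by taking the adjoint of the densely defined $K$ via \eqref{fcalc adjoint}--\eqref{fcalc product}.

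For (iii), I would use that the spectral projections $P_n$ converge strongly to $\sigma^0$ and that on $\ran P_n$ the truncated power $\sigma_n^{\frac{1-\alpha}{2z}}$ agrees with the bounded restriction of $\sigma^{\frac{1-\alpha}{2z}}$. Assuming (v), an application of functional calculus shows $\sigma_n^{\frac{1-\alpha}{2z}}\rho^{\frac{\alpha}{2z}}\to K$ strongly, from which Lemma \ref{lemma:Grumm} gives the strong-limit formula \eqref{rho alpha eq2}. Conversely, if the truncated sequence in \eqref{truncated ops} converges in the weak operator topology, the uniform boundedness principle yields a bound independent of $n$, and passing to the limit identifies the weak limit as a bounded extension of $\sigma^{\frac{1-\alpha}{2z}}\rho^{\frac{\alpha}{z}}\sigma^{\frac{1-\alpha}{2z}}$, delivering (ii); the assumption $\rho^0\le\sigma^0$ rules out the trivial case where $\rho$ has mass outside $\supp\sigma$. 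Uniqueness of $\rho_{\sigma,\alpha,z}$ among PSD solutions with $R^0\le\sigma^0$ is a short argument: any two such $R_1,R_2$ satisfy $\sigma^{\frac{\alpha-1}{2z}}(R_1-R_2)\sigma^{\frac{\alpha-1}{2z}}=0$, which via the support condition forces $R_1=R_2$. Finally, the norm identity $\|\rho_{\sigma,\alpha,z}\|_\infty=\lambda_{\min}$ follows from $\|KK^*\|_\infty=\|K\|_\infty^2=\|D\|_\infty^2$, and membership in the von Neumann algebra generated by $\rho$ and $\sigma$ is clear from the expression in \eqref{rho alpha eq2} as a strong limit of elements of that algebra.

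The main obstacle will be the careful bookkeeping of domains and closures for the unbounded operators $\sigma^{\frac{1-\alpha}{2z}}$: specifically, the step from the full-power range inclusion in (i) to the half-power inclusion in (iv), and the weak-to-strong upgrade in (iii) where uniform boundedness is essential to extract a bounded extension from the weak limit. The technical backbone is the pair of functional-calculus identities \eqref{fcalc adjoint}--\eqref{fcalc product}, which allow one to commute adjoints and closures through the product $\rho^{\frac{\alpha}{2z}}\sigma^{\frac{1-\alpha}{2z}}$ and its reverse despite one factor being unbounded.
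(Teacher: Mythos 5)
Your plan is organized around Douglas' range inclusion theorem, as is the paper's proof, but the step (i)$\Rightarrow$(v) as described contains a genuine gap. From \eqref{rho in weighted alpha} you extract the full-power range inclusion $\ran\rho^{\frac{\alpha}{z}}\subseteq\ran\sigma^{\frac{\alpha-1}{2z}}$ and then claim that ``writing $\rho^{\frac{\alpha}{z}}=|\rho^{\frac{\alpha}{2z}}|^2$ together with Douglas' theorem applied to the square root'' gives the half-power inclusion $\ran\rho^{\frac{\alpha}{2z}}\subseteq\ran\sigma^{\frac{\alpha-1}{2z}}$. This does not follow: by Douglas, $\ran\rho^{\frac{\alpha}{z}}\subseteq\ran\sigma^{\frac{\alpha-1}{2z}}$ is equivalent to $\rho^{\frac{2\alpha}{z}}\le\lambda\sigma^{\frac{\alpha-1}{z}}$, which is \emph{strictly weaker} than the inequality $\rho^{\frac{\alpha}{z}}\le\lambda\sigma^{\frac{\alpha-1}{z}}$ that Douglas would require for the half-power inclusion. (For a concrete instance where the former holds and the latter fails, take $\rho,\sigma$ diagonal with eigenvalues $1/k$ and $1/k^4$ and $\alpha=z=2$.) What saves the implication is that (i) is in fact much stronger than the full-power range inclusion alone: since $R\in\B(\hil)$, the quadratic-form bound $\langle y,Ry\rangle\le\norm{R}_{\infty}\norm{y}^2$ applied to $y=\sigma^{\frac{\alpha-1}{2z}}x$ gives $\rho^{\frac{\alpha}{z}}\le\norm{R}_{\infty}\sigma^{\frac{\alpha-1}{z}}$ directly — this is precisely (vi) — and Douglas then delivers (iv) and (v). This is the route the paper takes, and the norm identity $\lambda_{\min}=\norm{\rho_{\sigma,\alpha,z}}_{\infty}$ drops out of it.

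A second, smaller issue is that you never close the loop through (ii): you show (v)$\Rightarrow$(i),(ii) and (i)$\Rightarrow$(v), which establishes (ii) only as a consequence, not as an equivalent condition. The missing implication (ii)$\Rightarrow$(i) is short — conjugate the bounded closure $\oll{\sigma^{\frac{1-\alpha}{2z}}\rho^{\frac{\alpha}{z}}\sigma^{\frac{1-\alpha}{2z}}}$ by $\sigma^{\frac{\alpha-1}{2z}}$ on both sides and use the range inclusion in (ii) to identify the result with $\rho^{\frac{\alpha}{z}}$ — but it must be stated. Relatedly, for (iii)$\Rightarrow$(ii) your claim that the weak limit is recognized as a bounded extension of $\sigma^{\frac{1-\alpha}{2z}}\rho^{\frac{\alpha}{z}}\sigma^{\frac{1-\alpha}{2z}}$ is more delicate than the uniform-boundedness remark suggests (and also does not by itself give the range-inclusion half of (ii)); the paper's cleaner route is (iii)$\Rightarrow$(i): conjugate the weak limit on both sides by the \emph{bounded} operator $\sigma^{\frac{\alpha-1}{2z}}$, which collapses the truncations to $P_n\rho^{\frac{\alpha}{z}}P_n\to\rho^{\frac{\alpha}{z}}$ using $\rho^0\le\sigma^0$.
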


\begin{proof}
Note that if $R\in\B(\hil)$ satisfies 
\eqref{rho in weighted alpha} then so does 
$\sigma^0R\sigma^0$ as well.
Moreover, any of the conditions above imply $\rho^0\le\sigma^0$. Hence,
we may assume without loss of generality that $\supp\sigma=\hil$, so that 
$\oran(\sigma^{\frac{\alpha-1}{2z}})=
\bz\ker\bz\sigma^{\frac{\alpha-1}{2z}}\jz\jz^{\perp}=(\ker\sigma)^{\perp}=\hil$.

Assume that \ref{rho alpha1} holds. Then $\ran\rho^\frac{\alpha}{z}\subseteq\ran\sigma^{\frac{\alpha-1}{2z}}$ holds trivially, and
\begin{align*}
\sigma^{\frac{1-\alpha}{2z}}\rho^{\frac{\alpha}{z}}\sigma^{\frac{1-\alpha}{2z}}
&=
\underbrace{\sigma^{\frac{1-\alpha}{2z}}\sigma^{\frac{\alpha-1}{2z}}}_{=I}
R
\underbrace{\sigma^{\frac{\alpha-1}{2z}}\sigma^{\frac{1-\alpha}{2z}}}_{=I_{\ran\sigma^{\frac{\alpha-1}{2z}}}}
=
RI_{\ran\sigma^{\frac{\alpha-1}{2z}}}
=
R\Big\vert_{\ran\sigma^{\frac{\alpha-1}{2z}}},
\end{align*}
whence its closure is equal to $R$. This proves \ref{rho alpha2} and 
the existence of the unique $\rho_{\sigma,\alpha,z}$ with the postulated properties, as well as
the first equality in \eqref{rho alpha eq1}.
Moreover, for any 
$0<c_n<d_n$ with $c_n\to 0$, $d_n\to+\infty$,
we have 
\begin{align}\label{rho alpha proof1}
\sigma_n^{\frac{1-\alpha}{2z}}\rho^\frac{\alpha}{z}\sigma_n^{\frac{1-\alpha}{2z}}
&=
\underbrace{\sigma_n^{\frac{1-\alpha}{2z}}\sigma^{\frac{\alpha-1}{2z}}}_{=\egy_{(c_n,d_n)}(\sigma)}
\rho_{\sigma,\alpha,z}
\underbrace{\sigma^{\frac{\alpha-1}{2z}}\sigma_n^{\frac{1-\alpha}{2z}}}_{=\egy_{(c_n,d_n)}(\sigma)}\xrightarrow[n\to+\infty]{\so}\rho_{\sigma,\alpha,z},
\end{align}
where we used \eqref{fcalc product}. This proves \ref{rho alpha3} and 
the first equality in \eqref{rho alpha eq2}. Since 
$\sigma_n^{^{\frac{1-\alpha}{2z}}}\rho^\frac{\alpha}{z}\sigma_n^{\frac{1-\alpha}{2z}}$ is in the von Neumann algebra generated by 
$\rho$ and $\sigma$, so is $\rho_{\sigma,\alpha,z}$, according to 
\eqref{rho alpha proof1}.
Obviously, \ref{rho alpha1} also implies
\begin{align*}
\rho^\frac{\alpha}{z}\le\sigma^{\frac{\alpha-1}{2z}}(\norm{\rho_{\sigma,\alpha,z}}_{\infty}I)
\sigma^{\frac{\alpha-1}{2z}}
=
\norm{\rho_{\sigma,\alpha,z}}_{\infty}\sigma^{\frac{\alpha-1}{z}},
\end{align*}
whence \ref{rho alpha6} follows with $\lambda:=\norm{\rho_{\sigma,\alpha,z}}_{\infty}$.
As a consequence, $\lambda_{\min}\le\norm{\rho_{\sigma,\alpha,z}}_{\infty}$, where 
$\lambda_{\min}$ denotes the smallest $\lambda$ for which \ref{rho alpha6} holds.
Conversely, let $\lambda$ be as in \ref{rho alpha6}. Multiplying both sides by
$\sigma_n^{\frac{1-\alpha}{2z}}$ yields
$ \sigma_n^{\frac{1-\alpha}{2z}}\rho^\frac{\alpha}{z}\sigma_n^{\frac{1-\alpha}{2z}}
\le \lambda\egy_{(c_n,d_n)}(\sigma)$, which in combination with \eqref{rho alpha proof1}
gives $\norm{\rho_{\sigma,\alpha,z}}_{\infty}\le\lambda$. Thus,
$\lambda_{\min}=\norm{\rho_{\sigma,\alpha,z}}_{\infty}$, as stated.

Assume next that \ref{rho alpha2} holds. Then 
\begin{align}\label{rho alpha proof2}
\sigma^{\frac{\alpha-1}{2z}}\,
\oll{\sigma^{\frac{1-\alpha}{2z}}\rho^{\frac{\alpha}{z}}\sigma^{\frac{1-\alpha}{2z}}}
\,\sigma^{\frac{\alpha-1}{2z}}
\supseteq
\underbrace{\sigma^{\frac{\alpha-1}{2z}}
\sigma^{\frac{1-\alpha}{2z}}}_{=I_{\ran\sigma^{\frac{\alpha-1}{2\alpha}}}}
\rho^{\frac{\alpha}{z}}
\underbrace{\sigma^{\frac{1-\alpha}{2z}}
\sigma^{\frac{\alpha-1}{2z}}}_{=I}=\rho^{\frac{\alpha}{z}},
\end{align}
where the last equality follows from the assumption
$\ran\rho^{\frac{\alpha}{z}}\subseteq\ran\sigma^{\frac{\alpha-1}{2z}}$.
Since $\rho^{\frac{\alpha}{z}}$ is everywhere defined, it is actually equal to the 
first operator in \eqref{rho alpha proof2}, and thus \ref{rho alpha1} holds.
Moreover, if \eqref{rho alpha proof2} holds then for any $\phi\in\hil$, 
\begin{align*}
0\le\inner{\phi}{\rho^{\frac{\alpha}{z}}\phi}
=
\inner{\phi}{\sigma^{\frac{\alpha-1}{2z}}\,
\oll{\sigma^{\frac{1-\alpha}{2z}}\rho^{\frac{\alpha}{z}}\sigma^{\frac{1-\alpha}{2z}}}
\,\sigma^{\frac{\alpha-1}{2z}}\phi}
=
\inner{\sigma^{\frac{\alpha-1}{2z}}\phi}{
\oll{\sigma^{\frac{1-\alpha}{2z}}\rho^{\frac{\alpha}{z}}\sigma^{\frac{1-\alpha}{2z}}}
(\sigma^{\frac{\alpha-1}{2z}}\phi)}.
\end{align*}
Since $\ran\sigma^{\frac{\alpha-1}{2z}}$ is dense and 
$\oll{\sigma^{\frac{1-\alpha}{2z}}\rho^{\frac{\alpha}{z}}\sigma^{\frac{1-\alpha}{2z}}}$ is bounded, it follows that $\oll{\sigma^{\frac{1-\alpha}{2z}}\rho\sigma^{\frac{1-\alpha}{2z}}}$
is PSD.

Assume now \ref{rho alpha3}, i.e., that for some sequences
$0<c_n<d_n$ with $c_n\to 0$, $d_n\to+\infty$,
the sequence of operators
$\bz\sigma_n^{\frac{1-\alpha}{2z}}\rho^{\frac{\alpha}{z}}\sigma_n^{\frac{1-\alpha}{2z}}\jz_{n\in\bN}$
converges in the weak operator topology to some operator $R_{\sigma,\alpha,z}$. Then 
\begin{align*}
\sigma^{\frac{\alpha-1}{2z}}R_{\sigma,\alpha,z}\sigma^{\frac{\alpha-1}{2z}}
=
\wo\lim_{n\to+\infty}
\underbrace{\sigma^{\frac{\alpha-1}{2z}}\sigma_n^{\frac{1-\alpha}{2z}}}_{=\egy_{(c_n,d_n)}(\sigma)}
\rho^{\frac{\alpha}{z}}
\underbrace{\sigma_n^{\frac{\alpha-1}{2z}}\sigma^{\frac{\alpha-1}{2z}}}_{=\egy_{(c_n,d_n)}(\sigma)}=\rho^{\frac{\alpha}{z}},
\end{align*}
and hence \ref{rho alpha1} holds, as well as the second equality in 
\eqref{rho alpha eq2}.

The equivalence of \ref{rho alpha4}, \ref{rho alpha5}, and \ref{rho alpha6} follows from 
Douglas' range inclusion theorem \cite{Douglas}.
Note that \ref{rho alpha4}$\iff$\ref{rho alpha5} is simple, as 
$\sigma^{\frac{1-\alpha}{2z}}\rho^{\frac{\alpha}{2z}}$ being everywhere defined is equivalent to 
$\ran\rho^{\frac{\alpha}{2z}}\subseteq\dom \sigma^{\frac{1-\alpha}{2z}}=\ran\sigma^{\frac{\alpha-1}{2z}}$, and boundedness 
of $\sigma^{\frac{1-\alpha}{2z}}\rho^{\frac{\alpha}{2z}}$ is automatic from the boundedness of 
$\rho^{\frac{\alpha}{2z}}$ and the closedness of $\sigma^{\frac{1-\alpha}{2z}}$, due to the closed graph theorem. 
Moreover, we have 
\begin{align*}
\sigma^{\frac{\alpha-1}{2z}}\bz\sigma^{\frac{1-\alpha}{2z}}\rho^{\frac{\alpha}{2z}}\jz
=
I_{\ran \sigma^{\frac{\alpha-1}{2z}}}\rho^{\frac{\alpha}{2z}}=\rho^{\frac{\alpha}{2z}},
\end{align*}
whence
\begin{align*}
\rho^{\frac{\alpha}{2z}}\sigma^{\frac{1-\alpha}{2z}}
=
\bz\rho^{\frac{\alpha}{2z}}\jz^*\sigma^{\frac{1-\alpha}{2z}}
=
\bz\sigma^{\frac{1-\alpha}{2z}}\rho^{\frac{\alpha}{2z}}\jz^*
\underbrace{\sigma^{\frac{\alpha-1}{2z}}\sigma^{\frac{1-\alpha}{2z}}
}_{=I_{\ran \sigma^{\frac{\alpha-1}{2z}}}},
\end{align*}
which is densely defined and bounded. Thus,
$\oll{\rho^{\frac{\alpha}{2z}}\sigma^{\frac{1-\alpha}{2z}}}=
\bz\sigma^{\frac{1-\alpha}{2z}}\rho^{\frac{\alpha}{2z}}\jz^*$.
Finally, 
\begin{align}\label{rho alpha proof3}
\sigma^{\frac{\alpha-1}{2z}}\left[\bz\sigma^{\frac{1-\alpha}{2z}}\rho^{\frac{\alpha}{2z}}\jz
\oll{\rho^{\frac{\alpha}{2z}}\sigma^{\frac{1-\alpha}{2z}}}\,\right]
\sigma^{\frac{\alpha-1}{2z}}
\supseteq
\underbrace{\sigma^{\frac{\alpha-1}{2z}}\sigma^{\frac{1-\alpha}{2z}}
}_{=I_{\ran \sigma^{\frac{\alpha-1}{2z}}}}
\rho^{\frac{\alpha}{2z}}
\rho^{\frac{\alpha}{2z}}
\underbrace{\sigma^{\frac{1-\alpha}{2z}}
\sigma^{\frac{\alpha-1}{2z}}}_{=I}
=
\rho^{\frac{\alpha}{z}},
\end{align}
where the last equality follows from the assumption $\ran\rho^{\frac{\alpha}{2z}}\subseteq
\ran \sigma^{\frac{\alpha-1}{2z}}$. 
Thus, \ref{rho alpha1} follows with $\rho_{\sigma,\alpha,z}=
\bz\sigma^{\frac{1-\alpha}{2z}}\rho^{\frac{\alpha}{2z}}\jz
\oll{\rho^{\frac{\alpha}{2z}}\sigma^{\frac{1-\alpha}{2z}}}$, and we also have the 
second and the third equalities in \eqref{rho alpha eq1}.
Note that the last expression in \eqref{rho alpha eq1} gives another proof for the positive semi-definiteness of $\rho_{\sigma,\alpha,z}$.
\end{proof}

\begin{defin}\label{def:L alpha}
For $\sigma\in\B(\hil)\p$ and $(\alpha,z)\in\alphaz$, let
\begin{align*}
\B^{\alpha,z}(\hil,\sigma)
&:=
\left\{\rho\in\B(\hil)\p:\,
\exists\,R\in\B(\hil)\ds\text{s.t.} \ds
\rho^{\frac{\alpha}{z}} = \sigma^{\frac{\alpha-1}{2z}}R\sigma^{\frac{\alpha-1}{2z}}
\right\}.
\end{align*}
When $\alpha=z$, we will use the shorthand notation $\B^{\alpha,z}(\hil)=:\B^{\alpha}(\hil)$.
\end{defin}

\begin{rem}
Note that $\rho\in\B^{\alpha,z}(\hil,\sigma)$ 
if and only if it satisfies (i) in Lemma \ref{lemma:rho alpha}, which is equivalently characterized by all the other points in Lemma \ref{lemma:rho alpha}.
In particular, there exists a unique PSD $\rho_{\sigma,\alpha,z}$ 
satisfying $\rho_{\sigma,\alpha,z}^0\le\sigma^0$ and 
$\rho^{\frac{\alpha}{z}} = \sigma^{\frac{\alpha-1}{2z}}\rho_{\sigma,\alpha,z}\sigma^{\frac{\alpha-1}{2z}}$, and thus the map
$\rho\mapsto \rho_{\sigma,\alpha,z}$ is well-defined from $\B^{\alpha,z}(\hil,\sigma)$
onto $\{\tau\in\B(\hil)\p:\,\tau^0\le\sigma^0\}$, and it is also injective, hence it is a 
bijection.
When $\alpha=z$, we will use the notation 
$\rho_{\sigma,\alpha,z}=:\rho_{\sigma,\alpha}$.
\end{rem}

\begin{lemma}\label{lemma:B alpha z increasing}
For any $\sigma\in\B(\hil)\p$, $(0,+\infty)\ni z\mapsto \B^{\alpha,z}(\hil,\sigma)$ is increasing, i.e., 
\begin{align*}
0<z<z'\ds\imp\ds \B^{\alpha,z}(\hil,\sigma)\subseteq \B^{\alpha,z'}(\hil,\sigma).
\end{align*}
\end{lemma}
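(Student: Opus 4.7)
The plan is to reduce the statement to a concrete operator inequality via the equivalence between membership in $\B^{\alpha,z}(\hil,\sigma)$ and characterization \ref{rho alpha6} of Lemma~\ref{lemma:rho alpha}: namely, $\rho\in \B^{\alpha,z}(\hil,\sigma)$ if and only if there exists $\lambda\ge 0$ such that $\rho^{\alpha/z}\le \lambda\,\sigma^{(\alpha-1)/z}$. Once the problem is rephrased this way, the key ingredient will be the L\"owner-Heinz inequality applied to the exponent $z/z'\in(0,1)$.

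Concretely, suppose $0<z<z'$ and $\rho\in \B^{\alpha,z}(\hil,\sigma)$, and fix $\lambda\ge 0$ with $\rho^{\alpha/z}\le \lambda\,\sigma^{(\alpha-1)/z}$. Setting $p:=z'/z>1$, functional calculus for bounded PSD operators gives $\rho^{\alpha/z}=(\rho^{\alpha/z'})^p$ and $\sigma^{(\alpha-1)/z}=(\sigma^{(\alpha-1)/z'})^p$, so the inequality reads
\[
(\rho^{\alpha/z'})^p \;\le\; \lambda\,(\sigma^{(\alpha-1)/z'})^p \;=\; \bz\lambda^{1/p}\sigma^{(\alpha-1)/z'}\jz^{p}.
\]
Applying L\"owner-Heinz operator monotonicity of $t\mapsto t^{1/p}$ on $[0,+\infty)$ (legitimate since $1/p\in(0,1)$) yields
\[
\rho^{\alpha/z'} \;\le\; \lambda^{1/p}\,\sigma^{(\alpha-1)/z'},
\]
which is precisely characterization \ref{rho alpha6} of Lemma~\ref{lemma:rho alpha} at parameter $z'$. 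Hence $\rho\in \B^{\alpha,z'}(\hil,\sigma)$, as desired.

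I do not anticipate a serious obstacle; the only small points to verify are that the two functional-calculus identities $(\rho^{\alpha/z'})^p=\rho^{\alpha/z}$ and $(\sigma^{(\alpha-1)/z'})^p=\sigma^{(\alpha-1)/z}$ are legitimate (they follow from the semigroup property of real powers of bounded PSD operators, since the exponents are positive), and that L\"owner-Heinz is available in this generality (it is, for bounded PSD operators with exponent in $(0,1)$). Note that the statement is implicitly restricted to $\alpha\in(1,+\infty)$, since the isolated point $(+\infty,+\infty)\in\alphaz$ does not permit varying $z$.
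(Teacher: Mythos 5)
Your proof is correct and follows essentially the same route as the paper: reduce to characterization (vi) of Lemma~\ref{lemma:rho alpha} and then apply L\"owner--Heinz operator monotonicity of $t\mapsto t^{z/z'}$. The only cosmetic difference is that you phrase the monotonicity step as taking a $p$-th root with $p=z'/z$, while the paper applies $\id^{z/z'}$ directly to both sides of the inequality.
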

\begin{proof}
Let $\rho\in\B^{\alpha,z}(\hil,\sigma)$. Then, by \ref{rho alpha6} of 
Lemma \ref{lemma:rho alpha}, $\rho^{\frac{\alpha}{z}}\le\lambda\sigma^{\frac{\alpha-1}{z}}$
for some $\lambda\in(0,+\infty)$. Since $z\le z'$, $\id_{[0,+\infty)}^{\frac{z}{z'}}$ is operator monotone, whence
\begin{align*}
\rho^{\frac{\alpha}{z'}}
=
\bz\rho^{\frac{\alpha}{z}}\jz^{\frac{z}{z'}}
\le\lambda^{\frac{z}{z'}}\bz\sigma^{\frac{\alpha-1}{z}}\jz^{\frac{z}{z'}}
=
\sigma^{\frac{\alpha-1}{z'}}. 
\end{align*}
Again by \ref{rho alpha6} of Lemma \ref{lemma:rho alpha},
$\rho\in\B^{\alpha,z'}(\hil,\sigma)$.
\end{proof}

\begin{rem}
By \eqref{rho alpha eq1}--\eqref{rho alpha eq2}, for 
$P_n$ and $\sigma_n$ as in \eqref{truncated ops}, 
\begin{align*}
P_n\rho_{\sigma,\alpha,z}P_n
=\sigma_n^{\frac{1-\alpha}{2\alpha}}\rho\sigma_n^{\frac{1-\alpha}{2\alpha}},
\end{align*}
and if $\alpha=z$, then we further have
\begin{align*}
P_n\rho_{\sigma,\alpha}P_n
=\sigma_n^{\frac{1-\alpha}{2\alpha}}\rho
\sigma_n^{\frac{1-\alpha}{2\alpha}}
=
(P_n\sigma P_n)^{\frac{1-\alpha}{2\alpha}}(P_n\rho P_n)
(P_n\sigma P_n)^{\frac{1-\alpha}{2\alpha}}.
\end{align*}
Thus, with $\rho_n:=P_n\rho P_n$,
\begin{align*}
\rho_{\sigma,\alpha}=
\wo\lim_{n\to+\infty}
\sigma_n^{\frac{1-\alpha}{2\alpha}}\rho_n
\sigma_n^{\frac{1-\alpha}{2\alpha}}.
\end{align*}
\end{rem}

\begin{rem}\label{rem:smaller alpha}
Note that if $\rho\in\B^{\alpha}(\hil,\sigma)$, i.e., 
$\rho=\sigma^{\frac{\alpha-1}{2\alpha}}\rho_{\sigma,\alpha}\sigma^{\frac{\alpha-1}{2\alpha}}$ with 
$\rho_{\sigma,\alpha}\in\B(\hil)\p$, 
$\rho_{\sigma,\alpha}^0\le\sigma^0$,
then 
for any $\alpha'<\alpha$,
\begin{align*}
\rho=\sigma^{\frac{\alpha'-1}{2\alpha'}}
\sigma^{\frac{1}{2\alpha'}-\frac{1}{2\alpha}}
\rho_{\sigma,\alpha}
\sigma^{\frac{1}{2\alpha'}-\frac{1}{2\alpha}}
\sigma^{\frac{\alpha'-1}{2\alpha'}},
\end{align*}
whence $\rho\in\B^{\alpha'}(\hil,\sigma)$, and 
\begin{align}\label{smaller alpha}
\rho_{\sigma,\alpha'}=\sigma^{\frac{1}{2\alpha'}-\frac{1}{2\alpha}}
\rho_{\sigma,\alpha}
\sigma^{\frac{1}{2\alpha'}-\frac{1}{2\alpha}}.
\end{align}
In particular, if $\rho\in \B^{\infty}(\hil,\sigma)$, i.e., 
$\rho=\sigma^{1/2}\rho_{\sigma,\infty}\sigma^{1/2}$ with some 
$\rho_{\sigma,\infty}\in\B(\hil)\p$,
$\rho_{\sigma,\infty}^0\le\sigma^0$, then 
$\rho\in\B^{\alpha}(\hil,\sigma)$ for every $\alpha>1$, and 
\begin{align*}
\rho_{\sigma,\alpha}=\sigma^{\frac{1}{2\alpha}}\rho_{\sigma,\infty}\sigma^{\frac{1}{2\alpha}}.
\end{align*}
As an immediate consequence,
\begin{align*}
\cap_{\alpha>1}\B^{\alpha}(\hil,\sigma)
\supseteq
\B^{\infty}(\hil,\sigma)=\left\{\rho\in\B(\hil):\,\Dmax(\rho\|\sigma)<+\infty\right\},
\end{align*}
where 
\begin{align}\label{Dmax def}
\Dmax(\rho,\sigma):=\inf\{\kappa\in\bR:\,\rho\le e^{\kappa}\sigma\}
\end{align}
is the max-relative entropy of $\rho$ and $\sigma$ \cite{Datta,RennerPhD}.
\end{rem}

\begin{defin}
For $\sigma\in\B(\hil)\p$ and $(\alpha,z)\in(1,+\infty)\times(0,+\infty)$, let
\begin{align*}
\L^{\alpha,z}(\hil,\sigma)
&:=
\left\{
\rho\in\B^{\alpha,z}(\hil):\,
\Tr\rho_{\sigma,\alpha,z}^z<+\infty
\right\}.
\end{align*}
Again, when $\alpha=z$, we will use the notation $\L^{\alpha,z}(\hil,\sigma)=:\L^{\alpha}(\hil,\sigma)$.
\end{defin}

\begin{rem}\label{rem:L alpha trace class}
Note that for $\alpha>1$, $\sigma^{\frac{\alpha-1}{2z}}\in\B(\hil)$, and if 
$z\ge 1$ then $\L^{z}(\hil)$ is an ideal in $\B(\hil)$.
Thus, by \ref{rho alpha1} of Lemma \ref{lemma:rho alpha}, if $\rho\in \L^{\alpha,z}(\hil,\sigma)$ then 
$\rho^{\frac{\alpha}{z}}\in \L^{z}(\hil)$, 
or equivalently, $\rho\in\L^{\alpha}(\hil)$.
Therefore,
\begin{align*}
\L^{\alpha,z}(\hil,\sigma)\subseteq\B^{\alpha,z}(\hil,\sigma)\cap \L^{\alpha}(\hil),\ds\ds\alpha>1,\ds z\ge 1.
\end{align*}

Assume now that $\sigma$ is trace-class
and $\rho\in \L^{\alpha,z}(\hil,\sigma)$ for some 
$(\alpha,z)\in(1,+\infty)\times(0,+\infty)$.
Then, by Lemma \ref{rho alpha1} of \ref{lemma:rho alpha} and the operator 
H\"older inequality, 
$\Tr\bz \rho^{\frac{\alpha}{z}}\jz^r<+\infty$, where
$\frac{1}{r}=\frac{\alpha-1}{2z}+\frac{1}{z}+\frac{\alpha-1}{2z}=\frac{\alpha}{z}$,
or equivalently, $\rho\in\L^1(\hil)$. 
Thus, we get
\begin{align*}
\sigma\text{ trace-class }\ds\imp\ds
\L^{\alpha,z}(\hil,\sigma)\subseteq
\B^{\alpha,z}(\hil,\sigma)\cap \L^{1}(\hil),\ds\ds \alpha>1,\ds z>0.
\end{align*}
It is easy to see that the above inclusion is strict. Indeed, let 
$\sigma\in\B(l^2(\bN))$ be diagonal in the canonical basis of 
$l^2(\bN)$, i.e., 
$\sigma=\sum_{k\in\bN}s(k)\pr{\egy_{\{k\}}}$ for some 
$s:\,\bN\to(0,+\infty)$ such that $\sum_{k\in\bN}s(k)<+\infty$ (i.e., $\sigma$ is trace-class) and 
$\sum_{k\in\bN}s(k)^{\frac{\alpha-1}{z}}<+\infty$.
Define $\rho:=\sum_{k\in\bN}s(k)^{\frac{\alpha-1}{z}}\pr{\egy_{\{k\}}}$.
Then $\rho$ is trace-class, and for any sequence $(P_n)_{n\in\bN}$ as in 
Lemma \ref{lemma:rho alpha},
$\sigma_n^{\frac{1-\alpha}{2z}}\rho \sigma_n^{\frac{1-\alpha}{2z}}
=\sum_{k=1}^{m_n}\pr{\egy_{\{k\}}}$, which goes to $I$ in the strong operator topology.
Hence, $\rho\in\B^{\alpha,z}(\hil,\sigma)\cap \L^{1}(\hil)$, but 
$\rho_{\sigma,\alpha,z}=I$, and therefore
$\rho\notin\L^{\alpha,z}(\hil)$.
\end{rem}
\medskip

The following is an extension of the R\'enyi $(\alpha,z)$-divergences 
\cite{AD}
to the case of infinite-dimensional PSD operators. 
It is also a special case of Jen\v cov\'a's definition of the sandwiched R\'enyi 
divergence \cite{Jencova_NCLp} when $\rho$ and $\sigma$ are trace-class, and 
$z=\alpha$, and
it is a natural extension of it otherwise. 

\begin{defin}\label{def:Q alpha}
For $\rho,\sigma\in\B(\hil)\p$ and $(\alpha,z)\in(1,+\infty)\times(0,+\infty)$, let
\begin{align*}
Q_{\alpha,z}(\rho\|\sigma):=
\begin{cases}
\Tr\rho_{\sigma,\alpha,z}^{z},
&\rho\in \B^{\alpha,z}(\hil,\sigma),\\
+\infty,&\text{otherwise},
\end{cases}
\end{align*}
with $\rho_{\sigma,\alpha,z}$ as in Lemma \ref{lemma:rho alpha}.
The R\'enyi $(\alpha,z)$-divergence of $\rho$ and $\sigma$ is defined as
\begin{align*}
D_{\alpha,z}(\rho\|\sigma):=\frac{1}{\alpha-1}\log Q_{\alpha,z}(\rho\|\sigma).
\end{align*}
We use the notations $Q_{\alpha}\nw:=Q_{\alpha,\alpha}$ and
$D_{\alpha}\nw:=D_{\alpha,\alpha}$, and call the latter the 
\ki{sandwiched R\'enyi $\alpha$-divergence}.
\end{defin}

We also define the following variants of the R\'enyi $(\alpha,z)$-divergences for 
trace-class operators:
\begin{defin}\label{def:D alpha tilde}
For PSD trace-class operators $\rho,\sigma\in\L^1(\hil)\p$ and $(\alpha,z)\in(1,+\infty)\times(0,+\infty)$, let
\begin{align*}
\tilde D_{\alpha,z}(\rho\|\sigma):=D_{\alpha,z}(\rho\|\sigma)-\frac{1}{\alpha-1}\log\Tr\rho.
\end{align*}
We also use the notation
$\tilde D_{\alpha}\nw:=\tilde D_{\alpha,\alpha}$.
\end{defin}

\begin{rem}
For a convex function $f$ on $[0,+\infty)$, 
the \ki{quantum $f$-divergence} of a pair of positive normal functionals on a 
von Neumann algebra
is defined using the relative modular operator; see \cite{Hiai_fdiv_Springer,Petz_QE_vN}.
In particular, it is well-defined for a pair of positive trace-class operators 
$\rho,\sigma$ on a Hilbert space and $f_{\alpha}:=\id_{[0,+\infty)}^{\alpha}$
for any $\alpha>1$;
let it be denoted by $Q_{f_{\alpha}}(\rho\|\sigma)$.  
According to \cite[Theorem 3.6]{Hiai_fdiv_Springer},
\begin{align*}
Q_{f_{\alpha}}(\rho\|\sigma)=Q_{\alpha,1}(\rho\|\sigma),\ds\ds\ds\alpha>1.
\end{align*}
In particular, for PSD trace-class operators $\rho$ and $\sigma$, 
$D_{\alpha,1}(\rho\|\sigma)$ in Definition \ref{def:Q alpha} coincides with the 
\ki{Petz-type} or \ki{standard quantum R\'enyi $\alpha$-divergence} of $\rho$ and $\sigma$,
just as in the finite-dimensional case; see, e.g. \cite{AD}.
\end{rem}

\begin{rem}
Note that 
\begin{align*}
D_{\alpha,z}(\rho\|\sigma)<+\infty\ds\iff\ds
Q_{\alpha,z}(\rho\|\sigma)<+\infty\ds\iff\ds
\rho\in \L^{\alpha,z}(\hil,\sigma).
\end{align*}
\end{rem}

\begin{rem}\label{rem:scaling}
It is clear from their definitions that $Q_{\alpha,z}$, $D_{\alpha,z}$ 
and $\tilde D_{\alpha,z}$ 
satisfy the scaling properties
\begin{align}
Q_{\alpha,z}(\lambda\rho\|\eta\sigma)&=\lambda^{\alpha}\eta^{1-\alpha}Q_{\alpha,z}(\rho\|\sigma),\label{Q D scaling1}\\
D_{\alpha,z}(\lambda\rho\|\eta\sigma)&=
D_{\alpha,z}(\rho\|\sigma)+\frac{\alpha}{\alpha-1}\log\lambda-\log\eta,\label{Q D scaling2}\\
\tilde D_{\alpha,z}(\lambda\rho\|\eta\sigma)&=
D_{\alpha,z}(\rho\|\sigma)+\log\lambda-\log\eta,\label{Q D scaling3}
\end{align}
valid for any $\rho,\sigma\in\B(\hil)\p$ and $\lambda,\eta\in(0,+\infty)$.
\end{rem}

\begin{rem}
According to Lemma \ref{lemma:rho alpha}, if $\rho\in \B^{\alpha,z}(\hil,\sigma)$ then 
\begin{align*}
Q_{\alpha,z}(\rho\|\sigma)=
\Tr\oll{\sigma^{\frac{1-\alpha}{2z}}\rho^{\frac{\alpha}{z}}\sigma^{\frac{1-\alpha}{2z}}}^{\,z},
\end{align*}
which is a straightforward generalization of the formula for PSD operators on a finite-dimensional Hilbert space. Moreover, Lemma \ref{lemma:rho alpha} also yields the formula 
\begin{align}\label{Q alpha formal2}
Q_{\alpha,z}(\rho\|\sigma)=
\Tr \bz \oll{\rho^{\frac{\alpha}{2z}}\sigma^{\frac{1-\alpha}{2z}}} 
\bz\sigma^{\frac{1-\alpha}{2z}}\rho^{\frac{\alpha}{2z}} \jz\jz^{z},
\end{align}
which generalizes the finite-dimensional expression $\Tr\bz\rho^{\frac{\alpha}{2z}}\sigma^{\frac{1-\alpha}{z}}\rho^{\frac{\alpha}{2z}}\jz^z$.
Note that by Lemma \ref{lemma:rho alpha}, \eqref{Q alpha formal2} can also be written as
\begin{align*}
Q_{\alpha,z}(\rho\|\sigma)=\norm{\sigma^{\frac{1-\alpha}{2z}}\rho^{\frac{\alpha}{2z}} }_{2z}^{2z},
\end{align*}
where we use the notation $\norm{\valt}_z=(\Tr|\valt|^z)^{1/z}$ also for $z\in(0,1)$. 
\end{rem}

A further connection to the finite-dimensional formula is given by the following:

\begin{lemma}\label{lemma:Q alpha z as limit}
Let $\rho,\sigma\in\B(\hil)\p$ be such that $\rho^0\le\sigma^0$, and let 
$(\alpha,z)\in(1,+\infty)\times[1,+\infty)$. Then $\rho\in \L^{\alpha,z}(\hil,\sigma)$, or equivalently, $Q_{\alpha,z}(\rho\|\sigma)<+\infty)$, if and only if 
for any/some sequences 
$0<c_n<d_n$ with $c_n\to 0$, $d_n\to+\infty$,
$\bz\sigma_n^{\frac{1-\alpha}{2z}}\rho^{\frac{\alpha}{z}}
\sigma_n^{\frac{1-\alpha}{2z}}\jz_{n\in\bN}$ is a convergent sequence in 
$\L^{z}(\hil)$, 
where $\sigma_n:=\id_{(c_n,d_n)}(\sigma)$. 

Moreover, if $\rho\in \L^{\alpha,z}(\hil,\sigma)$ then 
\begin{align}\label{rho alpha as alpha limit}
\lim_{n\to+\infty}\norm{\rho_{\sigma,\alpha,z}-\sigma_n^{\frac{1-\alpha}{2z}}\rho^{\frac{\alpha}{z}}
\sigma_n^{\frac{1-\alpha}{2z}}}_z=0,
\end{align}
and if $\rho\in \B^{\alpha,z}(\hil,\sigma)$ then 
\begin{align}\label{rho alpha as alpha limit2}
Q_{\alpha,z}(\rho\|\sigma)
=
\lim_{n\to+\infty}\Tr\bz\sigma_n^{\frac{1-\alpha}{2z}}\rho^{\frac{\alpha}{z}}
\sigma_n^{\frac{1-\alpha}{2z}}\jz^z
\end{align}
for any sequences as above.
\end{lemma}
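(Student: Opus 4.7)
The plan is to reduce everything to the sequence $P_n R P_n$, with $R := \rho_{\sigma,\alpha,z}$ and $P_n := \egy_{(c_n,d_n)}(\sigma)$. The motivation is that \eqref{rho alpha proof1} in the proof of Lemma \ref{lemma:rho alpha} already establishes $\sigma_n^{\frac{1-\alpha}{2z}} \rho^{\frac{\alpha}{z}} \sigma_n^{\frac{1-\alpha}{2z}} = P_n R P_n$ whenever $\rho \in \B^{\alpha,z}(\hil,\sigma)$, together with three auxiliary facts I will exploit throughout: $P_n \to \sigma^0$ in the strong operator topology (since $c_n \to 0$ and $d_n \to +\infty$), $P_n R P_n \to R$ in SOT (hence WOT), and $R^0 \le \sigma^0$ (so $\sigma^0 R \sigma^0 = R$).

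Given this, \eqref{rho alpha as alpha limit} reduces to a single invocation of Lemma \ref{lemma:Grumm}: assuming $\rho \in \L^{\alpha,z}(\hil,\sigma)$, i.e., $R \in \L^z(\hil)$, apply the lemma with $A := R$ and the bracketing sequences $B_n = C_n := P_n$ (uniformly bounded by $1$, SOT-convergent to $\sigma^0$) to obtain $\norm{P_n R P_n - \sigma^0 R \sigma^0}_z \to 0$; the support identity closes the argument. Taking $z$-th powers of the norms immediately yields \eqref{rho alpha as alpha limit2} in the same case. For the remaining case $\rho \in \B^{\alpha,z}(\hil,\sigma) \setminus \L^{\alpha,z}(\hil,\sigma)$, I would pair the WOT convergence $P_n R P_n \to R$ with the lower semi-continuity of the Schatten norms (Lemma \ref{lemma:pnorm lsc}, valid since $z \ge 1$) to obtain $+\infty = \norm{R}_z \le \liminf_n \norm{P_n R P_n}_z$, so $\Tr(P_n R P_n)^z \to +\infty = Q_{\alpha,z}(\rho\|\sigma)$, which is \eqref{rho alpha as alpha limit2} in this case.

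The equivalence in the first part of the statement then drops out. The forward direction is exactly \eqref{rho alpha as alpha limit}, valid for any admissible sequence. For the converse, suppose the sequence $\sigma_n^{\frac{1-\alpha}{2z}} \rho^{\frac{\alpha}{z}} \sigma_n^{\frac{1-\alpha}{2z}}$ converges in $\L^z(\hil)$ for some choice of $(c_n,d_n)$. Since $\norm{\valt}_\infty \le \norm{\valt}_z$ for $z \ge 1$, this entails operator-norm convergence, hence WOT convergence, so \ref{rho alpha3} of Lemma \ref{lemma:rho alpha} places $\rho$ in $\B^{\alpha,z}(\hil,\sigma)$ and identifies $R$ as the WOT (and thus $\L^z$) limit; completeness of $\L^z(\hil)$ then gives $R \in \L^z(\hil)$, i.e., $\rho \in \L^{\alpha,z}(\hil,\sigma)$. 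The only mildly delicate point throughout is verifying the Grumm hypothesis $\sigma^0 R \sigma^0 = R$, which is automatic from the support relation $R^0 \le \sigma^0$ recorded in Lemma \ref{lemma:rho alpha}, so I expect no real obstacle.
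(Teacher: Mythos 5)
Your proposal is correct and follows essentially the same route as the paper: identifying $\sigma_n^{\frac{1-\alpha}{2z}}\rho^{\frac{\alpha}{z}}\sigma_n^{\frac{1-\alpha}{2z}}$ with $P_n\rho_{\sigma,\alpha,z}P_n$, applying Lemma \ref{lemma:Grumm} for the $z$-norm convergence in the trace-class case, using lower semi-continuity (Lemma \ref{lemma:pnorm lsc}) for the $\B^{\alpha,z}\setminus\L^{\alpha,z}$ case, and passing from $\L^z$-convergence to operator-topology convergence plus \ref{rho alpha3} for the converse direction. The only cosmetic quibble is that in the last step you invoke ``completeness of $\L^z(\hil)$'' where in fact nothing beyond uniqueness of WOT limits is needed (the $\L^z$ limit is assumed to exist and must coincide with $\rho_{\sigma,\alpha,z}$), but the argument is sound.
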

\begin{proof}
The ``if'' part follows 
since convergence in $z$-norm implies $\so$ convergence, 
whence $\rho_{\sigma,\alpha,z}$ exists as in Lemma \ref{lemma:rho alpha},
and the $\so$ limit coincides with the $z$-norm limit,
whence $\rho_{\sigma,\alpha,z}\in \L^z(\hil,\sigma)$. 

Assume now that $\rho\in \L^{\alpha,z}(\hil,\sigma)$. Then 
$\sigma_n^{\frac{1-\alpha}{2z}}\rho^{\frac{\alpha}{z}}
\sigma_n^{\frac{1-\alpha}{2z}}
=P_n\rho_{\sigma,\alpha,z}P_n$, with $P_n:=\egy_{(c_n,d_n)}(\sigma)$,
and the ``only if'' part, as well as \eqref{rho alpha as alpha limit}, 
follows from 
Lemma \ref{lemma:Grumm}.

Note that \eqref{rho alpha as alpha limit} trivially implies 
\eqref{rho alpha as alpha limit2} when $\rho\in \L^{\alpha,z}(\hil,\sigma)$.
Assume thus that $\rho\in\B^{\alpha,z}(\hil,\sigma)\setminus\L^{\alpha,z}(\hil,\sigma)$, so that $Q_{\alpha,z}(\rho\|\sigma)=+\infty$.
Since $\sigma_n^{\frac{1-\alpha}{2z}}\rho
\sigma_n^{\frac{1-\alpha}{2z}}=P_n\rho_{\sigma,\alpha,z}P_n$ converges to 
$\rho_{\sigma,\alpha,z}$ in the weak operator topology,
Lemma \ref{lemma:pnorm lsc} yields that 
\begin{align*}
+\infty=Q_{\alpha,z}(\rho\|\sigma)=
\norm{\rho_{\sigma,\alpha,z}}_z^z
\le
\liminf_{n\to+\infty}\norm{\sigma_n^{\frac{1-\alpha}{2z}}\rho
\sigma_n^{\frac{1-\alpha}{2z}}}_z^z
=
\liminf_{n\to+\infty}\Tr\bz\sigma_n^{\frac{1-\alpha}{2z}}\rho
\sigma_n^{\frac{1-\alpha}{2z}}\jz^z,
\end{align*}
from which \eqref{rho alpha as alpha limit2} follows.
\end{proof}

\begin{prop}\label{prop:z mon}
For any $\rho,\sigma\in\B(\hil)\p$, and any $\alpha\in(1,+\infty)$, 
\begin{align*}
Q_{\alpha,z}(\rho\|\sigma),\s D_{\alpha,z}(\rho\|\sigma),\s
\tilde D_{\alpha,z}(\rho\|\sigma)\ds\ds\text{are decreasing in }\ds z.
\end{align*}
In particular, for any $\sigma\in\B(\hil)\p$, $\L^{\alpha,z}(\hil,\sigma)$ is increasing in $z$, i.e., 
\begin{align*}
0<z\le z'\ds\imp\ds
\L^{\alpha,z}(\hil,\sigma)\subseteq \L^{\alpha,z'}(\hil,\sigma).
\end{align*}
\end{prop}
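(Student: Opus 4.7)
The strategy is the standard two-step approach: verify the inequality at the level of finite-size truncations via the Araki--Lieb--Thirring (ALT) inequality, and then pass to the limit. In the form needed here, ALT states that for PSD operators $A,B$ on a Hilbert space, $r\ge 1$ and $p>0$,
\[
\Tr\bz B^{r/2}A^r B^{r/2}\jz^p\ge\Tr\bz B^{1/2}A B^{1/2}\jz^{rp}.
\]
Setting $r:=z'/z\ge 1$, $p:=z$, $B:=\sigma_n^{(1-\alpha)/z'}$ and $A:=\rho^{\alpha/z'}$ with $\sigma_n:=P_n\sigma$ and $P_n:=\egy_{(c_n,d_n)}(\sigma)$, the identities $r(1-\alpha)/z'=(1-\alpha)/z$ and $r\alpha/z'=\alpha/z$ transform ALT into
\[
\Tr\bz\sigma_n^{(1-\alpha)/(2z)}\rho^{\alpha/z}\sigma_n^{(1-\alpha)/(2z)}\jz^z\ge\Tr\bz\sigma_n^{(1-\alpha)/(2z')}\rho^{\alpha/z'}\sigma_n^{(1-\alpha)/(2z')}\jz^{z'},
\]
which by \eqref{rho alpha proof1} is equivalent to $\Tr\bz P_n\rho_{\sigma,\alpha,z}P_n\jz^z\ge\Tr\bz P_n\rho_{\sigma,\alpha,z'}P_n\jz^{z'}$ for every $n$.

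To conclude $Q_{\alpha,z}(\rho\|\sigma)\ge Q_{\alpha,z'}(\rho\|\sigma)$, one may assume $\rho\in\L^{\alpha,z}(\hil,\sigma)$, since otherwise the left-hand side is already $+\infty$; Lemma \ref{lemma:B alpha z increasing} then guarantees $\rho\in\B^{\alpha,z'}(\hil,\sigma)$, so that $\rho_{\sigma,\alpha,z'}$ is well-defined. On the left-hand side, $\rho_{\sigma,\alpha,z}\in\L^z(\hil)$ is PSD and compact, so $P_n\rho_{\sigma,\alpha,z}P_n\to\rho_{\sigma,\alpha,z}$ in operator norm, yielding pointwise eigenvalue convergence $\lambda_k(P_n\rho_{\sigma,\alpha,z}P_n)\to\lambda_k(\rho_{\sigma,\alpha,z})$ together with the min-max bound $\lambda_k(P_n\rho_{\sigma,\alpha,z}P_n)\le\lambda_k(\rho_{\sigma,\alpha,z})$; dominated convergence applied to the eigenvalue series then gives $\Tr\bz P_n\rho_{\sigma,\alpha,z}P_n\jz^z\to Q_{\alpha,z}(\rho\|\sigma)$. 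On the right-hand side, only a lower-semicontinuity bound is needed: continuity of $t\mapsto t^{z'}$ on a uniformly bounded interval combined with the strong-operator convergence $P_n\rho_{\sigma,\alpha,z'}P_n\to\rho_{\sigma,\alpha,z'}$ (coming from \eqref{rho alpha eq2}) gives $\bz P_n\rho_{\sigma,\alpha,z'}P_n\jz^{z'}\to\rho_{\sigma,\alpha,z'}^{z'}$ strongly, and the lower semicontinuity of $\Tr$ on PSD operators yields $Q_{\alpha,z'}(\rho\|\sigma)\le\liminf_n\Tr\bz P_n\rho_{\sigma,\alpha,z'}P_n\jz^{z'}$.

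Combining the two gives $Q_{\alpha,z}(\rho\|\sigma)\ge Q_{\alpha,z'}(\rho\|\sigma)$; the monotonicity of $D_{\alpha,z}$ and $\tilde D_{\alpha,z}$ is then immediate since $\alpha-1>0$ and $\frac{1}{\alpha-1}\log\Tr\rho$ is $z$-independent, and the increasing property of the spaces $\L^{\alpha,z}(\hil,\sigma)$ follows at once. The main place requiring care is justifying ALT at the truncation level when the spectral projection $P_n$ has infinite rank (as may occur when $\sigma$ has continuous spectrum in $(c_n,d_n)$): although ALT is classically a matrix inequality, its extension to bounded PSD operators on $\B(\hil)$ with the standard trace is standard in the noncommutative-integration literature and, if one prefers an elementary route, can be reduced to the finite-dimensional case by a further finite-rank approximation inside $P_n\hil$ before passing to the limit.
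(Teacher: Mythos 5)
Your argument is correct and follows the same backbone as the paper's own proof: establish the inequality at the truncation level via the Araki--Lieb--Thirring inequality applied to $A=\rho^{\alpha/z'}$, $B=\sigma_n^{(1-\alpha)/z'}$, $r=z'/z$, then pass to the limit. (The paper calls the key inequality ``Araki's inequality,'' cited from Araki's 1990 paper on Lieb--Thirring; it is the same statement you invoke.) Where you diverge is only in the justification of the limit step. The paper delegates both the $z$-side and $z'$-side limits to Lemma~\ref{lemma:Q alpha z as limit}, which is built on Gr\"umm's theorem and Lemma~\ref{lemma:pnorm lsc} and is formally stated only for $z\ge 1$; you instead give a self-contained argument using eigenvalue convergence and dominated convergence for the $z$-side, and strong-operator continuity of the functional calculus together with Fatou for the trace (a lower-semicontinuity bound) on the $z'$-side. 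Your route is slightly more elementary, and it applies uniformly for all $z,z'>0$ without having to look inside the cited lemma to see that its hypotheses can be relaxed; in fact your version only needs $\limsup_n\Tr(P_n\rho_{\sigma,\alpha,z}P_n)^z\le Q_{\alpha,z}(\rho\|\sigma)$ rather than full convergence, which follows immediately from $\lambda_k(P_n\rho_{\sigma,\alpha,z}P_n)\le\lambda_k(\rho_{\sigma,\alpha,z})$. Your closing remark about reducing ALT on $\B(\hil)$ to the matrix case by a further finite-rank approximation inside $P_n\hil$ is a reasonable precaution that the paper leaves implicit.
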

\begin{proof}
It is sufficient to prove that for any $0<z<z'$, 
$Q_{\alpha,z}(\rho\|\sigma)\ge Q_{\alpha,z'}(\rho\|\sigma)$ holds.
This is obvious when $Q_{\alpha,z}(\rho\|\sigma)=+\infty$, and hence for the rest we assume 
the contrary, i.e., that $\rho\in\L^{\alpha,z}(\hil,\sigma)$. 
By Lemma \ref{lemma:B alpha z increasing}, this implies that 
$\rho\in\B^{\alpha,z'}(\hil,\sigma)$. Thus, by Lemma \ref{lemma:Q alpha z as limit},
\begin{align}\label{z mon proof1}
Q_{\alpha,z'}(\rho\|\sigma)
=
\lim_{n\to+\infty}\Tr\bz\sigma_n^{\frac{1-\alpha}{2z'}}\rho^{\frac{\alpha}{z'}}
\sigma_n^{\frac{1-\alpha}{2z'}}\jz^{z'}.
\end{align}
According to Araki's inquality \cite[Theorem]{Araki},
$\Tr\vfi\bz B^{1/2}AB^{1/2}\jz^q\le\Tr\vfi\bz B^{q/2}AB^{q/2}\jz$ for any 
$A,B\in\B(\hil)\p$, $q\in[1,+\infty)$, and monotone increasing continuous function $\vfi$ on $[0,+\infty)$
such that $\vfi(0)=0$ and $t\mapsto \vfi(e^t)$ is convex on $\bR$. 
Applying this to $A:=\rho^{\frac{\alpha}{z'}}$, $B:=\sigma_n^{\frac{1-\alpha}{z'}}$, 
$q:=\frac{z'}{z}$, and $\vfi:=\id_{[0,+\infty)}^{z}$ yields
\begin{align*}
\Tr\bz\sigma_n^{\frac{1-\alpha}{2z'}}\rho^{\frac{\alpha}{z'}}
\sigma_n^{\frac{1-\alpha}{2z'}}\jz^{z'}=
\Tr\left[\bz\sigma_n^{\frac{1-\alpha}{2z'}}\rho^{\frac{\alpha}{z'}}
\sigma_n^{\frac{1-\alpha}{2z'}}\jz^{\frac{z'}{z}}\right]^z
\le
\Tr\bz\sigma_n^{\frac{1-\alpha}{2z}}\rho^{\frac{\alpha}{z}}
\sigma_n^{\frac{1-\alpha}{2z}}\jz^{z}
\end{align*}
for every $n\in\bN$. Thus, by \eqref{z mon proof1}, 
\begin{align*}
Q_{\alpha,z'}(\rho\|\sigma)
\le
\lim_{n\to+\infty}\Tr\bz\sigma_n^{\frac{1-\alpha}{2z}}\rho^{\frac{\alpha}{z}}
\sigma_n^{\frac{1-\alpha}{2z}}\jz^{z}
=
Q_{\alpha,z}(\rho\|\sigma),
\end{align*}
where the equality is again due to Lemma \ref{lemma:Q alpha z as limit}.
\end{proof}

\begin{rem}
As a special case of Proposition \ref{prop:z mon}, we get that for any $\rho,\sigma\in\B(\hil)\p$,
\begin{align*}
D_{\alpha}\nw(\rho\|\sigma)= D_{\alpha,\alpha}(\rho\|\sigma)
\le
D_{\alpha,1}(\rho\|\sigma),
\end{align*}
i.e., the sandwiched R\'enyi $\alpha$-divergence cannot be larger than the 
Petz-type R\'enyi $\alpha$-divergence. 
This has been proved for positive normal functionals on a von Neumann algebra
(positive trace-class operators in our case) in 
\cite[Theorem 12]{BST} and \cite[Corollary 3.6]{Jencova_NCLp}
using different methods than in the proof of Proposition \ref{prop:z mon} above.
\end{rem}

\begin{rem}\label{rem:smaller alpha finite}
Assume that $Q_{\alpha}^*(\rho\|\sigma)<+\infty$, i.e., $\rho\in\L^{\alpha}(\rho\|\sigma)$ for some $\alpha>1$, and $1<\alpha'<\alpha$. Then, by \eqref{smaller alpha},
\begin{align*}
Q_{\alpha'}\nw(\rho\|\sigma)&=\Tr\rho_{\sigma,\alpha'}^{\alpha'}
=
\norm{\rho_{\sigma,\alpha'}}_{\alpha'}^{\alpha'}
\le
\norm{\sigma^{\frac{1}{2\alpha'}-\frac{1}{2\alpha}}}_{\frac{2\alpha\alpha'}{\alpha-\alpha'}}^{\alpha'}
\norm{\rho_{\sigma,\alpha}}_{\alpha}^{\alpha'}
\norm{\sigma^{\frac{1}{2\alpha'}-\frac{1}{2\alpha}}}_{\frac{2\alpha\alpha'}{\alpha-\alpha'}}^{\alpha'}
=
(\Tr\sigma)^{1-\frac{\alpha'}{\alpha}}Q_{\alpha}\nw(\rho\|\sigma)^{\frac{\alpha'}{\alpha}},
\end{align*}
where the inequality follows by the operator H\"older inequality. 
In particular, if $\sigma$ is trace-class then $Q_{\alpha'}\nw(\rho\|\sigma)<+\infty$. 
If $\Tr\sigma=1$ then a simple rearrangement yields
\begin{align*}
\frac{\alpha'-1}{\alpha'}D_{\alpha'}\nw(\rho\|\sigma)\le
\frac{\alpha-1}{\alpha}D_{\alpha}\nw(\rho\|\sigma).
\end{align*}
Note that this is weaker than $D_{\alpha'}\nw(\rho\|\sigma)\le D_{\alpha}\nw(\rho\|\sigma)$,
which was proved in \cite[Proposition 3.7]{Jencova_NCLpII}.
\end{rem}

\begin{rem}
Since we do not assume the second operator to be trace-class, the expression 
$-D_{\alpha,z}(\rho\|I)$ makes sense, and we recover the following identity for the R\'enyi $\alpha$-entropy of a state $\rho\in\S(\hil)$, which is well-known in the finite-dimensional case:
\begin{align}\label{Renyi ent}
S_{\alpha}(\rho):=\frac{1}{1-\alpha}\log\Tr\rho^{\alpha}=-D_{\alpha,z}(\rho\|I),\ds\ds\ds
\alpha>1.
\end{align}
(In fact, this makes sense for arbitrary PSD operator $\rho$).

More importantly, allowing non trace-class operators enables the definition of conditional 
$(\alpha,z)$-entropies. Following \cite{BTH_Renyicond}, 
we define two different notions of conditional $(\alpha,z)$-entropy between systems 
$A$ and $B$ in a state $\rho_{AB}\in\S(\hil_A\otimes\hil_B)$ as 
\begin{align}
S_{\alpha,z}(A|B)^{\downarrow}&:=-D_{\alpha,z}(\rho_{AB}\|I_A\otimes\rho_B),
\label{Renyi cond1}\\
S_{\alpha,z}(A|B)^{\uparrow}&:=-\inf_{\omega_B\in\S(\hil_B)}D_{\alpha,z}(\rho_{AB}\|I_A\otimes\omega_B),
\label{Renyi cond2}
\end{align}
where $\rho_B=\Tr_A\rho_{AB}$ denotes the marginal of $\rho_{AB}$ on system $B$. 
Again, \eqref{Renyi cond1}--\eqref{Renyi cond2} make sense even when $\rho_{AB}$ is only assumed to be PSD. 
Note that while the R\'enyi entropies \eqref{Renyi ent} can be defined directly for $\rho$ 
without reference to any R\'enyi divergences, this is not the case for the conditional R\'enyi 
entropies \eqref{Renyi cond1}--\eqref{Renyi cond2}, and the ability to take non-trace-class operators at least in the second argument of the divergence is crucial for the definition.

According to Proposition \ref{prop:z mon}, for any fixed $\rho_{AB}\in\S(\hil_{A}\otimes\hil_B)$, and any $\alpha>1$,
\begin{align*}
S_{\alpha,z}(A|B)^{\downarrow}\ds\text{and}\ds
S_{\alpha,z}(A|B)^{\uparrow}\ds\text{are monotone increasing in}\ds z.
\end{align*}
In particular, either version of the sandwiched conditional R\'enyi entropy is at least as large as the corresponding version of the Petz-type conditional R\'enyi entropy.
\end{rem}

\begin{lemma}\label{lemma:Q poz}
For any $\rho,\sigma\in\B(\hil)\p$ and  $(\alpha,z)\in(1,+\infty)\times(0,+\infty)$,  
\begin{align}\label{Q poz}
Q_{\alpha,z}(\rho\|\sigma)>0,\ds\ds\ds
D_{\alpha,z}(\rho\|\sigma)>-\infty.
\end{align}
\end{lemma}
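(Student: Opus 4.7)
The plan is to split according to whether $\rho$ lies in $\B^{\alpha,z}(\hil,\sigma)$ or not. If $\rho\notin\B^{\alpha,z}(\hil,\sigma)$, then by Definition \ref{def:Q alpha}, $Q_{\alpha,z}(\rho\|\sigma)=+\infty$, so \eqref{Q poz} is trivial. Thus the substantive case is $\rho\in\B^{\alpha,z}(\hil,\sigma)$, where $Q_{\alpha,z}(\rho\|\sigma)=\Tr\rho_{\sigma,\alpha,z}^{\,z}$.

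The key observation is that $\rho_{\sigma,\alpha,z}$ cannot be the zero operator. Indeed, by Lemma \ref{lemma:rho alpha}, $\rho_{\sigma,\alpha,z}$ satisfies
\[
\rho^{\frac{\alpha}{z}}=\sigma^{\frac{\alpha-1}{2z}}\rho_{\sigma,\alpha,z}\sigma^{\frac{\alpha-1}{2z}},
\]
so $\rho_{\sigma,\alpha,z}=0$ would force $\rho^{\alpha/z}=0$, hence $\rho=0$, contradicting the standing assumption $\rho\in\B(\hil)\p$ (nonzero PSD operators). Consequently $\rho_{\sigma,\alpha,z}$ is a nonzero bounded PSD operator, and since $z>0$, so is $\rho_{\sigma,\alpha,z}^{\,z}$. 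Now I would invoke the elementary fact that the trace of a nonzero PSD operator is strictly positive (possibly $+\infty$): pick any unit vector $\psi$ in the range of $\rho_{\sigma,\alpha,z}^{\,z}$, extend to an orthonormal basis, and note that the diagonal entry $\inner{\psi}{\rho_{\sigma,\alpha,z}^{\,z}\psi}$ is strictly positive, giving a strictly positive contribution to the trace sum (all other diagonal entries being nonnegative). This yields $Q_{\alpha,z}(\rho\|\sigma)>0$.

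The second inequality in \eqref{Q poz} is then immediate: from Definition \ref{def:Q alpha}, $D_{\alpha,z}(\rho\|\sigma)=\tfrac{1}{\alpha-1}\log Q_{\alpha,z}(\rho\|\sigma)$ with $\alpha>1$, so $Q_{\alpha,z}(\rho\|\sigma)\in(0,+\infty]$ translates to $D_{\alpha,z}(\rho\|\sigma)\in(-\infty,+\infty]$. There is no genuine obstacle here; the whole argument reduces to the bijectivity of $\rho\mapsto\rho_{\sigma,\alpha,z}$ (guaranteed by Lemma \ref{lemma:rho alpha}) combined with positivity of the trace on nonzero PSD operators.
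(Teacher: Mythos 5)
Your proof is correct and follows essentially the same route as the paper: reduce to the case where $\rho_{\sigma,\alpha,z}$ is defined, and observe that $\rho_{\sigma,\alpha,z}=0$ would force $\rho^{\alpha/z}=0$ and hence $\rho=0$, contradicting $\rho\in\B(\hil)\p$. The only cosmetic difference is that you case-split on $\B^{\alpha,z}(\hil,\sigma)$ rather than $\L^{\alpha,z}(\hil,\sigma)$ and spell out why a nonzero PSD operator has strictly positive trace, both of which are fine.
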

\begin{proof}
The assertion is trivial when $\rho\notin\L^{\alpha,z}(\hil,\sigma)$, and hence we assume the contrary. Then 
\begin{align*}
Q_{\alpha,z}(\rho\|\sigma)=\Tr\rho_{\sigma,\alpha,z}^z=0\ds\iff\ds
\rho_{\sigma,\alpha,z}=0\ds\imp\ds
\rho^{\frac{\alpha}{z}}=\sigma^{\frac{\alpha-1}{2z}}\rho_{\sigma,\alpha,z}\sigma^{\frac{\alpha-1}{2z}}=0,
\end{align*}
contrary to the assumption that $\rho\in\B(\hil)\p$.
Hence, the inequalities in \eqref{Q poz} hold.
\end{proof}

\begin{rem}
Stronger bounds than the ones in \eqref{Q poz} are given below in 
Corollary \ref{cor:trace mon} for trace-class operators.
\end{rem}

\begin{lemma}\label{lemma:tensor product}
Let $\rho_k,\sigma_k\in\B(\hil_k)\p$, $k=1,2$. For any $(\alpha,z)\in(1,+\infty)\times(0,+\infty)$,  
\begin{align}
&\rho_1\otimes\rho_2\in\B^{\alpha,z}(\hil_1\otimes\hil_2,\sigma_1\otimes\sigma_2)\ds\iff\ds
\rho_k\in\B^{\alpha,z}(\hil_k,\sigma_k),\ds\ds k=1,2,
\label{tensor product1}\\
&\rho_1\otimes\rho_2\in\L^{\alpha,z}(\hil_1\otimes\hil_2,\sigma_1\otimes\sigma_2)\ds\iff\ds
\rho_k\in\L^{\alpha,z}(\hil_k,\sigma_k),\ds\ds k=1,2,\label{tensor product2}
\end{align}
and $(\rho_1\otimes\rho_2)_{\sigma_1\otimes\sigma_2,\alpha,z}=
(\rho_1)_{\sigma_1,\alpha,z}\otimes(\rho_2)_{\sigma_2,\alpha,z}$.
As a consequence,
\begin{align}
&Q_{\alpha,z}(\rho_1\otimes\rho_2\|\sigma_1\otimes\sigma_2)=
Q_{\alpha,z}(\rho_1\|\sigma_1)Q_{\alpha,z}(\rho_2\|\sigma_2),\label{Q tensor}\\
&D_{\alpha,z}(\rho_1\otimes\rho_2\|\sigma_1\otimes\sigma_2)=
D_{\alpha,z}(\rho_1\|\sigma_1)+D_{\alpha,z}(\rho_2\|\sigma_2).\label{D tensor}
\end{align}
\end{lemma}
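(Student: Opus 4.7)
My approach is to leverage characterizations (i) and (vi) from Lemma \ref{lemma:rho alpha} together with the tensor-multiplicativity of the Borel functional calculus: for bounded PSD $A,B$ and $p>0$ one has $(A\otimes B)^p=A^p\otimes B^p$, so in particular
\[
(\sigma_1\otimes\sigma_2)^{\frac{\alpha-1}{2z}}=\sigma_1^{\frac{\alpha-1}{2z}}\otimes\sigma_2^{\frac{\alpha-1}{2z}},\qquad (\rho_1\otimes\rho_2)^{\frac{\alpha}{z}}=\rho_1^{\frac{\alpha}{z}}\otimes\rho_2^{\frac{\alpha}{z}},
\]
and every power appearing in the characterizations of Lemma \ref{lemma:rho alpha} factors through cleanly.

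For \eqref{tensor product1}, the forward implication is immediate from Lemma \ref{lemma:rho alpha}(vi): if $\rho_k^{\alpha/z}\le\lambda_k\sigma_k^{(\alpha-1)/z}$ for $k=1,2$, then the tensor product of these two positive operator inequalities gives $(\rho_1\otimes\rho_2)^{\alpha/z}\le\lambda_1\lambda_2(\sigma_1\otimes\sigma_2)^{(\alpha-1)/z}$. For the converse, I would first extract the support inclusion $(\rho_1\otimes\rho_2)^0\le(\sigma_1\otimes\sigma_2)^0$, factor it into $\rho_k^0\le\sigma_k^0$ using $\rho_k\ne 0$, and then pick any unit vector $\psi_2\in\supp\rho_2\subseteq\supp\sigma_2$ so that the numbers $c:=\inner{\psi_2}{\rho_2^{\alpha/z}\psi_2}$ and $d:=\inner{\psi_2}{\sigma_2^{(\alpha-1)/z}\psi_2}$ are both strictly positive. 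Evaluating the tensor inequality on vectors of the form $\phi_1\otimes\psi_2$ then isolates $\rho_1^{\alpha/z}\le(\lambda d/c)\sigma_1^{(\alpha-1)/z}$, so $\rho_1\in\B^{\alpha,z}(\hil_1,\sigma_1)$; $\rho_2$ is treated symmetrically. The main obstacle is precisely this step of disentangling a single tensor inequality into factor-wise ones, but product test vectors reduce it to an elementary scalar manipulation.

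To identify the tensor-product representative, I would tensor the defining identities $\rho_k^{\alpha/z}=\sigma_k^{\frac{\alpha-1}{2z}}(\rho_k)_{\sigma_k,\alpha,z}\sigma_k^{\frac{\alpha-1}{2z}}$ to obtain
\[
(\rho_1\otimes\rho_2)^{\alpha/z}=(\sigma_1\otimes\sigma_2)^{\frac{\alpha-1}{2z}}\bigl((\rho_1)_{\sigma_1,\alpha,z}\otimes(\rho_2)_{\sigma_2,\alpha,z}\bigr)(\sigma_1\otimes\sigma_2)^{\frac{\alpha-1}{2z}}.
\]
Since $R:=(\rho_1)_{\sigma_1,\alpha,z}\otimes(\rho_2)_{\sigma_2,\alpha,z}$ is PSD and satisfies $R^0=(\rho_1)_{\sigma_1,\alpha,z}^0\otimes(\rho_2)_{\sigma_2,\alpha,z}^0\le\sigma_1^0\otimes\sigma_2^0=(\sigma_1\otimes\sigma_2)^0$, the uniqueness clause of Lemma \ref{lemma:rho alpha} forces $R=(\rho_1\otimes\rho_2)_{\sigma_1\otimes\sigma_2,\alpha,z}$.

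Taking $z$-th powers via $R^z=(\rho_1)_{\sigma_1,\alpha,z}^z\otimes(\rho_2)_{\sigma_2,\alpha,z}^z$ and invoking multiplicativity of the trace on tensor products immediately yields \eqref{Q tensor}, and \eqref{D tensor} is its logarithm. Finally, \eqref{tensor product2} follows from \eqref{Q tensor}: Lemma \ref{lemma:Q poz} ensures each $Q_{\alpha,z}(\rho_k\|\sigma_k)>0$, preventing any $0\cdot(+\infty)$ indeterminacy, so the tensor $Q$ is finite if and only if both factors are.
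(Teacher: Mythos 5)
Your proof is correct and follows essentially the same route as the paper's: both directions of \eqref{tensor product1} via characterization (vi) of Lemma \ref{lemma:rho alpha} and testing against product vectors, identification of $(\rho_1\otimes\rho_2)_{\sigma_1\otimes\sigma_2,\alpha,z}$ through uniqueness, and trace multiplicativity for \eqref{Q tensor}--\eqref{D tensor}. Your detour through the support-projection factorization before choosing $\psi_2$ is an extra (harmless) step that the paper skips by simply dividing by $\kappa_1>0$, but this is a cosmetic difference, not a different argument.
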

\begin{proof}
The right to left implications in \eqref{tensor product1}-\eqref{tensor product2} are obvious 
from choosing $R:=
(\rho_1)_{\sigma_1,\alpha,z}\otimes(\rho_2)_{\sigma_2,\alpha,z}$
in \ref{rho alpha1} of Lemma \ref{lemma:rho alpha}.
Assume that $\rho_1\otimes\rho_2\in\B^{\alpha,z}(\hil_1\otimes\hil_2,\sigma_1\otimes\sigma_2)$. By \ref{rho alpha6} of Lemma \ref{lemma:rho alpha}, there exists a $\lambda\ge 0$ such that 
\begin{align*}
\rho_1^{\frac{\alpha}{z}}\otimes\rho_2^{\frac{\alpha}{z}}
=
\bz\rho_1\otimes\rho_2\jz^{\frac{\alpha}{z}}
\le
\lambda\bz\sigma_1\otimes\sigma_2\jz^{\frac{\alpha-1}{z}}
=
\lambda\sigma_1^{\frac{\alpha-1}{z}}\otimes\sigma_2^{\frac{\alpha-1}{z}}.
\end{align*}
Choose any $\psi_2\notin\ker(\rho_2)$. For any $\psi_1\in\hil_1$, we get
\begin{align*}
\inner{\psi_1}{\rho_1^{\frac{\alpha}{z}}\psi_1}
\underbrace{\inner{\psi_2}{\rho_2^{\frac{\alpha}{z}}\psi_2}}_{=:\kappa_1>0}
\le
\lambda
\inner{\psi_1}{\sigma_1^{\frac{\alpha-1}{z}}\psi_1}
\underbrace{\inner{\psi_2}{\sigma_2^{\frac{\alpha-1}{z}}\psi_2}}_{=:\kappa_2}.
\end{align*}
Thus, $\rho_1^{\frac{\alpha}{z}}\le\lambda(\kappa_2/\kappa_1)\sigma_1^{\frac{\alpha-1}{z}}$, and again by \ref{rho alpha6} of Lemma \ref{lemma:rho alpha}, 
$\rho_1\in\B^{\alpha,z}(\hil_1,\sigma_1)$. An exactly analogous argument gives
$\rho_2\in\B^{\alpha,z}(\hil_2,\sigma_2)$.
This proves the left to right implication in \eqref{tensor product1},
and we also get 
\begin{align*}
(\sigma_1\otimes\sigma_2)^{\frac{1-\alpha}{2z}}
(\rho_1\otimes\rho_2)^{\frac{\alpha}{2z}}
=
\bz\sigma_1^{\frac{1-\alpha}{2z}}\rho_1^{\frac{\alpha}{2z}}\jz
\otimes
\bz\sigma_2^{\frac{1-\alpha}{2z}}\rho_2^{\frac{\alpha}{2z}}\jz,
\end{align*}
from which $(\rho_1\otimes\rho_2)_{\sigma_1\otimes\sigma_2,\alpha,z}=
(\rho_1)_{\sigma_1,\alpha,z}\otimes(\rho_2)_{\sigma_2,\alpha,z}$, according to 
\eqref{rho alpha eq1}, and thus \eqref{Q tensor} and \eqref{D tensor} follow
due to the multiplicativity of the trace.
The left to right implication in \eqref{tensor product2}
follows immediately from the above.
\end{proof}

\subsection{Variational formulas}

The following variational representations of $Q_{\alpha,z}$ and $D_{\alpha,z}$ are very useful to establish their fundamental properties. 
We will use these variational formulas to prove monotonicity of $Q_{\alpha,z}$ under restrictions of the operators to subspaces (Lemma \ref{lemma:projection monotonicity}, Corollary \ref{cor:increasing sequence}) and to give a lower bound on the strong converse exponent (Lemma \ref{lemma:sc optimality}).

For $z=\alpha$ (the case of the sandwiched R\'enyi divergence), the variational formula 
in \eqref{Q variational} was given first in 
\cite{FL} for finite-dimensional PSD operators, and was extended 
to the case of pairs of positive normal functionals on a general von Neumann algebra in \cite{Jencova_NCLpII} (see also \cite[Lemma 3.19]{Hiai_fdiv_Springer} for the case $\alpha<1$), while the variational formula in 
\eqref{D variational} can be obtained as an intermediate step 
in the proof of the first variational formula, and it was given in 
\cite{BFT_variational} in the finite-dimensional case.

For finite-dimensional invertible PSD operators and arbitrary 
$(\alpha,z)\in(1,+\infty)\times(0,+\infty)$,
both variational formulas \eqref{Q variational}--\eqref{D variational}
follow as special cases of \cite[Theorem 3.3]{Zhang2018}.

The version below is an extension of the above when 
$(\alpha,z)\in(1,+\infty)\times(0,+\infty)$ are arbitrary, 
and the operators $\rho,\sigma$ can be PSD operators on an infinite-dimensional Hilbert space 
satisfying the conditions in Lemma \ref{lemma:variational}.
Our proof follows essentially that of \cite[Theorem 3.3]{Zhang2018}.

For any $\sigma\in\B(\hil)\p$, let 
\begin{align*}
\B(\hil)_{\sigma,\alpha,z}&:=
\left\{H\in\B(\hil)_{\ge 0}:\,
\Tr\bz H^{1/2}\sigma^{\frac{\alpha-1}{z}}H^{1/2}\jz^{\frac{z}{\alpha-1}}<+\infty\right\},\\
\B(\hil)_{\sigma,\alpha,z}^+&:=
\left\{H\in\B(\hil)_{\ge 0}:\,
0<\Tr\bz H^{1/2}\sigma^{\frac{\alpha-1}{z}}H^{1/2}\jz^{\frac{z}{\alpha-1}}<+\infty\right\}.
\end{align*}

\begin{lemma}\label{lemma:variational}
Let $\rho,\sigma\in\B(\hil)\p$ and $(\alpha,z)\in(1,+\infty)\times(0,+\infty)$, and assume 
that one of the following holds:
a) $\rho\notin\B^{\alpha,z}(\hil,\sigma)$;
b) $\rho\in \B^{\alpha,z}(\hil,\sigma)\setminus \L^{\alpha,z}(\hil,\sigma)$ and $\sigma$ is compact;
c) $\rho\in \L^{\alpha,z}(\hil,\sigma)$. Then 
\begin{align}
Q_{\alpha,z}(\rho\|\sigma)
&=
\sup_{H\in\B(\hil)_{\sigma,\alpha,z}}\left\{\alpha
\Tr \bz H^{1/2}\rho^{\frac{\alpha}{z}} H^{1/2}\jz^{\frac{z}{\alpha}}+
(1-\alpha)
\Tr\bz H^{1/2}\sigma^{\frac{\alpha-1}{z}}H^{1/2}\jz^{\frac{z}{\alpha-1}}\right\},
\label{Q variational}\\
\log Q_{\alpha,z}(\rho\|\sigma)
&=
\sup_{H\in\B(\hil)_{\sigma,\alpha,z}^+}\left\{
\alpha\log\Tr \bz H^{1/2}\rho^{\frac{\alpha}{z}} H^{1/2}\jz^{\frac{z}{\alpha}}
+(1-\alpha)\log\Tr\bz H^{1/2}\sigma^{\frac{\alpha-1}{z}}H^{1/2}\jz^{\frac{z}{\alpha-1}}\right\}.\label{D variational}
\end{align}
The equality in \eqref{Q variational} still holds if the supremum is taken over 
$\B(\hil)_{\sigma,\alpha,z}^+$. Moreover, in cases a) and b), and in case c) if $\sigma$ is compact, the $H$ operators in \eqref{Q variational} and \eqref{D variational} 
may additionally be required to be of finite rank.
\end{lemma}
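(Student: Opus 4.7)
The plan is to adapt the finite-dimensional argument of Zhang (Theorem 3.3 in the cited reference) to infinite dimension, treating \eqref{Q variational} as primary and deducing \eqref{D variational} from it by a scaling argument. Introduce the shorthands $A(H):=\Tr[H^{1/2}\rho^{\alpha/z}H^{1/2}]^{z/\alpha}$ and $B(H):=\Tr[H^{1/2}\sigma^{(\alpha-1)/z}H^{1/2}]^{z/(\alpha-1)}$. For $\tilde H\in\B(\hil)_{\sigma,\alpha,z}^+$ and $\lambda>0$, substituting $H=\lambda\tilde H$ turns the bracket in \eqref{Q variational} into $\alpha\lambda^{z/\alpha}A(\tilde H)-(\alpha-1)\lambda^{z/(\alpha-1)}B(\tilde H)$, whose supremum over $\lambda>0$ is attained at $\lambda_*=(A(\tilde H)/B(\tilde H))^{\alpha(\alpha-1)/z}$ with value $A(\tilde H)^{\alpha}B(\tilde H)^{-(\alpha-1)}$. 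This identity simultaneously (i) shows that the supremum in \eqref{Q variational} may be restricted to $\B(\hil)_{\sigma,\alpha,z}^+$, and (ii) converts \eqref{Q variational} into \eqref{D variational} after taking logarithms.

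For the upper bound in \eqref{Q variational} I would first work in finite dimension. Assume $\rho\in\L^{\alpha,z}(\hil,\sigma)$ and let $K:=\sigma^{(\alpha-1)/(2z)}H^{1/2}$; by Lemma \ref{lemma:rho alpha} and the factorization $\rho^{\alpha/z}=\sigma^{(\alpha-1)/(2z)}\rho_{\sigma,\alpha,z}\sigma^{(\alpha-1)/(2z)}$ we have $H^{1/2}\rho^{\alpha/z}H^{1/2}=K^*\rho_{\sigma,\alpha,z}K$ and $H^{1/2}\sigma^{(\alpha-1)/z}H^{1/2}=K^*K$, so the inequality to prove reduces to $\alpha\Tr(K^*\rho_{\sigma,\alpha,z}K)^{z/\alpha}-(\alpha-1)\Tr(K^*K)^{z/(\alpha-1)}\le\Tr\rho_{\sigma,\alpha,z}^{z}$. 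In finite dimension this follows by combining Araki's inequality (as already used in the proof of Proposition \ref{prop:z mon}) with the scalar Young inequality $\alpha xy-(\alpha-1)y^{\alpha/(\alpha-1)}\le x^{\alpha}$, applied to the joint spectrum of $\rho_{\sigma,\alpha,z}$ and $K^*K$. For general $\hil$ and $H\in\B(\hil)_{\sigma,\alpha,z}$, compress by $P_n:=\egy_{(c_n,d_n)}(\sigma)$: the finite-dimensional bound applies within $P_n\B(\hil)P_n$ to the compressions $P_nHP_n$, $P_n\rho P_n$, $P_n\sigma P_n$, and the limit $n\to\infty$ is handled by Lemmas \ref{lemma:Grumm}, \ref{lemma:pnorm lsc} and \ref{lemma:Q alpha z as limit} (upper semicontinuity of the first trace, lower semicontinuity of the second, convergence of the right-hand side).

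For achievability in case (c), set $H_n:=\sigma_n^{(1-\alpha)/(2z)}(P_n\rho_{\sigma,\alpha,z}P_n)^{z-1}\sigma_n^{(1-\alpha)/(2z)}$, which is a bounded PSD operator, and of finite rank whenever $\sigma$ is compact. A direct computation inside $P_n\B(\hil)P_n$ shows that the Young inequality becomes an equality, giving $\alpha A(H_n)-(\alpha-1)B(H_n)=\Tr(P_n\rho_{\sigma,\alpha,z}P_n)^{z}$, which converges to $\Tr\rho_{\sigma,\alpha,z}^{z}=Q_{\alpha,z}(\rho\|\sigma)$ by \eqref{rho alpha as alpha limit}. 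In case (a), Lemma \ref{lemma:rho alpha}\ref{rho alpha6} fails, so one can pick unit vectors $\psi_n$ with $\inner{\psi_n}{\rho^{\alpha/z}\psi_n}\ge n\inner{\psi_n}{\sigma^{(\alpha-1)/z}\psi_n}$; rank-one tests $H_n=\pr{\psi_n}$ scaled optimally as in the first paragraph give values growing without bound. In case (b), compactness of $\sigma$ makes $P_n$ finite rank, hence $P_n\rho_{\sigma,\alpha,z}P_n\in\B_f(\hil)$ and Lemma \ref{lemma:pnorm lsc} gives $\Tr(P_n\rho_{\sigma,\alpha,z}P_n)^{z}\to\norm{\rho_{\sigma,\alpha,z}}_z^{z}=+\infty$; the same $H_n$ as in case (c) then diverge the functional to $+\infty$, and they remain of finite rank throughout.

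The main obstacle is the infinite-dimensional upper bound: one must pass from finite-rank compressions to arbitrary $H\in\B(\hil)_{\sigma,\alpha,z}$ while controlling both trace-functionals simultaneously, even though $H$ itself need not be trace-class and $\rho^{\alpha/z}$, $\sigma^{(\alpha-1)/z}$ are typically unbounded in the sense that their Schatten norms diverge. The argument succeeds because $B$ is lower semicontinuous in $H$ under weak operator convergence (by Lemma \ref{lemma:pnorm lsc}), while $A$ is continuous under compressions by $P_n$ (by Lemma \ref{lemma:Grumm}, since $H^{1/2}$ is bounded and $\rho_{\sigma,\alpha,z}\in\L^{z}(\hil)$ in case (c)); careful bookkeeping of these one-sided semicontinuities is what makes the limit go through, and also establishes that in cases (a), (b) and in case (c) under compactness of $\sigma$ it suffices to consider finite-rank $H$.
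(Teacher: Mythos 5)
Your high-level plan (derive \eqref{D variational} from \eqref{Q variational} by optimal scaling, prove the upper bound with H\"older/Young, prove achievability with a specific $H_n$) is the same as the paper's, but several concrete steps are wrong or would not go through as written.

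\textbf{Wrong exponent in the achieving sequence.} You set $H_n:=\sigma_n^{(1-\alpha)/(2z)}(P_n\rho_{\sigma,\alpha,z}P_n)^{z-1}\sigma_n^{(1-\alpha)/(2z)}$. Writing $X:=\sigma_n^{(1-\alpha)/(2z)}\rho^{\alpha/(2z)}$ and $p$ for the unknown exponent, one finds $\rho^{\alpha/(2z)}H_n\rho^{\alpha/(2z)}=(X^*X)^{p+1}$, so $F(H_n)=\Tr(XX^*)^{(p+1)z/\alpha}$; this equals $\Tr(P_n\rho_{\sigma,\alpha,z}P_n)^{z}$ only when $p+1=\alpha$. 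Similarly $G(H_n)=\Tr(XX^*)^{pz/(\alpha-1)}$, which matches only when $p=\alpha-1$. Your choice $p=z-1$ works only for $z=\alpha$ (the sandwiched case), not for general $(\alpha,z)$; the correct exponent is $\alpha-1$.

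\textbf{``Joint spectrum'' is not a proof.} In the upper bound you reduce to $\alpha\Tr(K^*\rho_{\sigma,\alpha,z}K)^{z/\alpha}-(\alpha-1)\Tr(K^*K)^{z/(\alpha-1)}\le\Tr\rho_{\sigma,\alpha,z}^{z}$ and say it follows by Araki plus the scalar Young inequality ``applied to the joint spectrum''. The operators $\rho_{\sigma,\alpha,z}$ and $K^*K$ do not commute in general and have no joint spectrum, so this step is not rigorous; moreover it is unclear how Araki's inequality (which goes in one direction depending on whether $z/\alpha\gtrless 1$) combines with scalar Young to give the desired trace inequality. The paper avoids this entirely: it estimates $F(H)=\norm{H^{1/2}\sigma^{(\alpha-1)/(2z)}\cdot\sigma^{(1-\alpha)/(2z)}\rho^{\alpha/(2z)}}_{2z/\alpha}^{2z/\alpha}$ by the operator H\"older inequality with exponents $\tfrac{\alpha-1}{2z}+\tfrac{1}{2z}=\tfrac{\alpha}{2z}$, and then applies the arithmetic--geometric mean inequality. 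This works directly in infinite dimension with no compression step.

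\textbf{Compression of the upper bound is not controlled.} You propose to establish the upper bound for general $H$ by proving it for $(P_nHP_n,P_n\rho P_n,P_n\sigma P_n)$ and passing to the limit. But the functional $H\mapsto\alpha F(H)-(\alpha-1)G(H)$ is neither monotone nor jointly (semi)continuous in a way that makes this pass: $F$ and $G$ have opposite one-sided semicontinuities, $(P_n\rho P_n)^{\alpha/z}$ is not simply related to $P_n\rho^{\alpha/z}P_n$ when $\alpha/z\ne 1$, and $(P_n\rho P_n)_{P_n\sigma P_n,\alpha,z}$ is not $P_n\rho_{\sigma,\alpha,z}P_n$ in general. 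None of these issues is addressed, and the paper sidesteps them by proving the case c) upper bound directly in infinite dimension.

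\textbf{Case (a) degeneracy.} Your rank-one argument uses the optimal scaling $\lambda_*=(A/B)^{\alpha(\alpha-1)/z}$, which is singular when $\inner{\psi_n}{\sigma^{(\alpha-1)/z}\psi_n}=0$. The paper handles this sub-case separately by perturbing $\psi_n$ with a fixed vector in $(\ker\sigma)^\perp$; without that fix, your case (a) argument has a gap.

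The achievability computation (with the corrected exponent) and the $\lambda$-scaling to get \eqref{D variational} are on the right track, but as it stands the proof does not succeed for general $(\alpha,z)$, and the upper bound argument needs to be replaced by the direct H\"older estimate.
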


\begin{proof}
For any $H\in\B(\hil)\p$, let 
\begin{align}
F(H)&:=\Tr \bz H^{1/2}\rho^{\frac{\alpha}{z}} H^{1/2}\jz^{\frac{z}{\alpha}},\ds\ds\ds
G(H):=\Tr\bz H^{1/2}\sigma^{\frac{\alpha-1}{z}}H^{1/2}\jz^{\frac{z}{\alpha-1}}.
\end{align}

Assume first that $\rho\notin\B^{\alpha,z}(\hil,\sigma)$, and hence 
$Q_{\alpha,z}(\rho\|\sigma)=\log Q_{\alpha,z}(\rho\|\sigma)=+\infty$.
By \ref{rho alpha6} of Lemma 
\ref{lemma:rho alpha}, for every $\lambda>0$ there exists a vector 
$x_{\lambda}\in\hil$ such that 
\begin{align}\label{variational proof0}
\inner{x_{\lambda}}{\rho^{\frac{\alpha}{z}}x_{\lambda}}>
\lambda\langle x_{\lambda},\sigma^{\frac{\alpha-1}{z}}x_{\lambda}\rangle.
\end{align}
Clearly, for any $x\in\hil$, $H_x:=\pr{x}\in\B(\hil)_{\sigma,\alpha,z}\cap\B_f(\hil)$, and 
\begin{align*}
F(H_x)=
\inner{x}{\rho^{\frac{\alpha}{z}}x}^{\frac{z}{\alpha}},\ds\ds\ds
G(H_x)=
\langle x,\sigma^{\frac{\alpha-1}{z}}x\rangle^{\frac{z}{\alpha-1}}.
\end{align*}

If $\langle x_{\lambda},\sigma^{\frac{\alpha-1}{z}}x_{\lambda}\rangle=0$ for some 
$\lambda>0$ then let $x_{\lambda,t}:=tx_{\lambda}+t\inv y$, $t>0$, where 
$y\in(\ker\sigma)^{\perp}\setminus\{0\}$ is some fixed vector. 
Then $H_{x_{\lambda,t}}\in\B(\hil)_{\sigma,\alpha,z}^+\cap\B_f(\hil)$, 
\eqref{variational proof0} implies that 
$\inner{x_{\lambda}}{\rho^{\frac{\alpha}{z}}x_{\lambda}}>0$,
and 
it is straightforward to verify that 
\begin{align*}
\lim_{t\to+\infty}F(H_{x_{\lambda,t}})=+\infty,\ds\ds
\lim_{t\to+\infty}G(H_{x_{\lambda,t}})=0.
\end{align*}
Thus, 
\begin{align*}
\lim_{t\to+\infty}\bz \alpha F(H_{x_{\lambda,t}})+(1-\alpha)G(H_{x_{\lambda,t}})\jz
&=+\infty=Q_{\alpha,z}(\rho\|\sigma)=\log Q_{\alpha,z}(\rho\|\sigma)\\
&=\lim_{t\to+\infty}\bz \alpha \log F(H_{x_{\lambda,t}})+(1-\alpha)\log G(H_{x_{\lambda,t}})\jz, 
\end{align*}
and therefore \eqref{Q variational}--\eqref{D variational} hold.

If $\langle x_{\lambda},\sigma^{\frac{\alpha-1}{z}}x_{\lambda}\rangle>0$ for every 
$\lambda>0$ then let $\tilde x_{\lambda}:=x_{\lambda}
\langle x_{\lambda},\sigma^{\frac{\alpha-1}{z}}x_{\lambda}\rangle^{-1/2}$. 
Then
\begin{align}\label{variational proof00}
\langle \tilde x_{\lambda},\sigma^{\frac{\alpha-1}{z}}\tilde x_{\lambda}\rangle
=1=
\langle \tilde x_{\lambda},\sigma^{\frac{\alpha-1}{z}}\tilde x_{\lambda}\rangle
^{\frac{z}{\alpha-1}}=G(H_{\tilde x_{\lambda}}),
\end{align}
and
\begin{align*}
F(H_{\tilde x_{\lambda}})
=
\inner{\tilde x_{\lambda}}{\rho^{\frac{\alpha}{z}}\tilde x_{\lambda}}^{\frac{z}{\alpha}}
>\bz\lambda\langle \tilde x_{\lambda},\sigma^{\frac{\alpha-1}{z}}\tilde x_{\lambda}\rangle\jz^{\frac{z}{\alpha}}
=\lambda^{\frac{z}{\alpha}},
\end{align*}
according to \eqref{variational proof0} and \eqref{variational proof00}.
Thus, 
\begin{align*}
\lim_{\lambda\to+\infty}\bz \alpha F(H_{\tilde x_{\lambda}})+
(1-\alpha)G(H_{\tilde x_{\lambda}})\jz
&=+\infty=Q_{\alpha,z}(\rho\|\sigma)=\log Q_{\alpha,z}(\rho\|\sigma)\\
&=\lim_{\lambda\to+\infty}\bz \alpha \log F(H_{\tilde x_{\lambda}})+
(1-\alpha)\log G(H_{\tilde x_{\lambda}})\jz, 
\end{align*}
and therefore \eqref{Q variational}--\eqref{D variational} hold, even with the 
optimizations restricted to finite-rank operators.

This completes the proof of case a), and hence for the rest we assume that 
b) or c) holds.

Consider any sequences $0<c_n<d_n$ with $c_n\to 0$, $d_n\to+\infty$, 
and let $P_n:=\egy_{(c_n,d_n)}(\sigma)$,
$\sigma_n:=\id_{(c_n,d_n)}(\sigma)=P_n\sigma P_n$, and
\begin{align*}
H_n:=\sigma_n^{\frac{1-\alpha}{2z}}
(\sigma_n^{\frac{1-\alpha}{2z}}
\rho^{\frac{\alpha}{z}}\sigma_n^{\frac{1-\alpha}{2z}})^{\alpha-1}
\sigma_n^{\frac{1-\alpha}{2z}}
=
\sigma_n^{\frac{1-\alpha}{2z}}(P_n\rho_{\sigma,\alpha,z}P_n)^{\alpha-1}\sigma_n^{\frac{1-\alpha}{2z}}.
\end{align*}
Then 
\begin{align}
F(H_n)&=\Tr \bz H_n^{1/2}\rho^{\frac{\alpha}{z}} H_n^{1/2}\jz^{\frac{z}{\alpha}}\nn\\
&=
\Tr \bz \rho^{\frac{\alpha}{2z}}H_n\rho^{\frac{\alpha}{2z}} \jz^{\frac{z}{\alpha}}\nn\\
&=
\Tr \bz \rho^{\frac{\alpha}{2z}}\sigma_n^{\frac{1-\alpha}{2z}}
(\sigma_n^{\frac{1-\alpha}{2z}}
\rho^{\frac{\alpha}{z}}\sigma_n^{\frac{1-\alpha}{2z}})^{\alpha-1}\sigma_n^{\frac{1-\alpha}{2z}}\rho^{\frac{\alpha}{2z}} \jz^{\frac{z}{\alpha}}\nn\\
&=
\Tr \Big( 
(\sigma_n^{\frac{1-\alpha}{2z}}
\rho^{\frac{\alpha}{z}}\sigma_n^{\frac{1-\alpha}{2z}})^{\frac{\alpha-1}{2}}
\sigma_n^{\frac{1-\alpha}{2z}}
\rho^{\frac{\alpha}{z}}\sigma_n^{\frac{1-\alpha}{2z}}
(\sigma_n^{\frac{1-\alpha}{2z}}
\rho^{\frac{\alpha}{z}}\sigma_n^{\frac{1-\alpha}{2z}})^{\frac{\alpha-1}{2}}
\Big)^{\frac{z}{\alpha}}\nn\\
&=
\Tr \Big(\sigma_n^{\frac{1-\alpha}{2z}}
\rho^{\frac{\alpha}{z}}\sigma_n^{\frac{1-\alpha}{2z}}\Big)^{z}\nn\\
&=
\Tr\bz P_n\rho_{\sigma,\alpha,z}P_n\jz^z,\label{variational proof1}
\end{align}
and similarly,
\begin{align}
G(H_n)&=\Tr\bz H_n^{1/2}\sigma^{\frac{\alpha-1}{z}}H_n^{1/2}\jz^{\frac{z}{\alpha-1}}
=
\Tr\bz\sigma^{\frac{\alpha-1}{2z}}H_n\sigma^{\frac{\alpha-1}{2z}}\jz^{\frac{z}{\alpha-1}}
\nn\\
&=
\Tr\Big(
\underbrace{\sigma^{\frac{\alpha-1}{2z}}
\sigma_n^{\frac{1-\alpha}{2z}}}_{=P_n}
(\sigma_n^{\frac{1-\alpha}{2z}}
\rho^{\frac{\alpha}{z}}\sigma_n^{\frac{1-\alpha}{2z}})^{\alpha-1}
\underbrace{\sigma_n^{\frac{1-\alpha}{2z}}
\sigma^{\frac{\alpha-1}{2z}}}_{=P_n}
\Big)^{\frac{z}{\alpha-1}}\nn\\
&=
\Tr \Big(\sigma_n^{\frac{1-\alpha}{2z}}
\rho^{\frac{\alpha}{z}}\sigma_n^{\frac{1-\alpha}{2z}}\Big)^{z}\nn\\
&=
\Tr(P_n\rho_{\sigma,\alpha,z}P_n)^z.\label{variational proof2}
\end{align}

We have
\begin{align*}
\Tr(P_n\rho_{\sigma,\alpha,z}P_n)^z=
\Tr\bz\rho_{\sigma,\alpha,z}^{1/2}P_n\rho_{\sigma,\alpha,z}^{1/2}\jz^z
\le
\Tr\rho_{\sigma,\alpha,z}^z
=
Q_{\alpha,z}(\rho\|\sigma); 
\end{align*}
in particular, 
\begin{align}\label{variational proof3}
\limsup_{n\to+\infty}\Tr(P_n\rho_{\sigma,\alpha,z}P_n)^z\le Q_{\alpha,z}(\rho\|\sigma).
\end{align}
Moreover, if $Q_{\alpha,z}(\rho\|\sigma)<+\infty$, i.e., in case c), or if 
$\sigma$ is compact (in which case $H_n$ and $P_n\rho_{\sigma,\alpha,z}P_n$ are of 
finite rank) then 
$F(H_n)=G(H_n)<+\infty$, whence
$H_n\in\B(\hil)_{\sigma,\alpha,z}$.
Since $\rho\in\B^{\alpha,z}(\hil,\sigma)$ implies $\rho^0\le\sigma^0$, it is also true that 
$0\ne \sigma_n^{\frac{1-\alpha}{2z}}
\rho^{\frac{\alpha}{z}}\sigma_n^{\frac{1-\alpha}{2z}}$, and hence
$H_n\in\B(\hil)_{\sigma,\alpha,z}^+$, for all large enough $n$.

When $z\ge 1$, Lemma \ref{lemma:pnorm lsc} yields
\begin{align}\label{variational proof4}
Q_{\alpha,z}(\rho\|\sigma)
=
\Tr\rho_{\sigma,\alpha,z}^z=\norm{\rho_{\sigma,\alpha,z}}_z^z
\le
\liminf_{n\to+\infty}\norm{P_n\rho_{\sigma,\alpha,z}P_n}_z^z
=
\liminf_{n\to+\infty}
\Tr(P_n\rho_{\sigma,\alpha,z}P_n)^z.
\end{align}
When $z\in(0,1)$, $\id_{[0,+\infty)}^z$ is operator concave, and hence
$(P_n\rho_{\sigma,\alpha,z}P_n)^z\ge P_n\rho_{\sigma,\alpha,z}^zP_n$, whence
\begin{align}\label{variational proof5}
\liminf_{n\to+\infty}\Tr(P_n\rho_{\sigma,\alpha,z}P_n)^z
\ge
\liminf_{n\to+\infty}\Tr P_n\rho_{\sigma,\alpha,z}^zP_n=\Tr \rho_{\sigma,\alpha,z}^z=
Q_{\alpha,z}(\rho\|\sigma).
\end{align}
Combining \eqref{variational proof1}--\eqref{variational proof5} gives
\begin{align*}
Q_{\alpha,z}(\rho\|\sigma)
&=
\lim_{n\to+\infty}\Tr \Big(\sigma_n^{\frac{1-\alpha}{2z}}
\rho^{\frac{\alpha}{z}}\sigma_n^{\frac{1-\alpha}{2z}}\Big)^{z}\\
&=
\lim_{n\to+\infty}\bz
\alpha\Tr \bz H_n^{1/2}\rho^{\frac{\alpha}{z}} H_n^{1/2}\jz^{\frac{z}{\alpha}}+
(1-\alpha)
\Tr\bz H_n^{1/2}\sigma^{\frac{\alpha-1}{z}}H_n^{1/2}\jz^{\frac{z}{\alpha-1}}\jz,\\
\log Q_{\alpha}\nw(\rho\|\sigma)
&=
\lim_{n\to+\infty}\log\Tr \Big(\sigma_n^{\frac{1-\alpha}{2z}}
\rho^{\frac{\alpha}{z}}\sigma_n^{\frac{1-\alpha}{2z}}\Big)^{z}\\
&=
\lim_{n\to+\infty}\bz
\alpha\log\Tr \bz H_n^{1/2}\rho^{\frac{\alpha}{z}} H_n^{1/2}\jz^{\frac{z}{\alpha}}+
(1-\alpha)
\log\Tr\bz H_n^{1/2}\sigma^{\frac{\alpha-1}{z}}H_n^{1/2}\jz^{\frac{z}{\alpha-1}}\jz.
\end{align*}
This completes the proof when $Q_{\alpha,z}(\rho\|\sigma)=+\infty$, i.e., 
in case b). 

Assume for the rest that case c) holds. 
By the above considerations, we have LHS$\le$RHS in \eqref{Q variational}--\eqref{D variational}, and hence we only have to show the converse inequalities.
By Lemma \ref{lemma:rho alpha} and Definition \ref{def:L alpha},
$\rho^{\frac{\alpha}{z}}=\sigma^{\frac{\alpha-1}{2z}}\rho_{\sigma,\alpha,z}\sigma^{\frac{\alpha-1}{2z}}$
with $\rho_{\sigma,\alpha,z}\in \L^{\alpha,z}(\hil)$.
For any $H\in\B(\hil)_{\sigma,\alpha,z}$, we have 
\begin{align}
\Tr \bz H^{1/2}\rho^{\frac{\alpha}{z}} H^{1/2}\jz^{\frac{z}{\alpha}}
&=
\Tr \bz \rho^{\frac{\alpha}{2z}}H\rho^{\frac{\alpha}{2z}}\jz^{\frac{z}{\alpha}}
=
\Tr\abs{H^{1/2}\rho^{\frac{\alpha}{2z}}}^{\frac{2z}{\alpha}}
=
\Tr\abs{H^{1/2}\sigma^{\frac{\alpha-1}{2z}}\sigma^{\frac{1-\alpha}{2z}}\rho^{\frac{\alpha}{2z}}}^{\frac{2z}{\alpha}}\label{variational lower bound1}
\\
&=
\norm{H^{1/2}\sigma^{\frac{\alpha-1}{2z}}\sigma^{\frac{1-\alpha}{2z}}\rho^{\frac{\alpha}{2z}}}_{\frac{2z}{\alpha}}^{\frac{2z}{\alpha}}
\le
\norm{H^{1/2}\sigma^{\frac{\alpha-1}{2z}}}_{\frac{2z}{\alpha-1}}^{\frac{2z}{\alpha}}
\norm{\sigma^{\frac{1-\alpha}{2z}}\rho^{\frac{\alpha}{2z}}}_{2z}^{\frac{2z}{\alpha}}
\label{variational lower bound2}\\
&=
\left[\Tr\bz\sigma^{\frac{\alpha-1}{2z}}H \sigma^{\frac{\alpha-1}{2z}}\jz^{\frac{z}{\alpha-1}}\right]^{\frac{\alpha-1}{\alpha}}
\Bigg[\underbrace{\Tr\bz 
\bz \sigma^{\frac{1-\alpha}{2z}}\rho^{\frac{\alpha}{2z}}\jz^*
\sigma^{\frac{1-\alpha}{2z}}\rho^{\frac{\alpha}{2z}}
\jz^{z}}_{=Q_{\alpha,z}(\rho\|\sigma)}\Bigg]^{\frac{1}{\alpha}}
\label{variational lower bound3}\\
&\le
\frac{\alpha-1}{\alpha}\Tr\bz\sigma^{\frac{\alpha-1}{2z}}H \sigma^{\frac{\alpha-1}{2z}}\jz^{\frac{z}{\alpha-1}}
+
\frac{1}{\alpha}Q_{\alpha,z}(\rho\|\sigma),\label{variational lower bound4}
\end{align}
where we used that 
$\ran\rho^{\frac{\alpha}{2z}}\subseteq
\dom \sigma^{\frac{1-\alpha}{2z}}$ and $\sigma^{\frac{1-\alpha}{2z}}\rho^{\frac{\alpha}{2z}}\in\B(\hil)$, according to Lemma \ref{lemma:rho alpha}, and the 
expression in \eqref{Q alpha formal2} for $Q_{\alpha,z}(\rho\|\sigma)$.
The first inequality above is due to the operator H\"older inequality, and 
the second inequality is trivial from the convexity of the exponential function.
A simple rearrangement yields that LHS$\ge$RHS in \eqref{Q variational}--\eqref{D variational}, completing the proof.
\end{proof}

\begin{rem}
It is interesting that
one can formally take the logarithm of each term in \eqref{Q variational} to obtain \eqref{D variational}.
\end{rem}

\begin{rem}
The variational formulas in \eqref{Q variational}--\eqref{D variational} hold 
for the sandwiched quantities ($z=\alpha$)
when $\rho\in \B^{\alpha}(\hil,\sigma)\setminus \L^{\alpha}(\hil,\sigma)$
even if $\sigma$ is not compact \cite{Hiai_private}. However, we won't need this fact in the rest of the paper.
\end{rem}

\begin{rem}
Note that the case $z=1$ corresponds to the Petz-type R\'enyi divergences. 
By the above,
$\rho\in\B^{\alpha,1}(\hil,\sigma)$ if and only if $\rho^{\alpha}\le\lambda\sigma^{\alpha-1}$
with some $\lambda\ge 0$, in which case
\begin{align*}
Q_{\alpha}(\rho\|\sigma):=Q_{\alpha,1}(\rho\|\sigma)
=
\Tr\oll{\sigma^{\frac{1-\alpha}{2}}\rho^{\alpha}\sigma^{\frac{1-\alpha}{2}}}.
\end{align*}
(See \cite[Theorem 3.6]{Hiai_fdiv_Springer} for a generalization of the above in the setting of von Neumann algebras, and also for an analogous formula in the case $\alpha\in(0,1)$.)
Moreover, we have the variational formulas
\begin{align*}
Q_{\alpha}(\rho\|\sigma)
&=
\sup_{H\in\B(\hil)_{\sigma,\alpha,1}^+}\left\{\alpha
\Tr \bz H^{1/2}\rho^{\alpha} H^{1/2}\jz^{\frac{1}{\alpha}}+
(1-\alpha)
\Tr\bz H^{1/2}\sigma^{\alpha-1}H^{1/2}\jz^{\frac{1}{\alpha-1}}\right\},\\
\log Q_{\alpha}(\rho\|\sigma)
&=
\sup_{H\in\B(\hil)_{\sigma,\alpha,1}^+}\left\{
\alpha\log\Tr \bz H^{1/2}\rho^{\alpha} H^{1/2}\jz^{\frac{1}{\alpha}}
+(1-\alpha)\log\Tr\bz H^{1/2}\sigma^{\alpha-1}H^{1/2}\jz^{\frac{1}{\alpha-1}}\right\},
\end{align*}
where $\B(\hil)_{\sigma,\alpha,1}^+=
\left\{H\in\B(\hil)\p:\,
0<\Tr\bz H^{1/2}\sigma^{\alpha-1}H^{1/2}\jz^{\frac{1}{\alpha-1}}<+\infty\right\}$.

These variational expressions for the Petz-type R\'enyi divergences do not seem to have appeared in the literature before, even for finite-dimensional operators, although in that case they follow easily from the results of \cite{Zhang2018}.
\end{rem}

The variational formulas in Lemma \ref{lemma:variational} can be used to prove the following 
important properties of the R\'enyi $(\alpha,z)$-divergences.

\begin{cor}\label{cor:trace mon}
Let $\rho,\sigma\in\B(\hil)\p$ be such that $\sigma$ is trace-class. For every 
$(\alpha,z)\in(1,+\infty)\times(0,+\infty)$, 
\begin{align}\label{Q tr mon}
Q_{\alpha,z}(\rho\|\sigma)
\ge (\Tr\rho)^{\alpha}(\Tr\sigma)^{1-\alpha}
\ge 
\alpha\Tr\rho+(1-\alpha)\Tr\sigma.
\end{align}
If, moreover, $\rho$ is trace-class then we have
\begin{align}\label{trace mon eq1}
Q_{\alpha,z}(\rho\|\sigma)=(\Tr\rho)^{\alpha}(\Tr\sigma)^{1-\alpha}\ds\ds\iff\ds\ds
\sigma=\eta\rho \text{ for some }\eta\in(0,+\infty),
\end{align}
and
\begin{align}\label{trace mon eq2}
Q_{\alpha,z}(\rho\|\sigma)=\alpha\Tr\rho+(1-\alpha)\Tr\sigma\ds\ds\iff\ds\ds
\sigma=\rho.
\end{align}
\end{cor}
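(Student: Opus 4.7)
The plan is to establish the two inequalities in \eqref{Q tr mon} separately, and then to handle the equality characterizations.

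For the inequality $Q_{\alpha,z}(\rho\|\sigma) \ge (\Tr\rho)^{\alpha}(\Tr\sigma)^{1-\alpha}$, I would apply the variational representation \eqref{D variational} of Lemma \ref{lemma:variational} with the test operator $H:=I$. This is admissible because the trace-class assumption on $\sigma$ gives $G(I)=\Tr\sigma\in(0,+\infty)$, whence $I\in\B(\hil)_{\sigma,\alpha,z}^+$. The functional-calculus identities $(\rho^{\alpha/z})^{z/\alpha}=\rho$ and $(\sigma^{(\alpha-1)/z})^{z/(\alpha-1)}=\sigma$ give $F(I)=\Tr\rho$ and $G(I)=\Tr\sigma$, so \eqref{D variational} reads $\log Q_{\alpha,z}(\rho\|\sigma)\ge\alpha\log\Tr\rho+(1-\alpha)\log\Tr\sigma$, which is the desired inequality. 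The degenerate case $\rho\notin\L^{\alpha,z}(\hil,\sigma)$ is trivial since then $Q_{\alpha,z}(\rho\|\sigma)=+\infty$. The second inequality $(\Tr\rho)^{\alpha}(\Tr\sigma)^{1-\alpha}\ge \alpha\Tr\rho+(1-\alpha)\Tr\sigma$ is the scalar bound $x^{\alpha}y^{1-\alpha}\ge \alpha x+(1-\alpha)y$ for $x,y>0$, $\alpha>1$, obtained from the tangent inequality $x^{\alpha}\ge y^{\alpha}+\alpha y^{\alpha-1}(x-y)$ for the strictly convex function $t\mapsto t^{\alpha}$ upon dividing by $y^{\alpha-1}$; it is strict unless $x=y$.

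The ``$\Leftarrow$'' direction of \eqref{trace mon eq1} is a direct computation: reading off the defining equation $\rho^{\alpha/z}=\rho^{(\alpha-1)/(2z)}\rho^{1/z}\rho^{(\alpha-1)/(2z)}$ gives $\rho_{\rho,\alpha,z}=\rho^{1/z}$ on $\supp\rho$, hence $Q_{\alpha,z}(\rho\|\rho)=\Tr\rho$, and the scaling identity \eqref{Q D scaling1} yields $Q_{\alpha,z}(\rho\|\eta\rho)=\eta^{1-\alpha}\Tr\rho=(\Tr\rho)^{\alpha}(\Tr\sigma)^{1-\alpha}$. The ``$\Rightarrow$'' direction is the main obstacle. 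For the sandwiched case $z=\alpha$, Lemma \ref{lemma:rho alpha} gives $\rho=\sigma^{(\alpha-1)/(2\alpha)}\rho_{\sigma,\alpha}\sigma^{(\alpha-1)/(2\alpha)}$, whence $\Tr\rho=\Tr(\sigma^{(\alpha-1)/\alpha}\rho_{\sigma,\alpha})$, and the operator H\"older inequality of Lemma \ref{lemma:Holder} with conjugate exponents $\alpha/(\alpha-1)$ and $\alpha$ yields $\Tr\rho\le (\Tr\sigma)^{(\alpha-1)/\alpha}Q_{\alpha}\nw(\rho\|\sigma)^{1/\alpha}$. Equality here forces the appropriate proportionality $\sigma\propto\rho_{\sigma,\alpha}^{\alpha}$ via the H\"older equality characterization, and substituting back into the representation of $\rho$ yields $\sigma\propto\rho$. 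For $z>\alpha$, the chain is extended by first applying Araki's inequality (with exponent $q:=z/\alpha\ge 1$) to obtain $\Tr\rho\le \Tr(\sigma^{(\alpha-1)/\alpha}\rho_{\sigma,\alpha,z}^{z/\alpha})$ before H\"older, with the additional commutativity imposed by Araki's equality combining with the H\"older condition to again force $\sigma\propto\rho$. For $z<\alpha$, the monotonicity $Q_{\alpha,z}(\rho\|\sigma)\ge Q_{\alpha,\alpha}(\rho\|\sigma)\ge(\Tr\rho)^{\alpha}(\Tr\sigma)^{1-\alpha}$ from Proposition \ref{prop:z mon} forces equality also at $z=\alpha$, reducing to the case already handled.

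Finally, \eqref{trace mon eq2} is immediate: equality in the full chain $Q=\alpha\Tr\rho+(1-\alpha)\Tr\sigma$ requires equality in both inequalities of \eqref{Q tr mon}; the scalar equality forces $\Tr\rho=\Tr\sigma$ while \eqref{trace mon eq1} forces $\sigma=\eta\rho$, so $\eta=1$ and $\sigma=\rho$. The main technical difficulty will be making the equality arguments in the operator H\"older and Araki inequalities rigorous in the possibly infinite-dimensional setting, where one exploits the Schatten-class membership of the relevant operators guaranteed by the trace-class assumption on $\sigma$ together with $\rho\in\L^{\alpha,z}(\hil,\sigma)$.
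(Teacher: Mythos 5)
Your proposal is correct in outline, and the treatment of \eqref{Q tr mon}, the right-to-left directions, and \eqref{trace mon eq2} matches the paper's proof. The main difference is in the left-to-right direction of \eqref{trace mon eq1}, and here your route is genuinely more complicated than the paper's. The paper applies H\"older once, to the pair $A:=\sigma^{\frac{\alpha-1}{2z}}$ and $B:=\sigma^{\frac{1-\alpha}{2z}}\rho^{\frac{\alpha}{2z}}$, with exponents $p_0=\frac{2z}{\alpha-1}$, $p_1=2z$, $p=\frac{2z}{\alpha}$ --- that is, exactly the inequality \eqref{variational lower bound2} with $H=I$ --- and this step works for \emph{every} $(\alpha,z)$ uniformly. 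The equality case then directly gives $\sigma\propto\rho_{\sigma,\alpha,z}^z$, from which $\sigma\propto\rho$ follows. By contrast, you apply H\"older to $\sigma^{\frac{\alpha-1}{\alpha}}$ and a power of $\rho_{\sigma,\alpha,z}$, and since the defining relation of Lemma \ref{lemma:rho alpha} expresses $\rho^{\alpha/z}$ (not $\rho$ itself), you are forced to split into three cases: $z=\alpha$ directly, $z>\alpha$ via Araki's inequality plus H\"older, and $z<\alpha$ by the monotonicity-in-$z$ reduction of Proposition \ref{prop:z mon}. The $z<\alpha$ reduction is a nice shortcut, but the $z>\alpha$ branch imports the equality conditions of the Araki--Lieb--Thirring inequality, which is an additional technical obligation that the paper sidesteps entirely by the choice of operators in the H\"older step. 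Your case analysis also introduces the intermediate step $\Tr\rho=\Tr(\sigma^{(\alpha-1)/\alpha}\rho_{\sigma,\alpha})$, which bundles the triangle inequality for the trace with H\"older and thus needs one extra justification compared with the paper's purely norm-based H\"older bound. Finally, both the ``substitute back'' step and the trace-cyclicity step you invoke need a little care in infinite dimension (the paper carries out the substitution via the truncations $P_n\sigma P_n$); your sketch acknowledges this but does not spell it out. None of these is a fatal gap --- the argument can be completed along your lines --- but the paper's single uniform H\"older application is shorter and avoids the Araki equality analysis altogether.
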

\begin{proof}
The second inequality in \eqref{Q tr mon} follows simply from the convexity of $\id_{[0,+\infty)}^{\alpha}$. 
The first inequality is obvious when $Q_{\alpha,z}(\rho\|\sigma)=+\infty$, and hence 
we may assume the contrary, i.e, that $\rho\in\L^{\alpha,z}(\hil,\sigma)$. 
The assumption that $\sigma$ is trace-class yields that 
$I\in\B(\hil)_{\sigma,\alpha,z}^+$, and the 
variational formula in \eqref{D variational}
with $H:=I$ yields the first inequality in \eqref{Q tr mon}.
In fact, we don't need the ``full power'' of the variational formula
in \eqref{D variational}
to obtain the first inequality in \eqref{Q tr mon}, as it follows simply from the H\"older inequality as in \eqref{variational lower bound1}--\eqref{variational lower bound4},
with $H=I$. 

Assume for the rest that $\rho$ is trace-class. The right to left implications are straightforward to verify in both \eqref{trace mon eq1} and \eqref{trace mon eq2}.
Assume now that the equality on the LHS of \eqref{trace mon eq1} holds. It implies that 
$Q_{\alpha,z}(\rho\|\sigma)$ is finite, i.e., $\rho\in\L^{\alpha,z}(\hil,\sigma)$, and 
the inequality in \eqref{variational lower bound2} holds as an equality for $H=I$. Thus, by the 
characterization of the equality case in H\"older's inequality 
(Lemma \ref{lemma:Holder}), 
$\sigma=\lambda\abs{\bz \sigma^{\frac{1-\alpha}{2z}}\rho^{\frac{\alpha}{2z}}\jz^*}^{2z}=\lambda\rho_{\sigma,\alpha,z}^z$ for some $\lambda>0$.
From this we get 
$\sigma^{\frac{1}{z}}=\lambda^{\frac{1}{z}}\rho_{\sigma,\alpha,z}$, and
\begin{align*}
\sigma_n^{\frac{1}{z}}
=
P_n\sigma^{\frac{1}{z}} P_n
=
\lambda^{\frac{1}{z}}P_n\rho_{\sigma,\alpha,z} P_n
=
\lambda^{\frac{1}{z}}\sigma_n^{\frac{1-\alpha}{2z}}\rho^{\frac{\alpha}{z}}\sigma_n^{\frac{1-\alpha}{2z}}
\end{align*}
for every $n\in\bN$, where $P_n$ is as in Lemma \ref{lemma:rho alpha}.
Rearranging yields 
$\sigma_n^{\frac{\alpha}{z}}=\lambda^{\frac{1}{z}}P_n\rho^{\frac{\alpha}{z}}P_n$. Thus,
\begin{align*}
\sigma^{\frac{\alpha}{z}}
=
\so\lim_n \sigma_n^{\frac{\alpha}{z}}
=
\lambda^{\frac{1}{z}}\so\lim_n P_n\rho^{\frac{\alpha}{z}}P_n
=
\lambda^{\frac{1}{z}}\rho^{\frac{\alpha}{z}}.
\end{align*}
(In the last equality we use that 
$\rho\in\L^{\alpha,z}(\hil,\sigma)$ implies $\rho^0\le\sigma^0$). 
Hence, $\sigma=\lambda^{\frac{1}{\alpha}}\rho$, i.e, the RHS of 
\eqref{trace mon eq1} holds true.

Finally, assume that the equality on the LHS of \eqref{trace mon eq2} is true. By \eqref{Q tr mon}, 
this implies that the equality on the LHS of \eqref{trace mon eq1} is true, and hence, by the above, $\sigma=\eta\rho$ for some $\eta\in(0,+\infty)$. Moreover, the second equality in
\eqref{Q tr mon} holds as an equality, whence $\Tr\rho=\Tr\sigma$, so we get $\rho=\sigma$ as given 
on the RHS of \eqref{trace mon eq2}.
\end{proof}

\begin{cor}
For any $(\alpha,z)\in(1,+\infty)\times(0,+\infty)$, 
the R\'enyi $(\alpha,z)$-divergence
$D_{\alpha,z}$ is strictly positive in the sense that for any two density operators
$\rho,\sigma\in\S(\hil)$, 
\begin{align*}
D_{\alpha,z}(\rho\|\sigma)\ge 0,\ds\ds\text{with equality if and only if}\ds\ds
\rho=\sigma.
\end{align*}
\end{cor}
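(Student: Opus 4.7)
The plan is to deduce this directly from Corollary \ref{cor:trace mon} by specializing to the case $\Tr\rho=\Tr\sigma=1$. Since $\sigma$ is a density operator, it is trace-class, so Corollary \ref{cor:trace mon} applies and gives the chain of inequalities
\begin{align*}
Q_{\alpha,z}(\rho\|\sigma)\ge (\Tr\rho)^{\alpha}(\Tr\sigma)^{1-\alpha}=1^{\alpha}\cdot 1^{1-\alpha}=1,
\end{align*}
from which $D_{\alpha,z}(\rho\|\sigma)=\frac{1}{\alpha-1}\log Q_{\alpha,z}(\rho\|\sigma)\ge 0$ is immediate, since $\alpha>1$.

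For the equality case, $D_{\alpha,z}(\rho\|\sigma)=0$ is equivalent to $Q_{\alpha,z}(\rho\|\sigma)=1$. Since $\rho$ is trace-class as well, I can invoke either characterization of equality in Corollary \ref{cor:trace mon}. Using \eqref{trace mon eq2}, $Q_{\alpha,z}(\rho\|\sigma)=1=\alpha\Tr\rho+(1-\alpha)\Tr\sigma$ forces $\rho=\sigma$; alternatively, using \eqref{trace mon eq1}, $Q_{\alpha,z}(\rho\|\sigma)=1=(\Tr\rho)^{\alpha}(\Tr\sigma)^{1-\alpha}$ forces $\sigma=\eta\rho$ for some $\eta>0$, and imposing $\Tr\rho=\Tr\sigma=1$ then gives $\eta=1$, hence $\rho=\sigma$. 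The converse ($\rho=\sigma\imp D_{\alpha,z}(\rho\|\sigma)=0$) is trivial from $Q_{\alpha,z}(\rho\|\rho)=\Tr\rho_{\rho,\alpha,z}^{z}=\Tr\rho=1$, or equivalently from the right-to-left direction of \eqref{trace mon eq2}.

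There is no real obstacle here: the corollary is a one-line specialization of the preceding result, and all the work has already been done in the proof of Corollary \ref{cor:trace mon}. The only thing to be mildly careful about is to note that we are free to use either of the two equivalent characterizations of the equality case, and that both reduce to $\rho=\sigma$ under the normalization constraint $\Tr\rho=\Tr\sigma=1$.
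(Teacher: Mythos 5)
Your proof is correct and takes essentially the same route as the paper, which disposes of this corollary with the single line ``Immediate from Corollary \ref{cor:trace mon}.'' You have simply unpacked that reference: the normalization $\Tr\rho=\Tr\sigma=1$ turns the lower bound $(\Tr\rho)^{\alpha}(\Tr\sigma)^{1-\alpha}$ into $1$, and either equality characterization in Corollary \ref{cor:trace mon} collapses to $\rho=\sigma$ under that normalization.
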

\begin{proof}
Immediate from Corollary \ref{cor:trace mon}.
\end{proof}

\begin{rem}
Non-negativity of the R\'enyi $(\alpha,z)$-divergences has been proved in 
\cite{MO-cqconv-cc} in the finite-dimensional case, by different methods.
Strict positivity of the sandwiched R\'enyi $\alpha$-divergences with $\alpha>1$ has been 
proved in the general von Neumann algebra case in \cite{Jencova_NCLp}.
\end{rem}

Finally, we prove the lower semi-continuity of $Q_{\alpha,z}$ and $D_{\alpha,z}$ on pairs of trace-class
operators from the variational formula; we will use this later in the proof of Lemma \ref{lemma:Q alpha as limit}.
\begin{cor}\label{cor:lsc}
For any $\alpha>1$ and $z\ge\alpha$, $Q_{\alpha,z}$ and $D_{\alpha,z}$
are lower semi-continuous on $\L^1(\hil)\times\L^1(\hil)$.
\end{cor}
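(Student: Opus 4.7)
The plan is to exploit the variational formula \eqref{Q variational} from Lemma \ref{lemma:variational}, which presents $Q_{\alpha,z}(\rho\|\sigma)$ as a supremum, over finite-rank PSD operators $H$, of functionals that depend continuously on $(\rho,\sigma)$ in trace norm; a pointwise supremum of continuous functions is lower semi-continuous, which will give the claim for $Q_{\alpha,z}$, and lsc of $D_{\alpha,z}$ will then follow routinely.

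Concretely, for trace-class $\sigma$ (in particular compact), Lemma \ref{lemma:variational} lets us restrict \eqref{Q variational} to finite-rank $H\ge 0$. Any such $H$ automatically belongs to $\B(\hil)_{\sigma,\alpha,z}$, because $H^{1/2}\sigma^{(\alpha-1)/z}H^{1/2}$ is finite-rank and hence in every Schatten class. Thus
\begin{align*}
Q_{\alpha,z}(\rho\|\sigma)=\sup_{H}\left\{\alpha F_H(\rho)+(1-\alpha)G_H(\sigma)\right\},
\end{align*}
where $F_H(\rho):=\Tr\bz H^{1/2}\rho^{\alpha/z}H^{1/2}\jz^{z/\alpha}$, $G_H(\sigma):=\Tr\bz H^{1/2}\sigma^{(\alpha-1)/z}H^{1/2}\jz^{z/(\alpha-1)}$, and the supremum is over all finite-rank PSD operators $H$.

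The key technical step is to show, for each such $H$, that $F_H$ and $G_H$ are continuous on $\L^1(\hil)\p$ in trace norm. Given $\rho_n\to\rho$ in trace norm, we have $\norm{\rho_n-\rho}_\infty\le\norm{\rho_n-\rho}_1\to 0$, and the sequence is uniformly bounded in operator norm; since $x\mapsto x^{\alpha/z}$ is continuous on any compact interval $[0,M]$, continuity of the continuous functional calculus yields $\rho_n^{\alpha/z}\to\rho^{\alpha/z}$ in operator norm. Sandwiching by the bounded operator $H^{1/2}$ preserves this, so $H^{1/2}\rho_n^{\alpha/z}H^{1/2}\to H^{1/2}\rho^{\alpha/z}H^{1/2}$ in operator norm. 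All these operators are supported on the fixed finite-dimensional subspace $\oran H^{1/2}$, where operator-norm convergence is equivalent to convergence in every other Schatten norm, and the continuous functional $X\mapsto\Tr X^{z/\alpha}$ on PSD operators is therefore stable under the limit. This gives $F_H(\rho_n)\to F_H(\rho)$; the analogous argument (with $(\alpha-1)/z$ and $z/(\alpha-1)$ in place of $\alpha/z$ and $z/\alpha$) yields $G_H(\sigma_n)\to G_H(\sigma)$.

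Thus $(\rho,\sigma)\mapsto\alpha F_H(\rho)+(1-\alpha)G_H(\sigma)$ is trace-norm continuous for each fixed finite-rank $H$, and $Q_{\alpha,z}$, being a pointwise supremum of such functionals, is lower semi-continuous. Since $Q_{\alpha,z}(\rho\|\sigma)>0$ by Lemma \ref{lemma:Q poz}, the composition $\log Q_{\alpha,z}$ with the monotone continuous function $\log:(0,+\infty]\to(-\infty,+\infty]$ is again lsc, and multiplication by the positive constant $1/(\alpha-1)$ yields lsc of $D_{\alpha,z}$. The main obstacle is really the continuity of the fractional powers $\rho\mapsto\rho^{\alpha/z}$ in the appropriate sense; this becomes tractable precisely because the finite rank of $H$ confines all the operators of interest to a fixed finite-dimensional subspace on which all the natural trace functionals behave well.
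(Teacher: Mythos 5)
Your proof is correct, and it takes a genuinely different (and somewhat more elementary) route for the key continuity step, even though both your argument and the paper's start from the variational formula of Lemma \ref{lemma:variational}. The paper establishes continuity of $\rho\mapsto\Tr\bz H^{1/2}\rho^{\alpha/z}H^{1/2}\jz^{z/\alpha}$ for an \emph{arbitrary} bounded $H\ge 0$, and to do so it combines the identity $\norm{\rho_n^{\alpha/z}}_{z/\alpha}=(\Tr\rho_n)^{\alpha/z}$ with the operator-norm convergence $\rho_n^{\alpha/z}\to\rho^{\alpha/z}$ and then invokes Lemma \ref{lemma:Grumm2}; this is precisely where the hypothesis $z\ge\alpha$ enters, since Lemma \ref{lemma:Grumm2} requires the Schatten index $p=z/\alpha$ to be at least $1$. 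You instead exploit the rider in Lemma \ref{lemma:variational} that, when $\sigma$ is compact (in particular trace-class), the supremum may be taken over finite-rank $H$ only; with $H$ finite-rank everything is confined to the fixed finite-dimensional subspace $\oran H^{1/2}$, where the passage from operator-norm convergence to convergence of $\Tr(\valt)^{z/\alpha}$ is elementary linear algebra. What this buys you is a shorter continuity argument that avoids Lemma \ref{lemma:Grumm2} entirely and — a point you do not mention but could — does not actually use $z\ge\alpha$ at any step, so the same proof would yield lower semi-continuity for all $(\alpha,z)\in(1,+\infty)\times(0,+\infty)$, not just $z\ge\alpha$. What the paper's approach buys is generality in a different direction: its continuity statement for $F_H,G_H$ holds for arbitrary bounded $H$, not just finite-rank ones, which may be of independent use; but for this corollary your finite-rank restriction is entirely sufficient. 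One small stylistic remark: when you invoke continuity of the functional calculus, it is cleanest to phrase it as uniform polynomial approximation of $x\mapsto x^{\alpha/z}$ on a fixed interval $[0,M]$ containing all the relevant spectra, which is how the operator-norm convergence $\rho_n^{\alpha/z}\to\rho^{\alpha/z}$ is justified.
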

\begin{proof}
Let $\rho_n,\sigma_n\in\L^1(\hil)$, $n\in\bN$, be convergent sequences 
in trace-norm, with 
$\rho:=\lim_{n\to+\infty}\rho_n$, 
$\sigma:=\lim_{n\to+\infty}\sigma_n$. 
Then
\begin{align*}
\norm{\rho_n^{\frac{\alpha}{z}}}_{\frac{z}{\alpha}}=
(\Tr\rho_n)^{\frac{\alpha}{z}}
=\norm{\rho_n}_1^{\frac{\alpha}{z}}
\xrightarrow[n\to+\infty]{}\norm{\rho}_1^{\frac{\alpha}{z}}=(\Tr\rho)^{\frac{\alpha}{z}}
=\norm{\rho^{\frac{\alpha}{z}}}_{\frac{z}{\alpha}}.
\end{align*}
Since $\norm{\rho_n-\rho}_{\infty}\le\norm{\rho_n-\rho}_1\to 0$, 
the continuity of the functional calculus implies
$\norm{\rho_n^{\frac{\alpha}{z}}-\rho^{\frac{\alpha}{z}}}_{\infty}\to 0$; 
in particular, $\rho_n^{\frac{\alpha}{z}}\to \rho^{\frac{\alpha}{z}}$
in the strong operator topology. 
Hence, by Lemma \ref{lemma:Grumm2},
$\lim_n\norm{\rho_n^{\frac{\alpha}{z}}-\rho^{\frac{\alpha}{z}}}_{\frac{z}{\alpha}}=0$.
Thus, for any $H\in\B(\hil)\p$, 
\begin{align*}
\abs{\norm{H^{1/2}\rho_n^{\frac{\alpha}{z}} H^{1/2}}_{\frac{z}{\alpha}}
-
\norm{H^{1/2}\rho^{\frac{\alpha}{z}} H^{1/2}}_{\frac{z}{\alpha}}
}
&\le
\norm{H^{1/2}\rho_n^{\frac{\alpha}{z}} H^{1/2}
-
H^{1/2}\rho^{\frac{\alpha}{z}} H^{1/2}}_{\frac{z}{\alpha}}\\
&\le
\norm{H}_{\infty}^2\norm{\rho_n^{\frac{\alpha}{z}}-\rho^{\frac{\alpha}{z}}}_{\frac{z}{\alpha}}\xrightarrow[n\to+\infty]{}0.
\end{align*}
This shows that for any $H\in\B(\hil)\p$,
$\rho\mapsto\Tr\bz H^{1/2}\rho^{\frac{\alpha}{z}} H^{1/2}\jz^{\frac{z}{\alpha}}
=
\norm{H^{1/2}\rho^{\frac{\alpha}{z}} H^{1/2}}_{\frac{z}{\alpha}}^{\frac{z}{\alpha}}$ is continuous on $\L^1(\hil)$, and continuity of 
$\sigma\mapsto\Tr\bz H^{1/2}\sigma^{\frac{\alpha-1}{z}} H^{1/2}\jz^{\frac{z}{\alpha-1}}$
on $\L^1(\hil)$ can be proved in the same way.
Thus, by Lemma \ref{lemma:variational}, $\L^1(\hil)\times\L^1(\hil)\ni(\rho,\sigma)\mapsto Q_{\alpha,z}(\rho,\sigma)$ is the supremum of continuous functions, and hence 
it is upper semi-continuous. The assertion about the lower semi-continuity of $D_{\alpha,z}$
follows trivially from this.
\end{proof}

\begin{rem}
Lower semi-continuity of the sandwiched R\'enyi $\alpha$-divergences for $\alpha>1$
(i.e., $z=\alpha>1$) was given in 
\cite[Proposition 3.10]{Jencova_NCLp} in the general von Neumann algebra setting, with a different proof. 
\end{rem}

\subsection{Finite-dimensional approximations}

Our next goal is to investigate the relation between the sandwiched R\'enyi divergences of 
finite-dimensional restrictions of the operators and those of the unrestricted operators. 
We start with the following:

\begin{lemma}\label{lemma:projection monotonicity}
Let $\rho,\sigma\in\B(\hil)\p$, and $K\in\B(\hil,\kil)_{\rho,\sigma}^+$ be a contraction. For any 
$(\alpha,z)\in(1,+\infty)\times(0,+\infty)$ with $\max\{\alpha-1,\alpha/2\}\le z\le \alpha$,
\begin{align}\label{projection monotonicity}
Q_{\alpha,z}(K\rho K^*\|K\sigma K^*)\le Q_{\alpha,z}(\rho\|\sigma).
\end{align}
\end{lemma}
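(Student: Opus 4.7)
The plan is to apply the variational formula from Lemma \ref{lemma:variational} on both sides of \eqref{projection monotonicity} and set up a correspondence of test operators via pullback by $K$. For any valid test $H'\in\B(\kil)_{K\sigma K^*,\alpha,z}$ entering the variational expression for $Q_{\alpha,z}(K\rho K^*\|K\sigma K^*)$, I would take $H:=K^*H'K\in\B(\hil)\p$ and compare term-by-term. Writing $F(H),G(H)$ (respectively $F'(H'),G'(H')$) for the two traces appearing in \eqref{Q variational} on $\hil$ (respectively on $\kil$), the goal reduces to showing
\begin{align*}
F'(H')\le F(H)\ds\text{ and }\ds G'(H')\ge G(H),
\end{align*}
since $\alpha>0$ while $1-\alpha<0$, so both sign-adjusted comparisons push in the direction of the claimed inequality.

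The key inputs are Hansen's inequalities applied to the contraction $K^*\colon\kil\to\hil$. The hypothesis $\max\{\alpha-1,\alpha/2\}\le z\le\alpha$ is designed precisely so that the four exponents that arise sit in the right ranges: $\alpha/z\in[1,2]$, $(\alpha-1)/z\in[0,1]$, $z/\alpha\in(0,1]$, and $z/(\alpha-1)>0$. Since $t^{\alpha/z}$ is operator convex with $f(0)=0$ on $[1,2]$, Hansen gives $(K\rho K^*)^{\alpha/z}\le K\rho^{\alpha/z}K^*$; since $t^{(\alpha-1)/z}$ is operator monotone on $[0,1]$, Hansen gives $(K\sigma K^*)^{(\alpha-1)/z}\ge K\sigma^{(\alpha-1)/z}K^*$.

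To convert these into the desired inequalities on the $F'$ and $G'$ terms, I would sandwich by $H'^{1/2}$ (which preserves the order), apply Weyl's monotonicity of $\Tr(\cdot)^p$ on PSD operators for $p>0$ (this only needs $z/\alpha>0$ and $z/(\alpha-1)>0$), and then invoke the identity $\Tr((XX^*)^p)=\Tr((X^*X)^p)$ with $X=H'^{1/2}K\rho^{\alpha/(2z)}$ (respectively $X=H'^{1/2}K\sigma^{(\alpha-1)/(2z)}$) to move $K$ inside and recover $F(H)$ and $G(H)$. Since $G(H)\le G'(H')<+\infty$, the pullback $H$ indeed belongs to $\B(\hil)_{\sigma,\alpha,z}$, so it is admissible in the variational formula for $Q_{\alpha,z}(\rho\|\sigma)$. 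Passing to the supremum over $H'$ then yields \eqref{projection monotonicity} via Lemma \ref{lemma:variational}.

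The main subtlety is the case analysis for when Lemma \ref{lemma:variational} actually provides the variational equality on the left-hand side. If $K\rho K^*\notin\B^{\alpha,z}(\kil,K\sigma K^*)$, the left-hand side is $+\infty$; but by the characterization in \ref{rho alpha6} of Lemma \ref{lemma:rho alpha}, combined with the same two Hansen inequalities chained as
\begin{align*}
(K\rho K^*)^{\alpha/z}\le K\rho^{\alpha/z}K^*\le\lambda K\sigma^{(\alpha-1)/z}K^*\le\lambda(K\sigma K^*)^{(\alpha-1)/z},
\end{align*}
one sees that $\rho\in\B^{\alpha,z}(\hil,\sigma)$ forces $K\rho K^*\in\B^{\alpha,z}(\kil,K\sigma K^*)$, so by contrapositive the right-hand side is also infinite. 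The remaining case to treat carefully is $K\rho K^*\in\B^{\alpha,z}\setminus\L^{\alpha,z}$ with $K\sigma K^*$ not compact, where one would need a direct argument that $\rho\in\L^{\alpha,z}(\hil,\sigma)$ implies $K\rho K^*\in\L^{\alpha,z}(\kil,K\sigma K^*)$; this is the place I expect to need the most care, though in the typical application $K$ is a finite-rank projection and $K\sigma K^*$ is automatically compact, putting us squarely in cases (b) or (c) of Lemma \ref{lemma:variational}.
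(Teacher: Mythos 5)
Your proposal follows the paper's own route: apply the variational formula of Lemma \ref{lemma:variational} to both sides, pull the variational witness back via $H\mapsto K^*HK$, and compare the two trace terms using the operator Jensen inequality (the Hansen inequality you invoke is the same statement, via \cite[Theorem 11]{BrownKosaki}) for the operator convex map $\id^{\alpha/z}$ and the operator concave map $\id^{(\alpha-1)/z}$. The sandwich-by-$H^{1/2}$, Weyl monotonicity, and $\Tr(XX^*)^p=\Tr(X^*X)^p$ manipulations you describe are exactly the paper's computations in \eqref{mart proof1}--\eqref{mart proof2}. The treatment of the case $\rho\notin\L^{\alpha,z}(\hil,\sigma)$ (so the right side is $+\infty$) and of $\rho\notin\B^{\alpha,z}(\hil,\sigma)$ via \ref{rho alpha6} of Lemma \ref{lemma:rho alpha} also matches the paper.

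The subtlety you flag at the end is a genuine one, and worth being precise about. After establishing $(K\rho K^*)^{\alpha/z}\le K\rho^{\alpha/z}K^*\le\lambda K\sigma^{(\alpha-1)/z}K^*\le\lambda(K\sigma K^*)^{(\alpha-1)/z}$, the paper asserts that this puts $K\rho K^*$ into $\L^{\alpha,z}(\kil,K\sigma K^*)$ ``again by Lemma \ref{lemma:rho alpha}''. In fact, \ref{rho alpha6} of Lemma \ref{lemma:rho alpha} only delivers membership in $\B^{\alpha,z}(\kil,K\sigma K^*)$; finiteness of $\Tr\bigl((K\rho K^*)_{K\sigma K^*,\alpha,z}\bigr)^z$ does not follow from the displayed operator inequality alone, so the citation is a shortcut and the paper invokes the variational formula on the left side in circumstances not literally covered by the hypotheses a)--c) of Lemma \ref{lemma:variational} unless $K\sigma K^*$ is compact. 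You are therefore right that this is the place requiring the most care, and you are also right that in all of the paper's actual uses of this lemma (Corollary \ref{cor:increasing sequence}, Lemma \ref{lemma:fa equiv}, Proposition \ref{cor:fa equality}) $K$ is a finite-rank projection, so $K\sigma K^*$ is compact and one is squarely in case b) or c). So: same approach as the paper, with the one open case being a gap that the paper also does not fully close, but which you correctly and explicitly identified.
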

\begin{proof}
By assumption, $\id_{[0,+\infty)}^{\frac{\alpha}{z}}$ is operator convex and
$\id_{[0,+\infty)}^{\frac{\alpha-1}{z}}$ is operator concave, whence
\begin{align}
(K\rho K^*)^{\frac{\alpha}{z}}\le
K\rho^{\frac{\alpha}{z}} K^*,\ds\ds\ds\ds
(K\sigma K^*)^{\frac{\alpha-1}{z}}
\ge
K\sigma^{\frac{\alpha-1}{z}}K^*,
\label{mart proof0}
\end{align}
according to the operator Jensen inequality \cite[Theorem 11]{BrownKosaki}.

If $\rho\notin \L^{\alpha,z}(\hil,\sigma)$ then 
$Q_{\alpha,z}(\rho\|\sigma)=+\infty$ by definition, and \eqref{projection monotonicity} holds trivially. 
Hence, for the rest we assume that $\rho\in \L^{\alpha,z}(\hil,\sigma)$.
Lemma \ref{lemma:rho alpha} yields the existence of some $\lambda\ge 0$ such that 
$\rho^{\frac{\alpha}{z}}\le \lambda \sigma^{\frac{\alpha-1}{z}}$. Thus,
\begin{align*}
(K\rho K^*)^{\frac{\alpha}{z}}\le
K\rho^{\frac{\alpha}{z}} K^*
\le
\lambda K\sigma^{\frac{\alpha-1}{z}}K^*\le
\lambda(K\sigma K^*)^{\frac{\alpha-1}{z}},
\end{align*}
where the first and the last inequalities are due to 
\eqref{mart proof0}. Hence, again by Lemma \ref{lemma:rho alpha},
$K\rho K^*\in \L^{\alpha,z}(\hil,K\sigma K^*)$; in particular, 
the variational formulas in Lemma \ref{lemma:variational} 
hold for $K\rho K^*$ and $K\sigma K^*$ in place of 
$\rho$ and $\sigma$, respectively. 

For any $H\in\B(\kil)_{K\sigma K^*,\alpha,z}$, 
\begin{align}
+\infty>\Tr\bz H^{1/2}(K\sigma K^*)^{\frac{\alpha-1}{z}}H^{1/2}\jz^{\frac{z}{\alpha-1}}
&\ge
\Tr\bz H^{1/2}K\sigma^{\frac{\alpha-1}{z}} K^* H^{1/2}\jz^{\frac{z}{\alpha-1}}
\nonumber\\
&=
\Tr\bz \sigma^{\frac{\alpha-1}{2z}} K^* HK\sigma^{\frac{\alpha-1}{2z}} 
\jz^{\frac{z}{\alpha-1}}\nonumber\\
&=
\Tr\bz (K^* HK)^{1/2}\sigma^{\frac{\alpha-1}{z}} (K^* HK)^{1/2}\jz^{\frac{z}{\alpha-1}},\label{mart proof1}
\end{align}
where the second inequality is due to \eqref{mart proof0}.
In particular, $K^*HK\in\B(\hil)_{\sigma,\alpha,z}$.
Similarly,
\begin{align}
\Tr\bz H^{1/2}(K\rho K^*)^{\frac{\alpha}{z}}H^{1/2}\jz^{\frac{z}{\alpha}}
&\le
\Tr\bz H^{1/2}K\rho^{\frac{\alpha}{z}} K^*H^{1/2}\jz^{\frac{z}{\alpha}}\nonumber\\
&=
\Tr\bz \rho^{\frac{\alpha}{2z}} K^*HK\rho^{\frac{\alpha}{2z}}\jz^{\frac{z}{\alpha}}\nonumber\\
&=
\Tr\bz (K^*HK)^{1/2}\rho^{\frac{\alpha}{z}}(K^*HK)^{1/2}\jz^{\frac{z}{\alpha}}.
\label{mart proof2}
\end{align}
Plugging \eqref{mart proof1}--\eqref{mart proof2} into the variational formula yields
\begin{align*}
&Q_{\alpha,z}(K\rho K^*\|K\sigma K^*)\\
&=
\sup_{H\in\B(\kil)_{K\sigma K^*,\alpha,z}}\left\{\alpha\Tr \bz H^{1/2}(K\rho K^*)^{\frac{\alpha}{z}} H^{1/2}\jz^{\frac{z}{\alpha}}+
(1-\alpha)
\Tr\bz H^{1/2}(K\sigma K^*)^{\frac{\alpha-1}{z}}H^{1/2}\jz^{\frac{z}{\alpha-1}}\right\}\\
&\ds\le
\sup_{H\in\B(\kil)_{K\sigma K^*,\alpha,z}}\left\{\alpha\Tr \bz (K^*HK)^{1/2}\rho^{\frac{\alpha}{z}} (K^*HK)^{1/2}\jz^{\frac{z}{\alpha}}+
(1-\alpha)
\Tr\bz (K^*HK)^{1/2}\sigma^{\frac{\alpha-1}{z}}(K^*HK)^{1/2}\jz^{\frac{z}{\alpha-1}}\right\}\\
&\ds\le
\sup_{H\in\B(\hil)_{\sigma,\alpha,z}}\left\{\alpha\Tr \bz H^{1/2}\rho^{\frac{\alpha}{z}} H^{1/2}\jz^{\frac{z}{\alpha}}+
(1-\alpha)
\Tr\bz H^{1/2}\sigma^{\frac{\alpha-1}{z}}H^{1/2}\jz^{\frac{z}{\alpha-1}}\right\}\\
&\ds=
Q_{\alpha,z}(\rho\|\sigma).
\end{align*}
\end{proof}

\begin{rem}
When $z=\alpha$ and $K=P$ is a projection in Lemma \ref{lemma:projection monotonicity},
one could appeal to the monotonicity of $Q_{\alpha}\nw$ under positive trace-preserving maps, and its additivity on direct sums \cite[Proposition 3.11]{Jencova_NCLp},
to obtain the inequality \eqref{projection monotonicity} as
\begin{align*}
Q_{\alpha}\nw(\rho\|\sigma)\ge
Q_{\alpha}\nw(P\rho P+P^{\perp}\rho P^{\perp}\|P\sigma P+P^{\perp}\sigma P^{\perp})
&=
Q_{\alpha}\nw(P\rho P\|P\sigma P)+
Q_{\alpha}\nw( P^{\perp}\rho P^{\perp}\| P^{\perp}\sigma P^{\perp})\\
&\ge
Q_{\alpha}\nw(P\rho P\|P\sigma P).
\end{align*}
Note, however, that these properties were only proved in \cite{Jencova_NCLp} for positive normal functionals, i.e., 
positive trace-class operators in our setting, and hence this argument gives 
\eqref{projection monotonicity} in a restricted setting compared to that of Lemma \ref{lemma:projection monotonicity}, even when we only consider 
$z=\alpha$ and reductions by projections. 
\end{rem}

Recall that the set of projections on $\hil$ is an upward directed partially ordered set w.r.t.~the PSD order. Lemma \ref{lemma:projection monotonicity} yields the following:

\begin{cor}\label{cor:increasing sequence}
Let $\rho,\sigma\in\B(\hil)\p$ and 
$(\alpha,z)$ as in Lemma \ref{lemma:projection monotonicity}.
For any contraction $K\in\B(\hil)_{\rho,\sigma}^+$, and any projection $P\in\bP(\hil)$ such that 
$|K|^0\le P$, 
\begin{align}\label{increasing 1}
Q_{\alpha,z}(K\rho K^*\|K\sigma K^*)\le 
Q_{\alpha,z}(P\rho P\|P\sigma P).
\end{align}
In particular, 
\begin{align}\label{increasing 2}
\bP(\hil)_{\rho,\sigma}^+\ni P\mapsto Q_{\alpha,z}(P\rho P\|P\sigma P)\ds\ds\text{is increasing.}
\end{align}
\end{cor}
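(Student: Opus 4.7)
The plan is to reduce both claims directly to Lemma \ref{lemma:projection monotonicity}. The central observation for \eqref{increasing 1} is that the hypothesis $|K|^0\le P$ forces $K$ to vanish on $\ran(I-P)$: indeed, $|K|^0$ is the projection onto $\overline{\ran |K|}=(\ker K)^\perp$, so $|K|^0\le P$ gives $\ker K\supseteq\ran(I-P)$, i.e.\ $KP=K$. Consequently $K\rho K^*=K(P\rho P)K^*$ and $K\sigma K^*=K(P\sigma P)K^*$. I would then apply Lemma \ref{lemma:projection monotonicity} to the PSD operators $P\rho P$, $P\sigma P$ with the same contraction $K$; the required condition $K\in\B(\hil,\kil)^+_{P\rho P,\,P\sigma P}$ is automatic, since $K(P\rho P)K^*=K\rho K^*\ne 0$ by the standing assumption $K\in\B(\hil,\kil)^+_{\rho,\sigma}$, and similarly for $\sigma$.

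For \eqref{increasing 2}, given $P,P'\in\bP(\hil)^+_{\rho,\sigma}$ with $P\le P'$, I would apply \eqref{increasing 1} itself with $K:=P$, regarded as a contraction on $\hil$, and with the projection in the role of $P$ replaced by $P'$. Note that $|P|^0=P\le P'$, so the hypothesis of \eqref{increasing 1} is met. This yields
\begin{align*}
Q_{\alpha,z}(P\rho P\|P\sigma P)\le Q_{\alpha,z}(P'\rho P'\|P'\sigma P'),
\end{align*}
which is exactly the desired monotonicity. Here the nondegeneracy condition $P\in\B(\hil)^+_{P'\rho P',P'\sigma P'}$ reduces via $PP'=P$ to $P\rho P\ne 0$ and $P\sigma P\ne 0$, which holds because $P\in\bP(\hil)^+_{\rho,\sigma}$.

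I do not expect any serious obstacle: the content of the corollary is essentially a bookkeeping exercise extracting the projection case from the more general contraction case already proved in Lemma \ref{lemma:projection monotonicity}. The only minor points worth verifying carefully are the two algebraic identities $KP=K$ and $PP'=P$, which propagate the $\rho,\sigma$-nonvanishing conditions through the reductions, and the admissibility of $(\alpha,z)$, which is inherited unchanged from the hypothesis of Lemma \ref{lemma:projection monotonicity}.
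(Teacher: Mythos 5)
Your proof is correct and takes essentially the same route as the paper's: both reduce \eqref{increasing 1} to Lemma \ref{lemma:projection monotonicity} by replacing $\rho,\sigma$ with $P\rho P, P\sigma P$ and using $K(P\rho P)K^*=K\rho K^*$ (a consequence of $KP=K$, which you correctly derive from $|K|^0\le P$), and then obtain \eqref{increasing 2} by specializing the contraction $K$ to a projection. The only difference is that you explicitly verify the nondegeneracy conditions and the identity $|K|^0\le P\iff KP=K$, which the paper leaves implicit.
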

\begin{proof}
Since $K(P\rho P)K^*=K\rho K^*$ and $K(P\sigma P)K^*=K\sigma K^*$,
\eqref{increasing 1} follows immediately by replacing 
$\rho$ with $P\rho P$ and $\sigma$ with $P\sigma P$ in Lemma \ref{lemma:projection monotonicity}. The monotonicity in \eqref{increasing 2} follows immediately from this.
\end{proof}

\begin{defin}\label{def:finitedim appr}
For any $\rho,\sigma\in\B(\hil)\p$ and $(\alpha,z)\in(1,+\infty)\times(0,+\infty)$, let 
\begin{align*}
Q_{\alpha,z}(\rho\|\sigma)\fa
&:=\sup_{P\in\bP_f(\hil)_{\rho,\sigma}^+}Q_{\alpha,z}(P\rho P\|P\sigma P),\\
D_{\alpha,z}(\rho\|\sigma)\fa
&:=\sup_{P\in\bP_f(\hil)_{\rho,\sigma}^+}D_{\alpha,z}(P\rho P\|P\sigma P)
=\frac{1}{\alpha-1}\log Q_{\alpha,z}(\rho\|\sigma)\fa,
\end{align*}
be the \ki{finite-dimensional approximations} of $Q_{\alpha,z}(\rho\|\sigma)$ and 
$D_{\alpha,z}(\rho\|\sigma)$, respectively. If, moreover, $\rho$ is trace-class then we also define
\begin{align*}
\tilde D_{\alpha,z}(\rho\|\sigma)\fa
&:=
D_{\alpha,z}(\rho\|\sigma)\fa-\frac{1}{\alpha-1}\log\Tr\rho.
\end{align*}
\end{defin}

\begin{lemma}\label{lemma:fa equiv}
Let $\rho,\sigma\in\B(\hil)\p$ and $(\alpha,z)$ as in Lemma \ref{lemma:projection monotonicity}. Then 
\begin{align}\label{fa smaller}
Q_{\alpha,z}(\rho\|\sigma)\fa\le Q_{\alpha,z}(\rho\|\sigma),
\end{align}
and 
\begin{align*}
Q_{\alpha,z}(\rho\|\sigma)\fa
&=
\sup\left\{Q_{\alpha,z}(T\rho T\|T\sigma T):\,T\in\B(\hil)_{[0,I]}\cap\B_f(\hil)_{\rho,\sigma}^+\right\}\\
&=
\sup\left\{Q_{\alpha,z}(K\rho K^*\|K\sigma K^*):\,K\in\B_f(\hil)_{\rho,\sigma}^+,\,\norm{K}\le 1\right\}.
\end{align*}
\end{lemma}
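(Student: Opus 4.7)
The plan is to derive both parts of the lemma from the monotonicity results already established, namely Lemma~\ref{lemma:projection monotonicity} and Corollary~\ref{cor:increasing sequence}; no substantially new idea is needed. The inequality \eqref{fa smaller} is immediate: for every $P\in\bP_f(\hil)_{\rho,\sigma}^+$ the projection $P$ is a contraction in $\B(\hil)_{\rho,\sigma}^+$, so Lemma~\ref{lemma:projection monotonicity} applied with $K:=P$ gives $Q_{\alpha,z}(P\rho P\|P\sigma P)\le Q_{\alpha,z}(\rho\|\sigma)$, and taking the supremum over $P$ yields \eqref{fa smaller}.

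For the two equalities, introduce the shorthand
\begin{align*}
A &:= Q_{\alpha,z}(\rho\|\sigma)\fa, \\
B &:= \sup\{Q_{\alpha,z}(T\rho T\|T\sigma T):\,T\in\B(\hil)_{[0,I]}\cap\B_f(\hil)_{\rho,\sigma}^+\}, \\
C &:= \sup\{Q_{\alpha,z}(K\rho K^*\|K\sigma K^*):\,K\in\B_f(\hil)_{\rho,\sigma}^+,\,\norm{K}\le 1\},
\end{align*}
and prove the chain $A\le B\le C\le A$. The first two inequalities are essentially inclusions of admissible operators: every projection $P$ is a self-adjoint test in $\B(\hil)_{[0,I]}$, with $P\rho P$ the same expression under both readings, giving $A\le B$; and every test $T\in\B(\hil)_{[0,I]}$ is a self-adjoint contraction satisfying $\norm{T}\le 1$ and $T\rho T=T\rho T^*$, giving $B\le C$.

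The only step with genuine content is $C\le A$, and it is exactly what Corollary~\ref{cor:increasing sequence} is designed for. Given a finite-rank contraction $K\in\B_f(\hil)_{\rho,\sigma}^+$ with $\norm{K}\le 1$, I would set $P:=|K|^0$, a finite-rank projection that trivially satisfies $|K|^0\le P$. To apply \eqref{increasing 1} one must check $P\in\bP_f(\hil)_{\rho,\sigma}^+$, i.e., $P\rho P\ne 0$ and $P\sigma P\ne 0$. This I would verify via the polar decomposition $K=V|K|$: since $V^*V=|K|^0=P$ one has $V^*(K\rho K^*)V=|K|\rho|K|$, which is non-zero because $K\rho K^*\ne 0$; combined with the identity $|K|P=|K|$ this forces $P\rho P\ne 0$, and an identical argument gives $P\sigma P\ne 0$. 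Corollary~\ref{cor:increasing sequence} then yields $Q_{\alpha,z}(K\rho K^*\|K\sigma K^*)\le Q_{\alpha,z}(P\rho P\|P\sigma P)\le A$, and taking the supremum over $K$ gives $C\le A$. This small verification that $P\in\bP_f(\hil)_{\rho,\sigma}^+$ is the only mildly nontrivial point; the rest is bookkeeping on top of monotonicity, and no real obstacle is expected.
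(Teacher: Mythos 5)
Your argument is correct and is exactly what the paper's one-line proof (``Immediate from Corollary~\ref{cor:increasing sequence}'') intends the reader to unpack: \eqref{fa smaller} from the contraction monotonicity, $A\le B\le C$ by inclusion of admissible operators, and $C\le A$ by taking $P=|K|^0$ in Corollary~\ref{cor:increasing sequence}. The verification that $|K|^0\in\bP_f(\hil)_{\rho,\sigma}^+$ via the polar decomposition is the small point the paper leaves tacit, and you handle it correctly.
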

\begin{proof}
Immediate from Corollary \ref{cor:increasing sequence}.
\end{proof}
\medskip

Our next goal is to see when equality in \eqref{fa smaller} holds. 

\begin{lemma}\label{lemma:Q alpha convergence}
Let $\rho,\sigma\in\B(\hil)_+$ and $1<\alpha\le z$ be such that $\rho\in\B^{\alpha,z}(\hil,\sigma)$, let $0<c_n<d_n$, $n\in\bN$, be sequences such that 
$c_n\to 0$, $d_n\to+\infty$, and $P_n:=\egy_{(c_n,d_n)}(\sigma)$.
Then 
\begin{align}
Q_{\alpha,z}(\rho\|\sigma)\le \liminf_{n\to+\infty}Q_{\alpha,z}(P_n\rho P_n\|P_n\sigma P_n).
\label{Q alpha convergence}
\end{align}
\end{lemma}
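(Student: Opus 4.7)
The plan is to use Lemma \ref{lemma:Q alpha z as limit} to reduce the left-hand side to a limit of traces of the form $\Tr\bz\sigma_n^{\frac{1-\alpha}{2z}}\rho^{\frac{\alpha}{z}}\sigma_n^{\frac{1-\alpha}{2z}}\jz^{z}$, and then to show that each of these is dominated by the corresponding $Q_{\alpha,z}(P_n\rho P_n\|P_n\sigma P_n)$. By that lemma,
\begin{align*}
Q_{\alpha,z}(\rho\|\sigma)
=
\lim_{n\to+\infty}\Tr\bz\sigma_n^{\frac{1-\alpha}{2z}}\rho^{\frac{\alpha}{z}}\sigma_n^{\frac{1-\alpha}{2z}}\jz^{z},
\end{align*}
where $\sigma_n=P_n\sigma P_n$ (which makes sense here since $\rho\in\B^{\alpha,z}(\hil,\sigma)$).

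Next, I compute $Q_{\alpha,z}(P_n\rho P_n\|P_n\sigma P_n)$ explicitly. The operator $\sigma_n$ has spectrum in $[c_n,d_n]$ on $\ran P_n$, so $\sigma_n^{\frac{\alpha-1}{2z}}$ is invertible there with inverse $\sigma_n^{\frac{1-\alpha}{2z}}$, and $(P_n\rho P_n)^0\le P_n=\sigma_n^0$. Therefore condition \ref{rho alpha1} of Lemma \ref{lemma:rho alpha} is satisfied for $(P_n\rho P_n,P_n\sigma P_n)$ with
\begin{align*}
(P_n\rho P_n)_{P_n\sigma P_n,\alpha,z}
=
\sigma_n^{\frac{1-\alpha}{2z}}(P_n\rho P_n)^{\frac{\alpha}{z}}\sigma_n^{\frac{1-\alpha}{2z}},
\end{align*}
and hence
$Q_{\alpha,z}(P_n\rho P_n\|P_n\sigma P_n)=\Tr\bz \sigma_n^{\frac{1-\alpha}{2z}}(P_n\rho P_n)^{\frac{\alpha}{z}}\sigma_n^{\frac{1-\alpha}{2z}}\jz^{z}$ (which may a priori be $+\infty$, but that only helps).

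The key step is then an operator-Jensen comparison. Since $\alpha\le z$, the exponent $\alpha/z$ lies in $(0,1]$, so $t\mapsto t^{\alpha/z}$ is operator concave on $[0,+\infty)$ and vanishes at $0$. Applying the operator Jensen inequality \cite[Theorem 11]{BrownKosaki} to the projection (hence contraction) $P_n$ gives
\begin{align*}
(P_n\rho P_n)^{\frac{\alpha}{z}}\ge P_n\rho^{\frac{\alpha}{z}}P_n.
\end{align*}
Conjugating with $\sigma_n^{\frac{1-\alpha}{2z}}$ (which is supported on $\ran P_n$, so $P_n\sigma_n^{\frac{1-\alpha}{2z}}=\sigma_n^{\frac{1-\alpha}{2z}}=\sigma_n^{\frac{1-\alpha}{2z}}P_n$) yields
\begin{align*}
\sigma_n^{\frac{1-\alpha}{2z}}(P_n\rho P_n)^{\frac{\alpha}{z}}\sigma_n^{\frac{1-\alpha}{2z}}
\ge
\sigma_n^{\frac{1-\alpha}{2z}}\rho^{\frac{\alpha}{z}}\sigma_n^{\frac{1-\alpha}{2z}}.
\end{align*}

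Finally, since $z\ge\alpha>1$, Weyl's monotonicity for the singular-value sequence gives $0\le A\le B\,\Rightarrow\,\Tr A^z\le\Tr B^z$ (valid in $[0,+\infty]$), so taking $\Tr(\cdot)^z$ of the last display preserves the inequality. Thus
$Q_{\alpha,z}(P_n\rho P_n\|P_n\sigma P_n)\ge\Tr\bz\sigma_n^{\frac{1-\alpha}{2z}}\rho^{\frac{\alpha}{z}}\sigma_n^{\frac{1-\alpha}{2z}}\jz^{z}$ for every $n$, and taking $\liminf_n$ together with the first display gives the claim. The only point that requires care is the identification of $(P_n\rho P_n)_{P_n\sigma P_n,\alpha,z}$ when $P_n$ has infinite rank (i.e., $\sigma$ non-compact), which is why I spelled out the verification of condition \ref{rho alpha1} on $\ran P_n$; everything else is routine given the preceding lemmas.
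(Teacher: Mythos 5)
Your proof is correct and takes essentially the same route as the paper: the core step in both is the operator Jensen inequality $(P_n\rho P_n)^{\alpha/z}\ge P_n\rho^{\alpha/z}P_n$ followed by the monotonicity of $A\mapsto\Tr A^z$; the only cosmetic difference is that you invoke Lemma \ref{lemma:Q alpha z as limit} to obtain $Q_{\alpha,z}(\rho\|\sigma)=\lim_n\Tr\bz\sigma_n^{\frac{1-\alpha}{2z}}\rho^{\frac{\alpha}{z}}\sigma_n^{\frac{1-\alpha}{2z}}\jz^z$, whereas the paper applies Lemma \ref{lemma:pnorm lsc} (lower semicontinuity of the $z$-norm along the weak-operator convergence $P_n\rho_{\sigma,\alpha,z}P_n\to\rho_{\sigma,\alpha,z}$) directly — but Lemma \ref{lemma:Q alpha z as limit} is itself proved by the same lower-semicontinuity argument, so this is the same proof packaged slightly differently.
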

\begin{proof}
Note that by assumption, $\rho^0\le\sigma^0$, and for every large enough $n$, 
$P_n\in\B(\hil)_{\rho,\sigma}^+$.
By Lemma \ref{lemma:rho alpha}, 
\begin{align}\label{Q alpha convergence proof1}
\rho_{\sigma,\alpha,z}=\wo\lim_{n\to+\infty}(P_n\sigma P_n)^{\frac{1-\alpha}{2z}}
\underbrace{(P_n\rho^{\frac{\alpha}{z}} P_n)}_{\le (P_n\rho P_n)^{{\frac{\alpha}{z}}}}
(P_n\sigma P_n)^{\frac{1-\alpha}{2z}},
\end{align}
where the inequality follows from the 
operator Jensen inequality \cite[Theorem 11]{BrownKosaki} due to the fact that 
$\alpha/z\in(0,1)$ by assumption. 
Hence, 
\begin{align*}
Q_{\alpha,z}(\rho\|\sigma)
=\norm{\rho_{\sigma,\alpha,z}}_{z}^{z}
&\le
\liminf_{n\to+\infty}
\norm{(P_n\sigma P_n)^{\frac{1-\alpha}{2z}}(P_n\rho^{\frac{\alpha}{z}} P_n)
(P_n\sigma P_n)^{\frac{1-\alpha}{2z}}}_{z}^{z}\\
&\le
\liminf_{n\to+\infty}
\norm{(P_n\sigma P_n)^{\frac{1-\alpha}{2z}}(P_n\rho P_n)^{\frac{\alpha}{z}}
(P_n\sigma P_n)^{\frac{1-\alpha}{2z}}}_{z}^{z}
=
\liminf_{n\to+\infty}
Q_{\alpha,z}(P_n\rho P_n\|P_n\sigma P_n),
\end{align*}
where the first inequality is due to Lemma \ref{lemma:pnorm lsc}, and the second inequality 
follows from \eqref{Q alpha convergence proof1}.
\end{proof}

The range of $(\alpha,z)$ pairs to which both Lemma \ref{lemma:fa equiv} and Lemma \ref{lemma:Q alpha convergence} apply is 
$1<\alpha=z$, i.e., the case of the sandwiched R\'enyi divergences, and hence for the rest we restrict to this case. Fortunately, this is sufficient for the intended applications in the 
rest of the paper.

\begin{lemma}\label{lemma:Q alpha as limit}
Let $\rho,\sigma\in\B(\hil)\p$ be trace-class, and
$K_n\in\B(\hil,\kil)_{\rho,\sigma}^+$, $n\in\bN$, be contractions such that 
\begin{align*}
\exists\s\so\lim_{n\to+\infty}K_n=:K_{\infty}\in\B(\hil,\kil)_{\rho,\sigma}^+,\ds\ds\ds
\exists\s\so\lim_{n\to+\infty}K_n^*.
\end{align*}   
(That is, $(K_n)_{n\in\bN}$ converges in the strong$^*$ operator topology.)
Then 
\begin{align}
Q_{\alpha}\nw(K_{\infty}\rho K_{\infty}^*\|K_{\infty}\sigma K_{\infty}^*)
&\le
\liminf_{n\to+\infty}Q_{\alpha}\nw(K_n\rho K_n^*\|K_n\sigma K_n^*)
\label{Q alpha liminf}\\
&\le
\limsup_{n\to+\infty}Q_{\alpha}\nw(K_n\rho K_n^*\|K_n\sigma K_n^*)
\le 
Q_{\alpha}\nw(\rho\|\sigma).
\label{Q alpha limsup}
\end{align}
In particular, if $P_n\in\bP(\hil)_{\rho,\sigma}^+$, $n\in\bN$, is a sequence of projections strongly converging to some 
$P_{\infty}$ with $P_{\infty}\rho P_{\infty}=\rho$ and 
$P_{\infty}\sigma P_{\infty}=\sigma$ then 
\begin{align*}
\lim_{n\to+\infty}Q_{\alpha}\nw(P_n\rho P_n\|P_n\sigma P_n)
=
Q_{\alpha}\nw(\rho\|\sigma).
\end{align*}
\end{lemma}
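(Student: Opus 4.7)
The plan is to derive \eqref{Q alpha limsup} directly from the monotonicity result of Lemma \ref{lemma:projection monotonicity}, to derive \eqref{Q alpha liminf} from the lower semi-continuity established in Corollary \ref{cor:lsc} together with Grumm's Lemma \ref{lemma:Grumm}, and then to obtain the ``in particular'' statement by sandwiching.

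For \eqref{Q alpha limsup}, I note that the choice $z=\alpha$ trivially satisfies $\max\{\alpha-1,\alpha/2\}\le z\le\alpha$, so Lemma \ref{lemma:projection monotonicity} applies to each contraction $K_n$ and yields $Q_{\alpha}\nw(K_n\rho K_n^*\|K_n\sigma K_n^*)\le Q_{\alpha}\nw(\rho\|\sigma)$. Passing to the $\limsup$ gives \eqref{Q alpha limsup} at once.

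For \eqref{Q alpha liminf}, I would first apply Lemma \ref{lemma:Grumm} with $p=1$, $A:=\rho\in\L^1(\hil)$, $B_n:=K_n$, $C_n:=K_n^*$ (whose operator norms are bounded by $1$, and which converge strongly to $K_\infty$ and $K_\infty^*$ respectively by the strong$^*$ assumption) to obtain
\begin{align*}
\lim_{n\to+\infty}\norm{K_n\rho K_n^*-K_\infty\rho K_\infty^*}_1=0,
\end{align*}
and analogously $\lim_{n}\norm{K_n\sigma K_n^*-K_\infty\sigma K_\infty^*}_1=0$, using that $\sigma\in\L^1(\hil)$. Since $Q_{\alpha}\nw=Q_{\alpha,\alpha}$ with $z=\alpha\ge\alpha$, Corollary \ref{cor:lsc} tells us that $Q_\alpha\nw$ is lower semi-continuous on $\L^1(\hil)\times\L^1(\hil)$, and therefore
\begin{align*}
Q_{\alpha}\nw(K_{\infty}\rho K_{\infty}^*\|K_{\infty}\sigma K_{\infty}^*)
\le\liminf_{n\to+\infty}Q_{\alpha}\nw(K_n\rho K_n^*\|K_n\sigma K_n^*),
\end{align*}
which is \eqref{Q alpha liminf}. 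Together with the trivial $\liminf\le\limsup$ and the already established \eqref{Q alpha limsup}, the full chain of inequalities is proved.

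For the ``in particular'' assertion, since each $P_n$ is self-adjoint, strong convergence $P_n\to P_\infty$ automatically entails strong$^*$ convergence, so the hypotheses of the first part are met with $K_n:=P_n$ and $K_\infty:=P_\infty$. The assumption $P_\infty\rho P_\infty=\rho$ and $P_\infty\sigma P_\infty=\sigma$ gives $K_\infty\rho K_\infty^*=\rho$ and $K_\infty\sigma K_\infty^*=\sigma$, so \eqref{Q alpha liminf}--\eqref{Q alpha limsup} collapse into
\begin{align*}
Q_{\alpha}\nw(\rho\|\sigma)\le\liminf_{n}Q_{\alpha}\nw(P_n\rho P_n\|P_n\sigma P_n)\le\limsup_{n}Q_{\alpha}\nw(P_n\rho P_n\|P_n\sigma P_n)\le Q_{\alpha}\nw(\rho\|\sigma),
\end{align*}
forcing equality throughout and yielding the stated limit. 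The only delicate point is the verification that the trace-class hypothesis on $\rho$ and $\sigma$ is exactly what allows us to invoke Grumm's lemma and hence pass from strong$^*$ operator convergence of $K_n$ to trace-norm convergence of $K_n\rho K_n^*$ and $K_n\sigma K_n^*$; this is the key ingredient that makes the lower semi-continuity of $Q_\alpha\nw$ applicable, and so there is no serious obstacle once these pieces are lined up.
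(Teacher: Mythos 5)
Your proof is correct and follows essentially the same route as the paper: apply Lemma \ref{lemma:projection monotonicity} (with $z=\alpha$) for the upper bound, Lemma \ref{lemma:Grumm} with $p=1$ to convert strong$^*$ convergence of the contractions into trace-norm convergence of $K_n\rho K_n^*$ and $K_n\sigma K_n^*$, and then the lower semi-continuity from Corollary \ref{cor:lsc} for the lower bound, with the ``in particular'' statement following by sandwiching.
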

\begin{proof}
The second inequality in \eqref{Q alpha limsup} is obvious from 
Lemma \ref{lemma:projection monotonicity}, and the first inequality is trivial. 
By the assumptions and Lemma \ref{lemma:Grumm},
\begin{align}\label{1-norm conv}
\lim_{n\to+\infty}\norm{K_n\rho K_n^*-K_{\infty}\rho K_{\infty}^*}_1=0=
\lim_{n\to+\infty}\norm{K_n\sigma K_n^*-K_{\infty}\sigma K_{\infty}^*}_1.
\end{align}
Since $Q_{\alpha}\nw$ is lower semi-continuous
on $\L^1(\hil)\times \L^1(\hil)$ (see Corollary \ref{cor:lsc}, or \cite[Proposition 3.10]{Jencova_NCLp}), we get
the inequality in \eqref{Q alpha liminf}.
The last assertion follows obviously.
\end{proof}

Lemmas \ref{lemma:fa equiv}, \ref{lemma:Q alpha convergence}, and
\ref{lemma:Q alpha as limit} imply immediately the following:

\begin{prop}\label{cor:fa equality}
Let $\rho,\sigma\in\B(\hil)\p$, and assume that $\rho$ and $\sigma$ are trace-class, 
or that $\sigma$ is compact and $\rho\in\B^{\alpha}(\hil,\sigma)$. Then 
\begin{align*}
Q_{\alpha}\nw(\rho\|\sigma)&=
Q_{\alpha}\nw(\rho\|\sigma)\fa
=
\lim_{\bP_f(\hil)_{\rho,\sigma}^+\ni P\nearrow I}
Q_{\alpha}\nw(P\rho P\|P\sigma P)
=
\lim_{n\to+\infty}
Q_{\alpha}\nw(P_n\rho P_n\|P_n\sigma P_n),
\\
D_{\alpha}\nw(\rho\|\sigma)&=
D_{\alpha}\nw(\rho\|\sigma)\fa
=
\lim_{\bP_f(\hil)_{\rho,\sigma}^+\ni P\nearrow I}
D_{\alpha}\nw(P\rho P\|P\sigma P)
=
\lim_{n\to+\infty}
D_{\alpha}\nw(P_n\rho P_n\|P_n\sigma P_n),
\end{align*}
for every $\alpha>1$,
where the convergence in the third expressions in each line is a net convergence in the strong operator topology, 
and the last equalities in each line hold for any sequence $(P_n)_{n\in\bN}$ as in Lemma \ref{lemma:Q alpha convergence}. If, moreover, $\rho$ is trace-class then 
\begin{align}\label{D alpha tilde lim}
\tilde D_{\alpha}\nw(\rho\|\sigma)&=
\tilde D_{\alpha}\nw(\rho\|\sigma)\fa
=
\lim_{\bP_f(\hil)_{\rho,\sigma}^+\ni P\nearrow I}
\tilde D_{\alpha}\nw(P\rho P\|P\sigma P)
=
\lim_{n\to+\infty}
\tilde D_{\alpha}\nw(P_n\rho P_n\|P_n\sigma P_n),\ds\ds\ds\alpha>1.
\end{align}
\end{prop}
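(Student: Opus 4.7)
The plan is to chain Lemmas~\ref{lemma:fa equiv}, \ref{lemma:Q alpha convergence} and \ref{lemma:Q alpha as limit}; all three apply at $z=\alpha$, which is precisely the sandwiched case, and it is exactly at this intersection that a tight finite-dimensional approximation becomes available.

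First I would handle the case where both $\rho$ and $\sigma$ are trace-class. Fix any sequence of finite-rank projections $P_n\nearrow I$ in the strong operator topology; since $\rho$ and $\sigma$ are nonzero, eventually $P_n\in\bP_f(\hil)_{\rho,\sigma}^+$. Apply Lemma~\ref{lemma:Q alpha as limit} with $K_n:=P_n$ and $K_\infty:=I$ (so $K_\infty\rho K_\infty^*=\rho$, $K_\infty\sigma K_\infty^*=\sigma$): this yields directly
\[
Q_\alpha\nw(\rho\|\sigma)\le\liminf_{n\to+\infty} Q_\alpha\nw(P_n\rho P_n\|P_n\sigma P_n)\le\limsup_{n\to+\infty} Q_\alpha\nw(P_n\rho P_n\|P_n\sigma P_n)\le Q_\alpha\nw(\rho\|\sigma),
\]
so the sequential limit equals $Q_\alpha\nw(\rho\|\sigma)$ (including the case $+\infty$). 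Combined with Lemma~\ref{lemma:fa equiv} and the trivial bound $Q_\alpha\nw(P_n\rho P_n\|P_n\sigma P_n)\le Q_\alpha\nw(\rho\|\sigma)\fa\le Q_\alpha\nw(\rho\|\sigma)$, this forces $Q_\alpha\nw(\rho\|\sigma)\fa=Q_\alpha\nw(\rho\|\sigma)$ as well.

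Second I would treat the case where $\sigma$ is compact and $\rho\in\B^\alpha(\hil,\sigma)$. Compactness of $\sigma$ ensures that the spectral projections $P_n:=\egy_{(c_n,d_n)}(\sigma)$ are of finite rank for any $0<c_n<d_n$ with $c_n\to 0$, $d_n\to+\infty$, so $P_n\in\bP_f(\hil)_{\rho,\sigma}^+$ eventually. Lemma~\ref{lemma:Q alpha convergence} (applied at $z=\alpha$) gives the lower bound $Q_\alpha\nw(\rho\|\sigma)\le\liminf_n Q_\alpha\nw(P_n\rho P_n\|P_n\sigma P_n)$, while Lemma~\ref{lemma:fa equiv} supplies the matching upper bound $Q_\alpha\nw(P_n\rho P_n\|P_n\sigma P_n)\le Q_\alpha\nw(\rho\|\sigma)\fa\le Q_\alpha\nw(\rho\|\sigma)$. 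Equality throughout follows immediately.

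In both cases, once the sequential statement is established, the net-convergence statement $\bP_f(\hil)_{\rho,\sigma}^+\ni P\nearrow I$ is a consequence of the monotonicity provided by Corollary~\ref{cor:increasing sequence}: an increasing net is controlled by its supremum, which we have just identified with $Q_\alpha\nw(\rho\|\sigma)$. Passing from $Q_\alpha\nw$ to $D_\alpha\nw$ is a trivial application of $\frac{1}{\alpha-1}\log$; for $\tilde D_\alpha\nw$ in the trace-class case one also uses the elementary fact $\lim_{P\nearrow I}\Tr(P\rho P)=\Tr\rho$, which follows from strong convergence $P\rho P\to\rho$ combined with trace-norm continuity of the functional calculus on compact operators (or directly from Lemma~\ref{lemma:Grumm}). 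I expect no substantive obstacle here, since the real analytic content is absorbed by the three preceding lemmas---particularly Lemma~\ref{lemma:Q alpha as limit}, where Grumm-type convergence and lower semi-continuity of $Q_\alpha\nw$ on $\L^1(\hil)\times\L^1(\hil)$ do the heavy lifting.
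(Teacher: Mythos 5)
Your proof is correct and follows essentially the same route as the paper, which itself states only that the result is "immediate" from Lemmas \ref{lemma:fa equiv}, \ref{lemma:Q alpha convergence}, and \ref{lemma:Q alpha as limit}; you have filled in the chaining honestly. Two small points worth making explicit: (a) in the trace-class case, the sequence you pick ($P_n\nearrow I$) is not literally the one referenced in the statement ($P_n=\egy_{(c_n,d_n)}(\sigma)\nearrow\sigma^0$ from Lemma~\ref{lemma:Q alpha convergence}); for those particular $P_n$ you should instead invoke the ``in particular'' clause of Lemma~\ref{lemma:Q alpha as limit} with $P_\infty=\sigma^0$ (valid once $\rho^0\le\sigma^0$, which is the only setting in which the claim is nonvacuous), or simply combine Lemma~\ref{lemma:Q alpha convergence} with the upper bound $Q_\alpha\nw(P_n\rho P_n\|P_n\sigma P_n)\le Q_\alpha\nw(\rho\|\sigma)\fa$; and (b) in Case~2, the lower bound from Lemma~\ref{lemma:Q alpha convergence} is what does the work even when $Q_\alpha\nw(\rho\|\sigma)=+\infty$ (i.e.~$\rho\in\B^{\alpha}(\hil,\sigma)\setminus\L^{\alpha}(\hil,\sigma)$), since the finite-rank approximants then diverge. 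Neither of these is a substantive gap.
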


\begin{rem}
Finite-dimensional approximability for the standard $f$-divergences 
was given in \cite[Theorem 4.5]{Hiai_fdiv_standard} 
in the general von Neumann algebra setting. In particular, it shows that 
for any two PSD trace-class operators on a Hilbert space,
the standard (or Petz-type) R\'enyi divergences satisfy
$D_{\alpha,1}(\rho\|\sigma)=
D_{\alpha,1}(\rho\|\sigma)\fa:=
\sup_{P\in\bP_f(\hil)_{\rho,\sigma}^+}D_{\alpha,1}(P\rho P\|P\sigma P)$ for
$\alpha\in[0,2]$. It is an open question whether finite-dimensional approximability holds
for $\alpha>1$ 
when $z\ne 1$ and $z\ne \alpha$.
\end{rem}

There are cases apart from the ones treated in Proposition \ref{cor:fa equality} where the inequality in \eqref{fa smaller} holds with equality. In particular, we have the following trivial case, which we will use in the proof of Proposition \ref{prop:sc lower f}.

\begin{lemma}\label{lemma:Q fa infty}
Let $\rho,\sigma\in\B(\hil)\p$. 
\begin{align}
\text{If}\ds\ds\rho^0\nleq\sigma^0\ds\ds\text{then}\ds\ds
Q_{\alpha,z}(\rho\|\sigma)\fa=+\infty=Q_{\alpha,z}(\rho\|\sigma),\ds\ds
\alpha\in(1,+\infty),\ds z\in(0,+\infty).
\end{align}
\end{lemma}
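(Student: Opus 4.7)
My plan is to prove the two equalities separately, both via Lemma~\ref{lemma:rho alpha}. I will first deduce $Q_{\alpha,z}(\rho\|\sigma)=+\infty$ by showing $\rho\notin\B^{\alpha,z}(\hil,\sigma)$: by characterization~(vi) of Lemma~\ref{lemma:rho alpha}, membership in $\B^{\alpha,z}(\hil,\sigma)$ would force $\rho^{\alpha/z}\le\lambda\sigma^{(\alpha-1)/z}$ for some $\lambda\ge 0$, and testing on any $\phi\in\ker\sigma$ would give $\inner{\phi}{\rho^{\alpha/z}\phi}\le 0$, hence $\rho^{\alpha/z}\phi=0$, hence $\phi\in\ker\rho$ (using that $\alpha/z>0$). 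This would show $\ker\sigma\subseteq\ker\rho$, equivalently $\rho^0\le\sigma^0$, contradicting the hypothesis; so $\rho\notin\B^{\alpha,z}(\hil,\sigma)$ and Definition~\ref{def:Q alpha} gives $Q_{\alpha,z}(\rho\|\sigma)=+\infty$.

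For the first equality $Q_{\alpha,z}(\rho\|\sigma)\fa=+\infty$, the plan is to exhibit a single finite-rank projection $P\in\bP_f(\hil)_{\rho,\sigma}^+$ with $(P\rho P)^0\nleq(P\sigma P)^0$; the previous paragraph, applied to $P\rho P$ and $P\sigma P$ in place of $\rho$ and $\sigma$, will then yield $Q_{\alpha,z}(P\rho P\|P\sigma P)=+\infty$, so the supremum defining $Q_{\alpha,z}(\rho\|\sigma)\fa$ in Definition~\ref{def:finitedim appr} will be $+\infty$. To build $P$, I use that the hypothesis $\rho^0\nleq\sigma^0$ is equivalent to $\ker\sigma\nsubseteq\ker\rho$, producing a unit vector $\phi\in\ker\sigma$ with $\rho\phi\ne 0$; and that $\sigma\ne 0$ by the standing convention on $\B(\hil)\p$, producing some $\chi\in\hil$ with $\sigma\chi\ne 0$. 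I then take $P$ to be the orthogonal projection onto the finite-dimensional subspace $\spa\{\phi,\rho\phi,\chi,\sigma\chi\}$.

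With this choice, $(P\rho P)\phi=P(\rho\phi)=\rho\phi\ne 0$ (since $\rho\phi\in P\hil$ by construction), while $(P\sigma P)\phi=P(\sigma\phi)=0$; hence $\phi\in\ker(P\sigma P)\setminus\ker(P\rho P)$, which is exactly $(P\rho P)^0\nleq(P\sigma P)^0$. The first identity also witnesses $P\rho P\ne 0$, and analogously $(P\sigma P)\chi=P(\sigma\chi)=\sigma\chi\ne 0$ witnesses $P\sigma P\ne 0$, so $P\in\bP_f(\hil)_{\rho,\sigma}^+$ as required. The only genuine obstacle --- minor, but worth flagging --- is this last piece of bookkeeping: ensuring that $P\rho P$ and $P\sigma P$ are simultaneously nonzero, which is precisely why the spanning set includes $\rho\phi$ together with an independently chosen pair $\chi,\sigma\chi$ rather than the naive choice $P=\pr{\phi}$ (for which $P\sigma P=0$ would fail the nontriviality condition).
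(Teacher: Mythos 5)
Your proof is correct, and it takes a genuinely different route from the paper's. Where you produce a \emph{single} finite-rank projection $P$ for which $Q_{\alpha,z}(P\rho P\|P\sigma P)=+\infty$ outright---by arranging that the support dominance already fails at the compressed level, $(P\rho P)^0\nleq(P\sigma P)^0$---the paper instead builds a one-parameter family of \emph{rank-one} projections $P_t=\pr{\psi_t}$ with $\psi_t=\sqrt{1-t}\,\psi+\sqrt{t}\,\phi$, where $\psi\in\ker\sigma\setminus\ker\rho$ and $\phi\in\supp\sigma$. For each $t\in(0,1]$ the compressed operators are both scalar multiples of $\pr{\psi_t}$, so $Q_{\alpha,z}(P_t\rho P_t\|P_t\sigma P_t)=\inner{\psi_t}{\rho\psi_t}^{\alpha}\inner{\psi_t}{\sigma\psi_t}^{1-\alpha}$ is a \emph{finite} number, which blows up as $t\searrow 0$ because $\inner{\psi_t}{\sigma\psi_t}=t\inner{\phi}{\sigma\phi}\to 0$ while $\inner{\psi_t}{\rho\psi_t}\to\inner{\psi}{\rho\psi}>0$. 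So the paper exhibits the supremum as a genuine limit of finite values over rank-one compressions; your argument short-circuits this by letting one compression itself inherit the support violation, which is both valid and arguably cleaner for this lemma (and spares you the explicit rank-one evaluation of $Q_{\alpha,z}$). For the second identity $Q_{\alpha,z}(\rho\|\sigma)=+\infty$, both proofs appeal to Lemma~\ref{lemma:rho alpha}; the paper simply cites the statement ``any of the conditions imply $\rho^0\le\sigma^0$,'' while you re-derive this implication explicitly from criterion~(\emph{vi}), which is fine. One small remark on your construction: including $\rho\phi$ in the span of $P$ is not actually needed---since $\inner{\phi}{\rho\phi}>0$ and $\phi\in\ran P$, one already has $P\rho P\phi\ne 0$---so $P=$ projection onto $\spann\{\phi,\chi,\sigma\chi\}$ would suffice; but the extra vector does no harm.
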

\begin{proof}
Assume that $\rho^0\nleq\sigma^0$, so that there exists a unit vector
$\psi\in\hil$ such that $\sigma^0\psi=0$, $\rho^0\psi\ne 0$. Let 
$\phi$ be any unit vector such that $\sigma^0\phi=\phi$, and for every 
$t\in[0,1]$, define $\psi_t:=\sqrt{1-t}\psi+\sqrt{t}\phi$, 
$P_t:=\pr{\psi_t}$. Then 
\begin{align*}
P_t\rho P_t=\pr{\psi_t}\inner{\psi_t}{\rho\psi_t}\xrightarrow[t\to 0]{}
\pr{\psi}\underbrace{\inner{\psi}{\rho\psi}}_{>0},\ds\ds
P_t\sigma P_t=\pr{\psi_t}\inner{\psi_t}{\sigma\psi_t}\xrightarrow[t\to 0]{}0,
\end{align*}
while $P_t\sigma P_t\ne 0$ for every $t\in(0,1]$. Thus, 
\begin{align*}
Q_{\alpha,z}(\rho\|\sigma)\fa
\ge
\lim_{t\searrow 0}Q_{\alpha,z}(P_t\rho P_t\|P_t\sigma P_t)
=
\lim_{t\searrow 0}\inner{\psi_t}{\rho\psi_t}^{\alpha}\inner{\psi_t}{\sigma\psi_t}^{1-\alpha}
=+\infty.
\end{align*}
Since $\rho^0\nleq\sigma^0$ implies that $\rho\notin\B^{\alpha,z}(\hil,\sigma)$
(see Lemma \ref{lemma:rho alpha}), we also get $Q_{\alpha,z}(\rho\|\sigma)=+\infty$.
\end{proof}
\medskip

The finite-dimensional approximability of the sandwiched R\'enyi divergences 
in Proposition \ref{cor:fa equality}
is the key property used in proving the main results of the paper, the 
equality of the sandwiched and the regularized measured R\'enyi divergences, and 
the determination of the strong converse exponent of state discrimination, in Sections 
\ref{sec:measured Renyi} and \ref{sec:sc Hanti}.

The following monotonicity result has been proved for finite-rank states in 
\cite{Renyi_new}, and for states of a general von Neumann algebra in 
\cite{Jencova_NCLp} and \cite{BST}. We give a different proof of it in our setting as an 
illustration of the use of the finite-dimensional approximability in extending 
finite-dimensional results to infinite dimension. We will give yet another proof 
in Section \ref{sec:measured Renyi}, using a different respresentation of the sandwiched R\'enyi divergences.

\begin{cor}\label{cor:D alpha tilde mon}
Let $\rho,\sigma\in\L^1(\hil)\p$ be PSD trace-class operators. Then 
\begin{align}\label{D alpha tilde mon}
(1,+\infty)\ni\alpha\mapsto\tilde D_{\alpha}\nw(\rho\|\sigma)\ds\ds\text{is increasing},
\end{align}
and
\begin{align}\label{D alpha tilde lim infty}
\lim_{\alpha\to+\infty}\tilde D_{\alpha}\nw(\rho\|\sigma)
=
\lim_{\alpha\to+\infty} D_{\alpha}\nw(\rho\|\sigma)
=
\Dmax(\rho\|\sigma).
\end{align}
\end{cor}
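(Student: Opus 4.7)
The plan is to reduce both assertions to the known finite-dimensional case via the approximation identity \eqref{D alpha tilde lim} in Proposition \ref{cor:fa equality}, which, since $\rho,\sigma$ are trace-class, yields
\[
\tilde D_{\alpha}\nw(\rho\|\sigma) \;=\; \sup_{P \in \bP_f(\hil)_{\rho,\sigma}^+} \tilde D_{\alpha}\nw(P\rho P\,\|\,P\sigma P),\qquad \alpha>1.
\]
For each fixed $P \in \bP_f(\hil)_{\rho,\sigma}^+$, the operators $P\rho P$ and $P\sigma P$ are finite-rank, so the finite-dimensional results of \cite{Renyi_new,WWY} apply: $\alpha\mapsto \tilde D_\alpha\nw(P\rho P\,\|\,P\sigma P)$ is monotone increasing on $(1,+\infty)$ and converges to $\Dmax(P\rho P\,\|\,P\sigma P)$ as $\alpha\to+\infty$. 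Since the pointwise supremum of an indexed family of monotone increasing functions is monotone increasing, \eqref{D alpha tilde mon} follows immediately.

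For the limit \eqref{D alpha tilde lim infty}, observe first that $D_\alpha\nw(\rho\|\sigma) = \tilde D_\alpha\nw(\rho\|\sigma) + \tfrac{1}{\alpha-1}\log\Tr\rho$ by Definition \ref{def:D alpha tilde}, and the correction term vanishes as $\alpha\to+\infty$; hence the two limits coincide whenever either exists. Using that the function is monotone increasing in $\alpha$, the two suprema can be exchanged:
\[
\lim_{\alpha\to+\infty}\tilde D_\alpha\nw(\rho\|\sigma)
\;=\; \sup_{\alpha>1}\sup_P \tilde D_\alpha\nw(P\rho P\,\|\,P\sigma P)
\;=\; \sup_P \sup_{\alpha>1}\tilde D_\alpha\nw(P\rho P\,\|\,P\sigma P)
\;=\; \sup_P \Dmax(P\rho P\,\|\,P\sigma P),
\]
where the last equality invokes the finite-dimensional limit cited above. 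It thus suffices to verify that $\sup_P \Dmax(P\rho P\,\|\,P\sigma P) = \Dmax(\rho\|\sigma)$.

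The inequality ``$\le$'' is immediate from \eqref{Dmax def}: if $\rho\le e^\kappa\sigma$ then $P\rho P \le e^\kappa P\sigma P$ for every $P$. For the reverse inequality one may assume $\rho^0 \le \sigma^0$, since otherwise $\Dmax(\rho\|\sigma)=+\infty$ and also $Q_{\alpha}\nw(\rho\|\sigma)\fa = +\infty$ by Lemma \ref{lemma:Q fa infty}, making both sides of \eqref{D alpha tilde lim infty} equal to $+\infty$. Choosing $P_n:=\egy_{(c_n,d_n)}(\sigma) \in \bP_f(\hil)_{\rho,\sigma}^+$ with $c_n\searrow 0$ and $d_n\nearrow+\infty$, Lemma \ref{lemma:Grumm} gives $P_n\rho P_n\to\rho$ and $P_n\sigma P_n\to\sigma$ in trace norm, and in particular weakly. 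Setting $\kappa:=\sup_P \Dmax(P\rho P\,\|\,P\sigma P)$, the inequalities $P_n\rho P_n\le e^\kappa P_n\sigma P_n$ (which hold since the defining infimum in $\Dmax$ is attained in finite dimension) pass to the weak operator limit, yielding $\rho \le e^\kappa\sigma$, so $\Dmax(\rho\|\sigma)\le\kappa$, as required. The main technical point in the argument is precisely this last passage to the limit in an operator inequality; the remainder is a routine assembly of the monotonicity facts and the finite-dimensional identities.
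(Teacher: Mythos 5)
The argument rests on the identity
\begin{align*}
\tilde D_{\alpha}\nw(\rho\|\sigma) \;=\; \sup_{P \in \bP_f(\hil)_{\rho,\sigma}^+} \tilde D_{\alpha}\nw(P\rho P\,\|\,P\sigma P),
\end{align*}
which you attribute to \eqref{D alpha tilde lim} in Proposition \ref{cor:fa equality}, but this is not what that statement says, and it is in fact false. Proposition \ref{cor:fa equality} gives $\tilde D_{\alpha}\nw(\rho\|\sigma)$ as a \emph{net limit} along $P\nearrow I$, and $\tilde D_{\alpha}\nw(\rho\|\sigma)\fa$ is by Definition \ref{def:finitedim appr} equal to $\sup_P D_{\alpha}\nw(P\rho P\|P\sigma P) - \tfrac{1}{\alpha-1}\log\Tr\rho$, with the constant $\Tr\rho$ of the original operator, \emph{not} $\sup_P\tilde D_{\alpha}\nw(P\rho P\|P\sigma P)$. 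The two differ because $P\mapsto D_{\alpha}\nw(P\rho P\|P\sigma P)$ is increasing but $P\mapsto -\tfrac{1}{\alpha-1}\log\Tr P\rho P$ is decreasing; the net of their sum converges to the right value but its supremum can be strictly larger. A concrete commuting example: take $\rho=\mathrm{diag}(0.99, 0.005, 0.005)$, $\sigma=\mathrm{diag}(0.99,0.009,0.001)$, $\alpha=2$, and $P$ the projection onto coordinates $\{2,3\}$; then $\tilde D_2\nw(\rho\|\sigma)\approx 0.018$ while $\tilde D_2\nw(P\rho P\|P\sigma P)\approx 1.02$. So your starting formula cannot be used as stated, and consequently the min-max step
$\sup_\alpha\sup_P\tilde D_\alpha\nw(P\rho P\|P\sigma P)=\sup_P\sup_\alpha\tilde D_\alpha\nw(P\rho P\|P\sigma P)$
computes a quantity that is in general strictly larger than $\lim_{\alpha\to+\infty}\tilde D_\alpha\nw(\rho\|\sigma)$.

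The monotonicity claim can be rescued without change of plan: the pointwise limit (rather than supremum) of a net of monotone increasing functions is still monotone increasing, so \eqref{D alpha tilde lim} directly gives \eqref{D alpha tilde mon}. For the $\alpha\to+\infty$ limit, however, a genuine repair is needed; the paper's proof first normalizes to density operators via the scaling laws in Remark \ref{rem:scaling}, so that $\tilde D_\alpha\nw(\rho\|\sigma)=D_\alpha\nw(\rho\|\sigma)$, and then performs the $\sup_\alpha\sup_P$ exchange with $D_\alpha\nw$, for which $D_\alpha\nw(\rho\|\sigma)=\sup_P D_\alpha\nw(P\rho P\|P\sigma P)$ \emph{is} a genuine monotone supremum (Corollary \ref{cor:increasing sequence} and Proposition \ref{cor:fa equality}). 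Only afterwards, at fixed finite-rank $P$, does it rewrite $D_\alpha\nw(P\rho P\|P\sigma P)=\tilde D_\alpha\nw(P\rho P\|P\sigma P)+\tfrac{1}{\alpha-1}\log\Tr P\rho P$ and use that, since $\Tr\rho=1$ forces $\log\Tr P\rho P\le 0$, both summands increase in $\alpha$, making $\sup_\alpha$ of the sum equal to the sum of limits, namely $\Dmax(P\rho P\|P\sigma P)$. Your closing observation that $\sup_P\Dmax(P\rho P\|P\sigma P)=\Dmax(\rho\|\sigma)$ is correct and nicely argued (the paper only asserts it as ``straightforward''), but it enters after the flawed step, so it does not repair the gap.
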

\begin{proof}
These are well-known when $\rho$ and $\sigma$ are finite-rank \cite{Renyi_new}.
Thus, by \eqref{D alpha tilde lim}, the monotonicity in \eqref{D alpha tilde mon} holds.
This also shows that the first limit in \eqref{D alpha tilde lim infty} exists, and 
it is trivial by definition that it is equal to the second limit. 
To show the last equality in \eqref{D alpha tilde lim infty}, it is sufficient to consider 
the case when $\rho$ and $\sigma$ are density operators, 
due to the scaling properties in Remark \ref{rem:scaling}.
Then $D_{\alpha}\nw(\rho\|\sigma)=\tilde D_{\alpha}\nw(\rho\|\sigma)$ for every $\alpha>1$, 
and
\begin{align*}
\lim_{\alpha\to+\infty} D_{\alpha}\nw(\rho\|\sigma)
&=
\lim_{\alpha\to+\infty} \tilde D_{\alpha}\nw(\rho\|\sigma)
=
\sup_{\alpha>1}\tilde D_{\alpha}\nw(\rho\|\sigma)
=
\sup_{\alpha>1} D_{\alpha}\nw(\rho\|\sigma)\\
&=
\sup_{\alpha>1}\sup_{P\in\bP_f(\hil)_{\rho,\sigma}^+}D_{\alpha}\nw(P\rho P\|P\sigma P)\\
&=
\sup_{P\in\bP_f(\hil)_{\rho,\sigma}^+}\sup_{\alpha>1}D_{\alpha}\nw(P\rho P\|P\sigma P)
\\
&=
\sup_{P\in\bP_f(\hil)_{\rho,\sigma}^+}\sup_{\alpha>1}
\left\{\tilde D_{\alpha}\nw(P\rho P\|P\sigma P)+\frac{1}{\alpha-1}\log\Tr P\rho P\right\}\\
&=
\sup_{P\in\bP_f(\hil)_{\rho,\sigma}^+}
D_{\max}(P\rho P\|P\sigma P)
=
\Dmax(\rho\|\sigma).
\end{align*}
Here, the first three equalities are trivial, and the fourth one follows by 
Proposition \ref{cor:fa equality}. The fifth equality is again trivial,
and the sixth one is by definition. In the seventh 
equality we use that both $\alpha\mapsto \tilde D_{\alpha}\nw(P\rho P\|P\sigma P)$ and 
$\alpha\mapsto\frac{1}{\alpha-1}\log\Tr P\rho P$ are increasing, and hence the supremum 
of their sum over $\alpha>1$ is the sum of their limits at $\alpha\to+\infty$, 
which is equal to $D_{\max}(P\rho P\|P\sigma P)$, according to the known behaviour in the 
finite-dimensional case. The last equality is straightforward to verify.
\end{proof}

\subsection{Regularized measured R\'enyi divergence}
\label{sec:measured Renyi}

A finite-outcome positive operator-valued measure (POVM) on a Hilbert space $\hil$
is a map $M:\,[r]\to\B(\hil)$, where $[r]:=\{1,\ldots,r\}$, 
all $M_i$ is PSD,
and 
$\sum_{i=1}^rM_i=I$. 
(We assume without loss of generality that the set of possible outcomes is a subset of $\bN$.)
We denote the set of such POVMs by $\povm(\hil,[r])$. 
For two PSD trace-class operators $\rho,\sigma\in \L^1(\hil)\p$, their \ki{measured R\'enyi divergence} is defined as
\begin{align*}
D_{\alpha}^{\meas}(\rho\|\sigma):=\sup_{r\in\bN}\sup_{M\in\povm(\hil,[r])}
D_{\alpha}\bz\bz\Tr \rho M_i\jz_{i\in[r]}\Big\|
\bz\Tr \sigma M_i\jz_{i\in[r]}\jz,
\end{align*}
where in the second expression we have the classical R\'enyi divergence \cite{Renyi} 
of the given non-negative functions on $[r]$. This is defined for 
$p,q\in[0,+\infty)^{[r]}$ as 
\begin{align*}
D_{\alpha}(p\|q):=\begin{cases}
\frac{1}{\alpha-1}\log\sum_{i\in[r]}p(i)^{\alpha}q(i)^{1-\alpha},&\supp p\subseteq\supp q,\\
+\infty,&\text{otherwise}.
\end{cases}
\end{align*}

One might consider more general POVMs for the definition, but that does not change the value 
of the measured R\'enyi divergence; see, e.g., \cite[Proposition 5.2]{Hiai_fdiv_Springer}.
The \ki{regularized measured R\'enyi divergence} of $\rho$ and $\sigma$ is then defined as
\begin{align*}
\oll{D}_{\alpha}^{\meas}(\rho\|\sigma):=
\sup_{n\in\bN}\frac{1}{n}D_{\alpha}^{\meas}\bz\rho^{\otimes n}\|\sigma^{\otimes n}\jz
=
\lim_{n\to\infty}\frac{1}{n}D_{\alpha}^{\meas}\bz\rho^{\otimes n}\|\sigma^{\otimes n}\jz.
\end{align*}

The following has been shown in \cite{MO}:

\begin{lemma}\label{lemma:reg measured Renyi}
For finite-rank PSD operators $\rho,\sigma\in\B(\hil)\p$,
\begin{align*}
\oll D_{\alpha}^{\meas}(\rho\|\sigma)
=
D_{\alpha}\nw(\rho\|\sigma),\ds\ds\ds\alpha>1.
\end{align*}
\end{lemma}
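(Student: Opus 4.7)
The plan is to reduce to the finite-dimensional case and then apply Hayashi's pinching argument as in \cite{MO}. Since $\rho$ and $\sigma$ are of finite rank, let $Q\in\bP_f(\hil)$ be the projection onto $\supp\rho+\supp\sigma$. Both quantities depend only on the restrictions of $\rho,\sigma$ to the finite-dimensional subspace $Q\hil$: for the sandwiched side this is transparent from Lemma \ref{lemma:rho alpha}, since $\rho_{\sigma,\alpha}$ lies in the von Neumann algebra generated by $\rho$ and $\sigma$; for the measured side, any POVM $(M_i)$ on $\hil^{\otimes n}$ yields $\Tr\rho^{\otimes n}M_i=\Tr\rho^{\otimes n}Q^{\otimes n}M_iQ^{\otimes n}$ and similarly for $\sigma$, so may be replaced by the POVM $(Q^{\otimes n}M_iQ^{\otimes n})_i$ together with the null outcome $I-Q^{\otimes n}$ without changing the output distributions. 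Hence we may assume without loss of generality that $\hil$ is finite-dimensional.

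The inequality $\oll D_\alpha^{\meas}(\rho\|\sigma)\le D_\alpha\nw(\rho\|\sigma)$ then follows from the data-processing inequality for the sandwiched R\'enyi divergence \cite{FL} applied to the measurement channel $\tau\mapsto\sum_i(\Tr\tau M_i)\ketbra{i}{i}$, combined with the additivity $D_\alpha\nw(\rho^{\otimes n}\|\sigma^{\otimes n})=nD_\alpha\nw(\rho\|\sigma)$ from Lemma \ref{lemma:tensor product}: for any POVM $(M_i)$ on $\hil^{\otimes n}$,
\[
D_\alpha\bz(\Tr\rho^{\otimes n}M_i)_i\,\|\,(\Tr\sigma^{\otimes n}M_i)_i\jz\le D_\alpha\nw(\rho^{\otimes n}\|\sigma^{\otimes n})=nD_\alpha\nw(\rho\|\sigma),
\]
so dividing by $n$, taking the supremum over $M$, and letting $n\to\infty$ gives the bound.

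For the converse inequality $D_\alpha\nw(\rho\|\sigma)\le\oll D_\alpha^{\meas}(\rho\|\sigma)$, apply Hayashi's pinching inequality \cite{H:pinching}: if $\mathcal{E}_{\sigma^{\otimes n}}$ denotes the pinching with respect to the spectral projections of $\sigma^{\otimes n}$, then $\rho^{\otimes n}\le v(\sigma^{\otimes n})\,\mathcal{E}_{\sigma^{\otimes n}}(\rho^{\otimes n})$, where $v(\sigma^{\otimes n})$ is the number of distinct eigenvalues of $\sigma^{\otimes n}$; by type-counting, $v(\sigma^{\otimes n})\le(n+1)^{\rk\sigma}$, hence grows polynomially in $n$. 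Using monotonicity of $Q_\alpha\nw$ in its first argument (which follows from Weyl's monotonicity inequality after conjugating by $\sigma^{(1-\alpha)/2\alpha}$ and using monotonicity of $X\mapsto\Tr X^{\alpha}$ on PSD operators) together with the scaling identity \eqref{Q D scaling2},
\[
D_\alpha\nw(\rho^{\otimes n}\|\sigma^{\otimes n})\le D_\alpha\nw\bz\mathcal{E}_{\sigma^{\otimes n}}(\rho^{\otimes n})\,\|\,\sigma^{\otimes n}\jz+\tfrac{\alpha}{\alpha-1}\log v(\sigma^{\otimes n}).
\]
Since $\mathcal{E}_{\sigma^{\otimes n}}(\rho^{\otimes n})$ commutes with $\sigma^{\otimes n}$, they can be jointly diagonalized by a rank-$1$ PVM $(P_j)_j$; because the pinching preserves the diagonal in any eigenbasis of $\sigma^{\otimes n}$, one has $\Tr\rho^{\otimes n}P_j=\Tr\mathcal{E}_{\sigma^{\otimes n}}(\rho^{\otimes n})P_j$, so this PVM attains the classical R\'enyi divergence that equals $D_\alpha\nw(\mathcal{E}_{\sigma^{\otimes n}}(\rho^{\otimes n})\|\sigma^{\otimes n})$ as a measured divergence on $(\rho^{\otimes n},\sigma^{\otimes n})$. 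Thus the first term on the right is at most $D_\alpha^{\meas}(\rho^{\otimes n}\|\sigma^{\otimes n})$, and dividing by $n$ and letting $n\to\infty$ the pinching penalty $\tfrac{1}{n}\log v(\sigma^{\otimes n})\to 0$, yielding the desired bound. The main obstacle, and the reason this argument does not extend directly to operators with infinitely many distinct eigenvalues, is precisely the polynomial bound on $v(\sigma^{\otimes n})$ — this intrinsically finite-dimensional ingredient is exactly what the rest of the paper circumvents via the finite-dimensional approximation of Proposition \ref{cor:fa equality}.
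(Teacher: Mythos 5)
The paper does not prove this lemma itself; it simply cites \cite{MO}, where the result is established by Hayashi's pinching argument, and the pinching technique is alluded to in the paper's introduction as ``a fundamentally finite-dimensional technique.'' Your proof is a correct reconstruction of that cited argument. The preliminary reduction to the finite-dimensional subspace $Q\hil$ (finite-rank operators live there, $\rho_{\sigma,\alpha}$ lies in the von Neumann algebra generated by $\rho,\sigma$, and any POVM may be compressed to $Q^{\otimes n}$ with a single zero-probability extra outcome) is a small but necessary observation that you handle correctly. The direct inequality via data-processing plus additivity, the pinching bound $\rho^{\otimes n}\le v(\sigma^{\otimes n})\,\E_{\sigma^{\otimes n}}(\rho^{\otimes n})$ with the polynomial type-counting estimate $v(\sigma^{\otimes n})\le (n+1)^{\rk\sigma}$, the monotonicity of $Q_\alpha\nw$ in its first argument by operator monotonicity and $\Tr(\cdot)^\alpha$-monotonicity, the scaling identity \eqref{Q D scaling2}, and the identification of the pinched divergence with a measured one via a joint eigenbasis PVM are all the ingredients of the \cite{MO} proof, correctly assembled. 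Your closing remark that the polynomial control of $v(\sigma^{\otimes n})$ is the intrinsically finite-dimensional ingredient, which the rest of the paper circumvents via the approximation scheme of Proposition \ref{cor:fa equality}, is exactly the point the paper makes in motivating its approach.
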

\medskip

In the proof of the next theorem, we will use the monotonicity of the sandwiched 
R\'enyi $\alpha$-divergences under finite-outcome measurements for $\alpha>1$. 
The more general statement of monotonicity under quantum operations has been proved in 
\cite[Theorem 14]{BST} and \cite[Theorem 3.14]{Jencova_NCLp}
in the general von Neumann algebra setting. We give a different proof 
for trace-class operators on a Hilbert space in Corollary \ref{cor:mon} below.

\begin{thm}\label{thm:reg meas Renyi}
Let $\rho,\sigma\in\B(\hil)\p$ be trace-class, and $\alpha>1$. 
Then 
\begin{align*}
\oll D_{\alpha}^{\meas}(\rho\|\sigma)=D_{\alpha}\nw(\rho\|\sigma).
\end{align*}
\end{thm}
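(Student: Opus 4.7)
I would prove the two inequalities separately, both reducing to ingredients already in hand.

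For the upper bound $\oll D_{\alpha}^{\meas}(\rho\|\sigma)\le D_{\alpha}\nw(\rho\|\sigma)$, the plan is to invoke data-processing of $D_{\alpha}\nw$ under a finite-outcome measurement channel $\rho\mapsto\sum_i(\Tr\rho M_i)\pr{e_i}$, available from \cite{BST,Jencova_NCLp} for trace-class operators (and to be reproved in Corollary~\ref{cor:mon} below). This yields $D_{\alpha}^{\meas}(\rho^{\otimes n}\|\sigma^{\otimes n})\le D_{\alpha}\nw(\rho^{\otimes n}\|\sigma^{\otimes n})$, and tensor additivity (Lemma~\ref{lemma:tensor product}) turns the right-hand side into $nD_{\alpha}\nw(\rho\|\sigma)$. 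Dividing by $n$ and taking the supremum finishes this direction.

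For the lower bound $\oll D_{\alpha}^{\meas}(\rho\|\sigma)\ge D_{\alpha}\nw(\rho\|\sigma)$, my plan is to reduce to the finite-rank case using finite-dimensional compressions. Fix $P\in\bP_f(\hil)_{\rho,\sigma}^+$. A POVM $(M_1,\dots,M_r)$ on $\ran P$ (extended by zero to all of $\hil$) gives the POVM $(M_1,\dots,M_r,I-P)$ on $\hil$, and $\Tr\rho M_i=\Tr(P\rho P)M_i$ for $i\le r$. For $\alpha>1$, the classical R\'enyi divergence of the extended outcome distributions dominates that of the original one, since appending the non-negative term $(\Tr\rho(I-P))^{\alpha}(\Tr\sigma(I-P))^{1-\alpha}$ inside the logarithm can only increase it; the degenerate case $\Tr\sigma(I-P)=0<\Tr\rho(I-P)$ is absorbed by the $+\infty$ convention in the definition of $D_\alpha$ and makes the inequality trivial. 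Hence $D_{\alpha}^{\meas}(\rho\|\sigma)\ge D_{\alpha}^{\meas}(P\rho P\|P\sigma P)$. Applying the same argument to $(\rho^{\otimes n},\sigma^{\otimes n})$ with projection $P^{\otimes n}$, together with the identity $(P\rho P)^{\otimes n}=P^{\otimes n}\rho^{\otimes n}P^{\otimes n}$, upgrades this to $\oll D_{\alpha}^{\meas}(\rho\|\sigma)\ge \oll D_{\alpha}^{\meas}(P\rho P\|P\sigma P)$.

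The remaining step is to let $P\nearrow I$. Since $P\rho P, P\sigma P$ are finite-rank, Lemma~\ref{lemma:reg measured Renyi} identifies $\oll D_{\alpha}^{\meas}(P\rho P\|P\sigma P)=D_{\alpha}\nw(P\rho P\|P\sigma P)$, and Proposition~\ref{cor:fa equality} supplies $\sup_{P\in\bP_f(\hil)_{\rho,\sigma}^+}D_{\alpha}\nw(P\rho P\|P\sigma P)=D_{\alpha}\nw(\rho\|\sigma)$; combining these yields the desired inequality. The main conceptual work has already been done in establishing Proposition~\ref{cor:fa equality}: once finite-dimensional approximability of $D_\alpha\nw$ is in hand, the theorem reduces almost mechanically to the finite-dimensional identity of Lemma~\ref{lemma:reg measured Renyi} plus the elementary POVM-extension argument above. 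The only potentially tricky point is the bookkeeping around the degenerate cases of the classical R\'enyi divergence, which turn out to be harmless.
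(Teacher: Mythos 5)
Your proof is correct and takes essentially the same approach as the paper: the upper bound follows from data processing plus tensor additivity, and the lower bound uses finite-dimensional compression to $P\rho P,P\sigma P$, the POVM-extension argument (appending $I-P^{\otimes n}$ to a POVM under $P^{\otimes n}$), Lemma~\ref{lemma:reg measured Renyi}, and Proposition~\ref{cor:fa equality}. The only cosmetic difference is that you phrase the lower bound as a supremum over $P\in\bP_f(\hil)_{\rho,\sigma}^+$ of $\oll D_{\alpha}^{\meas}(P\rho P\|P\sigma P)$, whereas the paper fixes an arbitrary $c<D_{\alpha}\nw(\rho\|\sigma)$ and chases it through the same chain of inequalities; both are equivalent.
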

\begin{proof}
The inequality $\oll D_{\alpha}^{\meas}(\rho\|\sigma)\le D_{\alpha}\nw(\rho\|\sigma)$ is trivial from the monotonicity of $D_{\alpha}\nw$ under 
quantum operations
and its additivity under tensor products
(Lemma \ref{lemma:tensor product}), and hence we only need to prove the converse inequality. 
By Proposition \ref{cor:fa equality}, 
for any $c<D_{\alpha}\nw(\rho\|\sigma)$ there exists 
a finite-rank projection $P\in\bP_f(\hil)_{\rho,\sigma}^+$ such that 
\begin{align*}
c<D_{\alpha}\nw(P\rho P\|P\sigma P)\le D_{\alpha}\nw(\rho\|\sigma).
\end{align*}
By Lemma \ref{lemma:reg measured Renyi}, there exist $n\in\bN$, a number $r\in\bN$, and 
$M_i\in\B(\hil^{\otimes n})\p$, $M_i^0\le P^{\otimes n}$, $i\in[r]$, with 
$\sum_{i\in[r]}M_i=P^{\otimes n}$, such that 
\begin{align}\label{measured proof1}
c<
\frac{1}{n}D_{\alpha}\bz\bz\Tr(P\rho P)^{\otimes n}M_i\jz_{i\in[r]}\big\|
\bz\Tr(P\sigma P)^{\otimes n}M_i\jz_{i\in[r]}\jz.
\end{align}
Let us define $\tilde M_i:=M_i$, 
$i\in[r]$, and
$\tilde M_{r+1}:=I_{\hil^{\otimes n}}-P^{\otimes n}$. 
Then $(\tilde M_i)_{i\in[r+1]}$ is a POVM on $\hil^{\otimes n}$, and we have 
\begin{align*}
c&<
\frac{1}{n}D_{\alpha}\bz\bz\Tr(P\rho P)^{\otimes n}M_i\jz_{i\in[r]}\big\|
\bz\Tr(P\sigma P)^{\otimes n}M_i\jz_{i\in[r]}\jz\\
&=
\frac{1}{n}D_{\alpha}\bz\bz\Tr\rho^{\otimes n}\tilde M_i\jz_{i\in[r]}\Big\|
\bz\Tr\sigma^{\otimes n}\tilde M_i\jz_{i\in[r]}\jz\\
&\le
\frac{1}{n}D_{\alpha}\bz\bz\Tr \rho^{\otimes n}\tilde M_i\jz_{i\in[r+1]}\Big\|
\bz\Tr\sigma^{\otimes n}\tilde M_i\jz_{i\in[r+1]}\jz\\
&\le
\frac{1}{n}D_{\alpha}^{\meas}\bz\rho^{\otimes n}\|\sigma^{\otimes n}\jz\\
&\le
\oll D_{\alpha}^{\meas}(\rho\|\sigma),
\end{align*}
where the first inequality is by \eqref{measured proof1}, the equality and the second inequality are trivial, and the third and the fourth inequalities are by definition.
Thus, $c<\oll D_{\alpha}^{\meas}(\rho\|\sigma)$, and 
since the above holds for every 
$c<D_{\alpha}\nw(\rho\|\sigma)$, the assertion follows. 
\end{proof}

Their representation given in Theorem \ref{thm:reg meas Renyi} distinguishes the sandwiched R\'enyi divergences among all 
quantum generalizations of the classical R\'enyi divergences; in particular, 
it gives special importance to the $\alpha=z$ case in the family of 
R\'enyi $(\alpha,z)$-divergences, at least for $\alpha>1$. 
It also allows to deduce some important properties of the sandwiched R\'enyi divergences from those of the classical R\'enyi divergences; we present such an example in 
Corollary \ref{cor:D infty}.
Note that the properties in Corollary \ref{cor:D infty}
were also proved in \cite{BST,Jencova_NCLp} in the general von Neumann algebra setting, 
by different methods.
Yet another proof was given in our setting in Corollary \ref{cor:D alpha tilde mon}.

%
%

\begin{cor}\label{cor:D infty}
Let $\rho,\sigma\in \L^1(\hil)\p$ be PSD trace-class operators. Then
\begin{align}\label{D alpha tilde increasing}
(1,+\infty)\ni\alpha\mapsto \tilde D_{\alpha}\nw(\rho\|\sigma)\ds\ds\text{is increasing},
\end{align}
and
\begin{align}
\sup_{\alpha>1}\tilde D_{\alpha}\nw(\rho\|\sigma)
=
\lim_{\alpha\to+\infty}\tilde D_{\alpha}\nw(\rho\|\sigma)
&=
\lim_{\alpha\to+\infty} D_{\alpha}\nw(\rho\|\sigma)\label{Dmax0}\\
&=
\Dmax(\rho\|\sigma)\\
&=
\log\inf\{\lambda> 0:\,\rho\le\lambda\sigma\}\label{Dmax1}\\
&=
\log\sup\left\{\frac{\Tr\rho T}{\Tr\sigma T}:\,T\in\B(\hil)_{[0,1]},\,\Tr\sigma T>0\right\}.
\label{Dmax2}
\end{align}
\end{cor}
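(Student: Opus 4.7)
My plan is to derive everything from Theorem \ref{thm:reg meas Renyi} by reducing to the classical Rényi divergences, for which the analogous monotonicity and large-$\alpha$ limit are elementary. Using $\Tr\rho^{\otimes n}=(\Tr\rho)^n$ together with the POVM normalization $\sum_i\Tr\rho^{\otimes n}M_i=\Tr\rho^{\otimes n}$, that theorem rewrites as
\begin{align*}
\tilde D_\alpha\nw(\rho\|\sigma)
=\sup_{n\in\bN}\frac{1}{n}\sup_{M}\tilde D_\alpha\bz(\Tr\rho^{\otimes n}M_i)_i\,\|\,(\Tr\sigma^{\otimes n}M_i)_i\jz,
\end{align*}
where $\tilde D_\alpha(p\|q):=D_\alpha(p\|q)-\frac{1}{\alpha-1}\log\sum_i p_i$ on the classical side. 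For finite non-negative vectors $p=p_{\mathrm{tot}}\bar p$, $q=q_{\mathrm{tot}}\bar q$ with $\bar p,\bar q$ probability distributions, a direct calculation gives
\begin{align*}
\tilde D_\alpha(p\|q)=\log(p_{\mathrm{tot}}/q_{\mathrm{tot}})+D_\alpha(\bar p\|\bar q),
\end{align*}
so $\tilde D_\alpha(p\|q)$ inherits the classical monotonicity of $\alpha\mapsto D_\alpha(\bar p\|\bar q)$ and converges as $\alpha\to+\infty$ to $\log(p_{\mathrm{tot}}/q_{\mathrm{tot}})+\log\max_i(\bar p_i/\bar q_i)=D_{\max}(p\|q)$.

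The monotonicity \eqref{D alpha tilde increasing} is then immediate since $\tilde D_\alpha\nw(\rho\|\sigma)$ is a supremum of functions monotone in $\alpha$. The first equality in \eqref{Dmax0} follows from monotonicity, and the second holds because $D_\alpha\nw-\tilde D_\alpha\nw=(\alpha-1)^{-1}\log\Tr\rho\to 0$. For the equality to $\Dmax(\rho\|\sigma)$, I would interchange the (monotone) suprema, legitimate because each supremand is monotone in each variable:
\begin{align*}
\lim_{\alpha\to+\infty}\tilde D_\alpha\nw(\rho\|\sigma)
=\sup_{n}\frac{1}{n}\sup_{M}\log\max_i\frac{\Tr\rho^{\otimes n}M_i}{\Tr\sigma^{\otimes n}M_i}.
\end{align*}

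The joint supremum over $M$ and the outcome index $i$ equals the supremum of $\Tr\rho^{\otimes n}T/\Tr\sigma^{\otimes n}T$ over $T\in\B(\hil^{\otimes n})_{[0,I]}$ with $\Tr\sigma^{\otimes n}T>0$, since each such $T$ is one arm of the POVM $(T,I-T)$, proving \eqref{Dmax2} (after the remaining steps). Specializing to rank-one tests $T=\pr\phi$ one reads off that this supremum coincides with $\inf\{\lambda>0:\rho^{\otimes n}\le\lambda\sigma^{\otimes n}\}=\exp\Dmax(\rho^{\otimes n}\|\sigma^{\otimes n})$, yielding \eqref{Dmax1}. Finally, the tensor-power additivity $\Dmax(\rho^{\otimes n}\|\sigma^{\otimes n})=n\Dmax(\rho\|\sigma)$ (the nontrivial direction being analogous to the argument used for \eqref{tensor product1} in Lemma \ref{lemma:tensor product}) gives $\sup_n\frac{1}{n}\Dmax(\rho^{\otimes n}\|\sigma^{\otimes n})=\Dmax(\rho\|\sigma)$, completing the chain. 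The main (mild) subtlety is verifying the test-ratio representation of $\Dmax$ and its tensor additivity for PSD trace-class operators in infinite dimensions, both of which reduce to the finite-dimensional statement by spectral truncation of $\sigma$ onto $[\ep,+\infty)$ and letting $\ep\searrow 0$.
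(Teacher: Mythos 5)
Your proposal is correct and rests, as the paper's proof does, on Theorem \ref{thm:reg meas Renyi} plus the classical monotonicity and $\alpha\to+\infty$ behaviour of $D_\alpha$. The route differs in the details, though. You compute the limit exactly by pushing $\alpha\to+\infty$ through the three suprema in the regularized-measured representation (justified since the inner classical quantity is monotone in $\alpha$), land on $\sup_n\frac{1}{n}\sup_M\log\max_i\Tr\rho^{\otimes n}M_i/\Tr\sigma^{\otimes n}M_i$, and then identify that expression with $\Dmax(\rho\|\sigma)$ via two auxiliary facts: (a) the supremum of the test ratio $\Tr\omega T/\Tr\tau T$ is attained in the limit over rank-one tests and equals $\exp\Dmax(\omega\|\tau)$, and (b) tensor additivity $\Dmax(\rho^{\otimes n}\|\sigma^{\otimes n})=n\Dmax(\rho\|\sigma)$. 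The paper instead argues by a two-sided inequality. The direction $\lim_\alpha D_\alpha\nw\le\Dmax$ is dispensed with as ``straightforward'' (it is, e.g., a consequence of Corollary \ref{cor:D alpha tilde mon}, which the paper has already proved via a different method---finite-dimensional approximation). The direction $\lim_\alpha D_\alpha\nw\ge\Dmax$ is obtained by plugging a single two-outcome POVM $(T,I-T)$ on one copy into the measured representation and sending $\alpha\to+\infty$, avoiding the $n$-fold product and hence avoiding tensor additivity of $\Dmax$. The paper also proves the equality of \eqref{Dmax1} and \eqref{Dmax2} separately, essentially by the same rank-one argument you use for (a). The upshot: your version is more systematic in that it makes the limit computation explicit and transparent, but it needs two small lemmas (rank-one optimality for $\Dmax$ over general tests, and $\Dmax$ tensor additivity for trace-class operators in infinite dimensions) that the paper's shorter, $n=1$-only argument does not require; both of these you correctly flag as elementary but needing verification, and both can indeed be done directly or via the spectral truncation you suggest.
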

\begin{proof}
The increasing property in \eqref{D alpha tilde increasing} is
well-known and easy to verify for commuting finite-rank states (i.e., in the finite-dimensional classical setting). The general case follows immediately from this and 
Theorem \ref{thm:reg meas Renyi}.
The first equality in \eqref{Dmax0} is immediate from the increasing property in 
\eqref{D alpha tilde increasing}, and the second equality is trivial by definition. 


Note that the equality in \eqref{Dmax1} is by definition (see \eqref{Dmax def}), and it is 
clear that $D_{\max}(\rho\|\sigma)$ is an upper bound on 
\eqref{Dmax2}. To prove the converse inequality, note first that 
\eqref{Dmax2} is equal to $+\infty$ if $\rho^0\nleq\sigma^0$, and hence for the rest we assume the contrary. 
Let $0<\lambda<\exp(D_{\max}(\rho\|\sigma))$.  By definition, there exists a unit vector 
$\psi\in\hil$ such that $\inner{\psi}{\rho\psi}>\lambda\inner{\psi}{\sigma\psi}$.
In particular, $\inner{\psi}{\rho\psi}>0$, and hence also 
$\inner{\psi}{\sigma\psi}>0$, due to the assumption that $\rho^0\le\sigma^0$. 
Choosing $T:=\pr{\psi}$ shows that \eqref{Dmax2} is lower bounded by 
$\log\lambda$ for
any such $\lambda$, 
and hence it is also lower bounded by $D_{\max}(\rho\|\sigma)$. 
Thus, we get the  equality in \eqref{Dmax2}.

It is also straightforward to verify that the expressions in 
\eqref{Dmax0} are upper bounded by $D_{\max}(\rho\|\sigma)$.
To prove the converse inequality, note that for any test $T$ as in \eqref{Dmax2},
\begin{align*}
D_{\alpha}\nw(\rho\|\sigma)=\oll D_{\alpha}^{\meas}(\rho\|\sigma)
&\ge 
D_{\alpha}\bz(\Tr\rho T,\Tr\rho (I-T)),(\Tr\sigma T,\Tr\sigma (I-T))\jz\\
&\ge
\frac{1}{\alpha-1}\log\left[(\Tr\rho T)^{\alpha}(\Tr\sigma T)^{1-\alpha}\right]
=
\frac{\alpha}{\alpha-1}\log\Tr\rho T-\log\Tr\sigma T,
\end{align*}
whence
\begin{align*}
\lim_{\alpha\to+\infty}D_{\alpha}\nw(\rho\|\sigma)\ge \log\frac{\Tr\rho T}{\Tr\sigma T}\,.
\end{align*}
Taking the supremum over $T$ yields that
$\lim_{\alpha\to+\infty}D_{\alpha}\nw(\rho\|\sigma)$
is lower bounded by \eqref{Dmax2}, which in turn is equal to 
$D_{\max}(\rho\|\sigma)$ by the above.
\end{proof}

\subsection{The Hoeffding anti-divergences}
\label{sec:Hoeffding}

For any $\rho,\sigma\in\B(\hil)\p$, let
\begin{align*}
\psi\nw(\rho\|\sigma|\alpha)&:=
\log Q_{\alpha}\nw(\rho\|\sigma)=(\alpha-1)D_{\alpha}\nw(\rho\|\sigma),\ds\ds\ds\alpha>1,\\
\tilde \psi\nw(\rho\|\sigma|u)&:=
(1-u)\psi\nw\bz\rho\|\sigma|(1-u)\inv\jz,\ds\ds\ds u\in(0,1).
\end{align*}
We will need these quantities to define the Hoeffding anti-divergences,
which will give the strong converse exponent of state discrimination in Section \ref{sec:sc}.

\begin{lemma}\label{lemma:restricted psi conv}
Let $\rho,\sigma\in\B(\hil)\p$ with $\rho^0\le\sigma^0$.
\smallskip

\noindent (i) For any finite-rank projection 
$P\in\bP_f(\hil)_{\rho,\sigma}^+$, 
$\psi\nw(P\rho P\|P\sigma P|\valt)$ 
and $\tilde\psi\nw(P\rho P\|P\sigma P|\valt)$ are finite-valued convex 
functions on $(1,+\infty)$ and $(0,1)$, respectively, and hence they are continuous.
Moreover, 
\begin{align}
\psi\nw(P\rho P\|P\sigma P|1)
:=
\tilde\psi\nw(P\rho P\|P\sigma P|0)
&:=\lim_{u\searrow 0}\tilde\psi\nw(P\rho P\|P\sigma P|u)
=\log\Tr P\rho P,\label{psi 0}\\
\psi\nw(P\rho P\|P\sigma P|+\infty)
:=
\tilde\psi\nw(P\rho P\|P\sigma P|1)
&:=\lim_{u\nearrow 1}\tilde\psi\nw(P\rho P\|P\sigma P|u)
=
\Dmax(P\rho P\|P\sigma P),\label{psi 1}
\end{align}
and the so extended functions $\psi\nw(P\rho P\|P\sigma P|\valt)$ and 
$\tilde\psi\nw(P\rho P\|P\sigma P|\valt)$ are convex and continuous on 
$[1,+\infty]$ and on $[0,1]$, respectively.
\smallskip

\noindent (ii) For every $\alpha\in[1,+\infty]$, 
$P\mapsto \psi\nw(P\rho P\|P\sigma P|u)$
is monotone increasing on $\bP_f(\hil)_{\rho,\sigma}^+$. 

\noindent (iii) For every $u\in[0,1]$, $P\mapsto \tilde\psi\nw(P\rho P\|P\sigma P|u)$
is monotone increasing on $\bP_f(\hil)_{\rho,\sigma}^+$. 
\end{lemma}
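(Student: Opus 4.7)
The plan is to reduce everything to the finite-dimensional setting on $P\hil$ and combine classical finite-dimensional convexity of the sandwiched R\'enyi divergence with the monotonicity already established in Corollary \ref{cor:increasing sequence}.

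First I would check that the restricted operators satisfy a compatible support relation: if $x\in P\hil$ satisfies $P\sigma Px=0$ then $\inner{x}{\sigma x}=0$, hence $\sigma^0 x=0$, and then $\rho^0 x=0$ by the hypothesis $\rho^0\le\sigma^0$, giving $P\rho P x=0$; thus on $P\hil$ one has $(P\rho P)^0\le(P\sigma P)^0$. Since $P\rho P$ and $P\sigma P$ are nonzero finite-rank PSD operators with compatible supports, the standard finite-dimensional theory applies. In particular, $\alpha\mapsto\psi\nw(P\rho P\|P\sigma P|\alpha)=\log Q_\alpha\nw(P\rho P\|P\sigma P)$ is finite and convex on $(1,+\infty)$ by the classical convexity of $(\alpha-1)D_\alpha\nw$ (a Hadamard three-lines argument applied to the representation $Q_\alpha\nw=\|(P\sigma P)^{(1-\alpha)/(2\alpha)}(P\rho P)^{1/2}\|_{2\alpha}^{2\alpha}$ together with Lemma \ref{lemma:Holder}; cf.\ \cite{Renyi_new,WWY}), and hence continuous on that open interval. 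Convexity of $\tilde\psi\nw(P\rho P\|P\sigma P|\cdot)$ on $(0,1)$ is then a perspective argument: writing $g(\alpha):=\psi\nw(P\rho P\|P\sigma P|\alpha)$ one has $\tilde\psi\nw(\cdot|u)=(1-u)g((1-u)\inv)$, and $s\mapsto s\,g(1/s)$ is convex on $(0,1)$ whenever $g$ is convex on $(1,+\infty)$, as $(s\,g(1/s))''=g''(1/s)/s^3\ge 0$; the change of variables $s=1-u$ yields the claim.

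Next, the endpoint limits in \eqref{psi 0}--\eqref{psi 1} follow from finite-dimensional continuous functional calculus. As $\alpha\searrow 1$, $(P\sigma P)^{(1-\alpha)/(2\alpha)}\to(P\sigma P)^0$ in operator norm, whence $Q_\alpha\nw(P\rho P\|P\sigma P)\to\Tr\bz(P\sigma P)^0 P\rho P\jz=\Tr P\rho P$ by the support condition, so $\tilde\psi\nw(P\rho P\|P\sigma P|u)=(1-u)\log Q_{(1-u)\inv}\nw(P\rho P\|P\sigma P)\to\log\Tr P\rho P$ as $u\searrow 0$, proving \eqref{psi 0}. As $\alpha\to+\infty$, Corollary \ref{cor:D alpha tilde mon} applied to the finite-dimensional pair $P\rho P, P\sigma P$ gives $D_\alpha\nw(P\rho P\|P\sigma P)\to\Dmax(P\rho P\|P\sigma P)$; writing $\tilde\psi\nw(P\rho P\|P\sigma P|u)=\frac{\alpha-1}{\alpha}D_\alpha\nw(P\rho P\|P\sigma P)$ with $\alpha=(1-u)\inv$, this tends to $\Dmax(P\rho P\|P\sigma P)$ as $u\nearrow 1$, proving \eqref{psi 1}. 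Continuity of the extended functions on the closed intervals is a standard consequence of convexity on the open interval combined with agreement of the (automatically existing) one-sided limits with the prescribed boundary values.

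For (ii), Corollary \ref{cor:increasing sequence} specialized to $z=\alpha$ (where $\max\{\alpha-1,\alpha/2\}\le\alpha$ holds automatically) gives directly that $P\mapsto Q_\alpha\nw(P\rho P\|P\sigma P)$, hence $\psi\nw(P\rho P\|P\sigma P|\alpha)$, is monotone increasing on $\bP_f(\hil)_{\rho,\sigma}^+$ for every $\alpha\in(1,+\infty)$. The boundary cases are immediate: at $\alpha=1$ one has $\log\Tr P\rho P$, which is obviously increasing in $P$, and at $\alpha=+\infty$, $\Dmax(P\rho P\|P\sigma P)$ is the pointwise limit of the increasing functions $D_\alpha\nw(P\rho P\|P\sigma P)$ and hence itself increasing. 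Part (iii) then follows immediately: for $u\in(0,1)$, the relation $\tilde\psi\nw(\cdot|u)=(1-u)\psi\nw(\cdot|(1-u)\inv)$ with $(1-u)>0$ preserves monotonicity in $P$, while the endpoint cases $u=0,1$ coincide via \eqref{psi 0}--\eqref{psi 1} with the $\alpha=1,+\infty$ cases of (ii). The only ingredient not directly supplied by earlier results is the convexity of $\alpha\mapsto\log Q_\alpha\nw$ in the finite-dimensional case, which is classical and must be imported; everything else is a short deduction from the perspective identity, functional-calculus continuity on $P\hil$, and Corollary \ref{cor:increasing sequence}.
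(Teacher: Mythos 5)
Your proof is correct and follows essentially the same route as the paper's: establish support compatibility on $P\hil$, invoke the finite-dimensional convexity of $\alpha\mapsto\log Q_\alpha^*$, transfer convexity to $\tilde\psi^*$ via the $(1-u)g((1-u)^{-1})$ substitution, compute the endpoints, and read monotonicity from Corollary \ref{cor:increasing sequence}.

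The one stylistic divergence worth noting is in the transfer of convexity from $\psi^*$ to $\tilde\psi^*$. You justify it with the second-derivative formula $(s\,g(1/s))''=g''(1/s)/s^3$, which presupposes $g$ twice differentiable; the paper instead writes $g(\alpha)=\sup_i\{c_i\alpha+d_i\}$ (the supporting-line envelope available for any convex function) and observes that $(1-u)g((1-u)^{-1})=\sup_i\{c_i+d_i(1-u)\}$ is again a supremum of affine functions in $u$. Both arrive at the same conclusion; the envelope argument is more robust since it needs no smoothness hypothesis, whereas yours is fine here because $\alpha\mapsto\psi^*(P\rho P\|P\sigma P|\alpha)$ is smooth for finite-rank $P\rho P$, $P\sigma P$. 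Mentioning the general perspective-function principle (rather than just the second-derivative calculation) would make your argument fully watertight at no extra cost.
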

\begin{proof}
By \cite[Corollary 3.11]{MO}, $\psi\nw(P\rho P\|P\sigma P|\valt)$ is a finite-valued convex 
function on $(1,+\infty)$. Hence, it can be written as 
$\psi\nw(P\rho P\|P\sigma P|\alpha)=\sup_{i\in\I}\{c_i\alpha+d_i\}$, $\alpha\in(1,+\infty)$, 
with some 
$c_i,d_i\in\bR$ and an index set $\I$. This implies that 
$\tilde\psi\nw(P\rho P\|P\sigma P|u)=(1-u)\sup_{i\in\I}\{c_i(1-u)\inv+d_i\}=
\sup_{i\in\I}\{c_i+d_i(1-u)\}$, and therefore 
$\tilde\psi\nw(P\rho P\|P\sigma P|\valt)$ is also convex and finite-valued on $(0,1)$, 
and thus it is continuous as well. 
The limits in \eqref{psi 0}--\eqref{psi 1} follow by a straightforward computation, using in the second limit that 
$\lim_{\alpha\to+\infty}D_{\alpha}\nw(\omega\|\tau)=\Dmax(\omega\|\tau)$ for finite-rank states $\omega,\tau$ 
(see \cite{Renyi_new} or Corollary \ref{cor:D infty}).
Convexity and continuity of the extensions are obvious from the definitions.
Monotonicity in (ii) and (iii) are 
immediate from Corollary \ref{cor:increasing sequence}.
\end{proof}

\begin{cor}\label{cor:psi fa}
For any $\rho,\sigma\in\B(\hil)\p$, the functions
\begin{align*}
\psi\nw(\rho\|\sigma|\alpha)\fa&:=
\begin{cases}
\sup_{P\in\bP_f(\hil)_{\rho,\sigma}^+}\psi\nw(P\rho P\|P\sigma P|\alpha),&\ds\ds \alpha\in[1,+\infty],\\
+\infty,&\ds\ds  \alpha\in(-\infty,1),
\end{cases}\\
\tilde \psi\nw(\rho\|\sigma|u)\fa&:=
\begin{cases}
\sup_{P\in\bP_f(\hil)_{\rho,\sigma}^+}\tilde\psi\nw(P\rho P\|P\sigma P|u),&\ds\ds u\in[0,1],\\
+\infty,&\ds\ds  u\in\bR\setminus[0,1],
\end{cases}\\
\tilde \psi\nw(\rho\|\sigma|u)\ofa&:=
\begin{cases}
\sup_{n\in\bN}\frac{1}{n}\tilde \psi\nw(\rho^{\otimes n}\|\sigma^{\otimes n}|u)\fa,&
\ds\ds\ds u\in[0,1],\\
+\infty,&\ds\ds\ds  u\in\bR\setminus[0,1],
\end{cases}
\end{align*}
are convex and lower semi-continuous on $\bR$
(and on $\bR\cup\{+\infty\}$ in the case of $\psi\nw(\rho\|\sigma|\valt)\fa$).
\end{cor}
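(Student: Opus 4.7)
The plan is to invoke the elementary principle that a pointwise supremum of convex (respectively, lower semi-continuous) functions on $\bR$ is again convex (respectively, lower semi-continuous). This reduces the problem to verifying these two properties for each member of the family being supped over, which is essentially already done in Lemma \ref{lemma:restricted psi conv}(i).

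For $\psi\nw(\rho\|\sigma|\cdot)\fa$, I would fix any $P\in\bP_f(\hil)_{\rho,\sigma}^+$ and first extend the function $\psi\nw(P\rho P\|P\sigma P|\cdot)$ from its natural domain $[1,+\infty]$ to all of $\bR\cup\{+\infty\}$ by declaring it equal to $+\infty$ on $(-\infty,1)$. Convexity of this extension is automatic, since the defining convexity inequality is trivially satisfied whenever at least one of the arguments lies in $(-\infty,1)$ (where the right-hand side is $+\infty$). Lower semi-continuity needs to be checked only at the boundary point $\alpha=1$: by Lemma \ref{lemma:restricted psi conv}(i) the value $\log\Tr P\rho P$ is approached from the right by continuity inside $[1,+\infty]$, while from the left one has $+\infty$, so that $\liminf_{\alpha\to 1}f(\alpha)=f(1)$. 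Taking the supremum over $P$ then preserves both properties and yields the claim for $\psi\nw(\rho\|\sigma|\cdot)\fa$ on $\bR\cup\{+\infty\}$.

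An entirely parallel argument handles $\tilde\psi\nw(\rho\|\sigma|\cdot)\fa$: for each $P$, the function $\tilde\psi\nw(P\rho P\|P\sigma P|\cdot)$ is finite-valued, convex and continuous on $[0,1]$ by Lemma \ref{lemma:restricted psi conv}(i) (including at the boundary points $u=0,1$), so its extension by $+\infty$ on $\bR\setminus[0,1]$ is again convex and lower semi-continuous, and the supremum over $P\in\bP_f(\hil)_{\rho,\sigma}^+$ preserves both properties.

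Finally, for $\tilde\psi\nw(\rho\|\sigma|\cdot)\ofa$ I would apply the result of the previous paragraph to each pair $(\rho^{\otimes n},\sigma^{\otimes n})$ to conclude that $u\mapsto\frac{1}{n}\tilde\psi\nw(\rho^{\otimes n}\|\sigma^{\otimes n}|u)\fa$ is convex and lower semi-continuous on $\bR$, and then take the supremum over $n\in\bN$, which preserves both properties. The only real subtlety in the whole argument is ensuring that extending by $+\infty$ off the natural domain does not break lower semi-continuity at the endpoints; this is exactly the content of the boundary continuity assertions \eqref{psi 0}--\eqref{psi 1} in Lemma \ref{lemma:restricted psi conv}(i), so no new work is required.
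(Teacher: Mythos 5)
Your approach matches the paper's: invoke the fact that a pointwise supremum of convex, respectively lower semi-continuous, functions is again convex, respectively lower semi-continuous, with Lemma \ref{lemma:restricted psi conv}(i) supplying these properties for each member of the family, and then verify that extending by $+\infty$ off the natural domain preserves both. You in fact spell out the endpoint verification more carefully than the paper does (the paper says only that this extension ``is trivial to verify''). There is, however, one omission: you invoke Lemma \ref{lemma:restricted psi conv}(i) without checking its standing hypothesis $\rho^0\le\sigma^0$, and this can fail for general $\rho,\sigma\in\B(\hil)\p$. When $\rho^0\nleq\sigma^0$, one can pick a unit vector $\psi$ with $\sigma^0\psi=0$ and $\rho^0\psi\ne 0$ and a unit vector $\phi$ with $\sigma^0\phi=\phi$; then $P:=\pr{\psi}+\pr{\phi}$ lies in $\bP_f(\hil)_{\rho,\sigma}^+$ but $(P\rho P)^0\nleq(P\sigma P)^0$, so $\psi\nw(P\rho P\|P\sigma P|\alpha)=+\infty$ for every $\alpha>1$ and the finite-valuedness asserted in Lemma \ref{lemma:restricted psi conv}(i) breaks down for this pair. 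The paper preempts this by first disposing of the case $\rho^0\nleq\sigma^0$ in a single opening line of the proof, before Lemma \ref{lemma:restricted psi conv} is invoked. The repair is easy --- for such $P$ the extended function equals $+\infty$ away from a single finite endpoint value, which is still convex and lower semi-continuous, so the supremum argument goes through --- but as written your proof silently applies a lemma whose hypothesis has not been established.
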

\begin{proof}
If $\rho^0\nleq\sigma^0$ then all three functions are easily seen to be constant $+\infty$
on $\bR$, and hence for the rest we assume that $\rho^0\le\sigma^0$.
Since the supremum of convex functions is again convex, and the supremum of lower semi-continuous functions is again lower semi-continuous, both properties hold for the above functions on $[1,+\infty]$,
$[0,1]$, and $[0,1]$, respectively, according to Lemma \ref{lemma:restricted psi conv}, and it is trivial to verify that the same is true on the whole of $\bR$.
\end{proof}

\begin{rem}\label{rem:psi tilde fa endpoint}
It is clear that 
\begin{align*}
\tilde \psi\nw(\rho\|\sigma|u)\fa&=
(1-u)\psi\nw(\rho\|\sigma|1/(1-u))\fa
=
(1-u)\log Q_{1/(1-u)}\nw(\rho\|\sigma)\fa,\ds\ds\ds u\in(0,1),
\end{align*}
and \eqref{psi 0}--\eqref{psi 1} yield
\begin{align}
\tilde\psi\nw(\rho \|\sigma |0)\fa
&=\log\Tr\rho=
\tilde\psi\nw(\rho \|\sigma |0)\ofa,\label{psi tilde fa 0}\\
\tilde\psi\nw(\rho \|\sigma |1)\fa
&=
\Dmax(\rho \|\sigma )
=
\tilde\psi\nw(\rho \|\sigma |1)\ofa.\label{psi tilde fa 1}
\end{align}
This motivates to define
\begin{align}
\tilde\psi\nw(\rho \|\sigma |0)
&:=\log\Tr\rho,\label{psi tilde 0}\\
\tilde\psi\nw(\rho \|\sigma |1)
&:=
\Dmax(\rho \|\sigma ).\label{psi tilde 1}
\end{align}
\end{rem}

\begin{rem}\label{rem:psi equalities}
By Corollary \ref{cor:psi fa}, if $\rho$ and $\sigma$ are such that 
$Q_{\alpha}\nw(\rho\|\sigma)=
Q_{\alpha}\nw(\rho\|\sigma)\fa$, $\alpha>1$, then 
$\psi\nw(\rho \|\sigma |\valt)$ and
$\tilde\psi\nw(\rho \|\sigma |\valt)$ are convex and lower semi-continuous 
on $(1,+\infty)$ and on $(0,1)$, respectively. In particular, this holds when 
both $\rho$ and $\sigma$ are trace-class, according to Proposition \ref{cor:fa equality}.
\end{rem}
\medskip

Recall the definition of the finite-dimensional approximation of the sandwiched R\'enyi divergences as a special case of Definition \ref{def:finitedim appr}:
For $\rho,\sigma\in\B(\hil)\p$, 
\begin{align*}
D_{\alpha}\nw(\rho\|\sigma)\fa
&:=
\sup_{P\in\bP_f(\hil)_{\rho,\sigma}^+}D_{\alpha}\nw(P\rho P\|P\sigma P)
=\frac{1}{\alpha-1}\psi\nw(\rho\|\sigma|\alpha)\fa.
\end{align*}
Analogously, we define
\begin{align*}
D_{\alpha}\nw(\rho\|\sigma)\ofa
&:=
\sup_{n\in\bN}\frac{1}{n}D_{\alpha}\nw(\rho^{\otimes n}\|\sigma^{\otimes n})\fa
=\frac{1}{\alpha-1}\psi\nw(\rho\|\sigma|\alpha)\ofa.
\end{align*}

\begin{defin}\label{def:Hanti def}
For $\rho,\sigma\in\B(\hil)\p$ and $r\in\bR$, let 
\begin{align}
H_r\nw(\rho\|\sigma)
&:=
\sup_{\alpha>1}\frac{\alpha-1}{\alpha}
\left[r-D_{\alpha}\nw(\rho\|\sigma)\right]
=
\sup_{u\in(0,1)}\left\{ur-\tilde\psi\nw(\rho\|\sigma|u)\right\},\label{Hanti def01}\\
H_r\nw(\rho\|\sigma)\fa
&:=
\sup_{\alpha>1}\frac{\alpha-1}{\alpha}
\left[r-D_{\alpha}\nw(\rho\|\sigma)\fa\right]
=
\sup_{u\in(0,1)}\left\{ur-\tilde\psi\nw(\rho\|\sigma|u)\fa\right\},\label{Hanti def02}\\
H_r\nw(\rho\|\sigma)\ofa
&:=
\sup_{\alpha>1}\frac{\alpha-1}{\alpha}
\left[r-D_{\alpha}\nw(\rho\|\sigma)\ofa\right]
=
\sup_{u\in(0,1)}\left\{ur-\tilde\psi\nw(\rho\|\sigma|u)\ofa\right\},\label{Hanti def03}\\
\hat H_r\nw(\rho\|\sigma)
&:=
\sup_{u\in[0,1]}\left\{ur-\tilde\psi\nw(\rho\|\sigma|u)\right\},\label{Hanti def2}\\
\hat H_r\nw(\rho\|\sigma)\fa
&:=
\max_{u\in[0,1]}\left\{ur-\tilde\psi\nw(\rho\|\sigma|u)\fa\right\}
=
\max_{u\in\bR}\left\{ur-\tilde\psi\nw(\rho\|\sigma|u)\fa\right\},\label{Hanti def3}\\
\hat H_r\nw(\rho\|\sigma)\ofa
&:=
\max_{u\in[0,1]}\left\{ur-\tilde\psi\nw(\rho\|\sigma|u)\ofa\right\}
=
\max_{u\in\bR}\left\{ur-\tilde\psi\nw(\rho\|\sigma|u)\ofa\right\}.\label{Hanti def4}
\end{align}
Here, $H_r\nw(\rho\|\sigma)$ and $\hat H_r\nw(\rho\|\sigma)$ are two different versions of 
the \ki{Hoeffding anti-divergence} of $\rho$ and $\sigma$ 
with parameter $r\in\bR$, and the rest of the quantities are different finite-dimensional approximations.
\end{defin}

\begin{rem}
$H_r\nw$ and $\hat H_r\nw$ are called anti-divergences because for trace-class operators 
they are monotone 
non-decreasing under quantum operations; 
this is immediate from the monotone non-increasing property of $D_{\alpha}\nw$ under
such maps for $\alpha> 1$; see 
\cite[Theorem 14]{BST}, \cite[Theorem 3.14]{Jencova_NCLp}, or 
Theorem \ref{thm:mon}.
\end{rem}

The Hoeffding anti-divergences are defined as Legendre-Fenchel transforms
(polar functions). For some of them this transformation can be reversed as follows; this will be used in Theorem \ref{thm:mon}.

\begin{lemma}\label{lemma:Hanti bipolar}
For any $\rho,\sigma\in\B(\hil)\p$, 
\begin{align}
\tilde\psi\nw(\rho\|\sigma|u)\fa&=
\sup_{r\in\bR}\left\{ur- H_r\nw(\rho\|\sigma)\fa\right\},
\ds\ds\ds u\in\bR\setminus\{0,1\},\label{Hanti bipolar0}\\
\tilde\psi\nw(\rho\|\sigma|u)\fa&=
\sup_{r\in\bR}\left\{ur-\hat H_r\nw(\rho\|\sigma)\fa\right\},
\ds\ds\ds u\in\bR\label{Hanti bipolar1}\\
\tilde\psi\nw(\rho\|\sigma|u)\ofa&=
\sup_{r\in\bR}\left\{ur-\hat H_r\nw(\rho\|\sigma)\ofa\right\},
\ds\ds\ds u\in\bR.\label{Hanti bipolar2}
\end{align}
\end{lemma}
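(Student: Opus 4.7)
The plan is to deduce all three identities from the Fenchel--Moreau biconjugate theorem on $\bR$: a function $f:\bR\to\bR\cup\{+\infty\}$ that is proper, convex, and lower semi-continuous satisfies $f^{**}=f$. The key observation is that \eqref{Hanti def3} and \eqref{Hanti def4} exhibit $\hat H_r\nw(\rho\|\sigma)\fa$ and $\hat H_r\nw(\rho\|\sigma)\ofa$ exactly as the Legendre--Fenchel conjugates of $\tilde\psi\nw(\rho\|\sigma|\valt)\fa$ and $\tilde\psi\nw(\rho\|\sigma|\valt)\ofa$ on $\bR$, so once the hypotheses of Fenchel--Moreau are verified for these two functions, \eqref{Hanti bipolar1} and \eqref{Hanti bipolar2} will follow immediately.

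The first step is therefore to check that $f:=\tilde\psi\nw(\rho\|\sigma|\valt)\fa$ and $g:=\tilde\psi\nw(\rho\|\sigma|\valt)\ofa$ are proper, convex, and lsc on $\bR$. Convexity and lower semi-continuity are given directly by Corollary \ref{cor:psi fa}. Properness means that neither function is identically $+\infty$ nor attains $-\infty$; on $(0,1)$ this reduces to $Q_{\alpha}\nw(\rho\|\sigma)\fa>0$, which follows from Lemma \ref{lemma:Q poz} together with the monotonicity in Corollary \ref{cor:increasing sequence}, while at the endpoints $u\in\{0,1\}$ the values $\log\Tr\rho$ and $\Dmax(\rho\|\sigma)$ fixed by \eqref{psi tilde fa 0}--\eqref{psi tilde fa 1} lie in $(-\infty,+\infty]$. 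With properness in hand, Fenchel--Moreau yields $f^{**}=f$ and $g^{**}=g$, which are \eqref{Hanti bipolar1} and \eqref{Hanti bipolar2} respectively.

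For \eqref{Hanti bipolar0}, I would observe first that $H_r\nw(\rho\|\sigma)\fa\le\hat H_r\nw(\rho\|\sigma)\fa$ for every $r$, since the supremum in \eqref{Hanti def02} is over a strictly smaller index set than that in \eqref{Hanti def3}. Combined with \eqref{Hanti bipolar1} this gives
\begin{align*}
\sup_{r\in\bR}\left\{ur-H_r\nw(\rho\|\sigma)\fa\right\}
&\ge
\sup_{r\in\bR}\left\{ur-\hat H_r\nw(\rho\|\sigma)\fa\right\}\\
&=\tilde\psi\nw(\rho\|\sigma|u)\fa
\end{align*}
for every $u\in\bR$. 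The reverse inequality on $u\in(0,1)$ is a one-line read of the definition of $H_r\nw\fa$: plugging that specific $u$ into the supremum in \eqref{Hanti def02} gives $H_r\nw(\rho\|\sigma)\fa\ge ur-\tilde\psi\nw(\rho\|\sigma|u)\fa$ for every $r$, so that $ur-H_r\nw(\rho\|\sigma)\fa\le\tilde\psi\nw(\rho\|\sigma|u)\fa$, and one takes the supremum over $r$. For $u\in\bR\setminus[0,1]$ both sides equal $+\infty$ (the left by Corollary \ref{cor:psi fa}, the right by the chain of inequalities just established), so equality holds. This covers all $u\in\bR\setminus\{0,1\}$, exactly as stated.

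No serious analytic obstacle is expected, since Fenchel--Moreau is classical; the main thing requiring care is the properness check above, which relies on consistently using the endpoint conventions of Remark \ref{rem:psi tilde fa endpoint} together with the strict positivity from Lemma \ref{lemma:Q poz}. The exclusion of $u\in\{0,1\}$ from \eqref{Hanti bipolar0}, in contrast to \eqref{Hanti bipolar1}--\eqref{Hanti bipolar2}, is an honest reflection of the argument: the open-interval supremum defining $H_r\nw\fa$ has no access to the endpoint maximizers, so the resulting partial conjugate can only recover $\tilde\psi\nw\fa$ at interior $u$, whereas the full conjugate $\hat H_r\nw\fa$ over $[0,1]$ recovers it everywhere on $\bR$.
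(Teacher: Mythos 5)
Your proof is correct, and for \eqref{Hanti bipolar1}--\eqref{Hanti bipolar2} it follows the same route as the paper: recognize \eqref{Hanti def3}--\eqref{Hanti def4} as Legendre--Fenchel conjugates of the convex lsc functions from Corollary~\ref{cor:psi fa} and invoke the biconjugate (Fenchel--Moreau / bipolar) theorem. Your explicit properness check is something the paper does not spell out; it is a welcome addition, though your argument that $\tilde\psi\nw(\rho\|\sigma|u)\fa>-\infty$ should go through $Q_{\alpha}\nw(P\rho P\|P\sigma P)>0$ for each fixed $P\in\bP_f(\hil)_{\rho,\sigma}^+$ (Lemma~\ref{lemma:Q poz}) and then take the supremum, rather than citing Lemma~\ref{lemma:Q poz} directly for $Q_{\alpha}\nw\fa$. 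Also, you only verify that the function never hits $-\infty$ and that the endpoint values are finite or $+\infty$; you do not address the degenerate case $\tilde\psi\nw\fa\equiv+\infty$ (which can occur, e.g., when $\rho^0\nleq\sigma^0$ and $\Tr\rho=+\infty$). This is harmless because then both sides of \eqref{Hanti bipolar1} are $+\infty$, but it is worth a sentence.

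Where you genuinely depart from the paper is \eqref{Hanti bipolar0}. The paper observes that $r\mapsto H_r\nw(\rho\|\sigma)\fa$ is the conjugate of the modified function $f(u):=\tilde\psi\nw(\rho\|\sigma|u)\fa+(+\infty)\egy_{\{0,1\}}(u)$ and then argues that the biconjugate of $f$ is the largest convex lsc minorant of $f$, which agrees with $\tilde\psi\nw\fa$ away from $\{0,1\}$; this requires a small argument about why the $+\infty$ perturbation at two points does not change the convex lsc hull on the open interval. Your argument avoids this entirely by proving the two inequalities directly: one from $H_r\nw\fa\le\hat H_r\nw\fa$ combined with \eqref{Hanti bipolar1}, the other as a one-line consequence of the Fenchel--Young-type inequality built into the definition \eqref{Hanti def02}. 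This is more elementary, makes the reason for excluding $u\in\{0,1\}$ completely transparent, and yields the same conclusion.
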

\begin{proof}
By Corollary \ref{cor:psi fa}, $\tilde\psi\nw(\rho\|\sigma|\valt)\fa$ 
and $\tilde\psi\nw(\rho\|\sigma|\valt)\ofa$ are convex and lower semi-continuous on $\bR$, and hence \eqref{Hanti bipolar1}--\eqref{Hanti bipolar2} follow from 
\eqref{Hanti def3}--\eqref{Hanti def4} according to 
the bipolar theorem (see, e.g., \cite[Proposition 4.1]{EkelandTemam}).
Likewise, $r\mapsto H_r\nw(\rho\|\sigma)\fa$ is the polar function of 
$f(u):=\tilde\psi\nw(\rho\|\sigma|u)\fa+(+\infty)\egy_{\{0,1\}}(u)$, 
$u\in\bR$, and hence, by \cite[Proposition 4.1]{EkelandTemam}, its polar function is
the largest convex and lower semi-continuous minorant of $f$, which is exactly 
$\tilde\psi\nw(\rho\|\sigma|\valt)\fa$.
This proves \eqref{Hanti bipolar0}.
\end{proof}
\medskip

The different variants of the Hoeffding anti-divergence defined above appear naturally in different bounds on the strong converse exponents; see Section \ref{sec:sc}.
Our next goal is to explore their relations; in particular, to find sufficient conditions for some or all of them to coincide. Note that this is not always the case, as shown in 
Examples \ref{ex:infdim states}--\ref{ex:infdim states2}.

\begin{lemma}\label{lemma:fa equality}
Let $\rho,\sigma\in\B(\hil)\p$. For any $u\in[0,1]$, and any $r\in\bR$, 
\begin{align*}
\tilde \psi\nw(\rho\|\sigma|u)\fa\le 
\tilde \psi\nw(\rho\|\sigma|u)\ofa\le 
\tilde \psi\nw(\rho\|\sigma|u),\ds\ds\ds\ds
\begin{array}[t]{lllll}
H_r\nw(\rho\|\sigma)  &\le& H_r\nw(\rho\|\sigma)\ofa &\le & H_r\nw(\rho\|\sigma)\fa \\
\ds\ds    \vertleq        &   &   \ds\ds \vertleq &     & \ds\ds \vertleq   \\
\hat H_r\nw(\rho\|\sigma) &\le& \hat H_r\nw(\rho\|\sigma)\ofa & \le & \hat H_r\nw(\rho\|\sigma)\fa,
\end{array}
\end{align*}
and
\begin{align}
Q_{\alpha}\nw(\rho\|\sigma)\fa=Q_{\alpha}\nw(\rho\|\sigma),\ds\ds\alpha>1
&\ds\ds\iff\ds\ds
\tilde \psi\nw(\rho\|\sigma|u)\fa= \tilde \psi\nw(\rho\|\sigma|u),\ds\ds u\in[0,1]
\label{fa approximation eq1}\\
&\ds\ds\imp\ds\ds
\begin{cases}
H_r\nw(\rho\|\sigma)\fa = H_r\nw(\rho\|\sigma),\ds\ds r\in\bR,\\
\hat H_r\nw(\rho\|\sigma)\fa =\hat H_r\nw(\rho\|\sigma),\ds\ds r\in\bR.
\end{cases}
\label{fa approximation eq2}
\end{align}
In particular, 
if $\rho$ and $\sigma$ are trace-class, or $\sigma$ is compact and $\rho\in\B^{\infty}(\hil,\sigma)$, then all equalities in 
\eqref{fa approximation eq1}--\eqref{fa approximation eq2} hold.
\end{lemma}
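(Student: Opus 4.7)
The plan is to unwind the statement into three groups of claims — the chain of inequalities on $\tilde\psi\nw$ and the induced diagram for $H_r\nw$ and $\hat H_r\nw$; the biconditional \eqref{fa approximation eq1} together with the implication \eqref{fa approximation eq2}; and the ``in particular'' clause — and dispatch each by a short formal computation, with Proposition \ref{cor:fa equality} providing the only substantive input.

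First I would settle the chain $\tilde\psi\nw\fa \le \tilde\psi\nw\ofa \le \tilde\psi\nw$ on $[0,1]$. The left inequality is just the $n = 1$ term in the supremum defining $\tilde\psi\nw(\rho\|\sigma|u)\ofa$. For the right inequality, multiplicativity of $Q_\alpha\nw$ on tensor products (Lemma \ref{lemma:tensor product}) gives additivity of $\tilde\psi\nw$, so $\tilde\psi\nw(\rho^{\otimes n}\|\sigma^{\otimes n}|u) = n\,\tilde\psi\nw(\rho\|\sigma|u)$; combining with the bound $\tilde\psi\nw(\rho^{\otimes n}\|\sigma^{\otimes n}|u)\fa \le \tilde\psi\nw(\rho^{\otimes n}\|\sigma^{\otimes n}|u)$ built into Lemma \ref{lemma:fa equiv}, division by $n$ and a supremum over $n$ yield the claim. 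From here the six-fold diagram for $H_r\nw$ and $\hat H_r\nw$ is purely formal: the polar (Legendre) transform reverses inequalities on the base functions, while the vertical $H_r\nw \le \hat H_r\nw$ inequalities hold because the ``hat'' variants take the supremum over $[0,1]$ rather than $(0,1)$.

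Next I address the biconditional \eqref{fa approximation eq1}. Via the substitution $\alpha = 1/(1-u)$ one has
\[
\tilde\psi\nw(\rho\|\sigma|u) = (1-u)\log Q_{1/(1-u)}\nw(\rho\|\sigma)
\]
and the same with the $\fa$-superscript, so the asserted equality on $(0,1)$ is literally the claim $Q_\alpha\nw(\rho\|\sigma)\fa = Q_\alpha\nw(\rho\|\sigma)$ for every $\alpha > 1$. At the endpoints $u \in \{0,1\}$ no separate argument is needed: the conventions \eqref{psi tilde 0}--\eqref{psi tilde 1} coincide with the values $\log\Tr\rho$ and $\Dmax(\rho\|\sigma)$ identified in \eqref{psi tilde fa 0}--\eqref{psi tilde fa 1} for the $\fa$-variant, so both sides agree there by definition. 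Given the interior equivalence, the implication \eqref{fa approximation eq2} is immediate from the polar definitions \eqref{Hanti def01}--\eqref{Hanti def2} and \eqref{Hanti def3} of the Hoeffding anti-divergences and their $\fa$ counterparts.

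Finally, the ``in particular'' clause reduces to verifying $Q_\alpha\nw(\rho\|\sigma)\fa = Q_\alpha\nw(\rho\|\sigma)$ for every $\alpha > 1$ in the two indicated regimes. For trace-class $\rho$ and $\sigma$ this is exactly Proposition \ref{cor:fa equality}. If instead $\sigma$ is compact and $\rho \in \B^\infty(\hil,\sigma)$, then Remark \ref{rem:smaller alpha} places $\rho$ in $\B^\alpha(\hil,\sigma)$ for every $\alpha > 1$, so Proposition \ref{cor:fa equality} applies pointwise in $\alpha$. I do not anticipate any real obstacle: since Proposition \ref{cor:fa equality} has absorbed the analytic work, what remains is bookkeeping between the three variants $\fa$, $\ofa$, and unadorned, plus checking the compatibility of the endpoint conventions from Remark \ref{rem:psi tilde fa endpoint} — both of which are verifications rather than genuine difficulties.
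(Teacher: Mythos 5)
Your proposal is correct and follows the same route as the paper's own (very terse) proof: it unwinds the definitions, uses \eqref{fa smaller} and the multiplicativity from Lemma~\ref{lemma:tensor product} to get the $\tilde\psi\nw$ chain, reads off the $H_r\nw$/$\hat H_r\nw$ diagram from the order-reversing nature of the polar transforms, and invokes Proposition~\ref{cor:fa equality} (together with Remark~\ref{rem:smaller alpha} in the compact-$\sigma$ case) for the ``in particular'' clause. You have simply made explicit the steps the paper compresses into ``immediate from the definitions.''
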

\begin{proof}
The inequalities are immediate from \eqref{fa smaller} and the definitions
of the given quantities. 
The equivalence in \eqref{fa approximation eq1} is trivial by definition, as is the 
implication in \eqref{fa approximation eq2}.
The last assertion follows from Proposition \ref{cor:fa equality}.
\end{proof}

\begin{lemma}\label{lemma:fa equality2}
Let $\rho,\sigma\in\B(\hil)\p$ be such that 
$D_{\alpha_0}\nw(\rho\|\sigma)<+\infty$ 
(equivalently, $\rho\in\L^{\alpha_0}(\hil,\sigma)$)
for some $\alpha_0\in(1,+\infty)$. Then 
\begin{align}
H_r\nw(\rho\|\sigma)\fa=\hat H_r\nw(\rho\|\sigma)\fa,\ds\ds\ds r\in\bR.
\label{Hanti max}
\end{align}
\end{lemma}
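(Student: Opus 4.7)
The inequality $H_r\nw(\rho\|\sigma)\fa \le \hat H_r\nw(\rho\|\sigma)\fa$ is immediate from the definitions \eqref{Hanti def02} and \eqref{Hanti def3}, since the sup in the latter is taken over a larger set. My approach for the reverse inequality is to exploit concavity/upper semi-continuity of the objective to show that any boundary maximum is already approached by interior points.

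Write $g(u) := ur - \tilde\psi\nw(\rho\|\sigma|u)\fa$ for $u\in[0,1]$. By Corollary \ref{cor:psi fa}, $\tilde\psi\nw(\rho\|\sigma|\cdot)\fa$ is convex and lower semi-continuous on $[0,1]$, so $g$ is concave and upper semi-continuous on the compact set $[0,1]$; hence it attains its maximum at some $u^*\in[0,1]$, and $g(u^*) = \hat H_r\nw(\rho\|\sigma)\fa$. When $u^*\in(0,1)$ the two quantities trivially agree, so the content of the lemma is to handle the case when the maximum is only attained at $u=0$ or $u=1$. This is the only real obstacle, and the role of the hypothesis $D_{\alpha_0}\nw(\rho\|\sigma)<+\infty$ is exactly to furnish an interior point where $\tilde\psi\nw\fa$ is finite, which is what concavity then needs to push the interior supremum up to the boundary value.

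Setting $u_0 := 1 - 1/\alpha_0 \in (0,1)$, Lemma \ref{lemma:fa equality} together with the definition yields
\begin{align*}
\tilde\psi\nw(\rho\|\sigma|u_0)\fa \le \tilde\psi\nw(\rho\|\sigma|u_0) = \frac{\alpha_0-1}{\alpha_0}\,D_{\alpha_0}\nw(\rho\|\sigma) < +\infty,
\end{align*}
and restricting the supremum in Definition \ref{def:finitedim appr} to a single $P\in\bP_f(\hil)_{\rho,\sigma}^+$ gives a finite lower bound; thus $g(u_0)\in\bR$.

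Suppose $u^* = 0$. Then $g(0) = g(u^*)\ge g(u_0) > -\infty$, and since the maximum is attained, $g(0)\in\bR$. Concavity of $g$ gives
\begin{align*}
g(u) \ge (1 - u/u_0)\,g(0) + (u/u_0)\,g(u_0),\ds\ds u\in (0, u_0],
\end{align*}
and letting $u\to 0^+$ yields $\sup_{u\in(0,1)}g(u)\ge \liminf_{u\to 0^+} g(u) \ge g(0) = \hat H_r\nw(\rho\|\sigma)\fa$. The case $u^* = 1$ is entirely symmetric, using the analogous concavity inequality
\begin{align*}
g(u) \ge \frac{1-u}{1-u_0}\,g(u_0) + \frac{u-u_0}{1-u_0}\,g(1),\ds\ds u\in[u_0,1),
\end{align*}
and letting $u\to 1^-$. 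Combining both cases yields $H_r\nw(\rho\|\sigma)\fa \ge \hat H_r\nw(\rho\|\sigma)\fa$, establishing \eqref{Hanti max}.
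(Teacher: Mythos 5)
Your proposal is correct and takes essentially the same approach as the paper's proof: both rely on the convexity of $\tilde\psi\nw(\rho\|\sigma|\cdot)\fa$ together with its finiteness at the interior point $u_0=(\alpha_0-1)/\alpha_0$ (supplied by the hypothesis $D_{\alpha_0}\nw<+\infty$) to show that the boundary values at $u=0$ and $u=1$ are already approached from within $(0,1)$. The only difference is organizational — you split into cases according to where the concave objective attains its maximum, whereas the paper establishes continuity of the objective on $[0,u_0]$ and $[u_0,1]$ directly — but the underlying convexity argument is identical.
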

\begin{proof}
It is enough to prove that 
\begin{align}
\sup_{u\in[0,1)}\left\{ur-\tilde\psi\nw(\rho\|\sigma|u)\fa\right\}
=H_r\nw(\rho\|\sigma)\fa
=
\sup_{u\in(0,1]}\left\{ur-\tilde\psi\nw(\rho\|\sigma|u)\fa\right\}.
\end{align}
We prove the first equality, as the second one follows the same way. 
If $\tilde\psi\nw(\rho\|\sigma|0)\fa=+\infty$ then there is nothing to prove, and hence we assume the contrary. Also by assumption, 
\begin{align*}
+\infty>(\alpha_0-1)D_{\alpha_0}\nw(\rho\|\sigma)=\psi\nw(\rho\|\sigma|\alpha_0)
\ge
\psi\nw(\rho\|\sigma|\alpha_0)\fa
=
\frac{\tilde\psi\nw(\rho\|\sigma|u_0)\fa}{1-u_0},
\end{align*}
where $u_0:=(\alpha_0-1)/\alpha_0$. 
By Corollary \ref{cor:psi fa}, $\tilde\psi\nw(\rho\|\sigma|\valt)\fa$ is 
convex on $[0,1]$, and finiteness at $0$ and $u_0$ implies
$\tilde\psi\nw(\rho\|\sigma|u)\fa<+\infty$, $u\in[0,u_0]$. 
By Lemma \ref{lemma:Q poz}, we also have 
$\tilde\psi\nw(\rho\|\sigma|u)\fa>-\infty$, $u\in[0,u_0]$. 
Hence, $u\mapsto ur-\tilde\psi\nw(\rho\|\sigma|u)\fa$ is a 
finite-valued concave and upper semi-continuous function on $[0,u_0]$, 
whence it is also continuous  on $[0,u_0]$. This proves the asserted equality.
\end{proof}

\begin{prop}\label{prop:Hanti equality}
Let $\rho,\sigma\in\B(\hil)\p$ be such that 
$D_{\alpha_0}\nw(\rho\|\sigma)<+\infty$ 
for some $\alpha_0\in(1,+\infty)$, and
$Q_{\alpha}\nw(\rho\|\sigma)\fa=Q_{\alpha}\nw(\rho\|\sigma)$, $\alpha>1$. 
Then, for every $r\in\bR$,
\begin{align}\label{Hanti all eq}
\begin{array}[t]{lllll}
H_r\nw(\rho\|\sigma)  &=& H_r\nw(\rho\|\sigma)\ofa &= & H_r\nw(\rho\|\sigma)\fa \\
\ds\ds    \verteq        &   &   \ds\ds \verteq &     & \ds\ds \verteq   \\
\hat H_r\nw(\rho\|\sigma) &=& \hat H_r\nw(\rho\|\sigma)\ofa &= & \hat H_r\nw(\rho\|\sigma)\fa.
\end{array}
\end{align}
\end{prop}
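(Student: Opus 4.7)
The plan is to assemble \eqref{Hanti all eq} from the two preceding lemmas together with the inequality table established in Lemma \ref{lemma:fa equality}, so the proof is almost bookkeeping.

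First, I would invoke Lemma \ref{lemma:fa equality}: the assumption $Q_{\alpha}\nw(\rho\|\sigma)\fa=Q_{\alpha}\nw(\rho\|\sigma)$ for all $\alpha>1$ is precisely the left-hand side of \eqref{fa approximation eq1}, so its implication \eqref{fa approximation eq2} yields
\begin{align*}
H_r\nw(\rho\|\sigma)=H_r\nw(\rho\|\sigma)\fa,\ds\ds
\hat H_r\nw(\rho\|\sigma)=\hat H_r\nw(\rho\|\sigma)\fa,\ds\ds r\in\bR.
\end{align*}
This collapses the left and the right columns of the diagram \eqref{Hanti all eq}. Next, the finiteness assumption $D_{\alpha_0}\nw(\rho\|\sigma)<+\infty$ is exactly the hypothesis of Lemma \ref{lemma:fa equality2}, which supplies the horizontal identification at the top right:
\begin{align*}
H_r\nw(\rho\|\sigma)\fa=\hat H_r\nw(\rho\|\sigma)\fa,\ds\ds r\in\bR.
\end{align*}
Chaining these three equalities gives $H_r\nw(\rho\|\sigma)=H_r\nw(\rho\|\sigma)\fa=\hat H_r\nw(\rho\|\sigma)\fa=\hat H_r\nw(\rho\|\sigma)$, establishing the four corner entries of \eqref{Hanti all eq}.

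Finally, for the middle column I would invoke the chain of inequalities stated in Lemma \ref{lemma:fa equality}:
\begin{align*}
H_r\nw(\rho\|\sigma)\le H_r\nw(\rho\|\sigma)\ofa\le H_r\nw(\rho\|\sigma)\fa,
\ds\ds
\hat H_r\nw(\rho\|\sigma)\le \hat H_r\nw(\rho\|\sigma)\ofa\le \hat H_r\nw(\rho\|\sigma)\fa.
\end{align*}
Since the two outer entries of each of these chains have just been shown to agree, a squeeze yields $H_r\nw(\rho\|\sigma)\ofa=H_r\nw(\rho\|\sigma)$ and $\hat H_r\nw(\rho\|\sigma)\ofa=\hat H_r\nw(\rho\|\sigma)$, which completes the proof. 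There is really no serious obstacle here; the slight care required is only to keep track of where the two separate hypotheses enter (the Q-approximation hypothesis is needed for the vertical collapses via Lemma \ref{lemma:fa equality}, while the finiteness hypothesis is needed to remove the potential drop at the endpoints $u\in\{0,1\}$ via Lemma \ref{lemma:fa equality2}), and to note that the squeeze for the regularized quantities does not require any separate argument about tensor powers.
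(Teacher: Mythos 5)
Your proof is correct and follows exactly the same route as the paper's, which disposes of the proposition with the single line ``Immediate from Lemmas \ref{lemma:fa equality} and \ref{lemma:fa equality2}''; you have simply spelled out the bookkeeping that this entails (collapse of the outer columns via \eqref{fa approximation eq2}, the vertical equality $H_r\nw(\rho\|\sigma)\fa=\hat H_r\nw(\rho\|\sigma)\fa$ from Lemma \ref{lemma:fa equality2}, and the squeeze on the $\ofa$-quantities). The only nitpick is a slip of terminology: you call $H_r\nw(\rho\|\sigma)\fa=\hat H_r\nw(\rho\|\sigma)\fa$ a ``horizontal identification'' when it is the vertical one in the rightmost column; the substance is unaffected.
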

\begin{proof}
Immediate from Lemmas \ref{lemma:fa equality} and \ref{lemma:fa equality2}.
\end{proof}

\begin{rem}
Some further properties of, and relations among, 
the different Hoeffding anti-divergences are given in 
Appendix \ref{sec:further}. While these are not used in the rest of the paper, they 
might give some extra insight into the different bounds given in 
Proposition \ref{prop:sc achievability}.
\end{rem}
\medskip

We close this section with some statements on the possible values of the Hoeffding 
anti-divergences. For these, we will need the notion of the 
\ki{Umegaki relative entropy} \cite{Umegaki}. For two 
finite-rank PSD operators $\rho,\sigma\in\B(\hil)\p$, it is defined as
\begin{align*}
D(\rho\|\sigma):=\begin{cases}
\Tr\rho(\logn\rho-\logn\sigma),&\rho^0\le\sigma^0,\\
+\infty,&\text{otherwise},
\end{cases}
\end{align*}
where $\logn x:=x$, $x>0$, and $\logn 0:=0$.
For positive normal functionals on a von Neumann algebra, it may be defined using the 
relative modular operator \cite{Araki_relentr}. In the simple case of
PSD trace-class operators $\rho,\sigma$ on a separable Hilbert space $\hil$, their relative entropy may be expressed equivalently as \cite[Theorem 4.5]{Hiai_fdiv_standard}
\begin{align*}
D(\rho\|\sigma)=\lim_{\bP_f(\hil)_{\rho,\sigma}^+\ni P\nearrow I}D(P\rho P\|P\sigma P)
=
\lim_{n\to+\infty}D(P_n\rho P_n\|P_n\sigma P_n),
\end{align*}
where the second equality holds for any increasing sequence 
$P_n\in\bP_f(\hil)_{\rho,\sigma}^+$, $n\in\bN$, converging strongly to $I$. 
For non-zero PSD trace-class operators $\rho,\sigma$ and $\lambda,\eta\in(0,+\infty)$, the scaling laws 
\begin{align}
D_{\alpha}\nw(\lambda\rho\|\eta\sigma)&=D_{\alpha}\nw(\rho\|\sigma)+\frac{\alpha}{\alpha-1}\log\lambda-\log\eta,\label{Renyi scaling}\\
H_r\nw(\lambda\rho\|\eta\sigma)&=H_{r+\log\eta}\nw(\rho\|\sigma)-\log\lambda,
\label{Hanti scaling}\\
D(\lambda\rho\|\eta\sigma)&=
\lambda D(\rho\|\sigma)+(\Tr\rho)\lambda\log\frac{\lambda}{\eta},
\label{relentr scaling}\\
\Dmax(\lambda\rho\|\eta\sigma)&=
\Dmax(\rho\|\sigma)+\log\lambda-\log\eta,
\label{Dmax scaling2}
\end{align}
are easy to verify from the definitions (see also Remark \ref{rem:scaling}).
It was shown in \cite{BST,Jencova_NCLp} that 
\begin{align}\label{D alpha tilde limit1}
\exists\,\alpha_0>0:\ds D_{\alpha_0}\nw(\rho\|\sigma)<+\infty\ds\imp\ds
\lim_{\alpha\searrow 1}\tilde D_{\alpha}\nw(\rho\|\sigma)
=
\inf_{\alpha>1}\tilde D_{\alpha}\nw(\rho\|\sigma)
=
\frac{1}{\Tr\rho}D(\rho\|\sigma).
\end{align}

\begin{lemma}\label{lemma:Hanti bounds}
Let $\rho,\sigma\in\L^1(\hil)\p$ be PSD trace-class operators. 

\noindent (i)
For every $r\in\bR$,
\begin{align}\label{Hanti lower}
H_r\nw(\rho\|\sigma)\ge r-\Dmax(\rho\|\sigma).
\end{align}

\noindent (ii) If there exists an $\alpha_0\in(1,+\infty)$ such that 
$D_{\alpha_0}\nw(\rho\|\sigma)<+\infty$ then
\begin{align}
H_r\nw(\rho\|\sigma)=&H_r\nw(\rho\|\sigma)\ofa=H_r\nw(\rho\|\sigma)\fa
=\hat H_r\nw(\rho\|\sigma)=\hat H_r\nw(\rho\|\sigma)\ofa=\hat H_r\nw(\rho\|\sigma)\fa
\label{Hanti eq2}\\
&\begin{cases}
=-\log\Tr\rho,& r\le \frac{1}{\Tr\rho}D(\rho\|\sigma)-\log\Tr\rho,\\
\in\bz-\log\Tr\rho, r-\frac{1}{\Tr\rho}D(\rho\|\sigma)\jz,&r> \frac{1}{\Tr\rho}D(\rho\|\sigma)-\log\Tr\rho.\label{Hanti upper}
\end{cases}
\end{align}

\noindent (iii) If  
$D_{\alpha}\nw(\rho\|\sigma)=+\infty$ for every $\alpha\in(1,+\infty)$ then
\begin{align}
H_r\nw(\rho\|\sigma)=H_r\nw(\rho\|\sigma)\ofa=H_r\nw(\rho\|\sigma)\fa
&=
-\infty\\
&<-\log\Tr\rho=
\hat H_r\nw(\rho\|\sigma)=
\hat H_r\nw(\rho\|\sigma)\ofa=\hat H_r\nw(\rho\|\sigma)\fa.
\end{align}
\end{lemma}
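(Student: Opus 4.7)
The plan is to exploit the identity $\tilde\psi\nw(\rho\|\sigma|u) = u\,\tilde D_{1/(1-u)}\nw(\rho\|\sigma) + (1-u)\log\Tr\rho$ for $u\in(0,1)$, obtained by substituting $\alpha=1/(1-u)$ (so that $(1-u)(\alpha-1)=u$) into the definitions of $\psi\nw$ and $\tilde D_\alpha\nw$, together with two facts from Corollary \ref{cor:D infty} and \eqref{D alpha tilde limit1}: $\tilde D_\alpha\nw(\rho\|\sigma)$ is monotone non-decreasing on $(1,+\infty)$ with $\lim_{\alpha\to+\infty}\tilde D_\alpha\nw=\Dmax(\rho\|\sigma)$, and (whenever $D_{\alpha_0}\nw<+\infty$ for some $\alpha_0$) $\lim_{\alpha\to 1^+}\tilde D_\alpha\nw=D(\rho\|\sigma)/\Tr\rho$. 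For (i), when $\Dmax<+\infty$ the bound $\tilde D_\alpha\nw\le\Dmax$ gives $ur-\tilde\psi\nw(\rho\|\sigma|u)\ge u(r-\Dmax)-(1-u)\log\Tr\rho$, and letting $u\to 1^-$ in the supremum defining $H_r\nw$ yields the claim; the case $\Dmax=+\infty$ is vacuous.

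For (ii), I will first obtain equality of the six Hoeffding anti-divergences from Proposition \ref{prop:Hanti equality}, whose hypotheses are met by the assumption $D_{\alpha_0}\nw(\rho\|\sigma)<+\infty$ and by the trace-class finite-dimensional approximability $Q_\alpha\nw\fa=Q_\alpha\nw$ from Proposition \ref{cor:fa equality}. The identity above recasts the quantity as
\[
H_r\nw(\rho\|\sigma)+\log\Tr\rho = \sup_{u\in(0,1)} u\bigl(r+\log\Tr\rho-\tilde D_{1/(1-u)}\nw(\rho\|\sigma)\bigr).
\]
Since $\tilde D_\alpha\nw\ge D/\Tr\rho$ on $(1,+\infty)$ with equality only in the limit $\alpha\to 1^+$, the analysis splits by the sign of $r+\log\Tr\rho-D/\Tr\rho$. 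When it is $\le 0$ each term in the supremum is $\le 0$, and choosing $u\to 0^+$ drives the expression to $0$, so the supremum is $0$ and $H_r\nw=-\log\Tr\rho$. When it is $>0$, selecting $\alpha$ close enough to $1$ that $\tilde D_\alpha\nw<r+\log\Tr\rho$ produces a strictly positive term, giving $H_r\nw>-\log\Tr\rho$, while using $\tilde D_\alpha\nw\ge D/\Tr\rho$ together with $u<1$ delivers $H_r\nw\le r-D/\Tr\rho$, with strictness obtained by splitting the supremum into a regime $u\in(0,u^*]$ (bounded away from $1$) and $u\in[u^*,1)$ (where $\tilde D_\alpha\nw$ is bounded away from $D/\Tr\rho$ by some $\delta>0$, under the non-degeneracy $\Dmax>D/\Tr\rho$).

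For (iii), the hypothesis forces $\psi\nw(\rho\|\sigma|\alpha)=+\infty$ for every $\alpha>1$, hence $\tilde\psi\nw(\rho\|\sigma|u)=+\infty$ throughout $(0,1)$; by Proposition \ref{cor:fa equality} the same is true for $\tilde\psi\nw(\rho\|\sigma|u)\fa$, and by tensor multiplicativity (Lemma \ref{lemma:tensor product}) also for $\tilde\psi\nw(\rho\|\sigma|u)\ofa$. Consequently the three $H_r\nw$-type quantities, being suprema over the open interval, all collapse to $-\infty$. The three $\hat H_r\nw$-type quantities extend their suprema to the closed interval $[0,1]$, picking up $\tilde\psi\nw(\rho\|\sigma|0)=\log\Tr\rho$ via \eqref{psi tilde 0}; the contribution at $u=1$ is $r-\Dmax=-\infty$ since Corollary \ref{cor:D infty} forces $\Dmax=+\infty$ in this case. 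Hence the three $\hat H_r\nw$ values all equal $-\log\Tr\rho$.

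The most delicate step will be the strict upper bound in (ii): it amounts to showing that the supremum is never actually attained in the limits $u\to 0^+$ or $u\to 1^-$, which requires the mild non-degeneracy $\Dmax>D/\Tr\rho$ (which fails only in the proportional case $\rho\propto\sigma$) and the two-regime split described above. The remaining steps are essentially rearrangements, monotonicity, and one-sided limit computations using standard properties of the sandwiched R\'enyi and relative entropies.
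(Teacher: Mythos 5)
Your overall route is the same as the paper's: normalize, express $H_r\nw$ through $\tilde D_{1/(1-u)}\nw$, take $u\to 1^-$ against $\lim_{\alpha\to\infty}\tilde D_\alpha\nw=\Dmax$ for (i), invoke Proposition \ref{prop:Hanti equality} for the chain of equalities in (ii), use $\lim_{\alpha\searrow 1}\tilde D_\alpha\nw=D/\Tr\rho$ for the case split in (ii), and read off boundary values for (iii). Parts (i), the equalities and the lower bound in (ii), and part (iii) are fine.

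You have, however, put your finger on a genuine subtlety that the paper's own proof elides: the paper only shows that \emph{each} term $\frac{\alpha-1}{\alpha}\bigl[r+\log\Tr\sigma-D_\alpha\nw(\rho/\Tr\rho\|\sigma/\Tr\sigma)\bigr]-\log\Tr\rho$ is strictly less than $r-\tfrac{1}{\Tr\rho}D(\rho\|\sigma)$, and then concludes the supremum is strictly less, which does not follow automatically. Your two--regime split (bounding $u$ away from $1$ on one piece, bounding $\tilde D_{1/(1-u)}\nw$ away from $D/\Tr\rho$ on the other) is the correct repair and does deliver strictness whenever $\Dmax(\rho\|\sigma)>\tfrac{1}{\Tr\rho}D(\rho\|\sigma)$.

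The place your proposal goes wrong is the parenthetical claim that this non-degeneracy ``fails only in the proportional case $\rho\propto\sigma$''. That is false. Take $\hil$ separable with orthonormal vectors $e_1,e_2$ and set $\rho=\pr{e_1}$, $\sigma=\tfrac12\pr{e_1}+\tfrac12\pr{e_2}$. These are commuting density operators with $\rho\not\propto\sigma$, yet $D_\alpha\nw(\rho\|\sigma)=\log 2$ for every $\alpha>1$, so $D(\rho\|\sigma)=\Dmax(\rho\|\sigma)=\log 2=\tfrac{1}{\Tr\rho}D(\rho\|\sigma)$. More generally, $\Dmax=D/\Tr\rho$ holds whenever the likelihood ratio is constant on the support of $\rho$. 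In such cases your two-regime argument gives no strict gap, and indeed in the example above $H_r\nw(\rho\|\sigma)=\sup_{\alpha>1}\tfrac{\alpha-1}{\alpha}(r-\log 2)=r-\log 2$ for $r>\log 2$, so the supremum is \emph{attained at} the right endpoint $r-\tfrac{1}{\Tr\rho}D(\rho\|\sigma)$ rather than strictly below it. Thus the strict upper inequality in (ii) as stated cannot be proved without an additional non-degeneracy hypothesis (and in fact fails); the correct conclusion in the degenerate case $\Dmax=D/\Tr\rho$ is $H_r\nw(\rho\|\sigma)=r-\tfrac{1}{\Tr\rho}D(\rho\|\sigma)$, i.e.\ the interval in \eqref{Hanti upper} should be closed on the right. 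Your proof strategy is sound; the fix is to either add the hypothesis $\Dmax(\rho\|\sigma)>\tfrac{1}{\Tr\rho}D(\rho\|\sigma)$ before claiming strictness, or to weaken the conclusion to a half-open interval and note that the right endpoint is attained exactly when $\Dmax=D/\Tr\rho$.
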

\begin{proof}
\noindent (i) By the scaling laws \eqref{Renyi scaling}--\eqref{relentr scaling},
\begin{align}
H_r\nw(\rho\|\sigma)&=
H_{r+\log\Tr\sigma}\nw\bz\frac{\rho}{\Tr\rho}\Big\|\frac{\sigma}{\Tr\sigma}\jz-\log\Tr\rho
\nonumber\\
&=
\sup_{\alpha>1}\frac{\alpha-1}{\alpha}\left[r+\log\Tr\sigma-D_{\alpha}\nw\bz\frac{\rho}{\Tr\rho}\Big\|\frac{\sigma}{\Tr\sigma}\jz\right]-\log\Tr\rho.\label{Hanti normalized}
\end{align}
According to Corollary \ref{cor:D infty}, 
$\displaystyle{\lim_{\alpha\to+\infty}}$ $D_{\alpha}\nw\bz\frac{\rho}{\Tr\rho}\Big\|\frac{\sigma}{\Tr\sigma}\jz
=\Dmax\bz\frac{\rho}{\Tr\rho}\Big\|\frac{\sigma}{\Tr\sigma}\jz
=
\Dmax\bz\rho\|\sigma\jz-\log\Tr\rho+\log\Tr\sigma$, and hence,
\begin{align*}
H_r\nw(\rho\|\sigma)&\ge \lim_{\alpha\to+\infty}
\frac{\alpha-1}{\alpha}\left[r+\log\Tr\sigma-D_{\alpha}\nw\bz\frac{\rho}{\Tr\rho}\Big\|\frac{\sigma}{\Tr\sigma}\jz\right]-\log\Tr\rho
=
r-\Dmax\bz\rho\|\sigma\jz,
\end{align*}
proving \eqref{Hanti lower}.
\smallskip

\noindent (ii) 
The equalities in \eqref{Hanti eq2} follow from 
Proposition  \ref{prop:Hanti equality}.
Using the assumption $D_{\alpha_0}\nw(\rho\|\sigma)<+\infty$,
\eqref{D alpha tilde limit1} and \eqref{relentr scaling} give
\begin{align}
\inf_{\alpha>1}D_{\alpha}\nw\bz\frac{\rho}{\Tr\rho}\Big\|\frac{\sigma}{\Tr\sigma}\jz
=
\lim_{\alpha\searrow 1}D_{\alpha}\nw\bz\frac{\rho}{\Tr\rho}\Big\|\frac{\sigma}{\Tr\sigma}\jz
=
D\bz\frac{\rho}{\Tr\rho}\Big\|\frac{\sigma}{\Tr\sigma}\jz
=
\frac{1}{\Tr\rho}D(\rho\|\sigma)-\log\Tr\rho+\log\Tr\sigma.\label{normalized relentr}
\end{align}
In particular, the above limit is finite, and thus
\begin{align*}
-\log\Tr\rho
=
\lim_{\alpha\searrow 1}
\frac{\alpha-1}{\alpha}\left[r+\log\Tr\sigma-D_{\alpha}\nw\bz\frac{\rho}{\Tr\rho}\Big\|\frac{\sigma}{\Tr\sigma}\jz\right]-\log\Tr\rho
\le
H_r\nw(\rho\|\sigma),
\end{align*}
where in the second expression we used \eqref{Hanti normalized}, and the inequality is by definition. 
On the other hand, 
\eqref{Hanti normalized} shows that 
$H_r\nw(\rho\|\sigma)>-\log\Tr\rho$ holds if and only if
\begin{align}\label{large r}
r+\log\Tr\sigma>\inf_{\alpha>1}D_{\alpha}\nw\bz\frac{\rho}{\Tr\rho}\Big\|\frac{\sigma}{\Tr\sigma}\jz=D\bz\frac{\rho}{\Tr\rho}\Big\|\frac{\sigma}{\Tr\sigma}\jz=\frac{1}{\Tr\rho}D(\rho\|\sigma)-\log\Tr\rho+\log\Tr\sigma,
\end{align} 
where the equalities are due to \eqref{normalized relentr}.
Note that \eqref{large r} is exactly the condition in the second line of \eqref{Hanti upper}, and hence we obtain the first line in \eqref{Hanti upper}.
Assume now that $r$ is as in \eqref{large r}. Then 
\begin{align*}
\underbrace{\frac{\alpha-1}{\alpha}}_{\in(0,1)}
\underbrace{\left[r+\log\Tr\sigma-D_{\alpha}\nw\bz\frac{\rho}{\Tr\rho}\Big\|\frac{\sigma}{\Tr\sigma}\jz\right]}_{\le\,
r+\log\Tr\sigma-D\bz\frac{\rho}{\Tr\rho}\Big\|\frac{\sigma}{\Tr\sigma}\jz\,\in(0,+\infty)
}-\log\Tr\rho
<r-\frac{1}{\Tr\rho}D(\rho\|\sigma),
\end{align*}
proving the second line of \eqref{Hanti upper}.
\smallskip

\noindent (iii) 
By Lemma \ref{lemma:fa equality},
$\tilde\psi\nw(\rho\|\sigma|u)=\tilde\psi\nw(\rho\|\sigma|u)\ofa=\tilde\psi\nw(\rho\|\sigma|u)\fa=+\infty$ for every 
$u\in(0,1)$, whence
$H_r\nw(\rho\|\sigma)=H_r\nw(\rho\|\sigma)\ofa=H_r\nw(\rho\|\sigma)\fa=
-\infty$. On the other hand, 
$\tilde\psi\nw(\rho\|\sigma|0)=\tilde\psi\nw(\rho\|\sigma|0)\ofa=\tilde\psi\nw(\rho\|\sigma|0)\fa=\log\Tr\rho$,
according to Remark \ref{rem:psi tilde fa endpoint}, and 
$\tilde\psi\nw(\rho\|\sigma|1)=\tilde\psi\nw(\rho\|\sigma|1)\ofa=\tilde\psi\nw(\rho\|\sigma|1)\fa=\Dmax(\rho\|\sigma)=+\infty$, where the last equality follows from 
Corollary \ref{cor:D infty}. Hence,
$\hat H_r\nw(\rho\|\sigma)=\hat H_r\nw(\rho\|\sigma)\ofa=\hat H_r\nw(\rho\|\sigma)\fa=-\log\Tr\rho$. 
\end{proof}

\begin{ex}\label{ex:infdim states}
Let $(e_n)_{n\in\bN}$ be an orthonormal basis in $\hil$, and 
$\rho:=c_1\sum_{n=1}^{+\infty}n^{-\beta}\pr{e_n}$,
$\sigma:=c_2\sum_{n=1}^{+\infty}n^{-n^{\gamma}}\pr{e_n}$,
with some $\beta>1$ and $\gamma>0$, where 
$c_1$ and $c_2$ are choosen so that $\rho$ and $\sigma$ are 
density operators. 
Obviously, $\rho$ and $\sigma$ are commuting (classical).
For $P_N:=\sum_{n=1}^N\pr{e_n}$, we have
\begin{align*}
Q_{\alpha}\nw(P_N\rho P_N\|P_N\sigma P_N)=c_1^{\alpha}c_2^{1-\alpha}
\sum_{n=1}^N n^{-\alpha\beta-(1-\alpha)n^{\gamma}}
\xrightarrow[N\to+\infty]{}+\infty,\ds\ds\ds \alpha\in(1,+\infty),
\end{align*}
whence 
\begin{align*}
D_{\alpha}\nw(\rho\|\sigma)=+\infty,\ds\ds\alpha\in(1,+\infty)
\ds\imp\ds
&H_r\nw(\rho\|\sigma)=H_r\nw(\rho\|\sigma)\ofa=H_r\nw(\rho\|\sigma)\fa
=
-\infty\\
&<-\log\Tr\rho=
\hat H_r\nw(\rho\|\sigma)=
\hat H_r\nw(\rho\|\sigma)\ofa=\hat H_r\nw(\rho\|\sigma)\fa,
\end{align*}
according to Lemma \ref{lemma:Hanti bounds}.
Note also that 
\begin{align*}
&\tilde\psi\nw(\rho\|\sigma|u)=\tilde\psi\nw(\rho\|\sigma|u)\fa
=\tilde\psi\nw(\rho\|\sigma|u)\ofa=+\infty,\ds\ds u\in(0,1),\\
&\tilde\psi\nw(\rho\|\sigma|0)\fa
=\tilde\psi\nw(\rho\|\sigma|0)\ofa=\log\Tr\rho=0
<\lim_{u\searrow 0}\tilde\psi\nw(\rho\|\sigma|u)\fa
,\\
&\tilde\psi\nw(\rho\|\sigma|1)\fa
=\tilde\psi\nw(\rho\|\sigma|1)\ofa=\Dmax(\rho\|\sigma)=+\infty
=
\lim_{u\nearrow 1}\tilde\psi\nw(\rho\|\sigma|u)\fa.
\end{align*}

For the relative entropy we get 
\begin{align*}
D(\rho\|\sigma)
=
c_1\sum_{n=1}^{+\infty}\frac{1}{n^{\beta}}\log\frac{c_1n^{n^{\gamma}}}{c_2n^{\beta}}
=
\log\frac{c_1}{c_2}+c_1\sum_{n=1}^{+\infty}\frac{(n^{\gamma}-\beta)\log n}{n^{\beta}}
<+\infty,
\end{align*}
if $\beta>\gamma+1$.
Hence, assuming that $D(\rho\|\sigma)<+\infty$ is not sufficient for 
Lemma \ref{lemma:fa equality2} and Lemma \ref{lemma:Hanti bounds}.

This also gives an example where 
\begin{align*}
D(\rho\|\sigma)<+\infty=\lim_{\alpha\searrow 1}D_{\alpha}\nw(\rho\|\sigma),
\end{align*}
which is contrary to the case where $D_{\alpha_0}\nw(\rho\|\sigma)<+\infty$ for some 
$\alpha_0>1$; see \eqref{D alpha tilde limit1}.
This kind of behaviour was already pointed out in \cite[Remark 5.4]{Hiai_fdiv_standard}.
\end{ex}

\begin{ex}\label{ex:infdim states2}
Let $\rho=\sigma\in\B(\hil)\p$ be such that $\rho$ is not trace-class.
Then $\rho=\rho^{\frac{\alpha-1}{2\alpha}}\rho^{\frac{1}{\alpha}}\rho^{\frac{\alpha-1}{2\alpha}}$, whence 
$\rho\in\B^{\alpha}(\hil,\sigma)$ with $\rho_{\sigma,\alpha}=\rho^{\frac{1}{\alpha}}\notin\L^{\alpha}(\hil)$, and
\begin{align*}
&\tilde\psi\nw(\rho\|\sigma|u)=\tilde\psi\nw(\rho\|\sigma|u)\fa
=\tilde\psi\nw(\rho\|\sigma|u)\ofa=+\infty,\ds\ds u\in(0,1),\\
&\tilde\psi\nw(\rho\|\sigma|0)\fa
=\tilde\psi\nw(\rho\|\sigma|0)\ofa=\log\Tr\rho=+\infty
=\lim_{u\searrow 0}\tilde\psi\nw(\rho\|\sigma|u)\fa
,\\
&\tilde\psi\nw(\rho\|\sigma|1)\fa
=\tilde\psi\nw(\rho\|\sigma|1)\ofa=\Dmax(\rho\|\sigma)=0
<
\lim_{u\nearrow 1}\tilde\psi\nw(\rho\|\sigma|u)\fa.
\end{align*}
Thus,
\begin{align*}
H_r\nw(\rho\|\sigma)=H_r\nw(\rho\|\sigma)\ofa=H_r\nw(\rho\|\sigma)\fa=-\infty<r
=\hat H_r\nw(\rho\|\sigma)=\hat H_r\nw(\rho\|\sigma)\ofa=\hat H_r\nw(\rho\|\sigma)\fa,\ds\ds r\in\bR.
\end{align*}
In particular, this holds also when $\rho$ is compact, and obviously 
$\rho\in\B^{\infty}(\hil,\rho)$. This shows that the assumption 
$D_{\alpha_0}\nw(\rho\|\sigma)<+\infty$ for some $\alpha_0<+\infty$ is also 
important in this case of Proposition \ref{prop:Hanti equality}.

Note also that this is an example where 
\begin{align*}
\exists\,\lim_{\alpha\to +\infty}D_{\alpha}\nw(\rho\|\sigma)\ds(=+\infty)\ds \ne
\Dmax(\rho\|\sigma).
\end{align*}
This cannot happen when $\rho$ and $\sigma$ are both trace-class, according to 
Corollary \ref{cor:D alpha tilde mon} or
Corollary \ref{cor:D infty}.
\end{ex}

\section{The strong converse exponent} 
\label{sec:sc}

\subsection{The strong converse exponents and the Hoeffding anti-divergences}
\label{sec:sc Hanti}

Before restricting our attention to the i.i.d.~case in the main result,
we first consider a generalization of the binary state discrimination problem described in the Introduction. First, we do not assume the hypotheses to be represented by density operators, but by general positive semi-definite operators. Second, we do not assume the problem to be i.i.d. In the most general case, a \ki{simple asymptotic binary operator discrimination problem} is specified by a sequence of Hilbert spaces $\hil_n$, $n\in\bN$, and for each $n\in\bN$, a pair 
$\rho_n,\sigma_n\in\B(\hil_n)\p$, representing the null and the alternative hypotheses, 
respectively.
Since the operators are not assumed to be trace-class, the expressions in 
\eqref{errors intro} may not make sense, and need to be modified as
\begin{align*}
\tos_n(T_n|\rho_n)&:=\Tr\, (T_n^{1/2}\rho_n T_n^{1/2})
=\Tr\,(\rho_n^{1/2}T_n\rho_n^{1/2}),\\
\beta_n(T_n|\sigma_n)&:=\Tr\, (T_n^{1/2}\sigma_n T_n^{1/2})=
\Tr\, (\sigma_n^{1/2} T_n\sigma_n^{1/2}),
\end{align*} 
to define the generalized type I success and type II errors, respectively.
These expressions are equal to those in \eqref{errors intro} when 
$\rho_n$ and $\sigma_n$ are trace-class. 

\begin{definition}
Let $\vec{\rho}:=(\rho_n)_{n\in\bN}$, $\vec{\sigma}:=(\sigma_n)_{n\in\bN}$ be as above.
The \ki{strong converse exponents} of the simple asymptotic binary operator discrimination problem $H_0:\,\vec{\rho}$ \s vs.~$H_1:\,\vec{\sigma}$ \s with type II exponent $r\in\bR$ are defined as
\begin{align*}
\scli_r(\vec{\rho}\|\vec{\sigma})&:=
\inf\left\{\liminf_{n\to+\infty}-\frac{1}{n}\log\tos_n(T_n|\rho_n):\,
\liminf_{n\to+\infty}-\frac{1}{n}\log \beta_n(T_n|\sigma_n)\ge r
\right\},\\
\scls_r(\vec{\rho}\|\vec{\sigma})&:=
\inf\left\{\limsup_{n\to+\infty}-\frac{1}{n}\log\tos_n(T_n|\rho_n):\,
\liminf_{n\to+\infty}-\frac{1}{n}\log \beta_n(T_n|\sigma_n)\ge r
\right\},\\
\scl_r(\vec{\rho}\|\vec{\sigma})&:=
\inf\left\{\lim_{n\to+\infty}-\frac{1}{n}\log\tos_n(T_n|\rho_n):\,
\liminf_{n\to+\infty}-\frac{1}{n}\log \beta_n(T_n|\sigma_n)\ge r
\right\},
\end{align*}
where the infima are taken along all test sequences $T_n\in\B(\hil_n)_{[0,I]}$,
$n\in\bN$, 
satisfying the indicated condition, and in the last expression also that the limit exists. 
\end{definition}

We will need an extension of the notion of the Hoeffding anti-divergence in the above setting. 
Let 
\begin{align*}
\psi\nw(\vec{\rho}\|\vec{\sigma}|\alpha)&:=\limsup_{n\to+\infty}\frac{1}{n}\psi\nw(\rho_n\|\sigma_n|\alpha),\ds\ds\ds \alpha\in(1,+\infty),\\
\tilde \psi\nw(\vec{\rho}\|\vec{\sigma}|u)&:=
\limsup_{n\to+\infty}\frac{1}{n}\tilde\psi\nw(\rho_n\|\sigma_n|u)=
\begin{cases}
(1-u)\psi\nw\bz\vec{\rho}\|\vec{\sigma}|(1-u)\inv\jz,& u\in(0,1),\\
\limsup_{n\to+\infty}\frac{1}{n}\log\Tr\rho_n,&u=0,\\
\limsup_{n\to+\infty}\frac{1}{n}\Dmax(\rho_n\|\sigma_n),&u=1,
\end{cases}
\end{align*}
where we used \eqref{psi tilde 0}--\eqref{psi tilde 1}, 
and
\begin{align*}
\hat H_r\nw\bz\vec{\rho}\|\vec{\sigma}\jz:=
\sup_{u\in[0,1]}\{ur-\tilde \psi\nw(\vec{\rho}\|\vec{\sigma}|u)\}.
\end{align*}

The inequality in the following lemma is called the optimality part of the Hoeffding bound. 
For trace-class operators, it can be easily obtained from the monotonicity of the sandwiched 
R\'enyi divergence under measurements; see \cite{N,MO,BST}. 
If we do not assume $\rho_n$ and $\sigma_n$ to be trace-class, we can still obtain it using the 
variational formula in \eqref{D variational}, as we show below.

\begin{prop}\label{lemma:sc optimality}
For every $r\in\bR$,
\begin{align}\label{sc lower}
\hat H_r\nw(\vec{\rho}\|\vec{\sigma})\le \scli_r(\vec{\rho}\|\vec{\sigma})\le \scls_r(\vec{\rho}\|\vec{\sigma})\le \scl_r(\vec{\rho}\|\vec{\sigma}).
\end{align}
\end{prop}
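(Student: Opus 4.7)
The two rightmost inequalities follow directly from $\liminf\le\limsup\le\lim$ (whenever the limit exists), so the task reduces to proving $\hat H_r\nw(\vec{\rho}\|\vec{\sigma}) \le \scli_r(\vec{\rho}\|\vec{\sigma})$. Unfolding the definitions, it suffices to show that for every admissible test sequence $(T_n)_{n\in\bN}$ (i.e., one satisfying $\liminf_n -\frac{1}{n}\log\beta_n(T_n|\sigma_n) \ge r$), and every $u \in [0,1]$, one has
\begin{align*}
\liminf_{n\to+\infty}\left(-\tfrac{1}{n}\log \tos_n(T_n|\rho_n)\right) \ge ur - \tilde\psi\nw(\vec{\rho}\|\vec{\sigma}|u),
\end{align*}
after which the supremum over $u \in [0,1]$ is taken. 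My plan is to establish this by three separate arguments treating $u \in (0,1)$, $u = 0$, and $u = 1$.

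For the interior case $u \in (0,1)$, set $\alpha := 1/(1-u) \in (1,+\infty)$ and apply the variational formula \eqref{D variational} of Lemma \ref{lemma:variational} (for $z = \alpha$) with the choice $H := T_n$. The essential computational step is to trade the appearance of $\sigma_n^{(\alpha-1)/\alpha}$ for $\sigma_n$ itself: the operator Jensen inequality \cite[Theorem 11]{BrownKosaki} applied to the operator concave function $x \mapsto x^{(\alpha-1)/\alpha}$ and the contraction $T_n^{1/2}$ gives $T_n^{1/2}\sigma_n^{(\alpha-1)/\alpha} T_n^{1/2} \le (T_n^{1/2}\sigma_n T_n^{1/2})^{(\alpha-1)/\alpha}$; since $\alpha/(\alpha-1) \ge 1$ and $A \mapsto \Tr A^p$ is monotone on the PSD order for $p \ge 1$ (Weyl-type), this yields $\Tr\bigl(T_n^{1/2}\sigma_n^{(\alpha-1)/\alpha}T_n^{1/2}\bigr)^{\alpha/(\alpha-1)} \le \Tr T_n^{1/2}\sigma_n T_n^{1/2} = \beta_n(T_n|\sigma_n)$. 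Combined with $(1-\alpha) < 0$, this produces the clean inequality
\begin{align*}
\psi\nw(\rho_n\|\sigma_n|\alpha) \ge \alpha\log \tos_n(T_n|\rho_n) + (1-\alpha)\log\beta_n(T_n|\sigma_n).
\end{align*}
Dividing by $n$, rearranging, and taking $\liminf_n$ (using $\liminf(-C_n) = -\limsup C_n$, and that $u = (\alpha-1)/\alpha$ gives $\tilde\psi\nw(\vec{\rho}\|\vec{\sigma}|u) = (1-u)\psi\nw(\vec{\rho}\|\vec{\sigma}|\alpha)$) produces the desired bound for $u \in (0,1)$.

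For $u = 0$: the elementary bound $\tos_n(T_n|\rho_n) = \Tr T_n\rho_n \le \Tr \rho_n$ (valid since $T_n \le I$ and $\rho_n \ge 0$) gives $-\tfrac{1}{n}\log \tos_n \ge -\tfrac{1}{n}\log\Tr\rho_n$, and taking $\liminf_n$ yields the claim since $\tilde\psi\nw(\vec{\rho}\|\vec{\sigma}|0) = \limsup_n\tfrac{1}{n}\log\Tr\rho_n$. For $u = 1$: the definition of $\Dmax$ gives $\rho_n \le e^{\Dmax(\rho_n\|\sigma_n)}\sigma_n$, hence $\tos_n(T_n|\rho_n) \le e^{\Dmax(\rho_n\|\sigma_n)}\beta_n(T_n|\sigma_n)$; taking logs, dividing by $n$, and applying $\liminf_n$ together with the admissibility of $(T_n)_n$ and $\tilde\psi\nw(\vec{\rho}\|\vec{\sigma}|1) = \limsup_n \tfrac{1}{n}\Dmax(\rho_n\|\sigma_n)$ yields $\liminf_n(-\tfrac{1}{n}\log \tos_n) \ge r - \tilde\psi\nw(\vec{\rho}\|\vec{\sigma}|1)$. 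Supremising the three bounds over $u \in [0,1]$ reconstructs exactly $\hat H_r\nw(\vec{\rho}\|\vec{\sigma})$. The main non-routine step is the operator-Jensen plus trace-monotonicity argument in the interior case, which is specifically needed because the $\sigma_n$ are not assumed to be trace-class (so the simpler route via the data-processing inequality for the classical R\'enyi divergence of the two-outcome measurement statistics is unavailable); minor additional care is needed to dismiss infeasible test sequences (e.g.\ $\beta_n = +\infty$ for infinitely many $n$) and degenerate cases ($\beta_n = 0$, where $T_n^{1/2}\sigma_n = 0$, so $T_n^{1/2}\rho_n = 0$ whenever $\rho_n^0 \le \sigma_n^0$, trivialising the bound).
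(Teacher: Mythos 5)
Your proof is correct and takes essentially the same route as the paper's: the reduction to a pointwise bound over $u\in[0,1]$, the use of the variational formula \eqref{D variational} with $H=T_n$ and operator Jensen plus trace-monotonicity of $A\mapsto\Tr A^{p}$ ($p\ge1$) to replace $\sigma_n^{(\alpha-1)/\alpha}$ by $\sigma_n$ in the interior case $u\in(0,1)$, and the separate endpoint arguments (the trivial $\Tr\rho_n$ bound for $u=0$, $\Dmax$-domination for $u=1$). The degenerate cases you flag at the end (infinite $\tilde\psi\nw$, infinite $\Dmax$, $\beta_n=0$) are handled in the paper exactly as you outline.
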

\begin{proof}
All the inequalities are trivial by definition, except for the first one.
Thus, we need to show that for any $r\in\bR$ and any $u\in[0,1]$, 
\begin{align}\label{Hoeffding opt proof1}
\scli_r(\vec{\rho}\|\vec{\sigma})\ge ur-\tilde\psi\nw(\vec{\rho}\|\vec{\sigma}|u).
\end{align}
Let us fix $r\in\bR$ for the rest. 
First, note that for any test $T_n$,
\begin{align*}
\tos_n(T_n|\rho_n)=\Tr\,(\rho_n^{1/2}T_n\rho_n^{1/2})\le\Tr\rho_n.
\end{align*}
Thus, for any sequence of tests $(T_n)_{n\in\bN}$, 
\begin{align*}
\liminf_{n\to+\infty}-\frac{1}{n}\log\tos_n(T_n|\rho_n)\ge -\limsup_{n\to+\infty}
\frac{1}{n}\log\Tr\rho_n=0\cdot r-\tilde \psi\nw(\vec{\rho}\|\vec{\sigma}|0),
\end{align*}
proving \eqref{Hoeffding opt proof1} for $u=0$.
For $u=1$, \eqref{Hoeffding opt proof1} is trival when $\psi\nw(\vec{\rho}\|\vec{\sigma}|1)=+\infty$, and hence we assume the contrary; in particular, $\Dmax(\rho_n\|\sigma_n)<+\infty$ for any large enough $n$. Let $(T_n)_{n\in\bN}$ be a test sequence such that 
\begin{align*}
\liminf_{n\to+\infty}-\frac{1}{n}\log\beta_n(T_n|\sigma_n)\ge r. 
\end{align*}
Then for any $r'<r$ and any large enough $n$, 
$\beta_n(T_n|\sigma_n)\le \exp(-nr')$, whence
\begin{align*}
\tos(T_n|\rho_n)=
\Tr\, (T_n^{1/2}\rho_n T_n^{1/2})
\le
\exp({\Dmax(\rho_n\|\sigma_n)})\Tr\, (T_n^{1/2}\sigma_n T_n^{1/2})
\le
\exp(\Dmax(\rho_n\|\sigma_n)-nr').
\end{align*}
Thus, 
\begin{align*}
\liminf_{n\to+\infty}-\frac{1}{n}\log\tos_n(T_n|\rho_n)\ge
r'-\limsup_{n\to+\infty}\frac{1}{n}\Dmax(\rho_n\|\sigma_n)
=
r'-\tilde \psi\nw(\vec{\rho}\|\vec{\sigma}|1).
\end{align*}
This gives \eqref{Hoeffding opt proof1} for $u=1$.

For the rest, 
let us fix an $u\in(0,1)$, and corresponding 
$\alpha=1/(1-u)>1$.
If $\tilde \psi\nw(\vec{\rho}\|\vec{\sigma}|u)=+\infty$ then 
\begin{align}\label{sc lower proof2}
ur-\tilde\psi\nw(\vec{\rho}\|\vec{\sigma}|u)=-\infty\le \scli_r(\vec{\rho}\|\vec{\sigma})
\end{align}
holds trivially. Hence, we assume that $\tilde\psi\nw(\rho_n\|\sigma_n|u)<+\infty$, 
or equivalently,  $\rho_n\in \L^{\alpha}(\hil_n,\sigma_n)$ for every large enough $n$. 
In particular,
the variational formula  \eqref{D variational} holds (with $z=\alpha)$.

Consider now a sequence of tests $(T_n)_{n\in\bN}$ such that 
$\liminf_{n\to+\infty}-\frac{1}{n}\log \Tr\,(T_n^{1/2}\sigma_n T_n^{1/2})\ge r$.
Then $\Tr\,(T_n^{1/2}\sigma_n T_n^{1/2})<+\infty$ for every large enough $n$,
and we have
\begin{align}\label{sc lower proof1}
\Tr\bz T_n^{1/2}\sigma_n^{\frac{\alpha-1}{\alpha}}T_n^{1/2}\jz^{\frac{\alpha}{\alpha-1}}
\le 
\Tr\bz T_n^{1/2}\sigma_nT_n^{1/2}\jz
<+\infty,
\end{align}
where the first inequality is due to the operator Jensen inequality 
\cite[Theorem 11]{BrownKosaki}.
Hence, $T_n\in\B(\hil_n)_{\sigma_n,\alpha,\alpha}$.
If $\Tr\bz T_n^{1/2}\sigma_n^{\frac{\alpha-1}{\alpha}}T_n^{1/2}\jz^{\frac{\alpha}{\alpha-1}}>0$ then
the variational formula \eqref{D variational} yields
\begin{align*}
\psi\nw(\rho_n\|\sigma_n|\alpha)
&\ge 
\alpha\log\Tr \,(T_n^{1/2}\rho_n T_n^{1/2})
+(1-\alpha)\log\Tr\bz T_n^{1/2}\sigma_n^{\frac{\alpha-1}{\alpha}}T_n^{1/2}\jz^{\frac{\alpha}{\alpha-1}}\\
&\ge
\alpha\log\Tr\,( T_n^{1/2}\rho_n T_n^{1/2})
+(1-\alpha)\log\Tr\bz T_n^{1/2}\sigma_nT_n^{1/2}\jz,
\end{align*}
where the second inequality is due to \eqref{sc lower proof1}.
In particular, we also have $\Tr\,(T_n^{1/2}\rho_n T_n^{1/2})<+\infty$.
By a simple rearrangement, we get 
\begin{align}\label{sc lower proof3}
-\frac{1}{n}\log\Tr\,(T_n^{1/2}\rho_n T_n^{1/2})
&\ge
\frac{\alpha-1}{\alpha}\bz-\frac{1}{n}\jz
\log\Tr\bz T_n^{1/2}\sigma_nT_n^{1/2}\jz
-\frac{1}{\alpha}\frac{1}{n}\psi\nw(\rho_n\|\sigma_n|\alpha).
\end{align}
If $\Tr\bz T_n^{1/2}\sigma_n^{\frac{\alpha-1}{\alpha}}T_n^{1/2}\jz^{\frac{\alpha}{\alpha-1}}=0$ then $T_n^{1/2}\sigma_n^{\frac{\alpha-1}{\alpha}}T_n^{1/2}=0$. Since $\rho_n\in \L^{\alpha}(\hil_n,\sigma_n)$, this implies 
$T_n^{1/2}\rho_n T_n^{1/2}=0$, 
according to Lemma \ref{lemma:rho alpha}, and therefore
\eqref{sc lower proof3} holds trivially, with both sides equal to $+\infty$. 

Taking the liminf in \eqref{sc lower proof3} yields
\begin{align*}
\liminf_{n\to+\infty}-\frac{1}{n}\log\Tr\,(T_n^{1/2}\rho_n T_n^{1/2})
&\ge
\frac{\alpha-1}{\alpha}r-\frac{1}{\alpha}\psi\nw(\vec{\rho}\|\vec{\sigma}|\alpha)
=
ur-\tilde\psi\nw(\vec{\rho}\|\vec{\sigma}|u).
\end{align*}
Since this holds for every test sequence as above, we get 
$ur-\tilde\psi\nw(\vec{\rho}\|\vec{\sigma}|u)\le \scli(\rho\|\sigma)$,
as required.
\end{proof}
\medskip

For the rest, we restrict our attention to the i.i.d.~case, where 
\begin{align*}
\hil_n=\hil^{\otimes n},\ds\ds
\rho_n=\rho^{\otimes n},\ds\ds
\sigma_n=\sigma^{\otimes n},\ds\ds n\in\bN,
\end{align*}
for some Hilbert space $\hil$ and $\rho,\sigma\in\B(\hil)\p$. 
Note that by Lemma \ref{lemma:tensor product},
$\tilde\psi\nw\bz(\rho^{\otimes n})_{n\in\bN}\|(\sigma^{\otimes n})_{n\in\bN}|u\jz
=
\tilde\psi\nw(\rho\|\sigma|u)$, $u\in(0,1)$, $n\in\bN$, 
and the same identity is straightforward to verify for $u=0,1$, 
whence
\begin{align*}
H_r\nw\bz(\rho^{\otimes n})_{n\in\bN}\|(\sigma^{\otimes n})_{n\in\bN}\jz
=
H_r\nw(\rho\|\sigma),\ds\ds\ds r\in\bR.
\end{align*}

We replace the notations $\vec{\rho}$ and $\vec{\sigma}$ with 
$\rho$ and $\sigma$, respectively, in the strong converse exponents introduced above.
Let
\begin{align*}
\scli(\rho\|\sigma)\f,\ds\ds\ds \scls(\rho\|\sigma)\f,
\ds\ds\text{and}\ds\ds
\scl(\rho\|\sigma)\f
\end{align*}
be defined the same way as $\scli(\rho\|\sigma)$, $\scls(\rho\|\sigma)$, and 
$\scl(\rho\|\sigma)$, respectively, 
but with the restrictions that only finite-rank tests are used. Obviously,
\begin{align*}
\scli(\rho\|\sigma)\le \scli(\rho\|\sigma)\f,\ds\ds\ds
\scls(\rho\|\sigma)\le \scls(\rho\|\sigma)\f,\ds\ds\ds
\scl(\rho\|\sigma)\le \scl(\rho\|\sigma)\f.
\end{align*}

The following lower bound follows by a straightforward adaptation of Nagaoka's method \cite{N}.

\begin{prop}\label{prop:sc lower f}
Let $\rho,\sigma\in\B(\hil)\p$. For every $r\in\bR$,
\begin{align*}
\hat H_r\nw(\rho\|\sigma)\ofa\le\scli_r(\rho\|\sigma)\f.
\end{align*}
\end{prop}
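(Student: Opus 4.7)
The plan is to adapt Nagaoka's H\"older-inequality argument, running it inside the finite-dimensional support of each test $T_n$ so that $Q_{\alpha}\nw$ gets replaced by its finite-rank approximation $Q_{\alpha}\nw(\cdot\|\cdot)\fa$; after regularisation in $n$ this produces the bound in terms of $\hat H_r\nw(\rho\|\sigma)\ofa$ rather than merely $H_r\nw(\rho\|\sigma)$. The key observation is that for a finite-rank test $T_n$ the support $P_n:=T_n^0$ is a finite-rank projection with $T_n=P_nT_nP_n$, so that the compressions $P_n\rho^{\otimes n}P_n$ and $\tau_n:=P_n\sigma^{\otimes n}P_n$ are trace-class (indeed finite-rank), placing us in a setting where the machinery of Lemma \ref{lemma:rho alpha} and the variational formula of Lemma \ref{lemma:variational} apply.

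For fixed $u\in(0,1)$ and $\alpha:=1/(1-u)$, given a finite-rank test sequence $(T_n)_{n\in\bN}$ with $\liminf_n-\frac{1}{n}\log\Tr\sigma^{\otimes n}T_n\ge r$, I would first dispose of the degenerate cases: indices with $\Tr\rho^{\otimes n}T_n=0$ or $\Tr\sigma^{\otimes n}T_n=0$ give a trivial contribution, and if $(P_n\rho^{\otimes n}P_n)^0\nleq(P_n\sigma^{\otimes n}P_n)^0$ for some admissible $n$ then Lemma \ref{lemma:Q fa infty} forces $Q_{\alpha}\nw(\rho^{\otimes n}\|\sigma^{\otimes n})\fa=+\infty$, hence $\tilde\psi\nw(\rho\|\sigma|u)\ofa=+\infty$ and the claimed bound is vacuous. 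In the remaining case, Lemma \ref{lemma:rho alpha} provides the factorisation $P_n\rho^{\otimes n}P_n=\tau_n^{(\alpha-1)/(2\alpha)}X_n\tau_n^{(\alpha-1)/(2\alpha)}$ with $\Tr X_n^{\alpha}=Q_{\alpha}\nw(P_n\rho^{\otimes n}P_n\|\tau_n)$, and then chaining H\"older's inequality (Lemma \ref{lemma:Holder}) with conjugate exponents $\alpha,\alpha/(\alpha-1)$ together with operator Jensen on the contraction $T_n^{1/2}$ -- precisely the estimate \eqref{variational lower bound1}--\eqref{variational lower bound4} applied to the pair $(P_n\rho^{\otimes n}P_n,\tau_n)$ with $H=T_n$ -- gives the Nagaoka-type bound
\begin{align*}
\Tr\rho^{\otimes n}T_n\le Q_{\alpha}\nw(P_n\rho^{\otimes n}P_n\|\tau_n)^{1/\alpha}\bz\Tr\sigma^{\otimes n}T_n\jz^{(\alpha-1)/\alpha}.
\end{align*}
Since $P_n$ is a finite-rank projection, $Q_{\alpha}\nw(P_n\rho^{\otimes n}P_n\|\tau_n)\le Q_{\alpha}\nw(\rho^{\otimes n}\|\sigma^{\otimes n})\fa$ by Definition \ref{def:finitedim appr}; taking $-\frac{1}{n}\log$ and then $\liminf_n$, and using $\sup_n\frac{1}{n\alpha}\log Q_{\alpha}\nw(\rho^{\otimes n}\|\sigma^{\otimes n})\fa=\tilde\psi\nw(\rho\|\sigma|u)\ofa$ together with $\liminf\ge\inf$, I would arrive at
\begin{align*}
\liminf_n-\frac{1}{n}\log\Tr\rho^{\otimes n}T_n\ge ur-\tilde\psi\nw(\rho\|\sigma|u)\ofa.
\end{align*}

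The endpoints $u\in\{0,1\}$ require separate, easier arguments: the trivial bound $\Tr\rho^{\otimes n}T_n\le(\Tr\rho)^n$ yields the $u=0$ estimate using $\tilde\psi\nw(\rho\|\sigma|0)\ofa=\log\Tr\rho$, while tensorising $\rho\le e^{\Dmax(\rho\|\sigma)}\sigma$ (when $\Dmax(\rho\|\sigma)<+\infty$) together with the type~II hypothesis yields the $u=1$ estimate using $\tilde\psi\nw(\rho\|\sigma|1)\ofa=\Dmax(\rho\|\sigma)$; the case $\Dmax(\rho\|\sigma)=+\infty$ is vacuous. Taking the infimum over admissible finite-rank test sequences and then the supremum over $u\in[0,1]$ yields $\scli_r(\rho\|\sigma)\f\ge\hat H_r\nw(\rho\|\sigma)\ofa$. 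The main obstacle is justifying the H\"older-plus-operator-Jensen step without trace-class assumptions on $\rho,\sigma$; the resolution -- compressing by $P_n=T_n^0$ to land in the finite-rank setting -- is precisely what makes essential use of the finite-rank restriction defining $\scli_r(\rho\|\sigma)\f$, and what, after regularisation over $n$, delivers the $\ofa$-version of the Hoeffding anti-divergence on the right-hand side.
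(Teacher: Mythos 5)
Your proposal is correct and follows essentially the same route as the paper's proof: reduce to $u\in(0,1)$, handle the endpoints separately via the trivial bound $\Tr\rho^{\otimes n}T_n\le(\Tr\rho)^n$ and the $\Dmax$ bound, dispose of degeneracies via Lemma \ref{lemma:Q fa infty} and the convention that $-\frac{1}{n}\log 0=+\infty$, and then establish the key estimate $Q_{\alpha}\nw(\rho^{\otimes n}\|\sigma^{\otimes n})\fa\ge(\Tr T_n\rho^{\otimes n})^{\alpha}(\Tr T_n\sigma^{\otimes n})^{1-\alpha}$ by restricting to a finite-rank compression and applying the H\"older/variational machinery, after which rearranging and passing to the liminf with the definition of $\ofa$ gives the bound. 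The only cosmetic difference is that the paper compresses directly by the contraction $T_n^{1/2}$ and then invokes Lemma \ref{lemma:fa equiv} together with Corollary \ref{cor:trace mon} (itself a consequence of the variational formula with $H=I$), whereas you compress by the support projection $P_n=T_n^0$ and then re-run the variational estimate \eqref{variational lower bound1}--\eqref{variational lower bound4} with $H=T_n$ plus the operator Jensen step \eqref{sc lower proof1} — both yield the identical scalar inequality, so the argument is the same in substance.
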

\begin{proof}
Let us fix an $r\in\bR$. 
We need to prove that for every $u\in[0,1]$, 
\begin{align}\label{sc r fa lower}
\scli_r(\rho\|\sigma)\f\ge ur-\tilde\psi\nw(\rho\|\sigma|u)\ofa\s .
\end{align}
The cases $u=0$ and $u=1$ can be proved exactly the same way as in the proof of 
Proposition \ref{lemma:sc optimality} above. 
For the rest, let us fix an $u\in(0,1)$, with corresponding $\alpha=1/(1-u)>1$.
If $\tilde\psi\nw(\rho\|\sigma|u)\ofa=+\infty$ then \eqref{sc r fa lower} holds trivially, and hence for the rest we assume that $\tilde\psi\nw(\rho\|\sigma|u)\ofa<+\infty$.
In particular, we have $\rho^0\le\sigma^0$, according to Lemma \ref{lemma:Q fa infty}.

Let $T_n\in\B(\hil^{\otimes n})_{[0,I]}$, $n\in\bN$, be a sequence of finite-rank tests such that 
\begin{align*}
\liminf_{n\to+\infty}-\frac{1}{n}\log \Tr T_n^{1/2}\sigma^{\otimes n}T_n^{1/2}\ge r.
\end{align*}
Assume first that $T_n^{1/2}\rho^{\otimes n}T_n^{1/2}\ne 0$, whence, by the assumption that 
$\rho^0\le\sigma^0$, we also have $T_n^{1/2}\sigma^{\otimes n}T_n^{1/2}\ne 0$.
By Lemma \ref{lemma:fa equiv}, 
\begin{align*}
Q_{\alpha}\nw(\rho^{\otimes n}\|\sigma^{\otimes n})\fa
\ge
Q_{\alpha}\nw(T_n^{1/2}\rho^{\otimes n}T_n^{1/2}\|T_n^{1/2}\sigma^{\otimes n}T_n^{1/2})
\ge
\bz\Tr T_n^{1/2}\rho^{\otimes n}T_n^{1/2}\jz^{\alpha}
\bz\Tr T_n^{1/2}\sigma^{\otimes n}T_n^{1/2}\jz^{1-\alpha},
\end{align*}
where the second inequality follows from Corollary \ref{cor:trace mon}.
A simple rearrangement yields
\begin{align*}
-\frac{1}{n}\log \Tr T_n^{1/2}\rho^{\otimes n}T_n^{1/2}
&\ge
\frac{\alpha-1}{\alpha}\bz-\frac{1}{n}\log \Tr T_n^{1/2}\sigma^{\otimes n}T_n^{1/2}\jz
-\frac{1}{\alpha}\frac{1}{n}\psi\nw(\rho^{\otimes n}\|\sigma^{\otimes n}|\alpha)\fa\\
&=
u\bz-\frac{1}{n}\log \Tr T_n^{1/2}\sigma^{\otimes n}T_n^{1/2}\jz
-\underbrace{\frac{1}{n}\tilde\psi\nw(\rho^{\otimes n}\|\sigma^{\otimes n}|u)\fa}_{
\le \tilde\psi\nw(\rho\|\sigma|u)\ofa}\\
&\ge
u\bz-\frac{1}{n}\log \Tr T_n^{1/2}\sigma^{\otimes n}T_n^{1/2}\jz
-\tilde\psi\nw(\rho\|\sigma|u)\ofa.
\end{align*}
These inequalities also hold (trivially, with the leftmost expression being $+\infty$) when 
$T_n^{1/2}\rho^{\otimes n}T_n^{1/2}=0$. Thus, we get 
\begin{align*}
\liminf_{n\to+\infty}-\frac{1}{n}\log \Tr T_n^{1/2}\rho^{\otimes n}T_n^{1/2}
&\ge
ur-\tilde\psi\nw(\rho\|\sigma|u)\ofa.
\end{align*}
Since this holds for every test sequences as above, \eqref{sc r fa lower} follows.
\end{proof}

\begin{lemma}\label{lemma:sc finiterank}
For finite-rank PSD operators $\rho,\sigma$ on a Hilbert space,
with $0\ne \rho^0\le\sigma^0$, we have
\begin{align}\label{sc upper}
\scl_r(\rho\|\sigma)\le H_r\nw(\rho\|\sigma),\ds\ds\ds r\in\bR.
\end{align}
\end{lemma}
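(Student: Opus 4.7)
The plan is to reduce the statement to the known finite-dimensional density-operator version of the strong converse theorem from \cite{MO}, by combining a trivial restriction-to-the-support argument with the scaling laws \eqref{Renyi scaling}--\eqref{Hanti scaling}.

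First, I would note that $P:=\sigma^0\in\bP_f(\hil)$ by finite-rankedness of $\sigma$, and that $\rho^0\le P$ by assumption, so $\rho=P\rho P$ and $\sigma=P\sigma P$ both act nontrivially only on the finite-dimensional subspace $P\hil$. For any test $T_n\in\B(\hil^{\otimes n})_{[0,I]}$, writing $T_n':=P^{\otimes n}T_nP^{\otimes n}\in\B((P\hil)^{\otimes n})_{[0,I]}$ gives $\tos_n(T_n\,|\,\rho^{\otimes n})=\tos_n(T_n'\,|\,\rho^{\otimes n})$ and $\beta_n(T_n\,|\,\sigma^{\otimes n})=\beta_n(T_n'\,|\,\sigma^{\otimes n})$. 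Hence, in computing $\scl_r(\rho\|\sigma)$, it is enough to optimize over tests on the finite-dimensional Hilbert space $P\hil$ at each $n$.

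Next, I would normalize by setting $\tilde\rho:=\rho/\Tr\rho$ and $\tilde\sigma:=\sigma/\Tr\sigma$, which are honest finite-dimensional density operators on $P\hil$. The scaling relation
\begin{align*}
\tos_n(T_n'\,|\,\rho^{\otimes n})&=(\Tr\rho)^n\tos_n(T_n'\,|\,\tilde\rho^{\otimes n}),\\
\beta_n(T_n'\,|\,\sigma^{\otimes n})&=(\Tr\sigma)^n\beta_n(T_n'\,|\,\tilde\sigma^{\otimes n})
\end{align*}
yields the identity $\scl_r(\rho\|\sigma)=\scl_{r+\log\Tr\sigma}(\tilde\rho\|\tilde\sigma)-\log\Tr\rho$, and analogous identities for $\scli$ and $\scls$. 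The finite-dimensional strong-converse theorem \cite[Theorem 4.10]{MO} (or its straightforward adaptation to finite-rank density operators) applied to $\tilde\rho,\tilde\sigma\in\S(P\hil)$ gives
\begin{align*}
\scl_{r+\log\Tr\sigma}(\tilde\rho\|\tilde\sigma)\le H_{r+\log\Tr\sigma}\nw(\tilde\rho\|\tilde\sigma).
\end{align*}

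Finally, combining these with the Hoeffding anti-divergence scaling law \eqref{Hanti scaling} (applied with $\lambda=1/\Tr\rho$, $\eta=1/\Tr\sigma$) yields
\begin{align*}
H_{r+\log\Tr\sigma}\nw(\tilde\rho\|\tilde\sigma)=H_r\nw(\rho\|\sigma)+\log\Tr\rho,
\end{align*}
so that
\begin{align*}
\scl_r(\rho\|\sigma)=\scl_{r+\log\Tr\sigma}(\tilde\rho\|\tilde\sigma)-\log\Tr\rho\le H_r\nw(\rho\|\sigma),
\end{align*}
which is the claim. There is no real obstacle here; the only point to double-check is that the finite-dimensional strong converse theorem of \cite{MO}, though usually stated for density operators on a finite-dimensional Hilbert space, applies in the present setting after the support-restriction and normalization steps; all other manipulations are bookkeeping using the scaling identities already recorded in the paper.
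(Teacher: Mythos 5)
Your plan follows essentially the same route as the paper's own proof: restrict to the support of $\sigma$ (so everything lives in the finite-dimensional subspace $P\hil$), normalize to density operators, invoke the finite-dimensional strong converse from \cite{MO}, and unwind via the scaling identities \eqref{Renyi scaling}--\eqref{Hanti scaling} together with $\scl_r(\lambda\rho\|\eta\sigma)=\scl_{r+\log\eta}(\rho\|\sigma)-\log\lambda$. The bookkeeping you describe (that restricting $T_n$ to $T_n':=P^{\otimes n}T_n P^{\otimes n}$ leaves the generalized type I and type II errors unchanged, that $H_{r+\log\Tr\sigma}\nw(\tilde\rho\|\tilde\sigma)=H_r\nw(\rho\|\sigma)+\log\Tr\rho$, etc.) is all correct.

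However, your concluding caveat is aimed at the wrong place. The worry is not whether the support restriction or normalization works --- those are fine --- but rather that \cite[Theorem 4.10]{MO} itself does not prove the inequality in full generality. As the paper notes, the proof of \cite[Theorem 4.10]{MO} carries an implicit assumption that $D(\rho\|\sigma)\ne\Dmax(\rho\|\sigma)$; the degenerate case $D(\rho\|\sigma)=\Dmax(\rho\|\sigma)$ (where the graph of $\alpha\mapsto\psi\nw(\rho\|\sigma|\alpha)$ is affine) is not covered by that argument and requires the separate treatment in \cite{HM_sc_opalg}. Your proof, as written, would therefore only establish the lemma for pairs $\rho,\sigma$ whose normalizations satisfy $D(\tilde\rho\|\tilde\sigma)\ne\Dmax(\tilde\rho\|\tilde\sigma)$; you need to cite \cite{HM_sc_opalg} for the remaining case to close the gap.
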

\begin{proof}
The inequality in \eqref{sc upper} was proved in \cite[Theorem 4.10]{MO} for finite-rank density operators, under the implicit assumption that 
$D(\rho\|\sigma)\ne \Dmax(\rho\|\sigma)$, and it was proved in \cite{HM_sc_opalg} in the case 
$D(\rho\|\sigma)= \Dmax(\rho\|\sigma)$.
The case of general PSD operators follows easily by replacing $\rho$ and $\sigma$
with $\rho/\Tr\rho$ and $\sigma/\Tr\sigma$, respectively, and using the scaling laws
\eqref{Hanti scaling} and 
$\scls_r(\lambda\rho\|\eta\sigma)=\scls_{r+\log\eta}(\rho\|\sigma)-\log\lambda$.
\end{proof}

\begin{prop}\label{prop:sc achievability}
Let $\rho,\sigma\in\B(\hil)\p$ be such that 
$\rho^0\le\sigma^0$. For every $r\in\bR$,
\begin{align*}
\begin{array}{lllllllll}
\hat H_r\nw(\rho\|\sigma) &\le& 
\scli_r(\rho\|\sigma) &\le& \scls_r(\rho\|\sigma) &\le& \scl_r(\rho\|\sigma) 
\\
 & &   \ds\vertleq &  & \ds\vertleq&  & \ds\vertleq \\
\hat H_r\nw(\rho\|\sigma)\ofa &\le & \scli_r(\rho\|\sigma)\f &\le & \scls_r(\rho\|\sigma)\f 
&\le & \scl_r(\rho\|\sigma)\f &\le & \hat H_r\nw(\rho\|\sigma)\fa.
\end{array}
\end{align*}
\end{prop}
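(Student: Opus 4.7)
The plan is to dispatch the easy inequalities first and then concentrate on the single genuinely new bound $\scl_r(\rho\|\sigma)\f \le \hat H_r\nw(\rho\|\sigma)\fa$, which I would prove by a finite-dimensional reduction combined with a minimax argument. The horizontal inequalities and the three vertical inequalities are immediate from the definitions (in each case the infimum is taken over a smaller set, or a smaller asymptotic quantity is estimated). The top-left bound $\hat H_r\nw(\rho\|\sigma)\le \scli_r(\rho\|\sigma)$ follows from Proposition \ref{lemma:sc optimality} applied to $\vec\rho=(\rho^{\otimes n})_n$, $\vec\sigma=(\sigma^{\otimes n})_n$, once one notes that $\tilde\psi\nw(\vec\rho\|\vec\sigma|u)=\tilde\psi\nw(\rho\|\sigma|u)$ for every $u\in[0,1]$ by Lemma \ref{lemma:tensor product} and Remark \ref{rem:psi tilde fa endpoint}; and $\hat H_r\nw(\rho\|\sigma)\ofa\le\scli_r(\rho\|\sigma)\f$ is exactly Proposition \ref{prop:sc lower f}.

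For the remaining bound I would first pass to $\ran\sigma^0$ and assume WLOG that $\sigma^0=I$: using $\rho^0\le\sigma^0$, any test $T_n$ on $\hil^{\otimes n}$ can be replaced by the compression $(\sigma^0)^{\otimes n}T_n(\sigma^0)^{\otimes n}$ without affecting either trace and without sacrificing the finite-rank property. With this reduction, every $P\in\bP_f(\hil)_{\rho,\sigma}^+$ automatically satisfies $(P\sigma P)^0=P\ge (P\rho P)^0$. Fix such a $P$. Applying Lemma \ref{lemma:sc finiterank} to the finite-rank pair $P\rho P,P\sigma P$ yields, for each $\varepsilon>0$, tests $T_n$ supported on $(\ran P)^{\otimes n}$ with
\begin{align*}
\liminf_n\bz-\tfrac{1}{n}\log\Tr T_n^{1/2}(P\sigma P)^{\otimes n}T_n^{1/2}\jz\ge r,\ds
\lim_n\bz -\tfrac{1}{n}\log\Tr T_n^{1/2}(P\rho P)^{\otimes n}T_n^{1/2}\jz\le H_r\nw(P\rho P\|P\sigma P)+\varepsilon.
\end{align*}
Extending each $T_n$ by zero to $\hil^{\otimes n}$ produces a finite-rank test on $\hil^{\otimes n}$ with $P^{\otimes n}T_nP^{\otimes n}=T_n$, so $\Tr T_n^{1/2}\rho^{\otimes n}T_n^{1/2}=\Tr T_n^{1/2}(P\rho P)^{\otimes n}T_n^{1/2}$, and similarly for $\sigma$. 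This gives $\scl_r(\rho\|\sigma)\f\le H_r\nw(P\rho P\|P\sigma P)$, and continuity of $\tilde\psi\nw(P\rho P\|P\sigma P|\cdot)$ on $[0,1]$ (Lemma \ref{lemma:restricted psi conv}) lets us replace $H_r\nw$ by $\hat H_r\nw$ on the right. Taking the infimum over $P$,
\begin{align*}
\scl_r(\rho\|\sigma)\f\le \inf_{P\in\bP_f(\hil)_{\rho,\sigma}^+}\hat H_r\nw(P\rho P\|P\sigma P).
\end{align*}

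To identify this infimum with $\hat H_r\nw(\rho\|\sigma)\fa$ I would invoke the minimax Lemma \ref{lemma:minimax2} with $X=[0,1]$ compact, $Y=\bP_f(\hil)_{\rho,\sigma}^+$ upward directed under the PSD order (the join $P_1\vee P_2$ stays in the set), and $f(u,P):=ur-\tilde\psi\nw(P\rho P\|P\sigma P|u)$. Upper semicontinuity of $f(\cdot,P)$ on $[0,1]$ is provided by Lemma \ref{lemma:restricted psi conv}, and monotonicity of $f(u,\cdot)$ as a decreasing function of $P$ follows from Lemma \ref{lemma:restricted psi conv}(iii). The conclusion of the minimax lemma then reads
\begin{align*}
\hat H_r\nw(\rho\|\sigma)\fa=\sup_u\inf_P f(u,P)=\inf_P\sup_u f(u,P)=\inf_P\hat H_r\nw(P\rho P\|P\sigma P),
\end{align*}
which closes the proof.

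The main obstacle I anticipate is structural rather than technical: the upper bound cannot be strengthened from $\hat H_r\nw(\rho\|\sigma)\fa$ to $\hat H_r\nw(\rho\|\sigma)$ in general, as Examples \ref{ex:infdim states}--\ref{ex:infdim states2} demonstrate that these may differ, so the asymmetry between the two extremes of the diagram is unavoidable without additional hypotheses (for instance those of Proposition \ref{prop:Hanti equality}). The technical work reduces to checking the semicontinuity and monotonicity conditions for the minimax swap over the directed set of finite-rank projections, which is already encapsulated in Lemma \ref{lemma:restricted psi conv} and is what makes the argument clean.
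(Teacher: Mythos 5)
Your proof is correct and essentially identical to the paper's: both reduce to the single nontrivial bound $\scl_r(\rho\|\sigma)\f\le\hat H_r\nw(\rho\|\sigma)\fa$, apply Lemma \ref{lemma:sc finiterank} to the compressed finite-rank pair $(P\rho P,P\sigma P)$, extend the tests trivially to $\hil^{\otimes n}$, and then swap the infimum over $\bP_f(\hil)_{\rho,\sigma}^+$ with the max over $u\in[0,1]$ using Lemma \ref{lemma:minimax2} together with the semicontinuity and monotonicity facts from Lemma \ref{lemma:restricted psi conv}. The only deviations are cosmetic: the WLOG reduction to $\sigma^0=I$ is unnecessary (the needed containment $(P\rho P)^0\le(P\sigma P)^0$ already follows from $\rho^0\le\sigma^0$ for $P\in\bP_f(\hil)_{\rho,\sigma}^+$), and you justify $H_r\nw(P\rho P\|P\sigma P)=\hat H_r\nw(P\rho P\|P\sigma P)$ via continuity on $[0,1]$ rather than by citing Propositions \ref{prop:Hanti equality} and \ref{cor:fa equality}, but both routes establish the same finite-dimensional identity.
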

\begin{proof}
By propositions \ref{lemma:sc optimality} and \ref{prop:sc lower f}, 
we only need to prove $\scl_r(\rho\|\sigma)\f \le \hat H_r\nw(\rho\|\sigma)\fa$.
Let $P\in\bP_f(\hil)_{\rho,\sigma}^+$.
According to 
Lemma \ref{lemma:sc finiterank}, there exists a sequence of tests 
$(S_{P,n})_{n\in\bN}$ such that $S_{P,n}\le P^{\otimes n}$, and
\begin{align}
\liminf_{n\to +\infty}-\frac{1}{n}\log\Tr(P\sigma P)^{\otimes n}S_{P,n}&\ge r,
\label{Hoeffding ac proof1}\\
\lim_{n\to +\infty}-\frac{1}{n}\log\Tr(P\rho P)^{\otimes n}S_{P,n}
&\le
H_r\nw(P\rho P\|P\sigma P)=
\max_{u\in[0,1]}\left\{ur-\tilde\psi\nw(P\rho P\|P\sigma P|u)\right\},
\label{Hoeffding ac proof2}
\end{align}
where the equality is due to Propositions \ref{prop:Hanti equality} and \ref{cor:fa equality}.
Note that 
\begin{align*}
\Tr(P\sigma P)^{\otimes n}S_{P,n}
=
\Tr\sigma^{\otimes n} 
\underbrace{\bz P^{\otimes n}S_{P,n}P^{\otimes n}\jz}_{=S_{P,n}},\ds\ds\ds
\Tr(P\rho P)^{\otimes n}S_{P,n}
=
\Tr\rho^{\otimes n} 
\underbrace{\bz P^{\otimes n}S_{P,n}P^{\otimes n}\jz}_{=S_{P,n}},
\end{align*}
and therefore \eqref{Hoeffding ac proof1}--\eqref{Hoeffding ac proof2} yield
\begin{align*}
\scl_r(\rho\|\sigma)\f\le
\max_{u\in[0,1]}\left\{ur-\tilde\psi\nw(P\rho P\|P\sigma P|u)\right\}.
\end{align*}
Thus,
\begin{align}\label{sc proof1}
\scl_r(\rho\|\sigma)\f&\le
\inf_{\bP_f(\hil)_{\rho,\sigma}^+}\max_{u\in[0,1]}\left\{ur-\tilde\psi\nw(P\rho P\|P\sigma P|u)\right\}.
\end{align}
By Lemma \ref{lemma:restricted psi conv}, 
$u\mapsto ur-\tilde\psi\nw(P\rho P\|P\sigma P|u)$ is continuous 
on the compact set $[0,1]$ for every $P\in\bP_f(\hil)_{\rho,\sigma}^+$.
On the other hand, 
$\bP_f(\hil)_{\rho,\sigma}^+$ is an upward directed partially ordered set with respect to the PSD order, and 
for any $u\in[0,1]$, 
$P\mapsto ur-\tilde\psi\nw(P\rho P\|P\sigma P|u)$ is monotone decreasing 
on $\bP_f(\hil)$, again by Lemma \ref{lemma:restricted psi conv}.
Hence, by Lemma \ref{lemma:minimax2}, we may exchange the 
inf and the max in \eqref{sc proof1}. Thus, we get the upper bound
\begin{align*}
\scl_r(\rho\|\sigma)\f&\le
\max_{u\in[0,1]}\inf_{\bP_f(\hil)_{\rho,\sigma}^+}\left\{ur-\tilde\psi\nw(P\rho P\|P\sigma P|u)\right\}\\
&=
\max_{u\in[0,1]}\left\{ur-\sup_{P_f\in\bP(\hil)}\tilde\psi\nw(P\rho P\|P\sigma P|u)\right\}\\
&=
\max_{u\in[0,1]}\left\{ur-\tilde\psi\nw(\rho\|\sigma |u)\fa\right\}
=
\hat H_r\nw(\rho\|\sigma),
\end{align*}
as required.
\end{proof}

\begin{thm}\label{thm:sc}
Let $\rho,\sigma\in\B(\hil)\p$ be such that 
$D_{\alpha}\nw(\rho\|\sigma)<+\infty$ for some $\alpha\in(1,+\infty)$, and 
$Q_{\alpha}\nw(\rho\|\sigma)\fa=Q_{\alpha}\nw(\rho\|\sigma)$, $\alpha>1$.
Then 
\begin{align*}
\scli_r(\rho\|\sigma)&=\scls_r(\rho\|\sigma)=\scl_r(\rho\|\sigma)=
\scli_r(\rho\|\sigma)\f=\scls_r(\rho\|\sigma)\f=\scl_r(\rho\|\sigma)\f\\
&=
H_r\nw(\rho\|\sigma)=H_r\nw(\rho\|\sigma)\ofa=H_r\nw(\rho\|\sigma)\fa
=\hat H_r\nw(\rho\|\sigma)=\hat H_r\nw(\rho\|\sigma)\fa=\hat H_r\nw(\rho\|\sigma)\ofa,
\ds\ds\ds\ds\ds\ds\ds\ds r\in\bR.
\end{align*}
On the other hand, if $\rho,\sigma$ are trace-class and $D_{\alpha}\nw(\rho\|\sigma)=+\infty$
for all $\alpha\in(1,+\infty)$, then 
\begin{align*}
\scli_r(\rho\|\sigma)&=\scls_r(\rho\|\sigma)=\scl_r(\rho\|\sigma)=
\scli_r(\rho\|\sigma)\f=\scls_r(\rho\|\sigma)\f=\scl_r(\rho\|\sigma)\f\\
&=
-\log\Tr\rho
=\hat H_r\nw(\rho\|\sigma)=\hat H_r\nw(\rho\|\sigma)\ofa=\hat H_r\nw(\rho\|\sigma)\fa\\
&>-\infty=H_r\nw(\rho\|\sigma)=H_r\nw(\rho\|\sigma)\ofa=H_r\nw(\rho\|\sigma)\fa,
\ds\ds\ds\ds\ds\ds\ds\ds\ds\ds\ds\ds\ds\ds\ds\ds\ds\ds r\in\bR.
\end{align*}
\end{thm}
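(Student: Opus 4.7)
The plan is to combine three ingredients already available in the paper: the lower bound on $\scli_r$ from Proposition~\ref{lemma:sc optimality}, the two-sided sandwich from Proposition~\ref{prop:sc achievability}, and the identification of the six Hoeffding anti-divergences delivered by Proposition~\ref{prop:Hanti equality} in the first case and by Lemma~\ref{lemma:Hanti bounds}~(iii) in the second. Both parts of the theorem will then follow by squeezing.

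For the first assertion, the finiteness of $D_{\alpha_0}\nw(\rho\|\sigma)$ at some $\alpha_0>1$ forces $\rho\in\B^{\alpha_0}(\hil,\sigma)$, hence $\rho^0\le\sigma^0$ by Lemma~\ref{lemma:rho alpha}, so that Proposition~\ref{prop:sc achievability} applies and sandwiches every strong converse exponent between $\hat H_r\nw(\rho\|\sigma)$ and $\hat H_r\nw(\rho\|\sigma)\fa$. The hypotheses of the theorem are literally those of Proposition~\ref{prop:Hanti equality}, so the six Hoeffding anti-divergences displayed in~\eqref{Hanti all eq} collapse to a single number; the two ends of the sandwich therefore coincide, every intermediate term is pinched to that common value, and the asserted chain of equalities follows at once.

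For the second assertion, Lemma~\ref{lemma:Hanti bounds}~(iii) directly identifies $H_r\nw(\rho\|\sigma)=H_r\nw(\rho\|\sigma)\ofa=H_r\nw(\rho\|\sigma)\fa=-\infty$ and $\hat H_r\nw(\rho\|\sigma)=\hat H_r\nw(\rho\|\sigma)\ofa=\hat H_r\nw(\rho\|\sigma)\fa=-\log\Tr\rho$. The upper bound $\scl_r(\rho\|\sigma)\f\le\hat H_r\nw(\rho\|\sigma)\fa=-\log\Tr\rho$ is then immediate from Proposition~\ref{prop:sc achievability}, while the matching lower bound $\scli_r(\rho\|\sigma)\ge\hat H_r\nw(\rho\|\sigma)=-\log\Tr\rho$ is exactly the $u=0$ case of Proposition~\ref{lemma:sc optimality}, i.e.\ essentially the trivial estimate $\tos_n(T_n|\rho^{\otimes n})\le\Tr\rho^{\otimes n}=(\Tr\rho)^n$. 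A final squeeze produces the common value $-\log\Tr\rho$ for every strong converse exponent.

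I do not foresee any genuine obstacle: all the heavy lifting — finite-dimensional approximability and the coincidence of the Hoeffding variants from Section~\ref{sec:sand Renyi}, together with the Hoeffding-type upper and lower bounds from earlier in this subsection — has already been done. The only mild point of care is that Proposition~\ref{prop:sc achievability} is stated under $\rho^0\le\sigma^0$; this is automatic in Part~1 from the finiteness of $D_{\alpha_0}\nw$, and it is naturally implicit in the trace-class setup of Part~2.
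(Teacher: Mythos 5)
Your overall structure — Proposition~\ref{lemma:sc optimality} and Proposition~\ref{prop:sc achievability} for the sandwich, Proposition~\ref{prop:Hanti equality} to collapse the anti-divergences in Part~1, Lemma~\ref{lemma:Hanti bounds}~(iii) for Part~2 — is exactly the route the paper takes, and your handling of Part~1 is complete: finiteness of $D_{\alpha_0}\nw$ does force $\rho\in\B^{\alpha_0}(\hil,\sigma)$, hence $\rho^0\le\sigma^0$ via Lemma~\ref{lemma:rho alpha}, and the rest follows as you say.

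The problem is your closing remark about Part~2. The claim that $\rho^0\le\sigma^0$ is ``naturally implicit in the trace-class setup'' is false. For trace-class $\rho,\sigma$, the hypothesis $D_{\alpha}\nw(\rho\|\sigma)=+\infty$ for all $\alpha>1$ is entirely consistent with $\rho^0\nleq\sigma^0$ — indeed $\rho^0\nleq\sigma^0$ is one of the standard ways all the sandwiched R\'enyi divergences become infinite simultaneously, and the theorem's statement does not exclude it (unlike Examples~\ref{ex:infdim states}--\ref{ex:infdim states2}, which happen to have $\rho^0=\sigma^0$). Take for instance $\hil=\bC^2$, $\rho=\tfrac12 I$, $\sigma=\pr{e_2}$: then $\rho,\sigma$ are trace-class, $D_{\alpha}\nw(\rho\|\sigma)=+\infty$ for all $\alpha>1$, yet $\rho^0=I\nleq\pr{e_2}=\sigma^0$. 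In that situation Proposition~\ref{prop:sc achievability} cannot be invoked as stated: its upper bound $\scl_r(\rho\|\sigma)\f\le\hat H_r\nw(\rho\|\sigma)\fa$ is proved by applying Lemma~\ref{lemma:sc finiterank} to every $P\in\bP_f(\hil)_{\rho,\sigma}^+$, and that lemma requires $(P\rho P)^0\le(P\sigma P)^0$, which fails for some finite-rank $P$ once $\rho^0\nleq\sigma^0$. Restricting the infimum to those $P$ with $P\le\sigma^0$ does not rescue the argument either, because then $\sup_P\log\Tr P\rho P$ only reaches $\log\Tr(\sigma^0\rho\sigma^0)<\log\Tr\rho$, producing an upper bound strictly larger than $-\log\Tr\rho$.

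The lower bound $\scli_r(\rho\|\sigma)\ge\hat H_r\nw(\rho\|\sigma)=-\log\Tr\rho$ (via Proposition~\ref{lemma:sc optimality}, which imposes no support condition) and the identification $\hat H_r\nw(\rho\|\sigma)=\hat H_r\nw(\rho\|\sigma)\ofa=\hat H_r\nw(\rho\|\sigma)\fa=-\log\Tr\rho$ from Lemma~\ref{lemma:Hanti bounds}~(iii) are both fine without $\rho^0\le\sigma^0$; what is missing is a matching upper bound for $\scl_r(\rho\|\sigma)\f$. This can be supplied directly: with $c_n:=\min\{1,e^{-nr}(\Tr\sigma)^{-n}\}$, the (finite-rank, after a straightforward truncation in the infinite-dimensional case) tests $T_n:=I-(1-c_n)(\sigma^0)^{\otimes n}$ satisfy $\beta_n(T_n|\sigma^{\otimes n})\le e^{-nr}$ and $\tos_n(T_n|\rho^{\otimes n})=(\Tr\rho)^n-(1-c_n)(\Tr\sigma^0\rho\sigma^0)^n$, whose exponent tends to $-\log\Tr\rho$ precisely because $\rho^0\nleq\sigma^0$ forces $\Tr\sigma^0\rho\sigma^0<\Tr\rho$. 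You should either include such a direct achievability argument for the $\rho^0\nleq\sigma^0$ subcase, or explicitly note that Part~2 is being proved only under the additional assumption $\rho^0\le\sigma^0$. (Incidentally, the paper's one-line proof does not address this subcase either, so the issue is not unique to your write-up, but your proposal makes the erroneous claim explicit, and that should be corrected.)
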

\begin{proof}
Immediate from Propositions
\ref{prop:sc achievability}, \ref{prop:Hanti equality}, and 
Lemma \ref{lemma:Hanti bounds}.
\end{proof}

As a special case of Theorem \ref{thm:sc}, we get the exact characterization of the strong converse exponent of discriminating quantum states on a separable Hilbert space, as follows:

\begin{cor}\label{cor:state sc}
Let $\rho,\sigma\in\B(\hil)\p$ be density operators. For every $r\in\bR$,  
\begin{align}\label{state sc exponent}
\scli_r(\rho\|\sigma)&=\scls_r(\rho\|\sigma)=\scl_r(\rho\|\sigma)=
\scli_r(\rho\|\sigma)\f=\scls_r(\rho\|\sigma)\f=\scl_r(\rho\|\sigma)\f
=
\hat H_r\nw(\rho\|\sigma)\ge 0,
\end{align}
and 
\begin{align}\label{Hr pos}
\hat H_r\nw(\rho\|\sigma)
>0\ds\iff\ds\exists\,\alpha>1:\,D_{\alpha}\nw(\rho\|\sigma)<+\infty\ds\text{and}\ds
r>D(\rho\|\sigma).
\end{align}
\end{cor}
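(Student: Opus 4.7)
The plan is to derive Corollary \ref{cor:state sc} directly from Theorem \ref{thm:sc} combined with Lemma \ref{lemma:Hanti bounds}, after noting that density operators automatically satisfy the hypotheses needed to invoke these earlier results.

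First, I would dispatch the chain of equalities in \eqref{state sc exponent}. Since $\rho$ and $\sigma$ are density operators, they are in particular PSD trace-class operators, so Proposition \ref{cor:fa equality} gives $Q_{\alpha}\nw(\rho\|\sigma)\fa = Q_{\alpha}\nw(\rho\|\sigma)$ for every $\alpha>1$. This verifies the hypothesis common to both cases of Theorem \ref{thm:sc}. I would then split into the case where $D_{\alpha_0}\nw(\rho\|\sigma)<+\infty$ for some $\alpha_0>1$ and the case where $D_{\alpha}\nw(\rho\|\sigma)=+\infty$ for every $\alpha>1$; in either case Theorem \ref{thm:sc} yields the desired chain of equalities, where in the second case we also use that $\hat H_r\nw(\rho\|\sigma)=-\log\Tr\rho=0$ because $\Tr\rho=1$. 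The non-negativity $\hat H_r\nw(\rho\|\sigma)\ge 0$ is immediate from the definition by choosing $u=0$ and using $\tilde\psi\nw(\rho\|\sigma|0)=\log\Tr\rho=0$.

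For the characterization \eqref{Hr pos}, I would invoke Lemma \ref{lemma:Hanti bounds} together with $\Tr\rho=1$. In the case where $D_{\alpha}\nw(\rho\|\sigma)=+\infty$ for all $\alpha>1$, part (iii) of that lemma gives $\hat H_r\nw(\rho\|\sigma)=-\log\Tr\rho=0$, so the left-hand side of \eqref{Hr pos} fails, and the right-hand side also fails (since its first conjunct is false). In the case where some $D_{\alpha_0}\nw(\rho\|\sigma)<+\infty$, part (ii) of Lemma \ref{lemma:Hanti bounds} (with $\Tr\rho=1$) simplifies to
\begin{align*}
\hat H_r\nw(\rho\|\sigma)=\begin{cases} 0, & r\le D(\rho\|\sigma),\\ \in(0, r-D(\rho\|\sigma)), & r>D(\rho\|\sigma),\end{cases}
\end{align*}
so $\hat H_r\nw(\rho\|\sigma)>0$ is exactly equivalent to $r>D(\rho\|\sigma)$, matching the right-hand side of \eqref{Hr pos}.

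I don't anticipate a substantive obstacle here: the corollary is essentially the specialization of Theorem \ref{thm:sc} and Lemma \ref{lemma:Hanti bounds} to the trace-normalized setting, and the only bookkeeping is checking that the hypotheses of those results are automatic for density operators (trace-class, so Proposition \ref{cor:fa equality} applies) and that the scalar terms $-\log\Tr\rho$ and $\frac{1}{\Tr\rho}D(\rho\|\sigma)$ collapse to $0$ and $D(\rho\|\sigma)$ respectively. The only minor subtlety is ensuring that in the case \textit{all} $D_\alpha\nw$ are infinite, the right-hand side of \eqref{Hr pos} genuinely fails (its first conjunct does), which is why the equivalence is stated as a conjunction rather than just $r>D(\rho\|\sigma)$.
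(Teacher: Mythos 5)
Your proposal is correct and matches the paper's own proof: the paper also derives the chain of equalities in \eqref{state sc exponent} directly from Theorem \ref{thm:sc} (with the hypotheses satisfied since density operators are trace-class, so Proposition \ref{cor:fa equality} gives $Q_{\alpha}\nw(\rho\|\sigma)\fa=Q_{\alpha}\nw(\rho\|\sigma)$), and obtains \eqref{Hr pos} from Lemma \ref{lemma:Hanti bounds} after setting $\Tr\rho=1$. Your case split and the bookkeeping with $-\log\Tr\rho=0$ and $\frac{1}{\Tr\rho}D(\rho\|\sigma)=D(\rho\|\sigma)$ are exactly what the paper's terse proof leaves implicit.
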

\begin{proof}
The equalities in \eqref{state sc exponent} are immediate from Theorem \ref{thm:sc}, and the characterization of positivity in \eqref{Hr pos} follows from Lemma \ref{lemma:Hanti bounds}.
\end{proof}

\begin{rem}\label{rem:Stein}
Let $\rho$ and $\sigma$ be density operators. 
According to the direct part of the quantum Stein's lemma \cite{HP,JOPS}, for every $r<D(\rho\|\sigma)$ there exists a test sequence $T_n\in\B(\hil^{\otimes n})_{[0,1]}$, $n\in\bN$, such that 
\begin{align}\label{Stein}
\lim_{n\to+\infty}\Tr\rho^{\otimes n}(I-T_n)=0,\ds\ds\text{and}\ds\ds
\liminf_{n\to+\infty}-\frac{1}{n}\log\Tr\sigma^{\otimes n}T_n\ge r.
\end{align}
It was shown in \cite{N,ON} that in the finite-dimensional case, for any test sequence
$T_n\in\B(\hil^{\otimes n})_{[0,1]}$, $n\in\bN$,
\begin{align}\label{Stein sc}
r:=\liminf_{n\to+\infty}-\frac{1}{n}\log\Tr\sigma^{\otimes n}T_n>D(\rho\|\sigma)
\ds\imp\ds
\lim_{n\to+\infty}\Tr\rho^{\otimes n}(I-T_n)=1.
\end{align}
That is, if the type II error decreases with an exponent larger than the relative entropy
then the type I error goes to $1$; this is called the \ki{strong converse} to Stein's lemma.
The optimal (lowest) speed of convergence to $1$ is exponential, with the exponent being equal to the 
Hoeffding anti-divergence $H_r\nw(\rho\|\sigma)$, according to \cite{MO}. 
Corollary \ref{cor:state sc} generalizes this to the infinite-dimensional case, with one important difference. While in the finite-dimensional case finiteness of the relative entropy 
implies strict positivity of $H_r\nw(\rho\|\sigma)$ for every $r>D(\rho\|\sigma)$, and hence 
the strong converse property, in the infinite-dimensional case it might happen that 
$D(\rho\|\sigma)<+\infty$, yet $H_r\nw(\rho\|\sigma)=0$ for every $r\in\bR$, and hence 
the type I error sequence does not converge to $1$ \ki{with an exponential speed}
along a test sequence $(T_n)_{n\in\bN}$, even if 
$\liminf_{n\to+\infty}-\frac{1}{n}\log\Tr\sigma^{\otimes n}T_n>D(\rho\|\sigma)$.
According to Corollary \ref{cor:state sc}, this happens if and only if 
$D_{\alpha}\nw(\rho\|\sigma)=+\infty$ for every $\alpha>1$. 
It is an open question what kind of behaviour can occur in this case; 
if the type II exponent is above the relative entropy,
do the type I error 
probabilities still go to $1$ (strong converse property) but with a sub-exponential speed, 
or may it happen that the strong converse property does not hold, 
i.e., \eqref{Stein sc} is not satisfied?
Note that the monotonicity of the relative entropy under measurements implies that 
\begin{align*}
\lim_{n\to+\infty}\Tr\rho^{\otimes n}(I-T_n)=0\ds\imp\ds
\limsup_{n\to+\infty}-\frac{1}{n}\log\Tr\sigma^{\otimes n}T_n\le D(\rho\|\sigma);
\end{align*}
see, e.g., the proof of (2.4) in \cite{HP}, or \cite[Proposition 5.2]{HMO2}.
In particular, \eqref{Stein} cannot hold with $r>D(\rho\|\sigma)$ for any test sequence.
\end{rem}

\subsection{Generalized cutoff rates}
\label{sec:cutoff}

Corollary \ref{cor:state sc} gives an operational interpretation to the Hoeffding anti-divergences, but 
not directly to the sandwiched R\'enyi divergences. To get such an operational interpretation, one can consider the following quantity, introduced originally in \cite{Csiszar} for the 
finite-dimensional classical case:
\begin{defin}
Let $\rho,\sigma\in\B(\hil)\p$ and $\kappa\in(0,1)$. The \ki{generalized $\kappa$-cutoff rate} 
$C_{\kappa}(\rho\|\sigma)$ is defined to be the infimum of all $r_0\in\bR$ such that 
$\scli_r(\rho\|\sigma)\ge \kappa(r-r_0)$ holds for every $r\in\bR$. Analogously, 
$C_{\kappa}(\rho\|\sigma)\fa$ is defined to be the infimum of all $r_0\in\bR$ such that 
$\scli_r(\rho\|\sigma)\f\ge \kappa(r-r_0)$ holds for every $r\in\bR$.
\end{defin}

\begin{prop}\label{prop:cutoff}
Let $\rho,\sigma\in\B(\hil)\p$.
\smallskip

\noindent (i)
For any $\kappa\in(0,1)$,
\begin{align}\label{cutoff upper}
C_{\kappa}(\rho\|\sigma)\fa\le C_{\kappa}(\rho\|\sigma)\le D_{\frac{1}{1-\kappa}}\nw(\rho\|\sigma).
\end{align}

\noindent (ii)
If $\kappa$ is such that there exist $0<\kappa_1<\kappa<\kappa_2<1$ for which  
$D_{\frac{1}{1-\kappa_j}}\nw(\rho\|\sigma)\fa<+\infty$, $j=1,2$, then
\begin{align}\label{cutoff lower}
D_{\frac{1}{1-\kappa}}\nw(\rho\|\sigma)\fa
\le
C_{\kappa}(\rho\|\sigma)\fa\le C_{\kappa}(\rho\|\sigma)\le D_{\frac{1}{1-\kappa}}\nw(\rho\|\sigma).
\end{align}
If, moreover, 
$D_{\frac{1}{1-\kappa}}\nw(\rho\|\sigma)\fa=D_{\frac{1}{1-\kappa}}\nw(\rho\|\sigma)$,
then all the inequalities in \eqref{cutoff lower} hold as equalities.
\end{prop}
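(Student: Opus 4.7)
The plan is to combine the strong converse bounds from Propositions \ref{lemma:sc optimality} and \ref{prop:sc achievability} with the Legendre--Fenchel duality between $\tilde\psi\nw(\rho\|\sigma|\cdot)\fa$ and $r\mapsto\hat H_r\nw(\rho\|\sigma)\fa$ supplied by Lemma \ref{lemma:Hanti bipolar}, together with the elementary identity $\tilde\psi\nw(\rho\|\sigma|\kappa)\fa=\kappa\,D_{\frac{1}{1-\kappa}}\nw(\rho\|\sigma)\fa$ for $\kappa\in(0,1)$, $\alpha=1/(1-\kappa)$.

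For part (i), I will first verify the right inequality. Setting $\alpha:=1/(1-\kappa)$ so that $\frac{\alpha-1}{\alpha}=\kappa$, the definition of $H_r\nw(\rho\|\sigma)$ gives the pointwise bound $H_r\nw(\rho\|\sigma)\ge\kappa\bz r-D_{\frac{1}{1-\kappa}}\nw(\rho\|\sigma)\jz$ for every $r\in\bR$. Composing this with the chain $\scli_r(\rho\|\sigma)\ge\hat H_r\nw(\rho\|\sigma)\ge H_r\nw(\rho\|\sigma)$ furnished by Proposition \ref{lemma:sc optimality} shows that $r_0:=D_{\frac{1}{1-\kappa}}\nw(\rho\|\sigma)$ is admissible in the definition of $C_{\kappa}(\rho\|\sigma)$, yielding $C_{\kappa}(\rho\|\sigma)\le D_{\frac{1}{1-\kappa}}\nw(\rho\|\sigma)$. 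The middle inequality $C_{\kappa}(\rho\|\sigma)\fa\le C_{\kappa}(\rho\|\sigma)$ will be immediate from the trivial monotonicity $\scli_r(\rho\|\sigma)\f\ge\scli_r(\rho\|\sigma)$, which holds because the infimum in the finite-rank variant runs over a smaller set of test sequences.

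For the lower bound in part (ii), I would first note that the hypothesis combined with Lemma \ref{lemma:Q fa infty} forces $\rho^0\le\sigma^0$, which enables Proposition \ref{prop:sc achievability} and in particular the inequality $\scli_r(\rho\|\sigma)\f\le\hat H_r\nw(\rho\|\sigma)\fa$ for every $r\in\bR$. Then for any $r_0'>C_{\kappa}(\rho\|\sigma)\fa$, the defining property of the cutoff rate gives $\scli_r(\rho\|\sigma)\f\ge\kappa(r-r_0')$ for every $r$, which propagates through the above bound into $\hat H_r\nw(\rho\|\sigma)\fa\ge\kappa(r-r_0')$ for every $r$. Evaluating the bipolar formula from Lemma \ref{lemma:Hanti bipolar} at $u=\kappa$ then produces
\begin{align*}
\tilde\psi\nw(\rho\|\sigma|\kappa)\fa
=\sup_{r\in\bR}\{\kappa r-\hat H_r\nw(\rho\|\sigma)\fa\}
\le\sup_{r\in\bR}\{\kappa r-\kappa(r-r_0')\}
=\kappa r_0',
\end{align*}
and using $\tilde\psi\nw(\rho\|\sigma|\kappa)\fa=\kappa\,D_{\frac{1}{1-\kappa}}\nw(\rho\|\sigma)\fa$ I conclude $D_{\frac{1}{1-\kappa}}\nw(\rho\|\sigma)\fa\le r_0'$. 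Letting $r_0'\searrow C_{\kappa}(\rho\|\sigma)\fa$ finishes the lower bound. The final equality assertion follows at once: the additional hypothesis makes the two endpoints of \eqref{cutoff lower} coincide, forcing equality throughout.

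I do not anticipate a genuine obstacle, since the argument is essentially a clean Legendre--Fenchel inversion once the achievability bound $\scli\f\le\hat H\nw\fa$ is in hand. The one point that deserves care is the role of the two-sided finiteness hypothesis at $\kappa_1<\kappa<\kappa_2$: only one side is actually consumed by the duality argument (to force $\rho^0\le\sigma^0$ via Lemma \ref{lemma:Q fa infty}), and the extra condition appears to serve as a convenient way, via convexity of $\tilde\psi\nw(\rho\|\sigma|\cdot)\fa$ on $[0,1]$ (Corollary \ref{cor:psi fa}), to guarantee a priori that $\tilde\psi\nw(\rho\|\sigma|\kappa)\fa$ and hence $D_{\frac{1}{1-\kappa}}\nw(\rho\|\sigma)\fa$ are finite, so that the identity linking $\tilde\psi\nw\fa$ at $\kappa$ with $D_{\frac{1}{1-\kappa}}\nw\fa$ is meaningful and the chain of inequalities has interpretable terms throughout.
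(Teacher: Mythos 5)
Your proof is correct. Part (i) follows the same path as the paper: the upper bound drops out of the optimality bound $\scli_r\ge\hat H_r\nw\ge H_r\nw\ge\kappa(r-D_{\frac{1}{1-\kappa}}\nw)$, and the middle inequality is the trivial observation that restricting to finite-rank tests can only increase $\scli_r$.

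Part (ii) is where you diverge, and meaningfully so. The paper's proof chooses a concrete $r$ inside the subdifferential interval $[\derleft\tilde\psi\nw(\rho\|\sigma|\kappa)\fa,\derright\tilde\psi\nw(\rho\|\sigma|\kappa)\fa]$, for which $\hat H_r\nw(\rho\|\sigma)\fa=\max_{u\in[0,1]}\{ur-\tilde\psi\nw(\rho\|\sigma|u)\fa\}$ is achieved at $u=\kappa$; this is precisely where the two-sided hypothesis $\kappa_1<\kappa<\kappa_2$ does its work, since convexity plus finiteness on both sides of $\kappa$ is what makes the one-sided derivatives finite and the subdifferential nonempty. You instead route the same Proposition \ref{prop:sc achievability} bound through the Legendre--Fenchel bipolar identity of Lemma \ref{lemma:Hanti bipolar}: from $\hat H_r\nw(\rho\|\sigma)\fa\ge\kappa(r-r_0')$ for all $r$ you read off $\tilde\psi\nw(\rho\|\sigma|\kappa)\fa\le\kappa r_0'$ directly, with no need to localize the optimizing $r$. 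This is cleaner, avoids any subdifferential bookkeeping, and as you note only consumes the hypothesis to the extent of forcing $\rho^0\le\sigma^0$ via Lemma \ref{lemma:Q fa infty}, so the two-sided condition is indeed slack in your argument (the degenerate cases $D_{\frac{1}{1-\kappa}}\nw(\rho\|\sigma)\fa=+\infty$ or $C_{\kappa}(\rho\|\sigma)\fa\in\{\pm\infty\}$ resolve themselves in extended reals). The trade-off is that the paper's explicit subgradient choice exhibits the rate $r$ at which the cutoff bound is tight, which is informative in its own right, whereas your bipolar argument is a black-box duality inversion. Both rest on the same achievability input, Proposition \ref{prop:sc achievability}, and the same convexity/lower semi-continuity of $\tilde\psi\nw(\rho\|\sigma|\cdot)\fa$ from Corollary \ref{cor:psi fa}; you simply invoke that structure through Lemma \ref{lemma:Hanti bipolar} rather than by hand.
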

\begin{proof}
(i) The first inequality in \eqref{cutoff upper} is trivial by definition. 
If $D_{\frac{1}{1-\kappa}}\nw(\rho\|\sigma)=+\infty$ then the second inequality in
\eqref{cutoff upper} holds trivially, and hence we assume the contrary. 
By Proposition \ref{lemma:sc optimality},
\begin{align*}
\scli_r(\rho\|\sigma)
\ge
\hat H_r\nw(\rho\|\sigma)
=
\sup_{u\in[0,1]}\{u r-\tilde\psi\nw(\rho\|\sigma|u)\}
\ge
\kappa r-\tilde\psi\nw(\rho\|\sigma|\kappa)
=
\kappa\Big( r-\underbrace{\frac{1}{\kappa}\tilde\psi\nw(\rho\|\sigma|\kappa)}_{=D_{\frac{1}{1-\kappa}}\nw(\rho\|\sigma)}\Big),
\end{align*}
from which the second inequality in \eqref{cutoff upper} follows by definition.

(ii)
By the assumptions,
$\tilde\psi\nw(\rho\|\sigma|\kappa_j)\fa<+\infty$, $j=1,2$, and hence
$-\infty<\derleft\tilde\psi\nw(\rho\|\sigma|\kappa)\fa
\le
\derright\tilde\psi\nw(\rho\|\sigma|\kappa)\fa<+\infty$, 
due to the convexity of $\tilde\psi\nw(\rho\|\sigma|\valt)\fa$, established in 
Lemma \ref{cor:psi fa}. Moreover, $\rho^0\le\sigma^0$, according to Lemma \ref{lemma:Q fa infty}.
For any $r\in [\derleft\tilde\psi\nw(\rho\|\sigma|\kappa)\fa,\derright\tilde\psi\nw(\rho\|\sigma|\kappa)\fa]$, 
\begin{align*}
\scl_r(\rho\|\sigma)\f&\le
\hat H_r\nw(\rho\|\sigma)\fa=\max_{u\in[0,1]}\{u r-\tilde\psi\nw(\rho\|\sigma|u)\fa\}
=
\kappa r- \tilde\psi\nw(\rho\|\sigma|\kappa)\fa
=
\kappa\Big( r-\underbrace{\frac{1}{\kappa}\tilde\psi\nw(\rho\|\sigma|\kappa)\fa}_{=D_{\frac{1}{1-\kappa}}\nw(\rho\|\sigma)\fa}\Big)\,,
\end{align*}
where the first inequality is due to Proposition \ref{prop:sc achievability}.
This yields the first inequality in \eqref{cutoff lower}, and the rest have already been proved in the previous point.
\end{proof}

Proposition \ref{prop:cutoff} and Corollary \ref{cor:fa equality} yield immediately the following:

\begin{thm}
Let $\rho,\sigma\in\B(\hil)\p$, be such that
$\rho$ and $\sigma$ are trace-class, or $\sigma$ is compact and $\rho\in\B^{\infty}(\hil,\sigma)$. Let $\kappa\in(0,1)$, and assume that $D_{\alpha}\nw(\rho\|\sigma)<+\infty$ for $\alpha$ in a neighborhood of $\alpha_0:=1/(1-\kappa)$. Then 
\begin{align*}
C_{\kappa}(\rho\|\sigma)\fa
=
C_{\kappa}(\rho\|\sigma)= D_{\frac{1}{1-\kappa}}\nw(\rho\|\sigma),
\ds\ds\ds\ds\text{or equivalently,}\ds\ds\ds\ds
D_{\alpha_0}\nw(\rho\|\sigma)=C_{\frac{\alpha_0-1}{\alpha_0}}(\rho\|\sigma)=C_{\frac{\alpha_0-1}{\alpha_0}}(\rho\|\sigma)\fa.
\end{align*}
\end{thm}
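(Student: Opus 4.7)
The plan is to derive the theorem as a direct consequence of Proposition~\ref{prop:cutoff}(ii) together with the finite-dimensional approximation result of Proposition~\ref{cor:fa equality}. Recall that Proposition~\ref{prop:cutoff}(ii) furnishes the chain
$D_{1/(1-\kappa)}\nw(\rho\|\sigma)\fa \le C_\kappa(\rho\|\sigma)\fa \le C_\kappa(\rho\|\sigma) \le D_{1/(1-\kappa)}\nw(\rho\|\sigma)$
whenever $D_{1/(1-\kappa_j)}\nw(\rho\|\sigma)\fa$ is finite for some $\kappa_1<\kappa<\kappa_2$, and, crucially, states that all four quantities coincide as soon as the outer ``fa'' and unrestricted terms agree. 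Thus the theorem reduces to checking (a) the existence of flanking parameters $\kappa_1,\kappa_2$ with the required finiteness, and (b) the identity $D_{\alpha_0}\nw(\rho\|\sigma)\fa = D_{\alpha_0}\nw(\rho\|\sigma)$ at the target parameter $\alpha_0 := 1/(1-\kappa)$.

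Both ingredients will come from Proposition~\ref{cor:fa equality}, which under either standing hypothesis (both $\rho,\sigma$ trace-class, or $\sigma$ compact and $\rho\in\B^{\infty}(\hil,\sigma)$) gives $D_\alpha\nw(\rho\|\sigma)\fa = D_\alpha\nw(\rho\|\sigma)$ for every $\alpha>1$. For (a), I will pick any $1<\alpha_1<\alpha_0<\alpha_2$ lying in the neighborhood on which $D_\alpha\nw(\rho\|\sigma)<+\infty$, and set $\kappa_j := (\alpha_j-1)/\alpha_j$, so that $1/(1-\kappa_j) = \alpha_j$ and $\kappa_1<\kappa<\kappa_2$ by monotonicity of $\alpha\mapsto(\alpha-1)/\alpha$ on $(1,+\infty)$. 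Invoking Proposition~\ref{cor:fa equality} at $\alpha=\alpha_j$ then yields $D_{\alpha_j}\nw(\rho\|\sigma)\fa = D_{\alpha_j}\nw(\rho\|\sigma)<+\infty$, which is exactly the flanking-finiteness hypothesis of Proposition~\ref{prop:cutoff}(ii). For (b), the same identity applied at $\alpha=\alpha_0$ (allowed since $\alpha_0$ itself lies in the given neighborhood) is precisely what we need.

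Plugging (a) and (b) into the ``moreover'' clause of Proposition~\ref{prop:cutoff}(ii) collapses the four-term chain to a single value, namely $D_{\alpha_0}\nw(\rho\|\sigma)$, which is exactly the first asserted equality; the second one is merely its rewriting under $\kappa = (\alpha_0-1)/\alpha_0$. I foresee no real obstacle: the analytic content of the theorem has already been absorbed into the finite-dimensional approximability established by Proposition~\ref{cor:fa equality}, and the only small observation to make here is that the ``neighborhood'' hypothesis on $\alpha_0$ is exactly what simultaneously supplies both the flanking finiteness demanded by Proposition~\ref{prop:cutoff}(ii) and the equality at $\alpha_0$ needed to activate its ``moreover'' clause.
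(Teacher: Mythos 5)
Your proof is correct and follows essentially the same route as the paper, which simply observes that the theorem follows immediately from Proposition~\ref{prop:cutoff} and Proposition~\ref{cor:fa equality}. You have spelled out the short argument in more detail (choosing the flanking parameters $\kappa_1,\kappa_2$ via $\alpha_1,\alpha_2$ in the given neighborhood, and invoking the finite-dimensional approximability at $\alpha_0$ and $\alpha_1,\alpha_2$), and the only small point you could have made more explicit is that under the second hypothesis, $\rho\in\B^{\infty}(\hil,\sigma)$ implies $\rho\in\B^{\alpha}(\hil,\sigma)$ for every $\alpha>1$ (Remark~\ref{rem:smaller alpha}), which is what allows Proposition~\ref{cor:fa equality} to be invoked at each $\alpha$.
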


\subsection{Monotonicity of the R\'enyi divergences}
\label{sec:mon}

The operational representation of the Hoeffding anti-divergences in Section 
\ref{sec:sc Hanti} can be used to obtain the monotonicity of the sandwiched 
R\'enyi divergences under quantum operations.

In the Heisenberg picture, a quantum operation from a system with Hilbert space $\hil$ to a
system with Hilbert space $\kil$ is given by a unital normal completely positive map
$\map:\,\B(\kil)\to\B(\hil)$, which can be written as
\begin{align}\label{Kraus}
\map:\,\B(\kil)\ni A\mapsto V^*(A\otimes I_E)V=\sum_{i\in\I} V_i^*AV_i, 
\end{align}
where $V:\,\hil\to\kil\otimes\hil_E$ is an isometry,
$V_i:=(I_{\kil}\otimes\bra{e_i})V$ for some ONB $(e_i)_{i\in\I}$ in the auxiliary Hilbert space $\hil_E$, and the sum in \eqref{Kraus} converges in the strong operator topology \cite{HellwigKraus,Stinespring}. 
As in everywhere in the paper, we assume that $\hil,\kil$ are separable, in which case
the auxiliary Hilbert space $\hil_E$ can be chosen to be separable, and the index set $\I$ in \eqref{Kraus} countable. 

In the Schr\"odinger picture, a density operator $\rho\in\S(\hil)$ is transformed by the dual map
\begin{align*}
\map^*(\rho):=\sum_{i\in\I}V_i\rho V_i^*
=
\sum_{i\in\I}(I_{\kil}\otimes\bra{e_i})V\rho V^*(I_{\kil}\otimes\ket{e_i})
=
\Tr_EV\rho V^*,
\end{align*}
where the sum converges in trace-norm, and the result is a density operator on $\kil$. 
If $\rho$ is PSD but not trace-class then the above sum need not converge (in the weak, equivalently, in the strong operator topology), but it may,
in which case we say that $\map^*$ is defined on $\rho$, and define
$\map^*(\rho):=\sum_{i\in\I}V_i\rho V_i^*$.
A trivial case where $\map^*$ is defined on every $\rho\in\B(\hil)\p$
is when $\map$ has only finitely many operators in its Kraus decomposition, or equivalently, $\hil_E$ is finite-dimensional.

\begin{lemma}\label{lemma:transpose map}
Let $\rho\in\B(\hil)\p$ and $\map:\,\B(\kil)\to\B(\hil)$ be a unital normal completely positive map. 
If $\map^*$ is defined on $\rho$ then 
\begin{align*}
\Tr A^{1/2}\map^*(\rho)A^{1/2}=\Tr\map(A)^{1/2}\rho\map(A)^{1/2},\ds\ds\ds A\in\B(\kil)\p.
\end{align*}
\end{lemma}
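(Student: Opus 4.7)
The plan is to reduce both sides of the claimed identity to the common expression $\sum_{i\in\I}\Tr\rho^{1/2}V_i^*AV_i\rho^{1/2}$ by plugging in the Kraus decompositions $\map(A)=\sum_{i\in\I}V_i^*AV_i$ and $\map^*(\rho)=\sum_{i\in\I}V_i\rho V_i^*$. Everything in sight is nonnegative, so the only thing to watch out for is commuting the ``$\Tr$'' (itself a sum over an ONB) with the Kraus sum, which will be handled by monotone convergence/Tonelli.

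For the right-hand side, I would first use $\Tr XX^*=\Tr X^*X$ (valid as an equality in $[0,+\infty]$ for any $X\in\B(\hil,\kil)$) with $X=\map(A)^{1/2}\rho^{1/2}$ to rewrite
\[
\Tr\map(A)^{1/2}\rho\map(A)^{1/2}
=\Tr(\map(A)^{1/2}\rho^{1/2})(\map(A)^{1/2}\rho^{1/2})^*
=\Tr\rho^{1/2}\map(A)\rho^{1/2}.
\]
Since the partial sums $S_N:=\sum_{i=1}^N V_i^*AV_i$ form an increasing net of PSD operators with strong limit $\map(A)$, for every vector $e\in\hil$ one has $\langle\rho^{1/2}e,\map(A)\rho^{1/2}e\rangle=\sum_{i\in\I}\langle\rho^{1/2}e,V_i^*AV_i\rho^{1/2}e\rangle$. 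Summing over an ONB of $\hil$ and interchanging the two nonnegative sums by Tonelli gives $\Tr\rho^{1/2}\map(A)\rho^{1/2}=\sum_{i\in\I}\Tr\rho^{1/2}V_i^*AV_i\rho^{1/2}$.

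For the left-hand side, the hypothesis that $\map^*$ is defined on $\rho$ means precisely that the increasing net $T_N:=\sum_{i=1}^NV_i\rho V_i^*$ of PSD operators converges strongly to the bounded PSD operator $\map^*(\rho)$. The identical Tonelli argument then yields $\Tr A^{1/2}\map^*(\rho)A^{1/2}=\sum_{i\in\I}\Tr A^{1/2}V_i\rho V_i^*A^{1/2}$, and applying $\Tr YY^*=\Tr Y^*Y$ to $Y:=A^{1/2}V_i\rho^{1/2}$ turns each summand into $\Tr\rho^{1/2}V_i^*AV_i\rho^{1/2}$.

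Comparing the two expressions closes the argument. There is no serious obstacle: the whole content of the lemma is that $\map^*$ is the adjoint of $\map$ with respect to the trace pairing, and the only point requiring care—interchange of trace and Kraus sum in the infinite-dimensional, non-trace-class setting—is guaranteed by nonnegativity together with the strong convergence of the relevant monotone nets.
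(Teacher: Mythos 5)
Your proposal is correct and takes essentially the same route as the paper: both reduce each side to the common expression $\sum_{i\in\I}\Tr\rho^{1/2}V_i^*AV_i\rho^{1/2}$ by expanding the Kraus decompositions, using $\Tr XX^*=\Tr X^*X$ to flip each summand, and justifying the exchange of trace and Kraus sum by non-negativity (Tonelli). The paper organizes this as a single chain of equalities from left to right while you argue from both ends toward the middle, but the underlying computation and the points requiring care are identical.
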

\begin{proof}
Let $(f_j)_{j\in\J}$ be an orthonormal basis in $\kil$. Then 
\begin{align*}
\Tr A^{1/2}\map^*(\rho)A^{1/2}&=
\sum_{j\in\J}\underbrace{\inner{A^{1/2}f_j}{\map^*(\rho)A^{1/2}f_j}}_{
=\sum_{i\in\I}\inner{A^{1/2}f_j}{V_i\rho V_i^*A^{1/2}f_j}}
=
\sum_{i\in\I}\underbrace{\sum_{j\in\J}\inner{A^{1/2}f_j}{V_i\rho V_i^*A^{1/2}f_j}}_{=
\Tr A^{1/2}V_i\rho V_i^*A^{1/2}=\Tr\rho^{1/2}V_i^*AV_i\rho^{1/2}}\\
&=
\sum_{i\in\I}\Tr\rho^{1/2}V_i^*AV_i\rho^{1/2}
=
\sum_{i\in\I}\sum_{j\in\J}\inner{\rho^{1/2}f_j}{V_i^*AV_i\rho^{1/2}f_j}\\
&=
\sum_{j\in\J}\underbrace{\sum_{i\in\I}\inner{\rho^{1/2}f_j}{V_i^*AV_i\rho^{1/2}f_j}}_{
=\inner{\rho^{1/2}f_j}{\map(A)\rho^{1/2}f_j}}
=
\Tr \rho^{1/2}\map(A)\rho^{1/2}=\Tr\map(A)^{1/2}\rho\map(A)^{1/2}.
\end{align*}
\end{proof}

The transformation on multiple systems is given by 
\begin{align}\label{Kraus2}
\map^{\otimes n}:\,\B(\kil^{\otimes n})\ni A\mapsto (V^{\otimes n})^*(A\otimes I_{E^n})V^{\otimes n}=\sum_{\vecc{i}\in\I^n} (V_{i_1}\ootimes V_{i_n})^*A
(V_{i_1}\ootimes V_{i_n}). 
\end{align}
If $\map^*$ is defined on $\rho\in\B(\hil)\p$ then $(\map^{\otimes n})^*$ is defined on $\rho^{\otimes n}$, and 
$(\map^{\otimes n})^*(\rho^{\otimes n})=(\map^*(\rho))^{\otimes n}$.

In the context of operator discrimination, a transformation $\map$ effectively reduces the available tests for discriminating
$\rho$ and $\sigma$, thereby increasing the strong converse exponent, as expressed by the following:
\begin{lemma}\label{lemma:sc mon}
Let $\rho,\sigma\in\B(\hil)\p$ and $\map:\,\B(\kil)\to\B(\hil)$ be a unital normal completely positive map such that $\map^*$ is defined on $\rho$ and $\sigma$. Then 
\begin{align*}
\scli_r(\map^*(\rho)\|\map^*(\sigma))\ge
\scli_r(\rho\|\sigma),\ds\ds\ds
\scls_r(\map^*(\rho)\|\map^*(\sigma))\ge
\scls_r(\rho\|\sigma),\ds\ds\ds
r\in\bR.
\end{align*}
\end{lemma}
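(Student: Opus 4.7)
The plan is to lift any candidate test sequence for the discrimination of $\map^*(\rho)$ versus $\map^*(\sigma)$ on $\kil^{\otimes n}$ to a test sequence for $\rho$ versus $\sigma$ on $\hil^{\otimes n}$ with \emph{identical} generalized type I success and type II error probabilities, via the channel $\map^{\otimes n}$ itself. The bound then follows by taking infima over test sequences, exactly as in the finite-dimensional Heisenberg-picture proof of data-processing for error probabilities.

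First, I would record the multi-copy extension: if $\map^*$ is defined on $\rho$ and $\sigma$ then $(\map^{\otimes n})^*$ is defined on $\rho^{\otimes n}$ and $\sigma^{\otimes n}$, with $(\map^{\otimes n})^*(\rho^{\otimes n})=\map^*(\rho)^{\otimes n}$ and the analogous identity for $\sigma$; this is immediate from the Kraus description in \eqref{Kraus}--\eqref{Kraus2}. Next, given any test sequence $T_n\in\B(\kil^{\otimes n})_{[0,I]}$, I would set $S_n:=\map^{\otimes n}(T_n)\in\B(\hil^{\otimes n})$. Since $\map$ is unital and positive, so is $\map^{\otimes n}$, whence $0\le S_n\le \map^{\otimes n}(I_{\kil^{\otimes n}})=I_{\hil^{\otimes n}}$, i.e.\ $S_n$ is a valid test.

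The core computation is then an application of Lemma \ref{lemma:transpose map} to $\map^{\otimes n}$ with $A=T_n$ and with the two operators $\rho^{\otimes n}$, $\sigma^{\otimes n}$ in turn:
\begin{align*}
\tos_n\bz T_n\,\big|\,\map^*(\rho)^{\otimes n}\jz
&=\Tr T_n^{1/2}(\map^{\otimes n})^*(\rho^{\otimes n})T_n^{1/2}
=\Tr \map^{\otimes n}(T_n)^{1/2}\rho^{\otimes n}\map^{\otimes n}(T_n)^{1/2}
=\tos_n(S_n|\rho^{\otimes n}),\\
\beta_n\bz T_n\,\big|\,\map^*(\sigma)^{\otimes n}\jz
&=\Tr T_n^{1/2}(\map^{\otimes n})^*(\sigma^{\otimes n})T_n^{1/2}
=\Tr \map^{\otimes n}(T_n)^{1/2}\sigma^{\otimes n}\map^{\otimes n}(T_n)^{1/2}
=\beta_n(S_n|\sigma^{\otimes n}).
\end{align*}
Thus the lifted sequence $(S_n)_{n\in\bN}$ has precisely the same type I success and type II error as $(T_n)_{n\in\bN}$.

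To finish: if $(T_n)_{n\in\bN}$ satisfies $\liminf_{n\to+\infty}-\frac{1}{n}\log\beta_n(T_n|\map^*(\sigma)^{\otimes n})\ge r$, then by the second equality above the same bound holds for $(S_n)_{n\in\bN}$ with $\sigma^{\otimes n}$, so $(S_n)_{n\in\bN}$ is admissible in the infimum defining $\scli_r(\rho\|\sigma)$. Therefore
\begin{align*}
\liminf_{n\to+\infty}-\tfrac{1}{n}\log\tos_n(T_n|\map^*(\rho)^{\otimes n})
=\liminf_{n\to+\infty}-\tfrac{1}{n}\log\tos_n(S_n|\rho^{\otimes n})
\ge \scli_r(\rho\|\sigma),
\end{align*}
and taking the infimum over all admissible $(T_n)_{n\in\bN}$ gives the claim for $\scli_r$; the $\limsup$ version for $\scls_r$ is obtained by the identical argument with $\liminf$ replaced by $\limsup$ on the left-hand side. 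There is no real obstacle here beyond verifying that $S_n$ is a test and invoking Lemma \ref{lemma:transpose map}; the only point requiring a moment's care is that $\map^*$ being defined on $\rho,\sigma$ suffices for $(\map^{\otimes n})^*$ to be defined on $\rho^{\otimes n},\sigma^{\otimes n}$, so that the lemma is applicable at each $n$.
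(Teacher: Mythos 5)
Your proof is correct and follows essentially the same route as the paper: lift each test $T_n$ on $\kil^{\otimes n}$ to $S_n:=\map^{\otimes n}(T_n)$ on $\hil^{\otimes n}$, invoke Lemma \ref{lemma:transpose map} to see that the generalized type I success and type II error probabilities are preserved exactly, and conclude by noting that the infimum defining the strong converse exponent for $(\map^*(\rho),\map^*(\sigma))$ ranges (effectively) over a subset of the tests available for $(\rho,\sigma)$. The only cosmetic remark is that positivity (and unitality) of $\map^{\otimes n}$ is what one uses to check $0\le S_n\le I$, and positivity of the tensor power is guaranteed by the \emph{complete} positivity of $\map$, not mere positivity; your wording glosses over this but the assumption is available.
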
 
\begin{proof}
We only prove the assertion for $\scls_r$, as the proof for $\scli_r$ goes the same way.
We have
\begin{align*}
\scls_r(\map^*(\rho)\|\map^*(\sigma))&=
\inf\left\{\liminf_{n\to+\infty}-\frac{1}{n}\log
\Tr T_n^{1/2}(\map^*(\rho))^{\otimes n}T_n^{1/2}:\,
\liminf_{n\to+\infty}-\frac{1}{n}\log
\Tr T_n^{1/2}(\map^*(\sigma))^{\otimes n}T_n^{1/2}\ge r
\right\}\\
&=\inf\left\{\liminf_{n\to+\infty}-\frac{1}{n}\log\Tr (\map^{\otimes n}(T_n))^{1/2}\rho^{\otimes n}(\map^{\otimes n}(T_n))^{1/2}:\,\right.\\
&\ds\ds\ds\ds\ds\left.\liminf_{n\to+\infty}-\frac{1}{n}\log\Tr (\map^{\otimes n}(T_n))^{1/2}\sigma^{\otimes n}(\map^{\otimes n}(T_n))^{1/2}\ge r
\right\}\\
&\ge
\inf\left\{\liminf_{n\to+\infty}-\frac{1}{n}\log
\Tr S_n^{1/2}\rho^{\otimes n}S_n^{1/2}:\,
\liminf_{n\to+\infty}-\frac{1}{n}\log
\Tr S_n^{1/2}\sigma^{\otimes n}S_n^{1/2}\ge r
\right\}\\
&=
\scls_r(\rho\|\sigma),
\end{align*}
where the first two infima are taken over sequences of tests
$T_n\in\B(\kil^{\otimes n})_{[0,I]}$, $n\in\bN$, satisfying the given conditions, 
the third infimum is taken over sequences of tests
$S_n\in\B(\hil^{\otimes n})_{[0,I]}$, $n\in\bN$, satisfying the given condition,
the first equality is by definition, the second equality follows from 
Lemma \ref{lemma:transpose map}, the inequality is obvious from the fact that 
$T_n\in\B(\kil^{\otimes n})_{[0,I]}\imp\map^{\otimes n}(T_n)\in\B(\hil^{\otimes n})_{[0,I]}$, 
and the last equality is again by definition.
\end{proof}

The proof of the following monotonicity result is similar to the proof of the analogous 
result given in \cite[Remark 2]{Nagaoka} for the monotonicity of the Petz-type
R\'enyi divergences and finite-dimensional density operators. 
The main ideas in the proof are using the bounds on the strong converse 
exponents given in Proposition \ref{prop:sc achievability}, the monotonicity of the 
strong converse exponents given in Lemma \ref{lemma:sc mon},
and the fact that the R\'enyi divergences can be expressed from the Hoeffding 
anti-divergences by Legendre-Fenchel transformation, i.e., Lemma \ref{lemma:Hanti bipolar}.

\begin{thm}\label{thm:mon}
Let $\rho,\sigma\in\B(\hil)\p$, be such that 
\begin{align}\label{mon cond1}
D_{\alpha}\nw(\rho\|\sigma)\fa=
D_{\alpha}\nw(\rho\|\sigma),\ds\ds\ds \alpha>1,
\end{align}
and let $\map:\,\B(\kil)\to\B(\hil)$ be a unital normal completely positive linear that is defined on  $\rho$ and $\sigma$. 
Then
\begin{align}\label{sandwiched mon}
D_{\alpha}\nw(\map^*(\rho)\|\map^*(\sigma))\fa\le D_{\alpha}\nw(\rho\|\sigma),\ds\ds\ds
\alpha>1.
\end{align}
\end{thm}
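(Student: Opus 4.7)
The plan is to transport monotonicity from the operationally defined strong converse exponents to the sandwiched R\'enyi divergences via two Legendre--Fenchel dualities: the bipolar identity of Lemma \ref{lemma:Hanti bipolar} relating $\tilde\psi\nw(\cdot\|\cdot|u)\fa$ to $\hat H_r\nw(\cdot\|\cdot)\fa$, and the sandwich bounds of Proposition \ref{prop:sc achievability} that identify the Hoeffding anti-divergences (up to finite-dimensional approximation) with the strong converse exponents, for which monotonicity under $\map^*$ is immediate from Lemma \ref{lemma:sc mon}.

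Fix $\alpha>1$. If $D_{\alpha}\nw(\rho\|\sigma)=+\infty$ the claim is trivial, so assume it is finite and set $u:=(\alpha-1)/\alpha\in(0,1)$. Note that $D_{\alpha}\nw(\rho\|\sigma)<+\infty$ gives $\rho^0\le\sigma^0$ (Lemma \ref{lemma:rho alpha}), and a short computation using the Kraus form of $\map^*$ propagates this to $\map^*(\rho)^0\le\map^*(\sigma)^0$, so both pairs are eligible for Proposition \ref{prop:sc achievability}. The heart of the proof is the pointwise inequality
\begin{align*}
\hat H_r\nw(\rho\|\sigma)\fa\;\le\;\hat H_r\nw(\map^*(\rho)\|\map^*(\sigma))\fa,\qquad r\in\bR,
\end{align*}
which I establish via the chain
\begin{align*}
\hat H_r\nw(\rho\|\sigma)\fa
\;=\;\hat H_r\nw(\rho\|\sigma)
\;\le\;\scli_r(\rho\|\sigma)
\;\le\;\scli_r(\map^*(\rho)\|\map^*(\sigma))
\;\le\;\scli_r(\map^*(\rho)\|\map^*(\sigma))\f
\;\le\;\hat H_r\nw(\map^*(\rho)\|\map^*(\sigma))\fa.
\end{align*}
The first equality uses hypothesis \eqref{mon cond1} together with implication \eqref{fa approximation eq2} of Lemma \ref{lemma:fa equality}; the second inequality is the first-row bound of Proposition \ref{prop:sc achievability} applied to $(\rho,\sigma)$; the third is Lemma \ref{lemma:sc mon} (which is proved without any finite-rank restriction on the tests); the fourth is trivial; and the fifth is the second-row bound of Proposition \ref{prop:sc achievability} applied to $(\map^*(\rho),\map^*(\sigma))$.

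Inverting through the bipolar identity \eqref{Hanti bipolar1} of Lemma \ref{lemma:Hanti bipolar},
\begin{align*}
\tilde\psi\nw(\map^*(\rho)\|\map^*(\sigma)|u)\fa
=\sup_{r\in\bR}\bigl\{ur-\hat H_r\nw(\map^*(\rho)\|\map^*(\sigma))\fa\bigr\}
\le
\sup_{r\in\bR}\bigl\{ur-\hat H_r\nw(\rho\|\sigma)\fa\bigr\}
=\tilde\psi\nw(\rho\|\sigma|u)\fa.
\end{align*}
Since $\tilde\psi\nw(\cdot\|\cdot|u)\fa=(1-u)(\alpha-1)D_{\alpha}\nw(\cdot\|\cdot)\fa=u\,D_{\alpha}\nw(\cdot\|\cdot)\fa$ by Remark \ref{rem:psi tilde fa endpoint}, and the hypothesis supplies $D_{\alpha}\nw(\rho\|\sigma)\fa=D_{\alpha}\nw(\rho\|\sigma)$, dividing by $u>0$ yields exactly \eqref{sandwiched mon}.

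The substantive content is outsourced to the four cited ingredients, so the main obstacle is purely bookkeeping: reconciling that Lemma \ref{lemma:sc mon} is stated only for unrestricted test sequences, whereas the relevant upper bound in Proposition \ref{prop:sc achievability} is phrased in terms of finite-rank tests. This is absorbed by the intermediate trivial step $\scli_r(\map^*(\rho)\|\map^*(\sigma))\le \scli_r(\map^*(\rho)\|\map^*(\sigma))\f$ inserted into the chain, which lets monotonicity be applied in the unrestricted setting and only subsequently restricted to finite-rank tests to invoke $\scli_r\f\le \hat H_r\nw\fa$.
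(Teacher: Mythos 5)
Your proposal is correct and follows essentially the same route as the paper's proof: establish $\hat H_r\nw(\rho\|\sigma)\fa\le\hat H_r\nw(\map^*(\rho)\|\map^*(\sigma))\fa$ for every $r$ by combining \eqref{mon cond1} with Proposition \ref{prop:sc achievability} and Lemma \ref{lemma:sc mon}, then invert through the bipolar identity of Lemma \ref{lemma:Hanti bipolar}. The only cosmetic differences are that the paper uses $\scls_r$ in the chain where you use $\scli_r$ followed by the trivial step to $\scli_r\f$, and that you spell out the support-condition propagation $\rho^0\le\sigma^0\Rightarrow\map^*(\rho)^0\le\map^*(\sigma)^0$ needed to invoke Proposition \ref{prop:sc achievability} on the output pair, which the paper leaves implicit.
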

\begin{proof}
By assumption, 
$\hat H_r\nw(\rho\|\sigma)\fa=\hat H_r\nw(\rho\|\sigma)$, $r\in\bR$, and 
\begin{align*}
\hat H_r\nw(\rho\|\sigma)\fa=\hat H_r\nw(\rho\|\sigma)\le\scls_r(\rho\|\sigma)\le
\scls_r(\map^*(\rho)\|\map^*(\sigma))
\le 
\hat H_r\nw(\map^*(\rho)\|\map^*(\sigma))\fa\,,\ds\ds\ds r\in\bR,
\end{align*}
where the first and the last inequalities follow from 
Proposition \ref{prop:sc achievability}, 
and the second inequality from Lemma \ref{lemma:sc mon}.
Hence, by Lemma \ref{lemma:Hanti bipolar}, 
\begin{align*}
\tilde\psi\nw(\rho\|\sigma|u)\fa
=
\sup_{r\in\bR}\left\{ur-\hat H_r\nw(\rho\|\sigma)\fa\right\}
\ge 
\sup_{r\in\bR}\left\{ur-\hat H_r\nw(\map^*(\rho)\|\map^*(\sigma))\fa\right\}
=
\tilde\psi\nw(\map^*(\rho)\|\map^*(\sigma)|u)\fa,\ds\ds u\in\bR,
\end{align*}
which is equivalent to \eqref{sandwiched mon}.
\end{proof}

\begin{cor}\label{cor:mon}
Let $\rho,\sigma\in\B(\hil)\p$, and let 
$\map:\,\B(\kil)\to\B(\hil)$ be a unital normal completely positive map.
Assume that
\smallskip

\noindent a) $\rho$ and $\sigma$ are both trace-class, 
\smallskip

\noindent or
\smallskip

\noindent b) 
$\map^*$ is defined on $\rho$ and $\sigma$, $\sigma$ and $\map^*(\sigma)$ are compact,
and $\rho\in\B^{\infty}(\hil,\sigma)$.
\smallskip

\noindent Then 
\begin{align}\label{sandwiched mon2}
D_{\alpha}\nw(\map^*(\rho)\|\map^*(\sigma))\le D_{\alpha}\nw(\rho\|\sigma),\ds\ds\ds
\alpha>1.
\end{align}
\end{cor}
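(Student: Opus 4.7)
The plan is to deduce Corollary \ref{cor:mon} from Theorem \ref{thm:mon} by invoking the finite-dimensional approximation result in Proposition \ref{cor:fa equality} on both sides of the inequality. The strategy is: in both scenarios (a) and (b), show that the hypothesis \eqref{mon cond1} of Theorem \ref{thm:mon} holds for $(\rho,\sigma)$, and that the analogous equality $D_{\alpha}\nw(\map^*(\rho)\|\map^*(\sigma))\fa = D_{\alpha}\nw(\map^*(\rho)\|\map^*(\sigma))$ holds for the image pair; then \eqref{sandwiched mon} of Theorem \ref{thm:mon} immediately upgrades to \eqref{sandwiched mon2} without the $\fa$ superscript.

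First I would apply Proposition \ref{cor:fa equality} to the pair $(\rho,\sigma)$: in case (a) both operators are trace-class, while in case (b) $\sigma$ is compact and $\rho\in\B^{\infty}(\hil,\sigma)\subseteq\B^{\alpha}(\hil,\sigma)$ for every $\alpha>1$, the inclusion being the content of Remark \ref{rem:smaller alpha}. Either way, Proposition \ref{cor:fa equality} gives $D_{\alpha}\nw(\rho\|\sigma)\fa=D_{\alpha}\nw(\rho\|\sigma)$ for all $\alpha>1$, which is precisely condition \eqref{mon cond1} of Theorem \ref{thm:mon}.

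Next I would verify the analogous finite-dimensional approximability for the image pair $(\map^*(\rho),\map^*(\sigma))$. In case (a), since $\map$ is unital, $\map^*$ is trace-preserving on $\L^1(\hil)\p$; this follows from the Kraus form $\map^*(\tau)=\sum_i V_i\tau V_i^*$ together with $\sum_i V_i^*V_i=\map(I)=I$, so the partial sums converge in trace-norm and $\Tr\map^*(\tau)=\Tr\tau$ for every trace-class $\tau\ge 0$. In particular $\map^*(\rho)$ and $\map^*(\sigma)$ are both non-zero and trace-class, and Proposition \ref{cor:fa equality} applies. In case (b), $\map^*(\sigma)$ is compact by assumption, and $\rho\in\B^{\infty}(\hil,\sigma)$ means $\rho\le\lambda\sigma$ for some $\lambda>0$ by item \ref{rho alpha6} of Lemma \ref{lemma:rho alpha}; by positivity of $\map^*$ we then have $\map^*(\rho)\le\lambda\map^*(\sigma)$, whence $\map^*(\rho)\in\B^{\infty}(\kil,\map^*(\sigma))\subseteq\B^{\alpha}(\kil,\map^*(\sigma))$ for every $\alpha>1$, and Proposition \ref{cor:fa equality} again yields the required equality.

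Combining these with Theorem \ref{thm:mon} gives
\begin{align*}
D_{\alpha}\nw(\map^*(\rho)\|\map^*(\sigma))
=
D_{\alpha}\nw(\map^*(\rho)\|\map^*(\sigma))\fa
\le
D_{\alpha}\nw(\rho\|\sigma),
\end{align*}
which is \eqref{sandwiched mon2}. I do not anticipate any real obstacle; the proof is a direct assembly of Theorem \ref{thm:mon} with two appropriate instances of Proposition \ref{cor:fa equality}, the only substantive verifications being that trace-class-ness and the domination relation $\rho\le\lambda\sigma$ are each preserved by the positive (and trace-preserving, on $\L^1$) map $\map^*$.
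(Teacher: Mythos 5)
Your proof is correct and takes essentially the same route as the paper: verify $D_{\alpha}\nw(\cdot\|\cdot)\fa = D_{\alpha}\nw(\cdot\|\cdot)$ for both the pair $(\rho,\sigma)$ and the image pair $(\map^*(\rho),\map^*(\sigma))$ via Proposition \ref{cor:fa equality} (using trace-preservation of $\map^*$ in case (a), and preservation of the domination $\rho\le\lambda\sigma$ plus Remark \ref{rem:smaller alpha} in case (b)), and then invoke Theorem \ref{thm:mon}. The paper's proof is slightly terser but uses the identical ingredients in the identical order.
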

\begin{proof}
If $\rho$ and $\sigma$ are both trace-class then $\map^*$ is automatically defined on them, 
and $\map^*(\rho)$ and $\map^*(\sigma)$ are both trace-class.
Note that $\rho\in\B^{\infty}(\hil,\sigma)$ $\iff$ $\rho\le \lambda\sigma$
for some $\lambda\ge 0$, whence
$\map^*(\rho)\le\lambda\map^*(\sigma)$, i.e., 
$\map^*(\rho)\in\B^{\infty}(\kil,\map^*(\sigma))$. 
By Remark \ref{rem:smaller alpha},
$\rho\in\B^{\alpha}(\hil,\sigma)$ and 
$\map^*(\rho)\in\B^{\alpha}(\kil,\map^*(\sigma))$, $\alpha>1$.
Thus, by Lemma \ref{cor:fa equality}, the assumptions guarantee that 
$D_{\alpha}\nw(\rho\|\sigma)\fa=D_{\alpha}\nw(\rho\|\sigma)$ and 
$D_{\alpha}\nw(\map^*(\rho)\|\map^*(\sigma))\fa=D_{\alpha}\nw(\map^*(\rho)\|\map^*(\sigma))$,
$\alpha>1$,
and therefore \eqref{sandwiched mon2} follows immediately from 
Theorem \ref{thm:mon}.
\end{proof}

\begin{rem}
Monotonicity of the form \eqref{sandwiched mon2} in the case where both $\rho$ and $\sigma$ are trace-class is a special case of
\cite[Theorem 14]{BST} and \cite[Theorem 3.14]{Jencova_NCLp},
where monotonicity was proved in the more general setting of normal positive linear functionals on a von Neumann algebra. Our proof above is completely different from the proofs given in 
\cite{BST} and \cite{Jencova_NCLp}.
\end{rem}

\section{Conclusion}

We have shown that for any $\alpha>1$, the sandwiched R\'enyi $\alpha$-divergence of infinite-dimensional density operators has the same operational interpretation in the context of state discrimination as 
in the finite-dimensional case, and also that it coincides with the regularized measured 
R\'enyi $\alpha$-divergence, again analogously to the finite-dimensional case. 
Our results can be extended to more general operator algebraic 
settings, as shown in \cite{HM_sc_opalg}.

It is worth noting that while in \cite{MO} the equality of the sandwiched R\'enyi divergence and 
the regularized measured R\'enyi divergence was an important ingredient of showing the equality of the strong converse exponent and the Hoeffding anti-divergence, the extensions to the 
infinite-dimensional case can be done separately, building in each problem only on the corresponding finite-dimensional result and the recoverability of the sandwiched R\'enyi divergences from finite-dimensional restrictions.  

We also considered the extension of the sandwiched R\'enyi divergences 
(and more generally, R\'enyi $(\alpha,z)$-divergences)
to pairs of 
not necessarily trace-class positive semi-definite operators, and established 
some properties of this extension.
Related to this, we considered a generalization of the state discrimination problem, where 
the hypotheses may be represented by general positive semi-definite operators. 
We gave bounds on the strong converse exponent in this problem, and showed that 
at least in some cases, the equality between the strong converse exponent and the Hoeffding anti-divergence still holds in this generalized setting. 

There are a number of interesting problems left open in the paper. 
Probably the most important is clarifying 
whether 
$Q_{\alpha}\nw(\rho\|\sigma)=Q_{\alpha}\nw(\rho\|\sigma)\fa$
holds for every pair of PSD operators $\rho,\sigma$ and every $\alpha>1$, 
and if not then whether there exist other examples for which it holds, apart from the ones
given in Proposition \ref{cor:fa equality} and Lemma \ref{lemma:Q fa infty}.
Such examples would extend the applicability of Proposition \ref{prop:sc achievability}
and Theorem \ref{thm:mon}, among others.
While less relevant for the problem of operator discrimination, the same question may be 
asked for more general R\'enyi $(\alpha,z)$-divergences, which 
seems interesting from the matrix analysis point of view.
Finally, from the point of view of quantum information theory, the most important 
question seems to be to clarify the optimal asymptotics of the type I 
error probability when the type II exponent is strictly above the
relative entropy of the two states (assuming that the latter is finite), while all their sandwiched R\'enyi $\alpha$-divergences 
are $+\infty$ for $\alpha>1$; see Remark \ref{rem:Stein}.

\section*{Acknowledgments}

This work was partially funded by the
National Research, Development and 
Innovation Office of Hungary via the research grants K124152 and KH129601, and
by the Ministry of Innovation and
Technology and the National Research, Development and Innovation
Office within the Quantum Information National Laboratory of Hungary. 
The author is grateful to P\'eter Vrana for discussions at the early stage of the project, 
to Ludovico Lami for comments on the strong converse property that led to 
Remark \ref{rem:Stein},
and to Fumio Hiai for numerous helpful comments.

\section*{Data availability}
Data sharing not applicable to this article as no datasets were generated or analyzed during the current study.

\appendix

\section{Some further properties of the Hoeffding anti-divergences}
\label{sec:further}

\begin{lemma}\label{lemma:superadd}
For any $\rho,\sigma\in\B(\hil)\p$, and for every $u\in\bR$,
\begin{align*}
&n\mapsto \tilde\psi\nw(\rho^{\otimes n} \|\sigma^{\otimes n} |u)\fa\ds\ds\ds\ds\ds\text{is superadditive},\\
&n\mapsto 
\frac{1}{2^n}\tilde\psi\nw(\rho^{\otimes 2^n} \|\sigma^{\otimes 2^n} |u)\fa
\ds\ds\text{is monotone increasing},
\end{align*}
and
\begin{align}
\tilde\psi\nw(\rho \|\sigma |u)\ofa
&=
\lim_{n\to+\infty}\frac{1}{n}\tilde\psi\nw(\rho^{\otimes n} \|\sigma^{\otimes n} |u)\fa
\label{ofa as limit}\\
&=
\lim_{n\to+\infty}\frac{1}{2^n}\tilde\psi\nw(\rho^{\otimes 2^n} \|\sigma^{\otimes 2^n} |u)\fa
=
\sup_{n\in\bN}\frac{1}{2^n}\tilde\psi\nw(\rho^{\otimes 2^n} \|\sigma^{\otimes 2^n} |u)\fa.
\nonumber
\end{align}
\end{lemma}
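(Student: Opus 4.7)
The plan is to exploit the multiplicativity of $Q_{\alpha}\nw$ under tensor products (Lemma \ref{lemma:tensor product}) together with the fact that tensor products of finite-rank projections are finite-rank projections, from which superadditivity of $n\mapsto\tilde\psi\nw(\rho^{\otimes n}\|\sigma^{\otimes n}|u)\fa$ drops out; the remaining assertions are then a routine application of Fekete's lemma and subsequence convergence.

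First I would dispose of the boundary and exterior values of $u$. For $u\in\bR\setminus[0,1]$ both sides of \eqref{ofa as limit} equal $+\infty$ by Corollary \ref{cor:psi fa} and the definition of $\ofa$, so there is nothing to prove. For $u\in\{0,1\}$ the identities in Remark \ref{rem:psi tilde fa endpoint} combined with $\Tr\rho^{\otimes n}=(\Tr\rho)^n$ and $\Dmax(\rho^{\otimes n}\|\sigma^{\otimes n})=n\Dmax(\rho\|\sigma)$ reduce all three averages to a constant sequence, making every claim trivial. So from here on I fix $u\in(0,1)$, set $\alpha:=1/(1-u)$, and denote $f(n):=\tilde\psi\nw(\rho^{\otimes n}\|\sigma^{\otimes n}|u)\fa=(1-u)\log Q_{\alpha}\nw(\rho^{\otimes n}\|\sigma^{\otimes n})\fa$; by Lemma \ref{lemma:Q poz}, $Q_{\alpha}\nw(P\rho^{\otimes n}P\|P\sigma^{\otimes n}P)>0$ for every $P\in\bP_f(\hil^{\otimes n})_{\rho^{\otimes n},\sigma^{\otimes n}}^+$, so $f$ takes values in $\bR\cup\{+\infty\}$.

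The superadditivity is the core step. Given $P\in\bP_f(\hil^{\otimes n})_{\rho^{\otimes n},\sigma^{\otimes n}}^+$ and $Q\in\bP_f(\hil^{\otimes m})_{\rho^{\otimes m},\sigma^{\otimes m}}^+$, their tensor product $P\otimes Q$ lies in $\bP_f(\hil^{\otimes(n+m)})_{\rho^{\otimes(n+m)},\sigma^{\otimes(n+m)}}^+$ and factorises the reductions as $(P\otimes Q)\rho^{\otimes(n+m)}(P\otimes Q)=(P\rho^{\otimes n}P)\otimes(Q\rho^{\otimes m}Q)$ and likewise for $\sigma$. Applying Lemma \ref{lemma:tensor product} to these reductions turns the multiplicativity of $Q_{\alpha}\nw$ into additivity of $\tilde\psi\nw(\cdot\|\cdot|u)$, and taking the supremum separately over $P$ and $Q$ (legitimate because the sup of a sum of functions of independent variables over a product set splits) gives $f(n+m)\ge f(n)+f(m)$. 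Specialising to $m=n=2^k$ yields $f(2^{k+1})\ge 2f(2^k)$, i.e.,\ the stated monotonicity of the dyadic averages $f(2^k)/2^k$.

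Finally I would invoke Fekete's lemma for superadditive sequences in $\bR\cup\{+\infty\}$ to conclude $\lim_n f(n)/n=\sup_n f(n)/n$, and the right-hand side is $\tilde\psi\nw(\rho\|\sigma|u)\ofa$ by definition, giving the first equality in \eqref{ofa as limit}. Since the dyadic subsequence converges to the same limit as the full sequence and is monotone increasing, its limit coincides with its supremum, yielding the remaining two equalities. The only mild point needing care is the case $f(n_0)=+\infty$ for some $n_0$: superadditivity combined with $f(k)>-\infty$ then forces $f(n)=+\infty$ for all $n\ge n_0$, so every expression in \eqref{ofa as limit} collapses to $+\infty$ and the statement is trivial. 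I do not anticipate any deeper obstacle.
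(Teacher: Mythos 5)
Your proof is correct and follows essentially the same approach as the paper: superadditivity via restriction to product projections $P\otimes Q$ together with the multiplicativity of $Q_\alpha\nw$ under tensor products (Lemma \ref{lemma:tensor product}), the boundary cases $u\in\{0,1\}$ handled by the explicit endpoint formulas, and Fekete's lemma for the limit identities. The paper states the concluding steps more tersely ("the rest of the assertions are straightforward consequences"), whereas you spell out the Fekete argument, the dyadic monotonicity $f(2^{k+1})\ge 2f(2^k)$, and the degenerate $+\infty$ case, but this is the same argument rather than a different route.
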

\begin{proof}
The assertions are trivial when $u\in\bR\setminus[0,1]$, because then all quantities above 
are equal to $+\infty$. For $u\in(0,1)$, 
superadditivity is obvious from restricting to projections of the form 
$P_1\otimes P_2$, $P_1\in\bP_f(\hil^{\otimes n})_{\rho^{\otimes n},\sigma^{\otimes n}}^+$, 
$P_2\in\bP_f(\hil^{\otimes m})_{\rho^{\otimes m},\sigma^{\otimes m}}^+$ in the definition of 
$\tilde\psi\nw(\rho^{\otimes (n+m)} \|\sigma^{\otimes (n+m)} |u)\fa$, 
according to Lemma \ref{lemma:tensor product},
and the cases 
$u\in\{0,1\}$ follow by taking limits (in fact, even additivity holds there, according to 
\eqref{psi tilde fa 0}--\eqref{psi tilde fa 1}). The rest of the assertions are straightforward 
consequences. 
\end{proof}

\begin{lemma}\label{lemma:psi ofa additive}
For any $\rho,\sigma\in\B(\hil)\p$, and every $n\in\bN$,
\begin{align*}
\tilde\psi\nw(\rho^{\otimes n} \|\sigma^{\otimes n} |u)
&=
n\tilde\psi\nw(\rho\|\sigma |u),\ds\ds\ds u\in(0,1),\\
\tilde\psi\nw(\rho^{\otimes n} \|\sigma^{\otimes n} |u)\ofa
&=
n\tilde\psi\nw(\rho\|\sigma |u)\ofa,\ds\ds u\in\bR.
\end{align*}
\end{lemma}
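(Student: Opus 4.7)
The plan is to establish the two equalities separately, with the first being a direct consequence of the tensor multiplicativity of $Q_{\alpha}\nw$ (Lemma \ref{lemma:tensor product}), and the second following from the Fekete-type convergence already encoded in Lemma \ref{lemma:superadd}.

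For the first equality, fix $u\in(0,1)$ and let $\alpha:=1/(1-u)>1$. By definition,
\begin{align*}
\tilde\psi\nw(\rho^{\otimes n}\|\sigma^{\otimes n}|u)
=(1-u)\psi\nw(\rho^{\otimes n}\|\sigma^{\otimes n}|\alpha)
=(1-u)\log Q_{\alpha}\nw(\rho^{\otimes n}\|\sigma^{\otimes n}).
\end{align*}
The tensor multiplicativity $Q_{\alpha}\nw(\rho^{\otimes n}\|\sigma^{\otimes n})=Q_{\alpha}\nw(\rho\|\sigma)^n$ from Lemma \ref{lemma:tensor product} then immediately yields $\tilde\psi\nw(\rho^{\otimes n}\|\sigma^{\otimes n}|u)=n\tilde\psi\nw(\rho\|\sigma|u)$.

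For the second equality, I first dispose of the trivial range $u\in\bR\setminus[0,1]$, where both $\tilde\psi\nw(\valt|u)\fa$ and $\tilde\psi\nw(\valt|u)\ofa$ equal $+\infty$ by Corollary \ref{cor:psi fa}, and the boundary values $u\in\{0,1\}$, where by Remark \ref{rem:psi tilde fa endpoint} one has $\tilde\psi\nw(\rho\|\sigma|0)\ofa=\log\Tr\rho$ and $\tilde\psi\nw(\rho\|\sigma|1)\ofa=\Dmax(\rho\|\sigma)$, both of which are tensor-additive by inspection. For $u\in(0,1)$, set $a_m:=\tilde\psi\nw(\rho^{\otimes m}\|\sigma^{\otimes m}|u)\fa$. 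By Lemma \ref{lemma:superadd} the sequence $(a_m)_{m\in\bN}$ is superadditive and
\begin{align*}
\tilde\psi\nw(\rho\|\sigma|u)\ofa=\lim_{m\to+\infty}\frac{a_m}{m}=\sup_{m\in\bN}\frac{a_m}{m}=:L.
\end{align*}
Applying the same lemma to the pair $(\rho^{\otimes n},\sigma^{\otimes n})$, for which the $m$-fold tensor power is $(\rho^{\otimes nm},\sigma^{\otimes nm})$, gives
\begin{align*}
\tilde\psi\nw(\rho^{\otimes n}\|\sigma^{\otimes n}|u)\ofa
=\lim_{m\to+\infty}\frac{a_{nm}}{m}
=n\lim_{m\to+\infty}\frac{a_{nm}}{nm}
=nL
=n\tilde\psi\nw(\rho\|\sigma|u)\ofa,
\end{align*}
where the third equality holds because $(a_{nm}/(nm))_{m\in\bN}$ is a subsequence of the convergent sequence $(a_k/k)_{k\in\bN}$ and hence shares its limit $L$.

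There is no substantive obstacle here: the whole argument is a bookkeeping exercise built on tensor multiplicativity of $Q_{\alpha}\nw$ and on the limit-equals-supremum fact for superadditive sequences already proved in Lemma \ref{lemma:superadd}. The only mild subtlety is remembering to treat $u\in\{0,1\}$ separately through the defining identities \eqref{psi tilde fa 0}--\eqref{psi tilde fa 1}, but additivity there is immediate from the scaling properties of the trace and the max-relative entropy.
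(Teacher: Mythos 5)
Your proof is correct and follows essentially the same route as the paper: the first equality via tensor multiplicativity of $Q_{\alpha}\nw$ from Lemma \ref{lemma:tensor product}, the second via \eqref{ofa as limit} applied to $(\rho^{\otimes n},\sigma^{\otimes n})$ combined with the observation that $(a_{nk}/(nk))_k$ is a subsequence of a sequence already known to converge. The only cosmetic difference is that you split the second part into cases $u\in(0,1)$, $u\in\{0,1\}$, and $u\notin[0,1]$, whereas the paper's single chain of equalities applies uniformly for all $u\in\bR$ because Lemma \ref{lemma:superadd} and \eqref{ofa as limit} are stated for all real $u$; your boundary checks are therefore harmless but unnecessary.
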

\begin{proof}
The first equality is immediate from the multiplicativity of $Q_{\alpha}\nw$, given in Lemma \ref{lemma:tensor product}, and the second 
one follows from \eqref{ofa as limit}, as 
\begin{align*}
\tilde\psi\nw(\rho^{\otimes n} \|\sigma^{\otimes n} |u)\ofa
&=
\lim_{k\to+\infty}\frac{1}{k}\tilde\psi\nw(\rho^{\otimes kn} \|\sigma^{\otimes kn} |u)\fa
=
n\underbrace{\lim_{k\to+\infty}\frac{1}{nk}\tilde\psi\nw(\rho^{\otimes kn} \|\sigma^{\otimes kn} |u)\fa}_{=\tilde\psi\nw(\rho \|\sigma |u)\ofa}.
\end{align*}
\end{proof}

\begin{lemma}\label{lemma:Hanti subadd}
For any $\rho,\sigma\in\B(\hil)\p$ and $r\in\bR$,
\begin{align}\label{Hanti subadd}
n\mapsto \bH_{nr}\nw(\rho^{\otimes n}\|\sigma^{\otimes n})\ds\ds\ds
\text{is subadditive},
\end{align}
where $\bH_r\nw$ stands for any of the
Hoeffding anti-divergences in Definition \ref{def:Hanti def}.
Moreover,
\begin{align}\label{Hanti add}
H_{nr}\nw(\rho^{\otimes n}\|\sigma^{\otimes n})=nH_r\nw(\rho\|\sigma),\ds\ds\ds\ds
\hat H_{nr}\nw(\rho^{\otimes n}\|\sigma^{\otimes n})\ofa=n\hat H_r\nw(\rho\|\sigma)\ofa,\ds\ds
n\in\bN.
\end{align}
\end{lemma}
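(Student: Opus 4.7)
The plan is to view each of the six Hoeffding anti-divergences in Definition \ref{def:Hanti def} as a Legendre--Fenchel transform of a corresponding variant of $\tilde\psi\nw$, and then pass the (super)additivity established in Lemma \ref{lemma:tensor product}, Lemma \ref{lemma:superadd} and Lemma \ref{lemma:psi ofa additive} through that transform. The workhorse is the following elementary observation: if $f_1,f_2,f_{12}:\bR\to\bR\cup\{+\infty\}$ satisfy $f_{12}(u)\ge f_1(u)+f_2(u)$ for every $u$, then for all $s_1,s_2\in\bR$,
\begin{align*}
\sup_{u\in\bR}\bz u(s_1+s_2)-f_{12}(u)\jz
\le
\sup_{u\in\bR}\bz us_1-f_1(u)\jz
+
\sup_{u\in\bR}\bz us_2-f_2(u)\jz,
\end{align*}
and the same inequality holds when the three suprema are taken instead over a common sub-interval $J\subseteq\bR$.

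First I would handle the subadditivity claim \eqref{Hanti subadd} by choosing, for each of the six variants $\bH$, the appropriate $\tilde\psi\nw$ variant and sub-interval $J$. Taking
$s_1=nr$, $s_2=mr$ (so $s_1+s_2=(n+m)r$) and
\begin{align*}
f_1(u):=\tilde\psi\nw(\rho^{\otimes n}\|\sigma^{\otimes n}|u)^{\textrm{var}},\ds
f_2(u):=\tilde\psi\nw(\rho^{\otimes m}\|\sigma^{\otimes m}|u)^{\textrm{var}},\ds
f_{12}(u):=\tilde\psi\nw(\rho^{\otimes (n+m)}\|\sigma^{\otimes (n+m)}|u)^{\textrm{var}},
\end{align*}
the required tensor-power superadditivity $f_{12}\ge f_1+f_2$ is provided by Lemma \ref{lemma:superadd} in the $\fa$ case (for every $u\in\bR$), by Lemma \ref{lemma:psi ofa additive} in the $\ofa$ case (as genuine additivity on $\bR$), and by Lemma \ref{lemma:tensor product} in the undecorated case (as genuine additivity on $(0,1)$, supplemented by the trivial multiplicative identities $\Tr(\rho^{\otimes n})=(\Tr\rho)^n$ and $\Dmax(\rho^{\otimes n}\|\sigma^{\otimes n})=n\Dmax(\rho\|\sigma)$ which handle the endpoints $u=0$ and $u=1$ needed for the $\hat H$ variants). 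Choosing $J=(0,1)$ for the $H$-variants and $J=[0,1]$ (equivalently $\bR$) for the $\hat H$-variants, the observation above then yields
\begin{align*}
\bH_{(n+m)r}\nw\bz\rho^{\otimes (n+m)}\|\sigma^{\otimes (n+m)}\jz
\le
\bH_{nr}\nw\bz\rho^{\otimes n}\|\sigma^{\otimes n}\jz
+\bH_{mr}\nw\bz\rho^{\otimes m}\|\sigma^{\otimes m}\jz
\end{align*}
for each of the six variants simultaneously, which is \eqref{Hanti subadd}.

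For the additivity statement \eqref{Hanti add}, I would exploit that in the undecorated and $\ofa$ cases the relevant tensor-power identity for $\tilde\psi\nw$ is not merely superadditive but exactly additive, and even homogeneous in the tensor power. Specifically, Lemma \ref{lemma:tensor product} gives $\tilde\psi\nw(\rho^{\otimes n}\|\sigma^{\otimes n}|u)=n\,\tilde\psi\nw(\rho\|\sigma|u)$ for every $u\in(0,1)$, so
\begin{align*}
H_{nr}\nw\bz\rho^{\otimes n}\|\sigma^{\otimes n}\jz
=\sup_{u\in(0,1)}\bz unr-n\tilde\psi\nw(\rho\|\sigma|u)\jz
=n\sup_{u\in(0,1)}\bz ur-\tilde\psi\nw(\rho\|\sigma|u)\jz
=nH_r\nw(\rho\|\sigma).
\end{align*}
Likewise, Lemma \ref{lemma:psi ofa additive} gives $\tilde\psi\nw(\rho^{\otimes n}\|\sigma^{\otimes n}|u)\ofa=n\,\tilde\psi\nw(\rho\|\sigma|u)\ofa$ for every $u\in\bR$, whence the identical substitution with the supremum taken over $u\in[0,1]$ yields $\hat H_{nr}\nw(\rho^{\otimes n}\|\sigma^{\otimes n})\ofa=n\hat H_r\nw(\rho\|\sigma)\ofa$.

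Since every nontrivial input is already contained in the previous lemmas, there is no real obstacle; the only point that deserves a moment of care is the compatibility of the domains of the supremum ($(0,1)$ versus $[0,1]$) with the various $\tilde\psi\nw$ variants, which is settled by inspecting the definitions \eqref{Hanti def01}--\eqref{Hanti def4} together with \eqref{psi tilde 0}--\eqref{psi tilde 1}.
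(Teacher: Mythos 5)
Your proof is correct and follows essentially the same route as the paper's: pass (super)additivity of the appropriate $\tilde\psi\nw$ variant through the Legendre--Fenchel transform defining each Hoeffding anti-divergence, with Lemma~\ref{lemma:superadd} supplying superadditivity in the $\fa$ cases, Lemma~\ref{lemma:psi ofa additive} supplying exact additivity in the undecorated and $\ofa$ cases, and the elementary identities for $\Tr$ and $\Dmax$ at the endpoints $u=0,1$. The only stylistic difference is that you spell out the endpoint checks and the generic "superadditive input $\Rightarrow$ subadditive polar" observation explicitly, which the paper leaves implicit; the underlying argument is identical.
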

\begin{proof}
\eqref{Hanti add} is immediate from 
Lemma \ref{lemma:psi ofa additive}, and it trivially implies subadditivity
for the quantities defined in 
\eqref{Hanti def01}, 
\eqref{Hanti def03}, 
\eqref{Hanti def2}, 
\eqref{Hanti def4}. 
By Lemma \ref{lemma:superadd},
$n\mapsto\tilde\psi\nw(\rho^{\otimes n}\|\sigma^{\otimes n}|u)\fa$
is superadditive for every $u\in\bR$, which implies the subadditivity of
$n\mapsto H_{nr}\nw(\rho^{\otimes n}\|\sigma^{\otimes n})\fa$ and 
$n\mapsto \hat H_{nr}\nw(\rho^{\otimes n}\|\sigma^{\otimes n})\fa$.
\end{proof}

\begin{prop}\label{lemma:Hanto ofa lim}
For any $\rho,\sigma\in\B(\hil)\p$ and $r\in\bR$,
\begin{align}
\hat H_r\nw(\rho\|\sigma)\ofa
&=
\inf_{n\in\bN}\frac{1}{n}\hat H_{nr}\nw(\rho^{\otimes n}\|\sigma^{\otimes n})\fa
=
\lim_{n\to+\infty}\frac{1}{n}\hat H_{nr}\nw(\rho^{\otimes n}\|\sigma^{\otimes n})\fa
\label{Hanti ofa as inf}\\
&=
\inf_{n\in\bN}\frac{1}{2^n}\hat H_{2^nr}\nw(\rho^{\otimes 2^n}\|\sigma^{\otimes 2^n})\fa.
\label{Hanti ofa as inf2}
\end{align}
\end{prop}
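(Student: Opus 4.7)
My plan is to first dispose of the three equalities among the limits and infima in \eqref{Hanti ofa as inf}--\eqref{Hanti ofa as inf2}, and then identify their common value with $\hat H_r\nw(\rho\|\sigma)\ofa$. Writing $a_n:=\hat H_{nr}\nw(\rho^{\otimes n}\|\sigma^{\otimes n})\fa$, the subadditivity of $n\mapsto a_n$ from Lemma \ref{lemma:Hanti subadd} together with Fekete's lemma gives $\lim_n a_n/n=\inf_n a_n/n$. Subadditivity also yields $a_{2^{k+1}}/2^{k+1}\le a_{2^k}/2^k$, so the dyadic sequence is monotone decreasing and its infimum equals its limit, which in turn coincides with the full limit along $\bN$.

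To identify this common value with $\hat H_r\nw(\rho\|\sigma)\ofa$, I would invoke the variational representation \eqref{Hanti def3} to write $a_n/n=\max_{u\in[0,1]}\{ur-f_n(u)\}$ with $f_n(u):=\frac{1}{n}\tilde\psi\nw(\rho^{\otimes n}\|\sigma^{\otimes n}|u)\fa$, and similarly $\hat H_r\nw(\rho\|\sigma)\ofa=\max_{u\in[0,1]}\{ur-f(u)\}$ with $f(u):=\tilde\psi\nw(\rho\|\sigma|u)\ofa$. By Corollary \ref{cor:psi fa} each $f_n$ and $f$ is convex and lower semicontinuous; by Lemma \ref{lemma:superadd} one has $f_n\le f$ pointwise, with monotone convergence $f_{2^k}\nearrow f$ along the dyadic subsequence. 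The easy direction is immediate from $f_n\le f$: taking the maximum over $u\in[0,1]$ in $ur-f_n(u)\ge ur-f(u)$ gives $a_n/n\ge\hat H_r\nw(\rho\|\sigma)\ofa$ for every $n$, so $\inf_n a_n/n\ge\hat H_r\nw(\rho\|\sigma)\ofa$.

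The reverse inequality, which is the crux of the argument, I would prove along the dyadic subsequence. Pick maximizers $u_k\in[0,1]$ of $ur-f_{2^k}(u)$, which exist by upper semicontinuity on the compact interval $[0,1]$, and extract a convergent subsequence $u_{k_j}\to u_\infty\in[0,1]$. For each fixed $m$ and all $k_j\ge m$, monotonicity gives $f_{2^{k_j}}(u_{k_j})\ge f_{2^m}(u_{k_j})$, so by the lower semicontinuity of $f_{2^m}$, $\liminf_j f_{2^{k_j}}(u_{k_j})\ge\liminf_j f_{2^m}(u_{k_j})\ge f_{2^m}(u_\infty)$; letting $m\to+\infty$ and using $f_{2^m}(u_\infty)\nearrow f(u_\infty)$ yields $\liminf_j f_{2^{k_j}}(u_{k_j})\ge f(u_\infty)$. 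Hence $\lim_k a_{2^k}/2^k=\lim_j(u_{k_j}r-f_{2^{k_j}}(u_{k_j}))\le u_\infty r - f(u_\infty)\le\hat H_r\nw(\rho\|\sigma)\ofa$, completing the proof. The main obstacle is precisely this exchange of maximum and limit, which is not available by uniform convergence since $f$ may take the value $+\infty$; it is circumvented by exploiting monotonicity along the dyadic subsequence together with compactness of $[0,1]$ and lower semicontinuity of each $f_{2^m}$ in a diagonal Fatou-type estimate.
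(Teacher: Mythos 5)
Your proof is correct and follows essentially the same route as the paper's: the easy direction comes from $f_n\le f$ pointwise, and the reverse inequality is obtained along the dyadic subsequence by exploiting monotonicity together with compactness of $[0,1]$ and semicontinuity. The only distinction is that where the paper cites its minimax Lemma \ref{lemma:minimax2} to exchange $\inf_n$ and $\max_u$, you reprove that exchange by hand with a Fatou-type extraction of subsequential maximizers; this is a legitimate unpacking of the same idea rather than a genuinely different argument.
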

\begin{proof}
The second equality in \eqref{Hanti ofa as inf} follows from the subadditivity of 
$n\mapsto \hat H_{nr}\nw(\rho^{\otimes n}\|\sigma^{\otimes n})\fa$ given in Lemma \ref{lemma:Hanti subadd}. 
To see the other two equalities in \eqref{Hanti ofa as inf}--\eqref{Hanti ofa as inf2}, note that 
by definition,
\begin{align*}
\hat H_r\nw(\rho\|\sigma)\ofa
&=
\max_{u\in[0,1]}\inf_{n\in\bN}
\frac{1}{n}\left\{unr-\tilde\psi\nw(\rho^{\otimes n}\|\sigma^{\otimes n}|u)\fa\right\}\\
&\le
\inf_{n\in\bN}\max_{u\in[0,1]}
\frac{1}{n}\left\{unr-\tilde\psi\nw(\rho^{\otimes n}\|\sigma^{\otimes n}|u)\fa\right\}
=
\inf_{n\in\bN}\frac{1}{n}\hat H_{nr}\nw(\rho^{\otimes n}\|\sigma^{\otimes n}|u)\fa\\
&\le
\inf_{n\in\bN}\max_{u\in[0,1]}\frac{1}{2^n}\left\{ u2^n r- \tilde\psi\nw(\rho^{\otimes 2^n}\|\sigma^{\otimes 2^n}|u)\fa\right\}
=
\inf_{n\in\bN}\frac{1}{2^n}\hat H_{2^nr}\nw(\rho^{\otimes 2^n}\|\sigma^{\otimes 2^n}|u)\fa.
\end{align*}
On the other hand, by 
Corollary \ref{cor:psi fa} and Lemma \ref{lemma:superadd}, 
$\frac{1}{2^n}\left\{ u2^n r- \tilde\psi\nw(\rho^{\otimes 2^n}\|\sigma^{\otimes 2^n}|u)\fa\right\}$
is upper semi-continuous in $u$ on the compact set $[0,1]$, and monotone decreasing
in $n$, whence 
\begin{align*}
\inf_{n\in\bN}\frac{1}{2^n}\hat H_{2^nr}\nw(\rho^{\otimes 2^n}\|\sigma^{\otimes 2^n}|u)\fa
&=
\inf_{n\in\bN}\max_{u\in[0,1]}\frac{1}{2^n}\left\{ u2^n r- \tilde\psi\nw(\rho^{\otimes 2^n}\|\sigma^{\otimes 2^n}|u)\fa\right\}\\
&=
\max_{u\in[0,1]}\inf_{n\in\bN}\frac{1}{2^n}\left\{ u2^n r- \tilde\psi\nw(\rho^{\otimes 2^n}\|\sigma^{\otimes 2^n}|u)\fa\right\}\\
&=
\max_{u\in[0,1]}\{ur-\tilde\psi\nw(\rho\|\sigma|u)\ofa\}
=
\hat H_r\nw(\rho\|\sigma)\ofa,
\end{align*}
where the second equality is due to Lemma \ref{lemma:minimax2}, and the third equality 
follows from Lemma \ref{lemma:superadd}.
\end{proof}

\bibliography{bibliography_sc_vegtelendim}
\end{document}